\documentclass[11pt]{article}

\usepackage[letterpaper,margin=1in]{geometry}
\usepackage[T1]{fontenc}
\usepackage{lmodern}
\usepackage{microtype}
\usepackage{amsmath,amssymb,amsthm,mathtools}
\usepackage{graphicx}
\usepackage{booktabs}
\usepackage[colorlinks=true,linkcolor=blue,citecolor=blue,urlcolor=blue]{hyperref}

\newcommand{\abs}[1]{\lvert#1\rvert}

\newcommand{\bbe}[1]{\mathbb{E}\left[#1\right]}
\newcommand{\bbp}[1]{\mathbb{P}\left(#1\right)}

\newcommand{\Var}{\mathrm{Var}}

\DeclareMathOperator{\rk}{rank}
\DeclareMathOperator{\diag}{diag}
\DeclareMathOperator{\hol}{hol}
\DeclareMathOperator{\range}{range}

\newcommand{\mcA}{\mathcal{A}}
\newcommand{\mcB}{\mathcal{B}}

\newcommand{\mcE}{\mathcal{E}}
\newcommand{\mcF}{\mathcal{F}}

\newcommand{\mcN}{\mathcal{N}}
\newcommand{\mcO}{\mathcal{O}}

\usepackage{bm}
\usepackage{xcolor}

\usepackage[numbers,sort&compress]{natbib}

\usepackage[capitalize,noabbrev]{cleveref}

\newtheorem{theorem}{Theorem}[section]
\newtheorem{proposition}[theorem]{Proposition}
\newtheorem{corollary}[theorem]{Corollary}
\newtheorem{lemma}[theorem]{Lemma}
\theoremstyle{definition}
\newtheorem{definition}[theorem]{Definition}
\newtheorem{example}[theorem]{Example}
\theoremstyle{remark}
\newtheorem{remark}[theorem]{Remark}

\title{Community Detection from Aggregated Relational Data:\\
Identifiability and Spectral Recovery}
\author{Andrew Davison\\
Independent Researcher\\
\texttt{adavison1994@gmail.com}
\and
Owen G. Ward\\
Department of Statistics and Actuarial Science,\\ Simon Fraser University\\
\texttt{owen\_ward@sfu.ca}}
\date{\today}

\begin{document}
\maketitle
\begin{abstract}
  Aggregated relational data replaces a fully observed network with a compressed
  version, where survey respondents are asked questions such as ``How many of your
  friends belong to group $G$?''. We consequently only observe a matrix $R = OA$ of the
  adjacency matrix $A$ with a fixed design $O$. In this paper we understand the
  effects of this compression on community detection under degree-corrected
  stochastic block models. First, we give guarantees for when it is possible to
  recover the full network parameters, and then only just the communities under
  conditions on a diagonal corrected version of $\mathbb{E}[R]$, which can be
  used to inform experimental design. Secondly, we show that spectral
  clustering with single linkage exactly recovers communities
  when $n\rho_n\gtrsim K^2+K\log n$ when there are $K$ balanced communities
  under ``uniformly spread'' or near-orthogonal designs, and characterize more generally how the
  design influences rates of convergence. We also introduce second-order
  ``friends-of-friends'' queries and develop the corresponding theory.
\end{abstract}

\section{Introduction}
\label{sec:intro}

Collecting full network data is expensive and often infeasible: respondents
rarely enumerate all of their contacts, and study designs frequently prohibit full
enumeration. Survey practice therefore relies on \emph{aggregated relational data}
(ARD)---answers to questions of the form ``How many of your contacts belong to group
$G$?'', asked across a range of groups $G$. Formally, in place of the adjacency matrix
$A \in \{0,1\}^{n \times n}$ of the network, we observe
\[
    R = OA \in \mathbb{R}^{L \times n},
\]
where the $L$ rows of the fixed design matrix $O$ encode the survey questions.

This clearly represents a potential loss of information; the contributions of
our paper relate to examining the extent of that loss, particularly focusing on community
detection tasks. Our main contributions are as follows.

\begin{enumerate}
    \item We characterize, in terms of the design $O$, when the design separates the
    communities in the block signal $OP^0$, in addition to when this signal determines
    the full network parameters. In particular, recovering all network parameters
    requires collecting at least as many queries as there are communities
    (\Cref{thm:full-identifiability}), whereas a single query can sometimes suffice to
    separate the communities alone (\Cref{cor:partition}).
    \item We show that spectral clustering on the matrix $R$ recovers the communities
    under (degree-corrected) stochastic block models, and identify the effect of the
    design $O$ on the sparsity regime for which exact recovery is available
    (\Cref{sec:recovery}). For balanced communities and uniformly controlled
    design and block conditioning, designs satisfying the quantitative
    spread condition of \Cref{cor:sbm-regimes} give exact recovery by
    spectral embedding and single linkage at $n\rho_n\gtrsim K^2+K\log n$. We
    more generally characterize weak and exact recovery bounds in terms of
    the properties of $O$. 
    \item We develop the theory of second-order ``friend-of-friend'' queries---which we call
    \emph{fan-out} queries---and show how such
    queries benefit community separation and permit weak and exact recovery at thresholds
    matching those of first-order queries under the stated
    effective-design conditioning and spread or near-orthogonality assumptions
    (\Cref{prop:stacked-rank,thm:fanout-main,cor:fanout-regimes}).
\end{enumerate}

\section{Notation and assumptions}
\label{sec:prelim}

\subsection{Notation}
\label{subsec:notation}

We write $[n] = \{1, \dots, n\}$. For a matrix $A \in \mathbb{R}^{n \times m}$ we
write $A_{i \cdot}$ for its $i$-th row, $A_{\cdot j}$ for its $j$-th column, and
$\sigma_1(A) \geq \sigma_2(A) \geq \cdots$ for its
singular values. We write $e_{n,i}$ (or $e_i$ when the dimension is clear) for
the $i$-th standard basis vector of $\mathbb{R}^n$, $\mathbf{1}_n$ for the all-ones
vector, and $\mathbf{1}_S \in \{0,1\}^n$ for the indicator vector of a set
$S \subseteq [n]$. For $\theta \in \mathbb{R}^n$, $\diag(\theta)$ is the diagonal
matrix with entries $\theta_i$; conversely $\diag(A)$ zeroes the off-diagonal of a
square $A$. The operator and entrywise matrix norms
$\| \cdot \|_{p \to q}$ and $\| \cdot \|_{p, q}$ are defined in \Cref{sec:linalg}.

\subsection{Stochastic block models}
\label{subsec:models}

Let $z : [n] \to [K]$ assign each of the $n$ nodes to one of $K \geq 2$ communities,
and let $Z \in \{0,1\}^{n \times K}$ be the corresponding \emph{membership matrix},
$Z_{ia} = \mathbf{1}\{z_i = a\}$. We write $C_a = \{i : z_i = a\}$ for the $a$-th
community, $n_a = \abs{C_a}$ for its size, and
$N := Z^\top Z = \diag(n_1, \dots, n_K)$. Let
$B \in [0,1]^{K \times K}$ be a symmetric \emph{connectivity matrix}.

\begin{definition}[SBM and DCSBM]
    \label{def:sbm}
    In the \emph{stochastic block model} $\mathrm{SBM}(z, B)$, the adjacency matrix
    $A \in \{0,1\}^{n \times n}$ is symmetric and hollow with
    \begin{equation}
        A_{ij} \sim \mathrm{Bernoulli}(B_{z_i z_j}), \qquad i < j,
    \end{equation}
    independent across $i < j$. In the \emph{degree-corrected} stochastic block model
    $\mathrm{DCSBM}(z, B, \theta)$ one additionally fixes node parameters
    $\theta \in \mathbb{R}_{>0}^n$, $\Theta = \diag(\theta)$ so
    \begin{equation}
        A_{ij} \sim \mathrm{Bernoulli}(\theta_i \theta_j B_{z_i z_j}), \qquad i < j.
    \end{equation}
\end{definition}

Throughout the paper the assignment $z$ and the degree parameters $\theta$ are fixed.
For convenience we state our results assuming that the communities are balanced, in
that $n_{\max} \asymp n_{\min} \asymp n/K$. Throughout 
it will be convenient to work with the rank-$\leq K$ ``self-loop-inclusive'' mean matrix
\begin{equation}
    \label{eq:P0}
    P^0 :=
    \begin{cases}
        Z B Z^\top, & \text{SBM}, \\
        \Theta Z B Z^\top \Theta, & \text{DCSBM},
    \end{cases}
\end{equation}
so that the mean of $A$ is the hollow matrix $P := \bbe{A} = P^0 - \diag(P^0)$. We 
assume that $n_a \geq 1$, so $Z$ has full column rank $K$, and
$B$ has pairwise distinct columns (or rows by symmetry). For
the DCSBM we impose for each $a$ the gauge normalization $\sum_{i \in C_a} \theta_i = n_a$.

\subsection{The aggregated observation model}
\label{subsec:obs-model}

Rather than observing $A$ directly, we instead observe
\begin{equation}
    \label{eq:obs}
    \begin{gathered}
    R=OA\in\mathbb R^{L\times n},\\
    \overline R:=\bbe{R}=OP,\qquad M:=OP^0.
    \end{gathered}
\end{equation}
where $O \in \mathbb{R}^{L \times n}$ is a design matrix encoding $L$ queries, which
may itself be unobserved. \emph{Indicator queries} provide one example of such an $O$:
when the rows of $O$ are indicators of groups $G_1, \dots, G_L \subseteq [n]$,
\begin{equation}
    (OA)_{g, j}
    = \sum_{i \in G_g} A_{ij}
    = \#\{i \in G_g : A_{ij}=1\},
\end{equation}
which corresponds to the traditional aggregated relational data design: column $j$ of
$R$ records, for each query group, how many of node $j$'s contacts fall in it. A
simple degree query would take $O = \mathbf{1}_n^\top$.

We also consider \emph{fan-out} queries.
For a square matrix $X$ write $\hol(X) := X - \diag(X)$ for its hollowed version.
The \emph{fan-out matrix} is
\begin{equation}
    \label{eq:fanout-def}
    \begin{aligned}
        A^{\langle 2 \rangle} &:= \hol(A^2) = A^2 - D_A,\\
        D_A &:= \diag(A^2) = \diag(d_1, \dots, d_n).
    \end{aligned}
\end{equation}
where $d_i$ is the degree of node $i$. Again we can ask indicator queries: if row $g$
of $O$ is the indicator of $G_g$, then for every node $j$,
\begin{equation}
    \label{eq:fanout-question}
    \big( O A^{\langle 2 \rangle} \big)_{g, j}
    = \sum_{k \,:\, A_{jk} = 1} \#\{ i \in G_g : A_{ki} = 1,\ i \neq j \},
\end{equation}
which is the answer to the \emph{fan-out question}: ``Summing over each of your
friends, how many of their friends---not counting yourself---belong to group
$G_g$?''

% Excluding the ego is exactly backtracking suppression at length two: the walks
% $j \to k \to i$ with $i \neq j$ are precisely the non-backtracking walks of length
% two (the intermediate node $k$ is automatically distinct from both endpoints, as $A$
% is hollow). Hence $A^{\langle 2 \rangle}$ is simultaneously the self-avoiding-walk
% matrix and the non-backtracking count at length two. We highlight that several remedies
% used to handle recovery in the sparse-graph regime---self-avoiding walks \cite{massoulie_community_2014},
% the non-backtracking operator \cite{krzakala_spectral_2013,
% bordenave_nonbacktracking_2018}, and graph powering \cite{abbe_graph_2020}---all
% coincide with \eqref{eq:fanout-def} at this order.

We highlight that our theory does not require knowledge of $O$, as we will study a spectral
clustering algorithm applied directly on $R$. That said, knowledge of $O$ is 
useful to certify separation of the communities
and conditioning (see e.g \Cref{thm:full-identifiability}).

\section{Related work}
\label{sec:related}

We discuss how our results relate to the existing literature relating to statistical 
theory for aggregated relational data, and the identifiability and 
estimation theory of stochastic
block models.

ARD originates from the ``network scale-up'' method for estimating the size of
hard-to-reach populations \cite[e.g.][]{killworth_estimation_1998,killworth_social_1998},
and was given statistical foundations by a sequence of likelihood-based
models \cite{zheng_how_2006, mccormick_how_2010, mccormick_latent_2015} and
design-based estimators \cite{feehan_generalizing_2016}; see \cite{laga_thirty_2021}
for a review. \cite{breza_using_2020} use ARD to fit generative models to simulate
network statistics, and \cite{breza_consistently_2023} prove
consistency of ARD-based estimates for various network models, including 
community recovery in the stochastic
block model by thresholding ARD responses. These results are the closest to ours,
but use random traits and operate only in the dense regime; ours hold for a fixed
design $O$ in the sparse regime $n \rho_n \gtrsim \log n$, and concern spectral
estimators.

\cite{alidaee_recovering_2020} recover the (low-rank) expected adjacency matrix
from ARD by nuclear-norm penalized regression, and \cite{auerbach_identification_2022}
matches nodes by similar columns of the squared adjacency matrix of a fully
observed network, echoing our
higher-order-query analysis. \cite{jones_scalable_2021} fit a
mixed-membership block model to the ARD matrix $OA$, using spectral clustering of
a normalized version of $OA$ to initialize a variational fit; our results supply guarantees
for the analogous spectral step under (DC)SBM assumptions. To our knowledge no prior work
gives guarantees for spectral community-detection for the (DC)SBM directly from
$R = OA$, nor a quantitative characterization of community separation in terms of a
fixed design $O$.

There is an extensive literature on the behavior of spectral clustering on the fully
observed adjacency matrix $A$, including e.g \cite{lei_consistency_2015,
rohe_spectral_2011,su_strong_2019, sussman_consistent_2012,
athreya_statistical_2018,lyzinski_perfect_2014,
rubin-delanchy_statistical_2022, cape_two--infinity_2018};
the review article \cite{abbe_community_2018} summarizes their application 
to community detection.
We note that our theoretical results adapt the arguments of
\cite{cape_two--infinity_2018,cai_rate-optimal_2020,abbe_entrywise_2020,lei_unified_2019}
to handle non-identity aggregation operators $O$.
For balanced, uniformly conditioned designs that are quantitatively spread
or near-orthogonal, which include $O=I_n$, we will show that
spectral embedding followed by single linkage exhibits exact recovery
behavior with degrees of order
$K^2+K\log n$. This agrees with the full-adjacency convex-recovery
bound achieved in \cite{chen_statistical_2016}. The information-theoretic
exact-recovery degree threshold is of order $K\log n$
\cite{chen_statistical_2016} with fixed-community sharp
thresholds given by \cite{abbe_general_2015,abbe_exact_2016}.
We do not establish the information-theoretic thresholds
when only $R$ is observed.

\cite{ghoroghchian_graph_2021} give closed-form conditions for recovering the
community composition of each coarse node from a \emph{two-sided} coarsening
$B A B^\top$ of the SBM, which cannot be directly applied to our setting.
Community detection from other compressed or partial
observations---randomized sketches of a fully observed graph
\cite{tremblay_compressive_2016,guo_randomized_2023}
and sampled node pairs or adjacency sub-blocks
\cite{li_network_2020,chen_network_2018}---are related in spirit but
use different observation operators than the known aggregation $O$ of ARD.

\section{Identifiability}
\label{sec:identifiability}

We now study when the aggregation $R=OA$ preserves the community structure of the
mean signal $M:=OP^0$, which differs from the observable mean $\bbe{R}=OP$ by the
matrix $O\diag(P^0)$. Proofs of all results in this section are
collected in \Cref{app:ident-proofs}.

\begin{lemma}
    \label{lem:column-constancy}
    For the SBM, the $j$-th column of $P^0$ depends on $j$ only through its community
    label: $P^0_{\cdot j} = Z B_{\cdot z_j}$. Hence $M_{\cdot j} = \mu_{z_j}$, where
    $\mu_b := (OZ) B_{\cdot b} \in \mathbb{R}^L$.
    For the DCSBM, $P^0_{\cdot j} = \theta_j \, \Theta Z B_{\cdot z_j}$, so
    $M_{\cdot j} = \theta_j \, \nu_{z_j}$ with $\nu_b := (O \Theta Z) B_{\cdot b}$.
\end{lemma}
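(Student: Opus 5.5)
The plan is a direct computation with the definition of $P^0$ in \eqref{eq:P0}, using that $Z$ has exactly one nonzero entry per row. The only structural fact needed is that the $j$-th column of $Z^\top$ equals the $j$-th row of $Z$, transposed, so $Z^\top e_j = e_{K,z_j}$ because $Z_{ja} = \mathbf{1}\{z_j = a\}$. Everything else is associativity of matrix multiplication.

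\textbf{SBM case.} I will write $P^0_{\cdot j} = P^0 e_j = Z B Z^\top e_j = Z B e_{K,z_j} = Z B_{\cdot z_j}$. This depends on $j$ only through $z_j$. Multiplying on the left by $O$ and using $M = OP^0$ gives
\[
M_{\cdot j} = O Z B_{\cdot z_j} = (OZ) B_{\cdot z_j} = \mu_{z_j},
\]
by the definition of $\mu_b$.

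\textbf{DCSBM case.} Here $P^0 = \Theta Z B Z^\top \Theta$. Since $\Theta$ is diagonal, $\Theta e_j = \theta_j e_j$, so
\[
P^0 e_j = \theta_j \, \Theta Z B Z^\top e_j = \theta_j \, \Theta Z B_{\cdot z_j}.
\]
Left multiplication by $O$ then gives $M_{\cdot j} = \theta_j (O\Theta Z) B_{\cdot z_j} = \theta_j \nu_{z_j}$.

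I expect no genuine obstacle. The one point to be careful about is that the statement concerns $P^0$ and not the hollow mean $P$: the diagonal correction $\diag(P^0)$ would break column constancy, because it contributes a term $O_{\cdot j} P^0_{jj}$ that depends on $j$ itself. This is exactly why the lemma is phrased in terms of $M = OP^0$ rather than $\bbe{R}$. Neither the symmetry of $B$ nor the gauge normalization is needed.
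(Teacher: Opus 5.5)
Your proposal is correct and matches the paper's proof essentially line for line: both use $Z^\top e_j = e_{z_j}$ to get $P^0_{\cdot j} = ZB_{\cdot z_j}$, pull out $\theta_j$ via $\Theta e_j = \theta_j e_j$ in the DCSBM, and then apply $O$. Your side remark is also accurate: for the hollow mean $P$, the term $P^0_{jj}O_{\cdot j}$ would break column constancy, which is why the lemma is stated for $M = OP^0$.
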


Thus $M$ assigns to each community $b$ a single \emph{profile}
$\mu_b$ (SBM) or, since $\theta_j > 0$ is unknown, a single \emph{ray}
$\mathbb{R}_{\geq 0}\, \nu_b$ (DCSBM).

\begin{definition}[Community collapse]
    \label{def:collapse}
    Communities $b\neq b'$ \emph{collapse} under the design $O$ if they have the
    same profile, $\mu_b=\mu_{b'}$, in the SBM, or the same ray,
    $\mathbb R_{\geq0}\nu_b=\mathbb R_{\geq0}\nu_{b'}$, in the DCSBM. The design
    \emph{separates} the communities if no pair collapses.
\end{definition}

We begin by asking, given the model $(z, B)$, which designs separate its
communities. Define, for each ordered pair of communities $b \neq b'$, the \emph{differential
connectivity profile}
\begin{equation}
    \label{eq:diff-profile}
    w_{bb'} := Z (B_{\cdot b} - B_{\cdot b'}) \in \mathbb{R}^n
\end{equation}
whose $i$-th entry measures how much more likely a community-$b$ node is to attach to
node $i$ than a community-$b'$ node is.

\begin{proposition}[Pairwise collapse]
    \label{prop:pairwise}
    Communities $b \neq b'$ collapse under $O$ if and only if $O w_{bb'} = 0$.
\end{proposition}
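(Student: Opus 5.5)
The plan is to reduce everything to \Cref{lem:column-constancy}, which already identifies the column profiles of $M=OP^0$, and then to observe that the collapse condition is a linear statement about the design applied to $w_{bb'}$. I read the proposition as concerning the SBM notion of collapse, since $w_{bb'}$ is defined through $Z$ and not through $\Theta Z$. I would handle that case exactly and then say what the DCSBM analogue looks like.

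\emph{SBM case.} By \Cref{lem:column-constancy}, community $b$ has profile $\mu_b=(OZ)B_{\cdot b}$. By linearity of matrix multiplication,
\[
\mu_b-\mu_{b'} = OZ\,(B_{\cdot b}-B_{\cdot b'}) = O\,w_{bb'},
\]
using the definition \eqref{eq:diff-profile}. By \Cref{def:collapse}, $b$ and $b'$ collapse precisely when $\mu_b=\mu_{b'}$. This holds if and only if $Ow_{bb'}=0$, which proves both directions at once. The only thing to check is that the profile $\mu_b$ is well defined, i.e.\ independent of the representative column $j\in C_b$. This is exactly the content of \Cref{lem:column-constancy}, and it uses $n_b\ge 1$ so that a representative exists.

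\emph{DCSBM remark.} Here the rays are $\mathbb R_{\ge0}\nu_b$ with $\nu_b=(O\Theta Z)B_{\cdot b}$. Equality of rays is not equality of vectors, so the same computation gives a modified criterion. If $\nu_b,\nu_{b'}\neq 0$, the rays coincide if and only if there is some $c>0$ with $O\Theta Z(B_{\cdot b}-cB_{\cdot b'})=0$. In the special case $\Theta=I$ and $c=1$ this recovers $Ow_{bb'}=0$. I would state this as the precise analogue, with $w_{bb'}$ replaced by $\Theta Z(B_{\cdot b}-cB_{\cdot b'})$, rather than claim the SBM statement verbatim for the DCSBM.

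The main obstacle is this DCSBM interpretation: the scalar $c$ and the weighting by $\Theta$ mean the literal condition $Ow_{bb'}=0$ is neither necessary nor sufficient there in general. The SBM argument itself is a one-line linearity computation.
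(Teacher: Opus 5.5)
Your argument is correct and is the paper's proof: \Cref{lem:column-constancy} gives $\mu_b-\mu_{b'}=(OZ)(B_{\cdot b}-B_{\cdot b'})=Ow_{bb'}$, so $\mu_b=\mu_{b'}$ holds exactly when $Ow_{bb'}=0$. You also read the statement as concerning SBM collapse, which matches the paper; it treats the DCSBM ray criterion separately in \Cref{prop:dcsbm}, and your remark that $Ow_{bb'}=0$ need not characterize DCSBM collapse is consistent with \Cref{ex:dcsbm-degree}.
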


\begin{corollary}[Separation of the partition]
    \label{cor:partition}
    The design $O$ separates the communities if and only if $w_{bb'} \notin \ker O$
    for every pair $b \neq b'$, and then the equality classes of the columns of $M$
    are exactly the communities. In particular, degree queries separate the
    communities if and only if the $K$ self-loop-inclusive degree profiles
    $D_1, \dots, D_K$ are all distinct.
\end{corollary}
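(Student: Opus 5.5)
The argument is a direct consequence of \Cref{prop:pairwise} and \Cref{lem:column-constancy}, and I would organise it in three short steps.

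\textbf{Step 1: the equivalence.} By definition, $O$ separates the communities exactly when no pair $b\neq b'$ collapses. By \Cref{prop:pairwise}, a pair collapses if and only if $Ow_{bb'}=0$, i.e.\ $w_{bb'}\in\ker O$. Negating and quantifying over all pairs gives the stated equivalence. Since $w_{b'b}=-w_{bb'}$, the condition is symmetric in the pair, so unordered pairs suffice.

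\textbf{Step 2: equality classes of columns.} I work in the SBM, where \Cref{lem:column-constancy} gives $M_{\cdot j}=\mu_{z_j}$.
\begin{itemize}
\item If $z_j=z_{j'}$, then $M_{\cdot j}=M_{\cdot j'}$ trivially.
\item If $z_j=b\neq b'=z_{j'}$, then separation gives $\mu_b\neq\mu_{b'}$, so the two columns differ.
\end{itemize}
Hence the relation ``equal columns of $M$'' coincides with ``same community'', and the equality classes are exactly the $C_a$. Each class is nonempty because $n_a\ge1$.

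For the DCSBM the analogous statement concerns rays. Columns $j,j'$ lie on a common ray if and only if $\mathbb R_{\ge0}\nu_{z_j}=\mathbb R_{\ge0}\nu_{z_{j'}}$, using $\theta_j>0$. One caveat: if $\nu_b=0$, the ray degenerates to $\{0\}$, but this is still a well-defined class, and separation distinguishes it from the others. So for the DCSBM I would read ``equality classes'' as equality up to positive scaling. This is the one point needing care in the wording, and I would state it explicitly.

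\textbf{Step 3: degree queries.} Take $O=\mathbf 1_n^\top$. Then
\[
Ow_{bb'}=\mathbf 1_n^\top Z(B_{\cdot b}-B_{\cdot b'})=\sum_a n_a(B_{ab}-B_{ab'})=D_b-D_{b'},
\]
where $D_b:=\sum_a n_aB_{ab}$ is the self-loop-inclusive expected degree of a community-$b$ node. This is exactly the value of $\mu_b$ when $L=1$. Applying the equivalence from Step 1, degree queries separate the communities if and only if all $D_b$ are distinct.

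For the DCSBM, $\nu_b=\sum_a n_aB_{ab}$ as well, thanks to the gauge $\sum_{i\in C_a}\theta_i=n_a$. Since $L=1$ and $\nu_b\ge0$, the only nonnegative rays are $\{0\}$ and $\mathbb R_{\ge0}$. So in the degree-corrected case, degree queries separate at most two communities. If the corollary is meant for the DCSBM too, I expect this to be the main obstacle, and I would restrict the degree-query clause to the SBM.
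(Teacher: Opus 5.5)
Your proposal is correct and follows the paper's proof essentially step for step. Column constancy (\Cref{lem:column-constancy}) reduces the equality classes of the columns of $M$ to pairwise distinctness of the profiles $\mu_b$, \Cref{prop:pairwise} turns this into $w_{bb'}\notin\ker O$, and for $O=\mathbf 1_n^\top$ one computes $\mu_b=\sum_a n_aB_{ab}=D_b$. Restricting the statement to the SBM also matches the paper, whose proof works only with $\mu_b$ and handles the degree-corrected degree case separately in \Cref{ex:dcsbm-degree} and \Cref{prop:dcsbm}; this agrees with your observation that only the rays $\{0\}$ and $\mathbb R_{\ge0}$ are available there.
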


\begin{example}
    \label{ex:planted}
    Take degree queries $O = \mathbf{1}_n^\top$, and write 
    $D_b := \sum_a n_a B_{ab}$.
    For the symmetric planted partition with $K = 2$ and
    $B = \bigl(\begin{smallmatrix} p & q \\ q & p \end{smallmatrix}\bigr)$,
    the differential profile is
    \begin{equation}
        \label{eq:planted-degree-gap}
        \mu_1-\mu_2=D_1 - D_2 = (n_1 - n_2)(p - q),
    \end{equation}
    which vanishes precisely when $n_1 = n_2$, since $p\neq q$.
    Equal-sized communities therefore collapse under degree queries, whereas the
    design separates them as soon as $n_1 \neq n_2$.
\end{example}

\begin{example}[Degree confounding in the DCSBM]
    \label{ex:dcsbm-degree}
    With $O = \mathbf{1}_n^\top$ in the DCSBM, $\nu_b = \sum_a s_a B_{ab}$ where
    $s_a = \sum_{i \in C_a} \theta_i$ is a scalar. When all profiles are nonzero, every $\nu_b$ is a positive
    scalar, so all rays $\mathbb{R}_{\geq 0} \nu_b$ coincide: a large observed degree
    is explained equally well by a large $\theta_j$ or by membership in a
    high-connectivity community. These degree-query profiles
    therefore do not separate the communities.
\end{example}

\begin{remark}[Collapse and recovery]
    \label{rem:collapse}
    The population embedding of \Cref{sec:recovery}, row-normalized in the DCSBM,
    has one row per community, and two communities share a row exactly when they
    collapse (\Cref{lem:pop-embedding}); no rule that labels node $j$ through
    $M_{\cdot j}$, or its ray, separates collapsed communities. Collapse concerns
    $M$, not the law of $R$: for the degree query,
    $\mathrm{Cov}(d_i,d_j)=P_{ij}(1-P_{ij})$ for $i\neq j$, which in
    \Cref{ex:planted} is $p(1-p)$ within and $q(1-q)$ across communities, so for
    $p+q\neq1$ the joint law of the degrees determines the partition although the
    communities collapse. Mean-based methods such as Algorithm~1 cannot use this.
    Quantitatively, $\min_{b\neq b'}\|Ow_{bb'}\|_2$ controls the centroid
    separation $\delta$ through \eqref{eq:delta-margin}.
\end{remark}

Notably, separation may hold with very few queries---even $L = 1$, as
the degree case shows---because it only requires the $K$ profiles to be
\emph{distinct}, not that $O$ be invertible on the block subspace. Recovering the
connectivity matrix $B$ from $M$ is strictly stronger.

\begin{example}
    \label{ex:rank-blind}
    Take $K = 3$ communities of equal size $m = n/3$ and
    \[
        B = \begin{pmatrix} 0.4 & 0.2 & 0.3 \\ 0.2 & 0.4 & 0.3 \\
        0.3 & 0.3 & 0.3 \end{pmatrix},
    \]
    so that $\rk(B) = 2<K$, while the columns of $B$ are pairwise distinct. Every
    differential profile
    points in a single direction, $B_{\cdot b} - B_{\cdot b'} \in
    \operatorname{span}\{d\}$ with $d := (1, -1, 0)^\top$, so by
    \Cref{prop:pairwise} a design separates either \emph{all} pairs or \emph{none},
    and does so exactly when $(OZ) d \neq 0$---that is, when some query weights
    $C_1$ and $C_2$ unequally.
\end{example}

Let $\mathcal{S} := \range(Z) = \operatorname{span}\{\mathbf{1}_{C_1}, \dots,
\mathbf{1}_{C_K}\}$ denote the $K$-dimensional \emph{block subspace} of the graph. 
Note that every column of $P^0$ lies in $\mathcal{S}$.

\begin{theorem}[Recovering the model from $M$]
    \label{thm:full-identifiability}
    Consider an SBM with membership matrix $Z$ and connectivity matrix $B$ in the
    interior of $[0,1]^{K \times K}$. The block signal
    $M = OP^0$ determines the partition $z$ and the connectivity matrix $B$, up to a
    common relabelling of the communities, if and only if the design is injective on
    the block subspace $\mathcal{S} = \range(Z)$. This is equivalent to requiring that
    $\rk(OZ) = K$, or that $\ker(O) \cap \mathcal{S} = \{0\}$. 
    Thus $L \geq K$ is
    necessary. If the rank condition fails, every $B$ has arbitrarily close
    alternatives $B' \neq B$ with $OZB'Z^\top = OZBZ^\top$.
\end{theorem}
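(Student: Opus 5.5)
The plan is to prove the two directions separately, after first recording that the three conditions in the statement are equivalent. Since $Z$ has full column rank $K$, the map $x\mapsto Zx$ is a bijection $\mathbb{R}^K\to\mathcal{S}$. Hence $OZx=0$ holds if and only if $Zx\in\ker O\cap\mathcal{S}$, so $\rk(OZ)=K$ if and only if $\ker(O)\cap\mathcal{S}=\{0\}$, which in turn holds if and only if $O$ is injective on $\mathcal{S}$. Because $OZ$ is $L\times K$, the condition $\rk(OZ)=K$ forces $L\geq K$.

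\emph{Sufficiency.} Assume $\rk(OZ)=K$. By \Cref{lem:column-constancy}, $M_{\cdot j}=(OZ)B_{\cdot z_j}$. Since $B$ has pairwise distinct columns and $OZ$ is injective, the profiles $\mu_b=(OZ)B_{\cdot b}$ are pairwise distinct. The equality classes of the columns of $M$ are therefore exactly the communities, which is also immediate from \Cref{cor:partition}, since $w_{bb'}=Z(B_{\cdot b}-B_{\cdot b'})$ is a nonzero element of $\mathcal{S}$ and so lies outside $\ker O$. This determines $z$ up to relabelling, and hence determines $Z$ up to a column permutation. With $Z$ in hand, $OZ$ has a left inverse $(OZ)^{+}$, and we recover $B_{\cdot b}=(OZ)^{+}\mu_b$ with the labelling inherited from $z$. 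Two models $(z,B)$ and $(z',B')$ giving the same $M$ must share the partition. After relabelling they have the same $Z$, so $OZ(B-B')=0$ and therefore $B=B'$. If $O$ itself is regarded as unknown, this step needs $O$ to be known, which is implicit in the statement.

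\emph{Necessity and the perturbation claim.} Suppose $\rk(OZ)<K$ and pick a nonzero $x\in\ker(OZ)$. I would set $B'=B+\varepsilon(xy^\top+yx^\top)$ for some $y\in\mathbb{R}^K$ and small $\varepsilon>0$. Then
$OZB'Z^\top=OZBZ^\top+\varepsilon\,(OZx)y^\top Z^\top+\varepsilon\,(OZy)x^\top Z^\top$.
The middle term vanishes because $OZx=0$. The third term vanishes if we choose $y=0$, but then $B'=B$, which is useless. The fix is to take a rank-one symmetric perturbation $B'=B+\varepsilon xx^\top$. Then $OZB'Z^\top-OZBZ^\top=\varepsilon(OZx)(Zx)^\top=0$, and $B'\neq B$ because $x\neq0$. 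Since $B$ lies in the interior of $[0,1]^{K\times K}$, $B'$ remains a valid symmetric connectivity matrix for small $\varepsilon$, and taking $\varepsilon\to0$ gives arbitrarily close alternatives. The columns of $B'$ also remain pairwise distinct for small $\varepsilon$, because distinctness is an open condition.

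It remains to check that $B'$ is not merely a relabelling of $B$: we need $B'\neq\Pi B\Pi^\top$ for every permutation $\Pi$. Only finitely many matrices $\Pi B\Pi^\top$ exist, and $B'\to B$ as $\varepsilon\to0$. For $\varepsilon$ small, $B'$ can therefore only coincide with those $\Pi B\Pi^\top$ that equal $B$ exactly. Such a coincidence would give $B'=B$, contradicting $\varepsilon xx^\top\neq0$. Hence $(z,B)$ and $(z,B')$ produce the same $M$ but are not equivalent up to relabelling, and identifiability fails.

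The main subtlety, rather than a hard obstacle, is this bookkeeping around \emph{up to common relabelling}, together with producing a symmetric perturbation that stays in $[0,1]^{K\times K}$. The choice $\varepsilon xx^\top$ handles symmetry, and the interior assumption handles the box constraints.
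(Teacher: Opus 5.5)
Your proposal is correct and follows essentially the same route as the paper: the same equivalence of the three conditions, recovery of the partition from the equality classes of the columns of $M$, and for necessity the same rank-one symmetric perturbation $B+\varepsilon cc^\top$ with $0\neq Zc\in\ker O$. The differences are cosmetic. You recover $B$ through a left inverse of $OZ$, whereas the paper reconstructs $P^0$ columnwise and uses $B=N^{-1}Z^\top P^0ZN^{-1}$. You also add checks the paper leaves implicit, namely that $B'$ keeps distinct columns and is not a relabelling of $B$; the latter is automatic anyway, since with $z$ held fixed and every community nonempty only the trivial relabelling is available.
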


We have similar results for degree corrected SBMs.

\begin{proposition}[Separation and recovery in the DCSBM]
    \label{prop:dcsbm}
    Under the DCSBM, suppose $\nu_b := (O \Theta Z)
    B_{\cdot b} \neq 0$ for all $b$. 
    \begin{enumerate}
        \item \textup{(Partition.)} The design $O$ separates the communities if and
        only if the $K$ rays $\{\mathbb{R}_{\geq 0} \nu_b\}_{b=1}^K$ are distinct,
        i.e.\ no two of $\nu_1, \dots, \nu_K$ are positively collinear, and then the
        rays of the columns of $M$ determine the partition. Degree queries
        ($O = \mathbf{1}_n^\top$) never satisfy this.
        \item \textup{(Full recovery.)} If $\rk(O \Theta Z) = K$ and the columns
        of $B$ are nonzero and pairwise not positively proportional, then $M$ determines
        $P^0$, and the gauge normalization determines
        $(\theta, B, z)$ up to relabelling.
    \end{enumerate}
\end{proposition}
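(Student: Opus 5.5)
Part~1 follows the SBM case. By \Cref{lem:column-constancy}, $M_{\cdot j}=\theta_j\nu_{z_j}$ with $\theta_j>0$, so the ray of column $j$ is $\mathbb R_{\geq0}\nu_{z_j}$. Since $\nu_b\neq0$, each ray is a genuine half-line. By \Cref{def:collapse}, separation means exactly that the $K$ rays are distinct, which is the same as saying no two $\nu_b$ are positively collinear. In that case, the relation ``columns $j,j'$ span the same ray'' coincides with ``$z_j=z_{j'}$''. So the equivalence classes of columns under positive proportionality are exactly the communities, and the partition is determined up to labels.

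For degree queries, $L=1$. Each $\nu_b=\sum_a s_aB_{ab}$ is a nonzero scalar, and it is nonnegative because $B\ge0$ and $s_a>0$. Hence $\nu_b>0$, and all rays equal $\mathbb R_{\geq0}$. Since $K\ge2$, this forces a collapse, as in \Cref{ex:dcsbm-degree}.

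For Part~2, I would argue in three steps.

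\textbf{Step 1: recover $z$.} I want to apply Part~1, so I need the $\nu_b=(O\Theta Z)B_{\cdot b}$ to be nonzero and pairwise not positively collinear. Since $\rk(O\Theta Z)=K$, the map $x\mapsto O\Theta Z x$ is injective. Injectivity gives $\nu_b\neq0$ because $B_{\cdot b}\neq0$. It also means that $\nu_b=c\,\nu_{b'}$ with $c>0$ would force $B_{\cdot b}=cB_{\cdot b'}$, which is excluded by hypothesis. So Part~1 recovers $z$, hence $Z$, up to a relabelling.

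\textbf{Step 2: recover $P^0$.} This is the step I expect to be the main obstacle, because $O$ itself is not injective. The idea is to show that $\ker O$ meets the relevant column space trivially. Every column of $P^0=\Theta ZBZ^\top\Theta$ lies in $\mathcal S_\Theta:=\range(\Theta Z)$, and the rank condition says exactly that $\ker O\cap\mathcal S_\Theta=\{0\}$. However, $\Theta$ is unknown, so $\mathcal S_\Theta$ is not known a priori. My approach is to parametrize the unknowns and check uniqueness of the whole tuple.

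\textbf{Step 3: uniqueness of $(\theta,B)$.} Suppose $(\theta',B')$ with the same $z$ and the gauge normalization gives the same $M$. Then $\theta_j\nu_{z_j}=\theta'_j\nu'_{z_j}$ for every $j$. Within community $b$, this makes $\theta'_j/\theta_j=c_b$ constant, because $\nu_b,\nu'_b$ are fixed nonzero vectors. Summing over $C_b$ and using the gauge $\sum_{C_b}\theta_j=\sum_{C_b}\theta'_j=n_b$ gives $c_b=1$. So $\theta'=\theta$ and $\nu'_b=\nu_b$.

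Since $\Theta'=\Theta$, we get $O\Theta Z(B'_{\cdot b}-B_{\cdot b})=0$. Injectivity of $O\Theta Z$ then gives $B'=B$, and hence $P^0$ is determined.

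Note that $z$ itself is identified directly by the ray classes from Step~1, so there is no circularity in recovering $z$ before $(\theta,B)$. The only subtle point is that the rank hypothesis refers to the true $\Theta$. This is fine, because the uniqueness argument only uses injectivity at the true parameters after we have shown $\theta'=\theta$.
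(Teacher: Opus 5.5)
Your proposal is correct and follows essentially the paper's route. The ray classes of the columns of $M$ give the partition, injectivity of $O\Theta Z$ carries nonvanishing and non-collinearity from the columns of $B$ to the $\nu_b$, $M$ fixes $\theta$ within each block up to a scale that the gauge pins down, and injectivity of $O\Theta Z$ then returns $B$. The one difference is presentation: the paper reconstructs $\theta$ explicitly from $\|M_{\cdot j}\|_2=\theta_j\|\nu_{z_j}\|_2$, whereas you argue uniqueness through the block-constant ratio $\theta'_j/\theta_j$. In your uniqueness framing, a competing partition $z'$ must equal the ray partition, since each of its $K$ nonempty blocks lies inside one ray class.
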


\begin{remark}[Second-order queries]
    \label{rem:second-order-identifiability}
    For second-order queries, the ideal population signal is
    $M_2=O(P^0)^2=OZBNBZ^\top$, where $N=Z^\top Z$.
    Thus communities $b\ne b'$ have distinct population profiles precisely when
    $OZBNB(e_b-e_{b'})\ne0$: the first-order separation criterion applies with
    $B$ replaced by $BNB$, although a given design may satisfy one criterion
    without satisfying the other. If $B$ is nonsingular, then
    $\rk(OZBNB)=\rk(OZB)=\rk(OZ)$, so second-order queries with the same
    design cannot repair a rank deficiency in the ideal population signal.
\end{remark}

\section{Community Recovery}
\label{sec:recovery}

We now turn to community recovery under the SBM and DCSBM, for both first-order and
fan-out designs. All proofs are collected in
\Cref{app:sbm-proofs,app:dcsbm-proofs,app:fanout-proofs}. Recall that 
we write $M := O P^0$, and let $r := \rk(M)$, and $V_M$ for its top-$r$
right singular vectors.

\paragraph{Algorithm 1}
Given $R$, the number of communities $K$, and the retained rank $r$,
compute the top-$r$ right singular
vectors $\widehat V\in\mathbb R^{n\times r}$. Let $X=\widehat V$ for
the SBM. For the DCSBM, let $X$ be its row-normalized version, mapping
zero rows to a fixed unit vector. We consider two clustering methods.

\emph{Approximate $k$-means.} For a fixed $\varepsilon_{\rm km}\geq0$,
choose $K$ centers and assignments with squared objective at most
$1+\varepsilon_{\rm km}$ times the global minimum:
\begin{equation}
 \label{eq:algorithm-kmeans-objective}
 \sum_i\|X_{i\cdot}-\hat\mu_{\hat z_i}\|_2^2
 \leq(1+\varepsilon_{\rm km})
       \min_{c,z'}\sum_i\|X_{i\cdot}-c_{z'_i}\|_2^2,
\end{equation}
where $c=(c_1,\ldots,c_K)\in(\mathbb R^r)^K$ and $z'\in[K]^n$.
For our exact-recovery guarantees, we also recompute the means of the
assigned clusters and reassign each row to its nearest mean.

\emph{Single linkage.} Construct a minimum spanning tree on the rows of
$X$ with Euclidean edge weights and delete its $K-1$ largest edges;
return the connected components \cite{gower_minimum_1969}.

We measure the error of $\hat z$ against the truth $z$ up to relabelling,
\begin{equation}
    \label{eq:err}
    n_{\mathrm{err}}(\hat z, z) := \min_{\pi \in \mathrm{Sym}(K)}
    \#\{ i \in [n] : \pi(\hat z_i) \neq z_i \},
\end{equation}
We say the algorithm is \emph{weakly consistent} if $n_{\mathrm{err}}/n \to 0$ in
probability. Our exact-recovery results are high-probability statements of
the form $\mathbb P(n_{\mathrm{err}}=0)\geq1-Cn^{-\alpha}$ for each fixed $n$.
Throughout we work in the sparsified regime $P^0 = \rho_n \bar P$ with $\bar P$ of
constant order and $\rho_n \in (0,1]$ controlling the average degree $\asymp n
\rho_n$. We operate in the
relatively sparse regime $n \rho_n
\geq C_0 \log n$ for a constant $C_0=C_0(\|\bar P\|_\infty,\alpha)$.

\subsection{How the design influences bounds}
\label{subsec:design-anatomy}

The design $O$ will enter all of our guarantees through five scalars, which we
discuss below. The natural \emph{signal-to-noise ratio} for the problem is
\begin{equation}
    \label{eq:eps-def}
    \varepsilon_{\rm SNR} \;:=\; \frac{\| O \|_{2 \to 2}\, \sqrt{n \rho_n}}{\sigma_r(OP^0)},
\end{equation}
the noise measured against the signal spectral gap. We require a
\emph{block-subspace conditioning} term
\begin{equation}
    \label{eq:price}
    \kappa_{\mathcal{S}}(O)^2
    \;:=\; \frac{\| O \|_{2\to2}^2\, n_{\min}}{\sigma_+(OZ)^2},
\end{equation}
where $\sigma_+(X)$ is the smallest nonzero singular value of $X$. It measures
the operator norm of $O$ against its smallest nonzero action on the block
subspace. The idea is that this is lower if we create
queries that correlate with
community structure. Directions of the block subspace that $O$ annihilates do
not enter $\kappa_{\mathcal S}(O)$; they lower the rank $r$ of the signal and
enter through the centroid separation $\delta$ of \Cref{subsec:sbm-main}. For
degree queries, for instance, $L=1$, $r=1$, and $\kappa_{\mathcal S}(O)=1$ for
balanced communities. We also introduce
a \emph{per-node leverage} $b_\infty := \| O^\top U_M \|_{2\to\infty}$, which
measures how strongly any single node loads the informative left
singular directions $U_M$ of $M$. For indicator
queries in which each node is counted by $O(1)$ groups, $b_\infty = O(1)$
independently of $L$ and $n$. The largest column norm is $b_O := \max_j \| O_{\cdot j} \|_2$.
Finally we introduce the \emph{Gram overlap}
$g_1(O):=\max_j\|O^\top O_{\cdot j}\|_1$ and
$\varsigma(O):=g_1(O)/(\|O\|_{2\to2}b_O)$.

\begin{definition}[Design regimes]
    \label{def:design-regimes}
    Fix constants $C_2,C_\varsigma>0$. We call a design $O$ \emph{spread} if
    $b_O\leq C_2\|O\|_{2\to2}\sqrt{K/n}$, and \emph{near-orthogonal} if
    $g_1(O)\leq C_\varsigma\|O\|_{2\to2}b_O$. A design that is neither is
    \emph{generic in its columns}, and for it we use only
    $b_O\leq\|O\|_{2\to2}$. For the stacked designs of \Cref{subsec:fanout} we use
    the same terms with $(\|O\|_{2\to2},b_O,g_1(O))$ replaced by the quantities
    $(\Omega_\omega,b_\omega,g_\omega)$ of \Cref{app:fanout-proofs}.
\end{definition}

\begin{example}
\label{ex:enriched-groups}
Suppose prior information lets us form groups enriched for different
communities. Let $n_a=m=n/K$. Keep a fraction $1-\tau$ of community $a$
in query group $G_a$, and distribute the remaining fraction $\tau$ equally
among all $K$ groups. Then
\[
    |G_g\cap C_a|
    =m\left((1-\tau)\mathbf1\{g=a\}+\frac{\tau}{K}\right),
\]
for fixed $0<\tau<1$, assuming integer counts. By \Cref{lem:design-examples},
the design is spread with $C_2=1$, and the conditioning bound of
\Cref{cor:sbm-regimes} holds with $c_1=1-\tau$. It is not near-orthogonal, since
nodes in the same group have identical columns.
\end{example}

More generally, by \Cref{lem:design-examples}, indicator queries of a partition of
$[n]$ into $L$ equal-sized groups are spread when $L\lesssim K$ and near-orthogonal
when $L\asymp n$; the identity, the case $L=n$, is near-orthogonal but not spread.

\subsection{Main guarantee: first-order designs}
\label{subsec:sbm-main}

We will now show that in certain sparsity regimes, depending
on whether the design is spread or near-orthogonal or simply generic,
we can achieve exact recovery for community detection. For the stochastic block model, let $\mu_a$ be the population centroids of
\Cref{lem:pop-embedding}(i). Set
\[
    h_\star:=n_{\min}^{-1/2},\qquad
    \Delta_\star:=\delta:=\min_{a\neq b}\|\mu_a-\mu_b\|_2.
\]

For the degree-corrected SBM, let $\tilde\mu_a$ be the population row directions of
\Cref{lem:pop-embedding}(ii). Set
\begin{equation}
    \label{eq:dcsbm-rownorm}
    \begin{aligned}
    h_\star&:=\|V_M\|_{2\to\infty},\,
    \delta_{\mathrm{row}}:=\min_i\|(V_M)_{i\cdot}\|_2,\\
    \delta_\theta&:=\min_{a\neq b}
    \left\|\frac{\tilde\mu_a}{\|\tilde\mu_a\|_2}
    -\frac{\tilde\mu_b}{\|\tilde\mu_b\|_2}\right\|_2,\,
    \Delta_\star:=\tfrac12\delta_{\mathrm{row}}\delta_\theta.
    \end{aligned}
\end{equation}

\begin{theorem}
    \label{thm:sbm-main}
    \label{thm:dcsbm-main}
    Run Algorithm~1 with $r$ components and the corresponding
    model flag. In the DCSBM case assume $\delta_{\mathrm{row}}>0$.
    Fix $\alpha>1$. There are constants $C_0,c_\star,C$, depending only on
    $\|\bar P\|_\infty$ and $\alpha$, such that if
    $n\rho_n \geq C_0 \log n$,
    $\varepsilon_{\rm SNR} \leq c_\star$ and $b_O\log n\leq c_\star\sigma_r(OP^0)$,
    then with probability at least $1 - 4n^{-\alpha}$
    there is an orthogonal $Q \in \mathbb{R}^{r \times r}$ with $\| \widehat V Q -
    V_M \|_{2 \to \infty} \leq \eta_\infty^\star$, where
    \begingroup
    \setlength{\multlinegap}{0pt}
    \begin{multline}
        \label{eq:eta-infty}
        \eta_\infty^\star
        =C\frac{\sqrt{\rho_n}(\sqrt r+\sqrt{\log n})\,\|O\|_{2\to2}}{\sigma_r(OP^0)}
        +C\varepsilon_{\rm SNR}h_\star\\
        {}+C\frac{(b_\infty+b_Oh_\star+\vartheta)\log n}{\sigma_r(OP^0)}
        +C\varepsilon_{\rm SNR}^2\frac{b_\infty+\vartheta}{\|O\|_{2\to2}},
    \end{multline}
    \endgroup
    with $\vartheta:=b_O\varepsilon_{\rm SNR}$. If in addition
    $\varsigma(O)\varepsilon_{\rm SNR}\leq c_\star$ and
    $\varsigma(O)\|O\|_{2\to2}\log n\leq c_\star\sigma_r(OP^0)$, the same holds
    with $\vartheta:=0$.
    We write this expression as $\eta_\infty$ in the SBM and
    $\eta_\infty^\theta$ in the DCSBM.
On the same event, the same orthogonal matrix $Q$ satisfies
    \begin{equation}
        \label{eq:master-frobenius}
        \|\widehat VQ-V_M\|_F\leq F_\star
        :=2\sqrt{2r}\,C_0\varepsilon_{\rm SNR}.
    \end{equation}
    If $\Delta_\star>0$ and $\eta_\infty^\star<\Delta_\star/4$,
    Algorithm~1 with single linkage recovers $z$ exactly,
    up to relabeling, by \Cref{lem:single-linkage} and, for the DCSBM,
    \Cref{lem:spherical}. For the approximate-$k$-means option, require
    additionally the Frobenius condition in
    \begin{equation}
        \label{eq:master-rounding}
        C_{\rm km}\frac{F_\star^2}{\Delta_\star^2}<n_{\min},
        \qquad \eta_\infty^\star<\Delta_\star/4,
    \end{equation}
    where $C_{\rm km}=16(1+\sqrt{1+\varepsilon_{\rm km}})^2$. Under these conditions,
    Algorithm~1's approximate-$k$-means option with nearest-centroid refinement recovers $z$ exactly, up to
    relabelling, by \Cref{lem:kmeans}.
\end{theorem}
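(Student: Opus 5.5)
The plan is to reduce everything to a perturbation analysis of the right singular vectors of $R=OA$ around those of $M=OP^0$, and then pass to the clustering lemmas. Write
\[
R-M = O(A-P) - O\diag(P^0) =: OE - O\diag(P^0),
\]
where $E=A-P$ is a symmetric hollow matrix with independent centred Bernoulli entries above the diagonal. The deterministic piece $O\diag(P^0)$ has operator norm at most $\|O\|_{2\to2}\rho_n\|\bar P\|_\infty$, which is dominated by $\|O\|_{2\to2}\sqrt{n\rho_n}$. For the random piece, the standard matrix Bernstein bound or the Lei--Rinaldo bound in the regime $n\rho_n\ge C_0\log n$ gives $\|E\|_{2\to2}\le C\sqrt{n\rho_n}$ with probability at least $1-n^{-\alpha}$. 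Hence $\|R-M\|_{2\to2}\le C\|O\|_{2\to2}\sqrt{n\rho_n}=C\varepsilon_{\rm SNR}\,\sigma_r(M)$. Under $\varepsilon_{\rm SNR}\le c_\star$, Weyl's inequality keeps $\sigma_r(R)\ge\sigma_r(M)/2$. Wedin's $\sin\Theta$ theorem then gives $\|\sin\Theta(\widehat V,V_M)\|_{2\to2}\lesssim\varepsilon_{\rm SNR}$. Choosing $Q$ as the orthogonal polar factor of $\widehat V^\top V_M$ and using the rank bound $\|\cdot\|_F\le\sqrt{2r}\|\cdot\|_{2\to2}$ yields the Frobenius statement \eqref{eq:master-frobenius}.

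For the $2\to\infty$ bound, I would follow the leave-one-out / first-order expansion of \cite{cape_two--infinity_2018,abbe_entrywise_2020}, adapted to the asymmetric operator $R^\top R$. Write
\[
\widehat V = R^\top \widehat U\widehat\Sigma^{-1},
\]
and decompose
\[
\widehat VQ - V_M = (R-M)^\top U_M\Sigma_M^{-1} + \text{(second-order terms)}.
\]
The rows of the leading term are $E_{\cdot i}^\top O^\top U_M\Sigma_M^{-1}$ plus a deterministic diagonal contribution. The row $i$ of $E^\top O^\top U_M$ is a sum of independent bounded vectors, each with norm at most $b_\infty$. A vector Bernstein inequality with a union bound over $i$ then gives a bound of order
\[
\sqrt{\rho_n}\,(\sqrt r+\sqrt{\log n})\,\|O\|_{2\to2}+b_\infty\log n,
\]
divided by $\sigma_r$. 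This produces the first term and the $b_\infty$ part of the third term of \eqref{eq:eta-infty}.

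The second-order terms are of two kinds. Those involving $V_M(V_M^\top\widehat V Q - I)$ contribute $\varepsilon_{\rm SNR}^2 h_\star$, which is absorbed into $C\varepsilon_{\rm SNR}h_\star$, using $\|V_M\|_{2\to\infty}\le h_\star$; in the SBM this follows from \Cref{lem:pop-embedding}. The remaining terms have the form $E^\top O^\top(\widehat U-U_MW)$, and these carry the dependence between row $i$ and $\widehat U$.

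Decoupling these remaining terms is the main obstacle. I would construct leave-one-out matrices $R^{(i)}=OA^{(i)}$, where $A^{(i)}$ replaces row and column $i$ of $A$ by their means. Then I would bound $\|\widehat U-\widehat U^{(i)}\|$ via Davis--Kahan applied to the rank-two-in-$A$ perturbation. The column $O_{\cdot i}$ enters this difference, which is where $b_O h_\star$ and $\vartheta=b_O\varepsilon_{\rm SNR}$ appear in \eqref{eq:eta-infty}. The hypothesis $b_O\log n\le c_\star\sigma_r$ is what closes the self-bounding recursion for $\max_i\|(\widehat V Q-V_M)_{i\cdot}\|$.

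In the near-orthogonal case, I would instead control the interaction through $O^\top O_{\cdot i}$ and its $\ell_1$ norm $g_1(O)$. The coupling then only touches $O(\varsigma)$ effective coordinates, so the $b_O\varepsilon_{\rm SNR}$ term can be dropped, giving $\vartheta=0$ under the extra $\varsigma$ conditions. The last term, $\varepsilon_{\rm SNR}^2(b_\infty+\vartheta)/\|O\|$, collects the residual of this recursion. A union bound over the events used (norm bound, row concentration, leave-one-out bounds) gives total failure probability $4n^{-\alpha}$.

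Finally, for recovery: the rows of $V_M$ take only $K$ values with separation $\Delta_\star$ in the SBM, by \Cref{lem:pop-embedding}. If $\eta_\infty^\star<\Delta_\star/4$, every row of $\widehat VQ$ lies within $\Delta_\star/4$ of its community centroid. Hence intra-cluster distances are below $\Delta_\star/2$ and inter-cluster distances exceed $\Delta_\star/2$, and \Cref{lem:single-linkage} gives exact recovery. Single linkage is invariant under $Q$, so the rotation does not matter.

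For the DCSBM, row normalization is handled by \Cref{lem:spherical}. This uses $\delta_{\mathrm{row}}>0$ to turn a row error $\eta$ into a direction error of order $\eta/\delta_{\mathrm{row}}$, which is why $\Delta_\star=\tfrac12\delta_{\mathrm{row}}\delta_\theta$.

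For approximate $k$-means, the Frobenius bound together with $C_{\rm km}F_\star^2/\Delta_\star^2<n_{\min}$ ensures every cluster is hit correctly. The $2\to\infty$ bound then makes the nearest-centroid refinement exact, via \Cref{lem:kmeans}.
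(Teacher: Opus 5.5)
Your architecture is the paper's. It has the splitting $R=M+\mathcal N$, the Wedin/Frobenius bound, and the same leave-one-out matrix (your $OA^{(i)}$ is exactly $R^{(i)}=M+OE^{(i)}-D$). It also uses the rank-two update $\Delta_i=(OE_{\cdot i})e_i^\top+O_{\cdot i}E_{\cdot i}^\top$ and the clustering endgame, which you handle correctly. The gap is in the step you flag as the main obstacle, which as described does not close.

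Davis--Kahan is not available for the rectangular $R$. More importantly, any $\sin\Theta$ bound driven by $\|\Delta_i\|_{2\to2}$ is too weak. Since $\|\Delta_i\|_{2\to2}\geq\|\Delta_ie_i\|_2=\|OE_{\cdot i}\|_2$, which for $O=I_n$ is of order $\sqrt{n\rho_n}=\varepsilon_{\rm SNR}\sigma_r$, you only get $\|\widehat U-\widehat U^{(i)}Q_i\|\lesssim\varepsilon_{\rm SNR}$. The coupling then contributes order $\varepsilon_{\rm SNR}^2$ to the row error. That is not of the form \eqref{eq:eta-infty}, and it would force $n\rho_n\gtrsim K^{3/2}\sqrt n$ for exact recovery.

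You need Wedin in residual form (\Cref{loo:wedin-res}), whose residuals satisfy
\[
\|\Delta_i^\top\widehat U\|_F\le\|\widehat U^\top OE_{\cdot i}\|_2+\|E_{\cdot i}\|_2\,\|\widehat U^\top O_{\cdot i}\|_2,\qquad
\|\Delta_i\widehat V\|_F\le\|OE_{\cdot i}\|_2\,\|\widehat V_{i\cdot}\|_2+\|O_{\cdot i}\|_2\,\|\widehat V^\top E_{\cdot i}\|_2 .
\]
These bring in two dependent quantities that your recursion on $\max_i\|(\widehat VQ-V_M)_{i\cdot}\|_2$ does not track.
\begin{itemize}
\item The first is $\widehat V^\top E_{\cdot i}$. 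It needs its own leave-one-out split and must be bounded jointly with $\|\mathcal N^\top\widehat U\|_{2\to\infty}$. The paper's self-bounding inequality is for $\|\mathcal N^\top\widehat U\|_{2\to\infty}+b_O\|E\widehat V\|_{2\to\infty}$, with coefficient of order $b_O\log n/\sigma_r+\varepsilon_{\rm SNR}$; that is where $b_O\log n\le c_\star\sigma_r$ is used.
\item The second is the design-coordinate leverage $\|O^\top\widehat U\|_{2\to\infty}$. In leave-one-out form it is also the $\|W\|_{2\to\infty}$ parameter of your Bernstein step for the remaining terms.
\end{itemize}

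This leverage is where $\vartheta$ and the two regimes come from. It is not because $O_{\cdot i}$ ``enters the difference,'' nor because the coupling touches $O(\varsigma)$ coordinates. Repeating the Procrustean reduction for the left factor in design coordinates (\Cref{lem:master-left}) gives
\[
\|O^\top\widehat U\|_{2\to\infty}\le2b_\infty+\tfrac{4}{3\sigma_r}\|O^\top\mathcal N\widehat V\|_{2\to\infty}.
\]
There are two ways to bound the last norm.
\begin{itemize}
\item Crudely, it is at most $b_O\|\mathcal N\|_{2\to2}$. This gives leverage $\lesssim b_\infty+b_O\varepsilon_{\rm SNR}$, hence $\vartheta=b_O\varepsilon_{\rm SNR}$.
\item Alternatively, row $j$ of $O^\top OE\widehat V$ is a combination of rows of $E\widehat V$ with total $\ell_1$ weight at most $g_1(O)$. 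So the norm is at most $g_1(O)\|E\widehat V\|_{2\to\infty}$ plus a small term from $D$. This feeds the self-bounded quantity back into its own bound with coefficient of order $\varsigma(O)\varepsilon_{\rm SNR}+\varsigma(O)\|O\|_{2\to2}\log n/\sigma_r$. Absorbing it is exactly what the two extra hypotheses provide, giving $\vartheta=0$.
\end{itemize}
Without the right-factor row noise $\|E\widehat V\|_{2\to\infty}$ and this leverage bound inside the self-bounding system, your sketch does not yield the stated $\eta_\infty^\star$.
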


\begin{corollary}[Exact-recovery regimes, SBM and DCSBM]
    \label{cor:sbm-regimes}
    \label{cor:dcsbm-regimes}
    Assume the model hypotheses of \Cref{thm:sbm-main}, with
    $r=\rk(OP^0)$. For the SBM require
    \begin{equation}
        \label{eq:regime-conditions}
        \begin{gathered}
            \sigma_+(OZ)\geq c_1\|O\|_{2\to2}\sqrt{n_{\min}},\qquad
            \delta\geq c_\delta\,n_{\max}^{-1/2},\\
            \sigma_r(OP^0)\geq c_1\bar\sigma\|O\|_{2\to2}\rho_nn_{\min},
        \end{gathered}
    \end{equation}
    for constants $\bar\sigma,c_\delta>0$. For the DCSBM require
    $\theta_{\max}/\theta_{\min}\leq\bar\kappa$, where $\theta_{\min},\theta_{\max}$
    are the extreme entries of $\theta$, and \eqref{eq:regime-conditions} with
    $O\Theta Z$ and $\tilde n_{\min}$ in place of $OZ$ and $n_{\min}$, and with
    $\delta_\theta\geq c_\delta$ and
    $\delta_{\mathrm{row}}\geq c_\delta\theta_{\min}\tilde n_{\max}^{-1/2}$ in place of
    the condition on $\delta$; here $\tilde n_{\min},\tilde n_{\max}$ are the extreme
    values of $\tilde n_a:=\sum_{i\in C_a}\theta_i^2$. If $\rk(OZ)=K$ (resp.\
    $\rk(O\Theta Z)=K$), the signal condition holds with
    $\bar\sigma=\sigma_r(\bar B)$ by \Cref{lem:signal}. If $r=K$, the separation
    conditions hold with $c_\delta=1$ by \Cref{lem:pop-embedding}; in the SBM,
    $\delta\geq\sqrt{2/n_{\max}}$.
    Each condition below ensures exact recovery by Algorithm~1 with
    probability at least $1-4n^{-\alpha}$. The $k$-means option includes
    the nearest-centroid refinement.
    
    \textup{(i) (Spread designs.)} If $O$ is spread
        (\Cref{def:design-regimes}), it suffices that
        \begin{equation}
            \label{eq:sbm-sharp-degree}
            n\rho_n\geq c
            \begin{cases}
                K^2+K\log n,&\text{single linkage},\\
                K^3+K\log n,&\text{approx. }k\text{-means}.
            \end{cases}
        \end{equation}
    
        \textup{(ii) (Near-orthogonal designs.)} If $O$ is near-orthogonal, then
        \eqref{eq:sbm-sharp-degree} suffices.
        
        \textup{(iii) (Generic column designs.)} If
        $b_\infty\leq C_2\|O\|_{2\to2}\sqrt{K/n}$, write
        $s:=b_O/\|O\|_{2\to2}\in(0,1]$, $\varsigma=\varsigma(O)$,
        $A_s:=s^{2/3}n^{1/3}(K^{5/3}+K(\log n)^{2/3})$ and
        $G_\varsigma:=\varsigma^2K^2+\varsigma K\log n$. Single linkage suffices when
        \begin{equation}
            \label{eq:sbm-generic-degree}
            n\rho_n\geq c\big(K^2+K\log n+\min\{A_s,G_\varsigma\}\big),
        \end{equation}
        and approximate $k$-means when in addition $n\rho_n\geq cK^3$.
    The constant $c$ depends only on $c_1,C_2,C_\varsigma,\|\bar B\|_\infty,
    \bar\sigma,c_\delta,\alpha$, the fixed balance bounds, and, where applicable,
    $\bar\kappa$ and $\varepsilon_{\rm km}$.
\end{corollary}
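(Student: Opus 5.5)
The plan is to substitute the regime conditions \eqref{eq:regime-conditions} into the quantities of \Cref{thm:sbm-main} and check, term by term in \eqref{eq:eta-infty}, that $\eta_\infty^\star<\Delta_\star/4$, together with the Frobenius rounding condition \eqref{eq:master-rounding} for $k$-means.

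\textbf{Step 1: signal-to-noise ratio and the theorem's hypotheses.} From $\sigma_r(OP^0)\geq c_1\bar\sigma\|O\|_{2\to2}\rho_n n_{\min}$ and $n_{\min}\asymp n/K$ we get
\[
\varepsilon_{\rm SNR}\lesssim \frac{\sqrt{n\rho_n}}{\rho_n n/K}=\frac{K}{\sqrt{n\rho_n}} .
\]
Hence $\varepsilon_{\rm SNR}\leq c_\star$ as soon as $n\rho_n\gtrsim K^2$. The hypothesis $b_O\log n\leq c_\star\sigma_r(OP^0)$ follows from $b_O\leq\|O\|_{2\to2}$ whenever $n\rho_n\gtrsim K\log n$. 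The target separation is $\Delta_\star\gtrsim\sqrt{K/n}$ in both models. In the SBM this comes from $\delta\geq c_\delta n_{\max}^{-1/2}$. In the DCSBM it comes from $\delta_\theta\geq c_\delta$ and $\delta_{\rm row}\gtrsim\theta_{\min}\tilde n_{\max}^{-1/2}$; here $\bar\kappa$ together with the gauge normalization gives $\theta_{\min}\asymp1$ and $\tilde n_a\asymp n/K$. Likewise $h_\star\lesssim\sqrt{K/n}$: in the SBM this is the definition, and in the DCSBM it follows from \Cref{lem:pop-embedding} and $\bar\kappa$. The first condition in \eqref{eq:regime-conditions}, on $\sigma_+(OZ)$, is what gives $b_\infty$ and $V_M$ the right size through \Cref{lem:pop-embedding}.

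\textbf{Step 2: bounding each term of \eqref{eq:eta-infty} against $\sqrt{K/n}$.}
\begin{itemize}
\item \emph{First term.} It is $\lesssim (\sqrt K+\sqrt{\log n})K/(\sqrt{\rho_n}\,n)$. Compared with $\sqrt{K/n}$, this requires $n\rho_n\gtrsim K^2+K\log n$.
\item \emph{Second term.} $\varepsilon_{\rm SNR}h_\star\lesssim (K/\sqrt{n\rho_n})\sqrt{K/n}$. This requires only $n\rho_n\gtrsim K^2$.
\item \emph{Third term.} We use $b_\infty\leq\|O\|\,\|U_M\|$. For spread designs we bound $b_\infty$ by $\|O\|\sqrt{K/n}$, via the column bound on $b_O$. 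The contribution is then $(\|O\|\sqrt{K/n})\log n\cdot K/(\|O\|\rho_n n)$, which is fine under $n\rho_n\gtrsim K\log n$.
\item \emph{The $\vartheta$ piece.} For spread designs, $\vartheta=b_O\varepsilon_{\rm SNR}$ is also $\lesssim\|O\|\sqrt{K/n}$, so it is handled the same way.
\item \emph{Last term.} $\varepsilon_{\rm SNR}^2(b_\infty+\vartheta)/\|O\|\lesssim (K^2/(n\rho_n))\sqrt{K/n}$, which is again fine under $n\rho_n\gtrsim K^2$.
\end{itemize}

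\textbf{Step 3: near-orthogonal designs.} Here we use the $\vartheta=0$ branch. Its side conditions read $C_\varsigma\varepsilon_{\rm SNR}\leq c_\star$ and $C_\varsigma\|O\|\log n\lesssim\|O\|\rho_n n/K$, which are the same degree requirements as before. The remaining issue is $b_Oh_\star\log n/\sigma_r$, with the crude bound $b_O\leq\|O\|$. This gives $\|O\|\sqrt{K/n}\log n\cdot K/(\|O\|\rho_nn)$, which is again fine. For $b_\infty$ in this case I would use the bound on $\|U_M\|$ supplied by the near-orthogonality lemma (\Cref{lem:design-examples}-type). This is the step I expect to be delicate: bounding $b_\infty$ without the spread assumption. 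If that bound is unavailable, I would instead route the argument through $g_1$ via the $\vartheta=0$ version of the theorem.

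\textbf{Step 4: $k$-means.} The Frobenius condition needs $F_\star^2/\Delta_\star^2\lesssim K\cdot K^2/(n\rho_n)\cdot n/K\ll n/K$. This gives $n\rho_n\gtrsim K^3$.

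\textbf{Step 5: generic designs, case (iii).} Here the leverage assumption handles $b_\infty$ as in Step 2, and only the $\vartheta$ and $b_Oh_\star$ terms change. Taking the better of the two branches gives the minimum in \eqref{eq:sbm-generic-degree}.
\begin{itemize}
\item \emph{Branch $\vartheta\neq0$.} We write $b_O=s\|O\|$. The binding constraint is $s\,\varepsilon_{\rm SNR}\log n/\sigma_r\cdot\|O\|\lesssim\sqrt{K/n}$, together with the $b_O h_\star$ analogue, which is milder. Solving for $n\rho_n$, with one power coming from $\varepsilon_{\rm SNR}$ and one from $\sigma_r$, produces a $(\cdot)^{2/3}$ scaling. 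This yields $A_s$, including the $\varepsilon_{\rm SNR}^2$ term's $K^{5/3}$ contribution.
\item \emph{Branch $\vartheta=0$.} The conditions $\varsigma\varepsilon_{\rm SNR}\leq c_\star$ and $\varsigma\|O\|\log n\leq c_\star\sigma_r$ translate into $G_\varsigma$.
\end{itemize}
The main obstacle is carrying out this exponent bookkeeping so that it matches $A_s$ exactly.
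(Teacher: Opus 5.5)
Your treatment of the spread case (i), the $k$-means Frobenius condition, and the two branches of the generic case (iii) follows the paper's argument, and the exponent bookkeeping you sketch does produce $A_s$ and $G_\varsigma$. The gap is in part (ii). With $\vartheta=0$, the bound \eqref{eq:eta-infty} still contains the population leverage $b_\infty$, in the terms $b_\infty\log n/\sigma_r(OP^0)$ and $\varepsilon_{\rm SNR}^2b_\infty/\|O\|_{2\to2}$. Near-orthogonality does not by itself give $b_\infty\lesssim\|O\|_{2\to2}\sqrt{K/n}$, and neither of your two routes supplies this bound:
\begin{itemize}
\item \Cref{lem:pop-embedding} controls the rows of $V_M$, not of $O^\top U_M$.
\item \Cref{lem:design-examples} only covers indicator partitions.
\item ``Routing through $g_1$ via the $\vartheta=0$ version'' only removes $\vartheta$; it leaves $b_\infty$ in place.
\end{itemize}
If you fall back on $b_\infty\leq b_O\leq\|O\|_{2\to2}$, then after dividing by $\delta\asymp\sqrt{K/n}$ these two terms are of order $\sqrt{Kn}\log n/(n\rho_n)$ and $K^{3/2}\sqrt n/(n\rho_n)$. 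That gives only the much weaker threshold $n\rho_n\gtrsim K^{3/2}\sqrt n+\sqrt{Kn}\log n$ (compare \Cref{cor:fanout-generic}), not $K^2+K\log n$.

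The missing idea is also the one place the first condition in \eqref{eq:regime-conditions} is genuinely used. You attribute it to \Cref{lem:pop-embedding} in Step 1, but you never actually use it.
\begin{itemize}
\item Since $\range(U_M)=\range(M)\subseteq\range(OZ)$, write $U_M=OZX$ with the columns of $X$ orthogonal to $\ker(OZ)$.
\item Then $\|y\|_2=\|OZXy\|_2\geq\sigma_+(OZ)\|Xy\|_2$, so $\|X\|_{2\to2}\leq1/\sigma_+(OZ)$.
\item Row $j$ of $O^\top U_M$ is $(O^\top OZ)_{j\cdot}X$.
\item Row $j$ of $O^\top OZ$ has $\ell_1$ norm at most $\|O^\top O_{\cdot j}\|_1\leq g_1(O)$.
\end{itemize}
Hence
\[
b_\infty\leq\frac{g_1(O)}{\sigma_+(OZ)}\leq\frac{C_\varsigma\|O\|_{2\to2}\,b_O}{c_1\|O\|_{2\to2}\sqrt{n_{\min}}}\leq\frac{C_\varsigma\|O\|_{2\to2}}{c_1\sqrt{n_{\min}}}\lesssim\|O\|_{2\to2}\sqrt{K/n},
\]
using near-orthogonality, the conditioning hypothesis, $b_O\leq\|O\|_{2\to2}$ and balance. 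With this bound, the terms in (ii) reduce to exactly those of (i).

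In the DCSBM the same argument runs with $O\Theta Z$ in place of $OZ$. The rows of $O^\top O\Theta Z$ have $\ell_1$ norm at most $\theta_{\max}g_1(O)$, and $\tilde n_{\min}\geq\theta_{\min}^2n_{\min}$; both extra factors are absorbed by $\bar\kappa$.
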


% \begin{remark}[Selecting the embedding rank]
%     \label{rem:rank-selection}
%     We assume that the rank $r$ is known, which is frequent in the
%     literature. When it is not, Weyl's inequality and
%     \eqref{eq:total-noise-event} show that
%     counting the singular values of $R$ strictly above $2 C_0 \| O \|_{2\to2}
%     \sqrt{n\rho_n}$ recovers $r$ exactly whenever $\varepsilon_{\rm SNR} \leq 1/(4C_0)$.
%     This rule requires knowledge of $\|O\|_{2\to2}$, $\rho_n$ and the constant
%     $C_0$, which is not explicit.
%     \cite{taing_misspecifying_2026} gives a precise account of the effects of
%     misspecifying $r$.
% \end{remark}

\subsection{Second-order designs}
\label{subsec:fanout}

We now consider supplementing these observations with counts
of longer walks through the network, including the friend-of-friend
queries in \eqref{eq:fanout-question}. We first study how combining query
orders affects the rank of the block signal, then give recovery guarantees
at order two. For the stochastic block model, the signals at each order
are defined using $(P^0)^m=ZB^{(m)}Z^\top$, where
$B^{(m)}=B(NB)^{m-1}$, as in \Cref{rem:second-order-identifiability}.
Allowing a possibly
different battery $O_k \in \mathbb{R}^{L_k \times n}$ at each order and stacking,
write
\begin{equation}
    \label{eq:stacked}
    \begin{gathered}
        R^{(m)} := \big[ (O_1 A)^\top, \dots, (O_m A^m)^\top \big]^\top, \\
        H := \big[ (O_1 Z B^{(1)})^\top, \dots, (O_m Z B^{(m)})^\top \big]^\top,
    \end{gathered}
\end{equation}
Define the block signal $M^{(m)}:=HZ^\top$. Its right singular subspace 
equals $\mathcal S$ if and
only if $\rk(H)=K$. For $m\geq3$, $R^{(m)}$ is an ideal observation, and
\Cref{prop:stacked-rank} concerns only $H$.
Here and below, we say that a
property holds for \emph{generic} $B$ if it holds off the zero set of some nonzero
polynomial in the entries of $B$---a closed, Lebesgue-null set.

\begin{proposition}
    \label{prop:stacked-rank}
    For generic $B$,
    \begin{equation}
        \label{eq:stacked-rank}
        \rk(H) \;=\; \min\!\Big(K,\ \textstyle\sum_{k=1}^m \rk(O_k Z)\Big),
    \end{equation}
    so the population right singular space equals the entire block subspace
    iff $\sum_{k=1}^m\rk(O_kZ)\geq K$. Population partition
    separation instead requires pairwise distinct columns of $H$.
\end{proposition}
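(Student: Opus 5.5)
The plan is to prove the rank identity \eqref{eq:stacked-rank} by a matching upper bound, valid for every $B$, and a lower bound that holds off a polynomial zero set. Throughout write $r_k := \rk(O_kZ)$ and $W_k := O_kZ \in \mathbb{R}^{L_k\times K}$.

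\emph{Upper bound.} The $k$-th block of $H$ is $W_kB^{(k)}$, so its rank is at most $r_k$. Since $H$ has $K$ columns, subadditivity of rank over stacked blocks gives $\rk(H)\leq\min(K,\sum_k r_k)$ for every $B$.

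\emph{Lower bound: reduction to one witness.} Set $s:=\min(K,\sum_k r_k)$. The event $\rk(H)\geq s$ says that some $s\times s$ minor of $H$ is nonzero. Each such minor is a polynomial in the entries of $B$, because $B^{(k)}=B(NB)^{k-1}$ with $N$ fixed. Let $q$ be the sum of squares of all these minors. Then $\rk(H)<s$ exactly on the zero set of $q$. So it suffices to show that $q$ is not the zero polynomial, i.e.\ to exhibit one symmetric $B$ with $\rk(H)=s$. Any polynomial identity that holds on an open set of admissible $B$ holds identically, so the witness may be any real symmetric $B$; it need not lie in $[0,1]^{K\times K}$.

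\emph{Simplifying the family.} Substitute $B=N^{-1/2}SN^{-1/2}$ with $S$ symmetric. Then $NB=N^{1/2}SN^{-1/2}$, and hence
\[
B^{(k)}=N^{-1/2}S^kN^{-1/2}.
\]
The $k$-th block becomes $U_kS^kN^{-1/2}$ with $U_k:=W_kN^{-1/2}$, and $\rk(U_k)=r_k$. The right factor $N^{-1/2}$ is invertible. The problem therefore reduces to the following: for row spaces $T_k\subseteq\mathbb{R}^K$ with $\dim T_k=r_k$, find a symmetric $S$ such that $\dim\sum_k T_kS^k=s$.

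\emph{Main obstacle: building $S$.} The obstacle is that a single $S$ must put all the images $T_kS^k$ in general position at once, even when the $T_k$ coincide. My first attempt is $S=Q\,\mathrm{diag}(\lambda_1,\dots,\lambda_K)\,Q^\top$. Here $Q$ is orthogonal and chosen so that each $T_kQ$ is in general position with respect to the coordinate axes, meaning all maximal minors of a basis matrix are nonzero. In coordinates the blocks then read $(\text{basis of }T_kQ)\,\mathrm{diag}(\lambda)^k$.

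The key step is to choose $s$ rows, taking at most $r_k$ of them from block $k$, and $s$ columns, and then expand the resulting minor as a polynomial in $\lambda$. Column $j$ carries the factor $\lambda_j^k$ in the rows of block $k$. I will assign the columns greedily, giving the largest-index columns to the highest-power blocks, and look for a monomial in $\lambda$ that arises from a unique term of the Laplace expansion along the block structure. Its coefficient is a product of maximal minors of the $T_kQ$, which are nonzero by the choice of $Q$. A lexicographic leading-monomial argument of Vandermonde type should then show that the minor is a nonzero polynomial in $\lambda$. I expect the delicate part to be verifying uniqueness of the leading monomial when several $r_k$ share columns; failing that, I would fall back on an induction on $m$ that adds one block at a time and perturbs only the top eigenvalue.

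\emph{Remaining claims.} Given the rank identity, the right singular space of $M^{(m)}=HZ^\top$ is $\range(Z)=\mathcal S$ exactly when $\rk(H)=K$, which happens iff $\sum_k r_k\geq K$. For the last sentence of the proposition, the columns of $M^{(m)}$ are $He_{z_j}$, so the argument of \Cref{prop:pairwise} applies verbatim with $OZB$ replaced by $H$: population separation holds iff the columns of $H$ are pairwise distinct.
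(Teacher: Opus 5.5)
Your proposal is correct and follows essentially the paper's route, with a simpler last step. The substitution $B=N^{-1/2}SN^{-1/2}$ with $S=Q\diag(\lambda)Q^\top$ is exactly the paper's eigendecomposition of $T=NB$ with eigenvectors $v_j=N^{1/2}w_j$, $w\in O(K)$. Your general-position condition on $Q$ plays the role of the paper's invertible blocks $\Gamma_k$.

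The real difference is how nonvanishing of the minor is shown. The paper sets the eigenvalues to $\lambda_j=t^k$ on disjoint index blocks $J_k$. This merges many determinant terms into the same power of $t$, so it needs a swap (rearrangement) argument to show that the top power $t^{\sum_k k^2r_k}$ comes only from permutations mapping the $k$-th block onto $J_k$.

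Keeping $\lambda$ as free variables, as you do, makes the step you flagged as delicate immediate. Let $Y_k$ be a basis matrix of $T_kQ$. In the generalized Laplace expansion along the row blocks, the term indexed by an ordered partition $(C_1,\dots,C_m)$ of the chosen columns is $\pm\prod_k\det Y_k[R_k,C_k]\cdot\prod_k\prod_{j\in C_k}\lambda_j^k$. Here $Y_k[R_k,C_k]$ is the submatrix on rows $R_k$ and columns $C_k$. Because the block exponents $1,\dots,m$ are distinct, the monomial determines the partition. So no two terms can cancel, and you need no greedy assignment, lexicographic order or fallback induction. The minor is a nonzero polynomial as soon as one coefficient is nonzero. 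That requires only one fixed disjoint family of column blocks on which each $Y_k$ is invertible, which is exactly the paper's condition, not all maximal minors.

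Two small repairs remain:
\begin{enumerate}
\item When $\sum_kr_k>K$ you take fewer than $r_k$ rows from some blocks, so the coefficients are non-maximal minors, not the maximal minors you claim. Fix this by first replacing each $T_k$ with a subspace $T_k'$ satisfying $\sum_k\dim T_k'=K$ before choosing $Q$, as the paper does.
\item The existence of $Q$ needs a line of justification. For instance, use the paper's observation that a Haar-random orthogonal frame meets finitely many fixed subspaces of complementary dimension trivially with probability one.
\end{enumerate}
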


We develop the statistical theory at order two, for the fan-out matrix $A^{\langle
2 \rangle} = A^2 - D_A$ of \eqref{eq:fanout-def}.
Fix non-zero batteries $O_1, O_2$ and a block weight $\omega > 0$,
and set
\begin{equation}
    \label{eq:fanout-stack}
    R_\omega := \begin{bmatrix} O_1 A \\ \omega\, O_2 A^{\langle 2 \rangle}
    \end{bmatrix},
    H_\omega := \begin{bmatrix} O_1 Z B \\ \omega\, O_2 Z B^{(2)} \end{bmatrix},
\end{equation}
$M_\omega := H_\omega Z^\top$ and $B^{(2)} := B N B$.
For the degree-corrected stochastic block model, let
$\widetilde N:=Z^\top\Theta^2Z$ and set
\begin{equation}
    \label{eq:fanout-dcsbm-H}
    H^\theta_\omega:=
    \begin{bmatrix}
        O_1\Theta ZB\\
        \omega O_2\Theta ZB\widetilde N B
    \end{bmatrix}.
\end{equation}
The master theorem, \Cref{thm:fanout-main} in
\Cref{app:fanout-proofs}, gives the following recovery guarantees.

\begin{corollary}[Exact-recovery regimes, fan-out queries for SBM and DCSBM]
    \label{cor:fanout-regimes}
    Assume the model hypotheses of \Cref{thm:fanout-main}, with balanced
    communities, and fix $\alpha>1$. Take
    $\omega=((1+\|\bar B\|_\infty)n\rho_n)^{-1}$ and write
    \[
        \Omega_\omega:=\|O_1\|_{2\to2}+\|O_2\|_{2\to2},\qquad
        \bar H^\star:=\rho_n^{-1}H_\omega^\star,
    \]
    where $(H_\omega^\star,n_{\min}^\star)=(H_\omega,n_{\min})$ for
    the stochastic block model and
    $(H_\omega^\theta,\tilde n_{\min})$ for the degree-corrected model.
    Assume, with $r_\omega:=\rk(H_\omega^\star)$,
    \begin{equation}
        \label{eq:fanout-well-cond}
        \sigma_{r_\omega}(\bar H^\star)\geq c_1\Omega_\omega\sqrt{n_{\min}^\star},
    \end{equation}
    and the separation conditions of \Cref{cor:sbm-regimes} for the stacked
    embedding, with $\delta$, $\delta_\theta$ and $\delta_{\mathrm{row}}$ computed
    for $M_\omega^\star$; if $r_\omega=K$, these hold with $c_\delta=1$.
    For the degree-corrected model, also require
    $\theta_{\max}/\theta_{\min}\leq\bar\kappa$.
    Each condition below ensures exact recovery by Algorithm~1 applied to
    $R_\omega$ with $r_\omega$ components and the corresponding model flag, with
    probability at least $1-4n^{-\alpha}$. The $k$-means method includes
    nearest-centroid refinement. The column quantities below are defined
    in \eqref{eq:fanout-Omega}--\eqref{eq:fanout-gram} in \Cref{app:fanout-proofs}.

    \textup{(i) (Spread designs.)} If the stacked design is spread
    (\Cref{def:design-regimes}), it suffices that
    \begin{equation}
        \label{eq:fanout-spread-degree}
        n\rho_n\geq c
        \begin{cases}
            K^2+K\log n,&\text{single linkage},\\
            K^3+K\log n,&\text{approx. }k\text{-means}.
        \end{cases}
    \end{equation}
    \textup{(ii) (Near-orthogonal designs.)} If the stacked design is
    near-orthogonal and
    $b_\infty^\omega+b_{\infty,2}^\omega\leq C_2\Omega_\omega\sqrt{K/n}$, then
    \eqref{eq:fanout-spread-degree} suffices. This includes $O_1=O_2=I_n$.
    The constant $c$ depends only on $c_1,c_\delta,C_2,C_\varsigma,\|\bar P\|_\infty,\alpha$,
    the fixed balance bounds, and
    $\bar\kappa$ and $\varepsilon_{\rm km}$. \Cref{cor:fanout-generic} in
    \Cref{app:fanout-proofs} treats generic column designs.
\end{corollary}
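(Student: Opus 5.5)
The plan is to derive \Cref{cor:fanout-regimes} from the master result \Cref{thm:fanout-main} in the same way \Cref{cor:sbm-regimes} follows from \Cref{thm:sbm-main}. That is, we substitute the stacked quantities $(\Omega_\omega,b_\omega,g_\omega)$ and the effective design conditioning into the two-to-infinity bound $\eta_\infty^\star$ and the Frobenius bound $F_\star$. Then we check that $\eta_\infty^\star<\Delta_\star/4$, together with \eqref{eq:master-rounding} for $k$-means, holds under the stated degree conditions.

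\textbf{Step 1: the weight $\omega$.} With $\omega=((1+\|\bar B\|_\infty)n\rho_n)^{-1}$, the scaled fan-out block $\omega O_2A^{\langle2\rangle}$ has mean $\omega O_2\hol(P^2)$ of the same order $\rho_n$ per entry as $O_1A$. Its fluctuation decomposes as follows. The linear term is $\omega O_2(EP+PE)$, of operator norm $\lesssim\|O_2\|_{2\to2}\sqrt{n\rho_n}$ after scaling. The quadratic term is $\omega O_2\hol(E^2)$, controlled by $\omega\|E\|^2\lesssim 1$ times $\|O_2\|$, which is of lower order. The diagonal correction $D_A-\diag(P^2)$ and the bias $\diag$ terms are handled similarly. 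The theorem's effective noise level is therefore $\Omega_\omega\sqrt{n\rho_n}$, so
\[
\varepsilon_{\rm SNR}\lesssim\frac{\Omega_\omega\sqrt{n\rho_n}}{\sigma_{r_\omega}(H_\omega^\star Z^\top)}.
\]

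\textbf{Step 2: the signal.} Since $M_\omega^\star=H^\star_\omega Z^\top$ (with $\Theta$ inserted in the DCSBM) and $\sigma_K(Z)=\sqrt{n_{\min}}$, we get $\sigma_{r_\omega}(M_\omega^\star)\geq\sigma_{r_\omega}(H^\star_\omega)\sqrt{n_{\min}^\star}\ge c_1\rho_n\Omega_\omega n_{\min}^\star$ by \eqref{eq:fanout-well-cond}. Hence $\varepsilon_{\rm SNR}\lesssim K/\sqrt{n\rho_n}$ for balanced communities, mirroring the first-order case. The separation term $\Delta_\star\gtrsim\sqrt{K/n}$ comes from the assumed separation conditions, using \Cref{lem:pop-embedding} when $r_\omega=K$. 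For the DCSBM, the bound $\theta_{\max}/\theta_{\min}\le\bar\kappa$ converts $\tilde n$ to $n$ up to constants.

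\textbf{Step 3: the terms of $\eta_\infty^\star$.} We take $h_\star\asymp\sqrt{K/n}$ and bound each term against $\Delta_\star/4$.
\begin{itemize}
\item The term $\sqrt{\rho_n}(\sqrt r+\sqrt{\log n})\Omega_\omega/\sigma_{r_\omega}$ is of order $\sqrt{K/n}\cdot(\sqrt K+\sqrt{\log n})\sqrt{K/(n\rho_n)}$. It is small relative to $\sqrt{K/n}$ when $n\rho_n\gtrsim K^2+K\log n$.
\item The term $\varepsilon_{\rm SNR}h_\star$ requires $n\rho_n\gtrsim K^2$.
\item The $\log n$ term needs $b_\infty^\omega$ and $b_\omega h_\star$ of order $\Omega_\omega\sqrt{K/n}$. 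This is the spread hypothesis in case (i) and the assumed leverage bound in case (ii).
\item For $\vartheta$: in case (i) we use $\vartheta=b_\omega\varepsilon_{\rm SNR}$, which spread makes of order $\Omega_\omega h_\star\varepsilon_{\rm SNR}$. In case (ii) near-orthogonality gives $\varsigma\le C_\varsigma$, so the $\vartheta=0$ branch applies once $\varepsilon_{\rm SNR}$ is small and $\Omega_\omega\log n\lesssim\sigma_{r_\omega}$, i.e.\ $n\rho_n\gtrsim K\log n$.
\end{itemize}
For $k$-means, $F_\star^2/\Delta_\star^2\asymp rK^2/(n\rho_n)\cdot n/K<n_{\min}$ requires $n\rho_n\gtrsim K^3$. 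The case $O_1=O_2=I_n$ is then checked directly: the stacked design is near-orthogonal and $b^\omega_\infty\lesssim\|V_M\|_{2\to\infty}\asymp\sqrt{K/n}$.

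\textbf{Main obstacle.} I expect the hardest part to be verifying that the quadratic noise $\omega O_2\hol(E^2)$ and the diagonal corrections do not spoil the two-to-infinity (leave-one-out) control in \Cref{thm:fanout-main}. Their rows are not independent, so the row-wise perturbation terms $b_{\infty,2}^\omega$ must be absorbed at the scale $\Omega_\omega\sqrt{K/n}\log n/\sigma_{r_\omega}$. I would first rely on the master theorem's stated form and only check that, with this choice of $\omega$, every extra term carries a factor $(n\rho_n)^{-1/2}$ relative to its first-order analogue.
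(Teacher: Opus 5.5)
Your overall route is the paper's. You use \eqref{eq:fanout-well-cond} to get $\sigma_{r_\omega}(M_\omega^\star)\geq c_1\Omega_\omega\rho_n n_{\min}^\star$, hence $\varepsilon_\omega\lesssim K/\sqrt{n\rho_n}$, and then divide the terms of \eqref{eq:fanout-eta} by $\Delta_\omega\asymp\sqrt{K/n}$. Your Step~1 and the ``main obstacle'' are already handled inside \Cref{thm:fanout-main}; with the canonical weight $\tau_\omega=1$, so the theorem's $\Omega_\omega$ is the corollary's.

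The gap is in how you control $b_\omega$. In case (ii) you attribute $b_\omega h_\star\lesssim\Omega_\omega\sqrt{K/n}$ to the assumed leverage bound. That hypothesis concerns only $b_\infty^\omega+b_{\infty,2}^\omega$, and near-orthogonality, $g_\omega\leq C_\varsigma\Omega_\omega b_\omega$, bounds $b_\omega$ only from below. What you need is $b_\omega\leq C_b\Omega_\omega$. Unlike the first-order triviality $b_O\leq\|O\|_{2\to2}$, this must be checked piece by piece in \eqref{eq:bomega}, using the canonical weight, $\|P_{\cdot j}\|_2\leq\rho_n\sqrt n\,\|\bar P\|_\infty$ and $\log n\leq n\rho_n$. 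The same bound also verifies the master theorem's standing hypothesis $b_\omega\log n\leq c_\star\sigma_{r_\omega}(M_\omega^\star)$ in case (ii), which you never check. Symmetrically, in case (i) spreadness constrains $b_\omega$ alone. Before it controls the leverage terms you need $b_\infty^\omega\leq\max_j\|\bar{\mathcal O}_{\cdot j}\|_2\leq b_\omega$ and $b_{\infty,2}^\omega\leq\tau_\omega b_{O_2}\leq b_\omega$.

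The claim that $O_1=O_2=I_n$ is covered is also not justified as written. The quantity $b_\infty^\omega=\|\bar{\mathcal O}^\top U_\omega\|_{2\to\infty}$ involves the left singular vectors $U_\omega\in\mathbb R^{2n\times r_\omega}$ of the $2n\times n$ matrix $M_\omega$, not $V_\omega$. The identification of left and right singular vectors available for symmetric $P^0$ at first order is lost after stacking. The crude route $\|(U_\omega)_{i\cdot}\|_2\leq\|(M_\omega)_{i\cdot}\|_2/\sigma_{r_\omega}(M_\omega)$ only gives $K/\sqrt n$, which is off by $\sqrt K$.

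The paper instead writes $U_\omega=\diag(Z,Z)X$ and tests $U_\omega^\top$ against the unit vectors $n_a^{-1/2}\mathbf 1_{C_a}$ placed in either block. This gives $\|U_\omega\|_{2\to\infty}\leq n_{\min}^{-1/2}$, hence $b_\infty^\omega=\|U_{\omega,1}+\omega PU_{\omega,2}\|_{2\to\infty}\leq 2n_{\min}^{-1/2}$ and $b_{\infty,2}^\omega\leq n_{\min}^{-1/2}$. The DCSBM case is the same with $\Theta Z$ in place of $Z$.

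Near-orthogonality must also be computed: $g_\omega=2=\Omega_\omega$ and $b_\omega\geq2$, so $\varsigma_\omega\leq 1/2$. The conditioning \eqref{eq:fanout-well-cond} holds with $c_1=\sigma_{r_\omega}(\bar B)/2$, since $\sigma_{r_\omega}(\bar H)\geq\sigma_{r_\omega}(Z\bar B)\geq\sqrt{n_{\min}}\,\sigma_{r_\omega}(\bar B)$. With these additions your argument matches the paper's proof.
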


\section{Experiments}
\label{sec:experiments}

We study how query composition and second-order responses affect recovery.
The first three experiments use two communities of size $n/2$ and
connectivity matrix $B=\rho_n\bar B_{s,g}$, where
\[
    \bar B_{s,g}=\begin{pmatrix}
        1+s+g/2&1-s\\1-s&1+s-g/2
    \end{pmatrix}.
\]
Here $s$ controls the contrast between within- and between-community
connections, while $g$ controls the difference between the two
within-community connection probabilities. We choose $\rho_n$ to give
average expected degree $d=c\log n$. The first two experiments use
$\bar B_s:=\bar B_{s,0}$; the third varies $g$ to change the expected degrees.

We compare designs on the same graph draws and report the mean misclassification fraction
$n_{\mathrm{err}}/n$ and the frequency of exact recovery. Error bars are
95\% normal intervals for mean misclassification and 95\% Wilson intervals
for exact-recovery frequency. We used Algorithm~1's embedding at the population
signal rank, followed by $k$-means with 20 initializations and one
nearest-centroid refinement, normalizing rows for the degree-corrected model.
Lloyd's algorithm with 20 initializations need not return the
$(1+\varepsilon_{\rm km})$-approximate solution assumed in \Cref{lem:kmeans}.
We do not run single linkage.

For the queries we partition the vertices into groups $G_1,G_2$, where
$G_1$ contains a fraction $a$ of the first community and a fraction $1-a$ of
the second, and $G_2$ is its complement, choosing vertices uniformly
at random. We use the true community assignments to form groups with the
prescribed overlap $a$. The design has
$\kappa_{\mathcal S}(O)=|2a-1|^{-1}$ for $a>1/2$. In the stochastic block
model with $g=0$, setting $a=1/2$ makes the two community profiles identical.

\begin{figure*}[!t]
    \centering
    \includegraphics[width=0.8\textwidth]{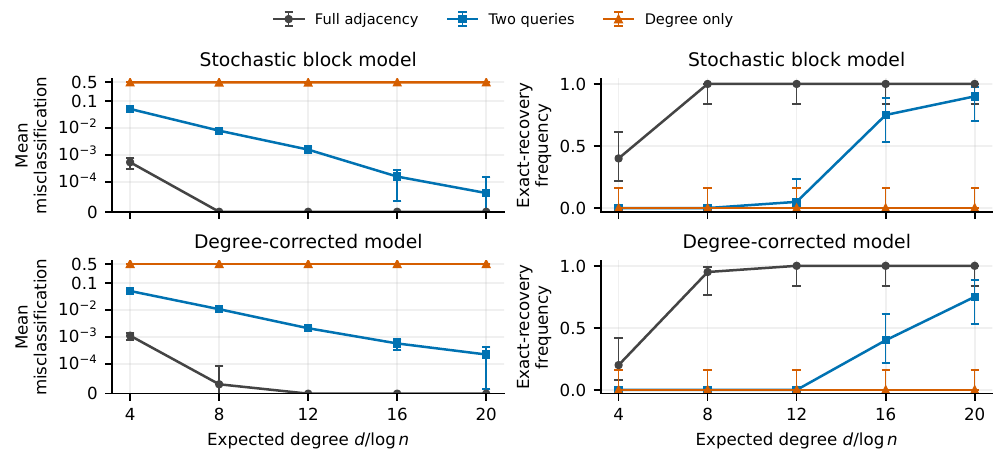}
    \caption{First-order recovery at $n=1600$, $s=0.6$, $g=0$, and $a=0.75$,
    with 20 graph draws per point. Degree parameters $0.8$ and $1.2$ occur
    equally often within each community of the degree-corrected model.
    The left vertical axes are logarithmic above $10^{-4}$ and linear
    below it to include zero error.}
    \label{fig:exp-first-order}
\end{figure*}

The first comparison fixes $s=0.6$ and $a=0.75$
(\Cref{fig:exp-first-order}). In the stochastic block model, the two-query
design has $\kappa_{\mathcal S}(O)=2$, compared with $1$ for full adjacency,
and doubles $\varepsilon_{\rm SNR}$ in \Cref{thm:sbm-main}. In the simulations,
exact recovery from two queries requires a larger average expected degree:
at $n=1600$ and
$d=8\log n$, two queries give mean error $0.79\%$ and no exact recoveries,
while full adjacency recovers exactly in all 20 runs. The difference between
small error and exact recovery illustrates the distinction in
\Cref{thm:sbm-main} between controlling total embedding error and keeping
every row within the population separate. Degree alone remains near
$50\%$ error, consistent with the coincident degree profiles in
\Cref{cor:partition}. The degree-corrected comparison shows the same
qualitative dependence on the design.

\begin{figure*}[!t]
    \centering
    \includegraphics[width=0.75\textwidth]{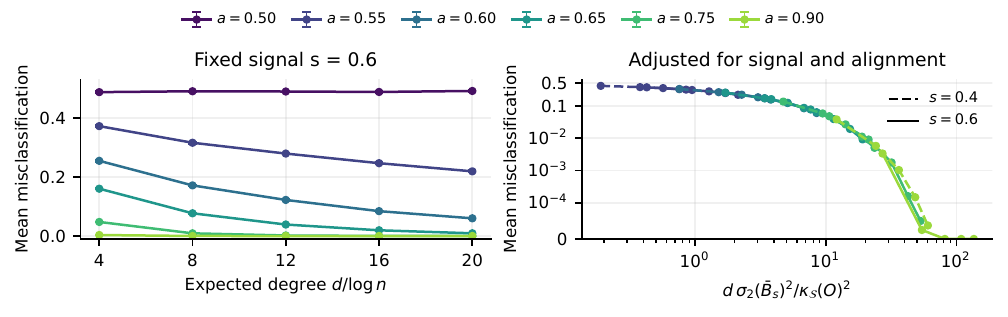}
    \caption{Query alignment at $n=1600$ and $g=0$, with 50 graph draws per setting.
    Left: $s=0.6$. Right: $s\in\{0.4,0.6\}$ after rescaling by signal
    strength and design conditioning; $a=0.5$ is omitted because its
    rescaled signal is zero. The right vertical axis is logarithmic above
    $10^{-4}$ and linear below it.}
    \label{fig:exp-conditioning}
\end{figure*}

To isolate the effect of group composition in the stochastic block model,
we hold the number of queries fixed at two and vary $a$ from $0.5$ to $0.9$,
using $s\in\{0.4,0.6\}$.
For $a>1/2$, \eqref{eq:eps-def} gives
\[
    \varepsilon_{\rm SNR}^2=
    \frac{4\kappa_{\mathcal S}(O)^2}{n\rho_n\,\sigma_2(\bar B_s)^2},
    \qquad \sigma_2(\bar B_s)=2s.
\]
Thus the rescaling in \Cref{fig:exp-conditioning} compares the same
perturbation parameter across different signal strengths and query
compositions. The approximate agreement of the rescaled curves supports the
joint dependence on signal strength and design conditioning in
\Cref{thm:sbm-main}.

\begin{figure*}[!t]
    \centering
    \includegraphics[width=0.75\textwidth]{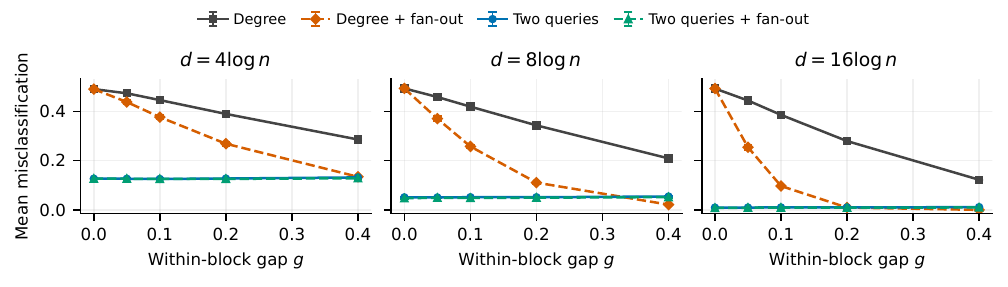}
    \caption{Second-order queries under $\bar B_{0.4,g}$ at $n=3200$
    and alignment $a=0.75$, with 30 graph draws per setting and weight
    $\omega=\{(1+\|\bar B_{0.4,g}\|_\infty)n\rho_n\}^{-1}$.}
    \label{fig:exp-fanout}
\end{figure*}

The second-order response comparison fixes $s=0.4$ and $a=0.75$ and varies
$g\in\{0,0.05,0.1,0.2,0.4\}$. We stack each first-order response with
its weighted fan-out response as in \eqref{eq:fanout-stack}.
For degree queries, the first-order population profile difference is
$n\rho_n g/2$. When $g>0$, these profiles are already distinct, but
the stacked signal has rank two for the tested parameters and increases
the separation of the population embedding. \Cref{thm:fanout-main}
compares error with this new separation while accounting for the additional
fluctuations from second-order responses.
At $n=3200$, $d=16\log n$, and $g=0.2$, adding fan-out reduces mean error
from $27.9\%$ to $0.98\%$ (\Cref{fig:exp-fanout}). The gain is much smaller
for the already informative two-query design. At $g=0$, the two communities have equal profiles
at both orders, and neither degree-based method recovers them, as predicted
by \Cref{prop:stacked-rank}.

We also vary the number of communities, using $K$ queries. For balanced
communities at fixed $n$, the average expected degree needed to reach a
given exact-recovery frequency increases with $K$ for full adjacency.
For the query design, this dependence is nonmonotone, consistent with its
improving conditioning as $K$ grows. \Cref{app:experiments-k} gives the
design, supporting plots, and comparisons with unequal community sizes.

\section{Discussion}
\label{sec:discussion}

We develop a theory to understand how aggregated relational
data can be used to solve network tasks without the full adjacency
matrix, and show that such data can be used (under certain conditions
on the design) to achieve exact community detection. We show that these occur at the same degree threshold as clustering on the
full adjacency matrix depending on the design,
with $n\rho_n\gtrsim K^2+K\log n$ sufficing for ``spread'' and near-orthogonal
designs (which include $O=I_n$). We also develop a theory for 
second-order queries which behaves similarly. Our recovery
results operate at $n\rho_n\gtrsim\log n$; going beyond
this, where only partial recovery is possible, would require studying other techniques
\cite{massoulie_community_2014, bordenave_nonbacktracking_2018, abbe_graph_2020}
which lie outside of our current framework.
The information-theoretic thresholds for
community detection \emph{from aggregated data}---as opposed to the full adjacency
matrix, whose thresholds lower-bound them---are not resolved here and remain a
worthwhile future direction.

\bibliographystyle{plainnat}
\bibliography{references}

\appendix
\begin{center}
    \rule{\textwidth}{4pt}\par
    \vspace{0.8em}
    {\Large\bfseries Supplementary Materials:\\
    Community Detection from Aggregated Relational Data: Identifiability and Spectral Recovery\par}
    \vspace{0.8em}
    \rule{\textwidth}{1pt}
\end{center}
\vspace{1em}

\section{Linear algebra: norms, subspace distances, and perturbation theorems}
\label{sec:linalg}
We collect some linear-algebraic tools used throughout.

\subsection{Norms}

For a matrix $A \in \mathbb{R}^{n \times m}$,
we write $A_{i \cdot}$ for the $i$-th row of $A$, a row vector, and
$A_{\cdot j}$ for the $j$-th column of $A$. We consider the following
operator and entry-wise
matrix norms defined for $p, q \geq 1$:
\begin{equation}
    \| A \|_{p \to q} := \max_{x \,:\, \| x \|_p \leq 1} \| Ax \|_{q}, \qquad
    \| A \|_{p,q} := \left( \sum_{i=1}^n
    \left( \sum_{j=1}^m |A_{ij}|^p \right)^{q/p} \right)^{1/q}.
\end{equation}
Recall that for the vector $\ell_p$ norms we have the
following variational
characterization: for $x \in \mathbb{R}^d$, we have that
\begin{equation}
    \| x \|_p := \max_{y \,:\, \| y \|_q \leq 1} \langle y, x \rangle
    \qquad \text{ where } \qquad \frac{1}{p} + \frac{1}{q} = 1.
\end{equation}
As a result, we can therefore write the $\ell_p \to \ell_q$ operator norm of $A$
as, writing $q'$ for the H\"{o}lder conjugate of $q$,
\begin{equation}
    \| A \|_{p \to q} = \max_{x, y \,:\, \|x\|_p \leq 1, \| y \|_{q'} \leq 1}
    \langle y, A x \rangle.
\end{equation}
As some special cases, we have that $\| A \|_{2 \to 2} = \sigma_1(A)$,
$\| A  \|_{2, 2} = \|A \|_F = (\sum_i \sigma_i(A)^2)^{1/2}$ is the
Frobenius norm of $A$, and
\begin{equation}
    \| A \|_{2 \to \infty}
    = \max_i \|A_{i\cdot} \|_2
    := \| A \|_{2,\infty}.
\end{equation}
(See Proposition~6.1 of \cite{cape_two--infinity_2018} for a proof.)
Moreover, $\| A \|_{\infty \to \infty} = \max_i \sum_j | A_{ij} |$ is the largest
absolute row sum, and $\| A \|_{1 \to \infty} = \max_{i, j} | A_{ij} |$ is the
largest absolute entry, which we write $\| A \|_\infty$.
We also have the following relations, valid for any $A \in \mathbb{R}^{n \times m}$ of
rank $r$ and any conformable $B$:
\begin{equation}
    \label{eq:norm-relations}
    \| A \|_{2 \to \infty} \leq \| A \|_{2 \to 2} \leq \| A \|_F
    \leq \sqrt{r}\, \| A \|_{2 \to 2}, \qquad
    \| A B \|_{2 \to \infty} \leq
    \| A \|_{2 \to \infty}\, \| B \|_{2 \to 2}.
\end{equation}
The first chain holds because $\| A_{i \cdot} \|_2 = \| e_i^T A \|_2 \leq \| A \|_{2 \to
2}$ for every row $i$, and $\| A \|_F^2 = \sum_{i \leq r} \sigma_i(A)^2$ lies between
$\sigma_1(A)^2$ and $r\, \sigma_1(A)^2$; the second holds because $\| (AB)_{i \cdot}
\|_2 = \| A_{i \cdot} B \|_2 \leq \| A_{i \cdot} \|_2\, \| B \|_{2 \to 2}$.
We write
$A = U \Sigma V^T$ for the SVD of $A$, where $U$ and $V$ are the matrices
whose columns are the left and right singular vectors of $A$ respectively,
and $\Sigma$ is a diagonal matrix consisting of the singular values
$\sigma_i(A)$ of $A$. In particular,
$\Sigma = \diag(\sigma_1(A), \sigma_2(A), \ldots )$
where $\sigma_i(A)$
are decreasing in $i$ with $\sigma_i(A) \geq 0$ for all $i$.

We write $\mathbb{O}_{p \times r} := \{ W \in \mathbb{R}^{p \times r} \,:\,
W^\top W = I_r \}$ for the Stiefel manifold of $p \times r$ matrices
with $r$ $p$-dimensional orthonormal columns. If $p = r$, then
we just write $\mathbb{O}_p$ for the manifold of $p \times p$
orthonormal matrices.

We use the following elementary facts throughout the appendices, usually without
reference. Courant--Fischer below refers to the variational characterizations
$\sigma_j(A)=\max_{\dim S=j}\min_{x\in S,\|x\|_2=1}\|Ax\|_2$ and, for symmetric $X$,
$\lambda_j(X)=\max_{\dim S=j}\min_{x\in S,\|x\|_2=1}x^\top Xx$, where
$\lambda_1(X)\geq\lambda_2(X)\geq\cdots$ are the eigenvalues of $X$
\cite{horn_matrix_2013}.

\begin{lemma}[Norm inequalities]
    \label{lem:norm-facts}
    Let $A$, $B$ and $X$ be real matrices, $x$ and $y$ real vectors, of conformable
    dimensions, and let $\star\in\{2\to2,F\}$.

    \textup{(i) Transposes and columns.} $\|A^\top\|_{2\to2}=\|A\|_{2\to2}$,
    $\|A\|_{2\to2}^2=\|AA^\top\|_{2\to2}$, and
    $\max_j\|A_{\cdot j}\|_2=\|A^\top\|_{2\to\infty}\leq\|A\|_{2\to2}$.

    \textup{(ii) Products.} $\|AB\|_{2\to2}\leq\|A\|_{2\to2}\|B\|_{2\to2}$,
    $\|AB\|_F\leq\min\{\|A\|_{2\to2}\|B\|_F,\ \|A\|_F\|B\|_{2\to2}\}$,
    $\|xy^\top\|_{2\to2}=\|xy^\top\|_F=\|x\|_2\|y\|_2$, and
    $\|AB\|_{2\to\infty}\leq\|A\|_{\infty\to\infty}\|B\|_{2\to\infty}$. In
    particular, $\|AXB\|_\star\leq\|X\|_\star$ whenever
    $\|A\|_{2\to2}\leq1$ and $\|B\|_{2\to2}\leq1$, for instance when $A$ and $B$
    are orthogonal projections.

    \textup{(iii) Symmetric matrices.} If $A=A^\top$, the singular values of $A$ are
    the absolute values of its eigenvalues, and
    $\|A\|_{2\to2}\leq\|A\|_{\infty\to\infty}$.

    \textup{(iv) Diagonal matrices.} $\|\diag(x)\|_{2\to2}=\|x\|_\infty$,
    $\|\diag(x)B\|_{2\to\infty}\leq\|x\|_\infty\|B\|_{2\to\infty}$ and
    $\|\diag(x)B\|_F\leq\|x\|_2\|B\|_{2\to\infty}$. For square $X$,
    $\|\diag(X)\|_{2\to2}=\max_i|X_{ii}|\leq\|X\|_{2\to2}$, and hence
    $\|\hol(X)\|_{2\to2}\leq2\|X\|_{2\to2}$.

    \textup{(v) Submatrices and stacked matrices.} Deleting rows or columns of $A$
    does not increase $\|A\|_{2\to2}$, $\|A\|_F$ or $\|A\|_{2\to\infty}$. For
    $A=[A_1^\top,A_2^\top]^\top$ and $B=[B_1^\top,B_2^\top]^\top$ with conformable
    blocks, $A^\top B=A_1^\top B_1+A_2^\top B_2$,
    $\|A\|_F^2=\|A_1\|_F^2+\|A_2\|_F^2$,
    $\|A\|_{2\to\infty}=\max_k\|A_k\|_{2\to\infty}$ and
    \[
        \max_k\|A_k\|_{2\to2}\leq\|A\|_{2\to2}
        \leq\big(\|A_1\|_{2\to2}^2+\|A_2\|_{2\to2}^2\big)^{1/2}
        \leq\|A_1\|_{2\to2}+\|A_2\|_{2\to2}.
    \]

    \textup{(vi) Singular value comparisons.} If $A$ has $q$ columns and rank $q$,
    then $\sigma_q(A)\,\sigma_j(B)\leq\sigma_j(AB)\leq\|A\|_{2\to2}\,\sigma_j(B)$ for
    every $j$. If $X\succeq Y\succeq0$ are symmetric, then
    $\lambda_j(X)\geq\lambda_j(Y)$ for every $j$; in particular,
    $\sigma_j([A_1^\top,A_2^\top]^\top)\geq\sigma_j(A_1)$.
\end{lemma}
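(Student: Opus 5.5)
The statement is a collection of six standard inequalities, and I will prove each part from the definitions in \Cref{sec:linalg}, the SVD, and the Courant--Fischer characterizations. No probabilistic input is needed.

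\emph{Part (i).} I use the SVD $A=U\Sigma V^\top$. Then $A^\top$ has the same singular values, and $AA^\top=U\Sigma^2U^\top$ has top eigenvalue $\sigma_1(A)^2$. For the column bound, $\|A_{\cdot j}\|_2=\|Ae_j\|_2\leq\|A\|_{2\to2}$. The identity $\max_j\|A_{\cdot j}\|_2=\|A^\top\|_{2\to\infty}$ holds because the rows of $A^\top$ are the columns of $A$.

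\emph{Part (ii).} Submultiplicativity follows by applying the operator norm twice. For the Frobenius bound I write $\|AB\|_F^2=\sum_j\|AB_{\cdot j}\|_2^2$, which gives the first inequality; the second comes from the same argument applied to transposes. For the rank-one matrix, $xy^\top$ has the single nonzero singular value $\|x\|_2\|y\|_2$. For the $2\to\infty$ bound, row $i$ of $AB$ is $\sum_k A_{ik}B_{k\cdot}$, and the triangle inequality gives $\|(AB)_{i\cdot}\|_2\leq\sum_k|A_{ik}|\,\|B\|_{2\to\infty}$. The contraction claim then follows from the product bounds, and orthogonal projections have norm at most $1$.

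\emph{Parts (iii)--(iv).} For symmetric $A$, the spectral theorem gives $A=W\Lambda W^\top$, which is an SVD up to signs. For the bound by $\|A\|_{\infty\to\infty}$, I take an eigenvector $x$ with $|\lambda|=\|A\|_{2\to2}$ and choose $i$ maximizing $|x_i|$. Then $|\lambda||x_i|\leq\sum_j|A_{ij}||x_j|\leq\|A\|_{\infty\to\infty}|x_i|$. For diagonal matrices, the norm of $\diag(x)$ is read off directly. Scaling the rows of $B$ gives the $2\to\infty$ bound. For the Frobenius bound, $\sum_i x_i^2\|B_{i\cdot}\|^2\leq\|x\|_2^2\|B\|_{2\to\infty}^2$. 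Next, $|X_{ii}|=|e_i^\top Xe_i|\leq\|X\|_{2\to2}$, and the hollow bound then follows from the triangle inequality.

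\emph{Part (v).} Deleting rows or columns amounts to multiplying by coordinate projections, so I apply (ii). The remaining stacked identities are blockwise bookkeeping. For the operator norm sandwich, $\|Ax\|_2^2=\|A_1x\|_2^2+\|A_2x\|_2^2$, which lies between $\max_k\|A_kx\|_2^2$ and $(\|A_1\|_{2\to2}^2+\|A_2\|_{2\to2}^2)\|x\|_2^2$. The last step uses $\sqrt{a^2+b^2}\leq a+b$.

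\emph{Part (vi).} This is the only step needing care. Since $A$ has full column rank, $\|ABx\|_2\geq\sigma_q(A)\|Bx\|_2$, and also $\|ABx\|_2\leq\|A\|_{2\to2}\|Bx\|_2$. I plug these pointwise bounds into the max--min Courant--Fischer characterization of $\sigma_j$, using the same subspace $S$ on both sides. The eigenvalue monotonicity follows because $x^\top Xx\geq x^\top Yx$ pointwise, again combined with Courant--Fischer. The stacked consequence follows from $[A_1^\top,A_2^\top][A_1^\top,A_2^\top]^\top=A_1^\top A_1+A_2^\top A_2\succeq A_1^\top A_1$, together with $\sigma_j^2=\lambda_j$ of the Gram matrix.
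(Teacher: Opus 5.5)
Your proposal is correct and follows the paper's proof essentially part by part. The matching steps are the SVD for (i), the column-wise Frobenius expansion and row-wise triangle inequality for (ii), and the max-coordinate eigenvector argument for (iii). In (vi) you likewise feed pointwise bounds into Courant--Fischer and use $A^\top A=A_1^\top A_1+A_2^\top A_2\succeq A_1^\top A_1$, as the paper does. One cosmetic point concerns (v): that deleting columns (multiplying by a column-selection matrix $T$) does not increase $\|\cdot\|_{2\to\infty}$ follows from $\|AT\|_{2\to\infty}\leq\|A\|_{2\to\infty}\|T\|_{2\to2}$ in \eqref{eq:norm-relations}, or directly because each row only gets shorter, not from (ii).
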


\begin{proof}
    (i) $A$ and $A^\top$ have the same nonzero singular values, and those of
    $AA^\top$ are their squares. The column bound is the first inequality of
    \eqref{eq:norm-relations} applied to $A^\top$.

    (ii) The spectral bound follows from the definition of the operator norm. For the
    Frobenius bounds,
    $\|AB\|_F^2=\sum_j\|AB_{\cdot j}\|_2^2\leq\|A\|_{2\to2}^2\|B\|_F^2$, and the
    second follows by transposition. The matrix $xy^\top$ has rank one, with
    nonzero singular value $\|x\|_2\|y\|_2$. Row $i$ of $AB$ is
    $\sum_jA_{ij}B_{j\cdot}$, whose norm is at most
    $\sum_j|A_{ij}|\,\|B_{j\cdot}\|_2\leq\|A\|_{\infty\to\infty}\|B\|_{2\to\infty}$.
    The last claim follows from the first two bounds.

    (iii) The first claim is the spectral theorem. For the second, let $Av=\lambda v$
    with $v\neq0$ and choose $i$ with $|v_i|=\|v\|_\infty$. Then
    $|\lambda|\,|v_i|=|\sum_jA_{ij}v_j|\leq\|A\|_{\infty\to\infty}|v_i|$, so every
    eigenvalue, and hence $\|A\|_{2\to2}$, is at most $\|A\|_{\infty\to\infty}$.

    (iv) Multiplying by $\diag(x)$ on the left scales row $i$ by $x_i$, which gives
    the first three claims; for the third,
    $\|\diag(x)B\|_F^2=\sum_ix_i^2\|B_{i\cdot}\|_2^2
    \leq\|x\|_2^2\|B\|_{2\to\infty}^2$. For square $X$,
    $|X_{ii}|=|e_i^\top Xe_i|\leq\|X\|_{2\to2}$, and
    $\hol(X)=X-\diag(X)$ gives the last bound.

    (v) Deleting rows or columns multiplies $A$ by coordinate selection matrices,
    whose spectral norm is one; deleting rows removes rows, and deleting columns
    shortens each row. The block identities follow by writing out the products,
    and the spectral bounds from
    $\|A_kx\|_2\leq\|Ax\|_2$ and $\|Ax\|_2^2=\|A_1x\|_2^2+\|A_2x\|_2^2$.

    (vi) Since $\sigma_q(A)\|y\|_2\leq\|Ay\|_2\leq\|A\|_{2\to2}\|y\|_2$ for every
    $y$, Courant--Fischer applied to $B$ and $AB$ gives the singular value
    comparison. Courant--Fischer for eigenvalues gives the monotonicity, and the
    last claim follows from
    $[A_1^\top,A_2^\top]\,[A_1^\top,A_2^\top]^\top
    =A_1^\top A_1+A_2^\top A_2\succeq A_1^\top A_1$ and
    $\sigma_j(C)^2=\lambda_j(C^\top C)$.
\end{proof}

\begin{lemma}[Matrices with orthonormal columns]
    \label{lem:orthonormal-facts}
    Let $W\in\mathbb O_{p\times r}$, $Q\in\mathbb O_r$, and let $X$ and $G$ be
    matrices of conformable dimensions.

    \textup{(i) Norms.} $\|W\|_{2\to2}=\|W^\top\|_{2\to2}=1$, $\|W\|_F=\sqrt r$ and
    $\|W\|_{2\to\infty}\leq1$. Hence, for $\star\in\{2\to2,F\}$,
    $\|WX\|_\star=\|X\|_\star$ and $\|W^\top X\|_\star\leq\|X\|_\star$, and
    $\|W^\top x\|_2=\|x\|_2$ for every $x$ in the range of $W$.

    \textup{(ii) Projections.} $WW^\top$ and $I-WW^\top$ are the orthogonal
    projections onto the range of $W$ and its orthogonal complement, and
    \[
        \|G^\top(I-WW^\top)X\|_{2\to\infty}
        \leq\|G^\top X\|_{2\to\infty}+\|G^\top W\|_{2\to\infty}\|W^\top X\|_{2\to2}.
    \]

    \textup{(iii) Orthogonal invariance.} Right multiplication by $Q$ preserves the
    Euclidean norm of every row, so $\|XQ\|_\star=\|X\|_\star$ for
    $\star\in\{2\to2,F,2\to\infty\}$, and normalizing the nonzero rows of $X$ to unit
    length commutes with right multiplication by $Q$. Moreover,
    $\|QYQ^\top\|_{2\to2}=\|Y\|_{2\to2}$ for $Y\in\mathbb R^{r\times r}$, the rows of
    $Q$ are orthonormal, and any two elements of $\mathbb O_{p\times r}$ with the same
    range differ by right multiplication by an element of $\mathbb O_r$.

    \textup{(iv) Singular vectors.} Let $R$ have top-$r$ singular value decomposition
    $\widehat U\widehat\Sigma\widehat V^\top$ with $\widehat\sigma_r:=\sigma_r(R)>0$.
    Then $R\widehat V=\widehat U\widehat\Sigma$ and
    $R^\top\widehat U=\widehat V\widehat\Sigma$, so
    $\widehat U=R\widehat V\widehat\Sigma^{-1}$,
    $\widehat V=R^\top\widehat U\widehat\Sigma^{-1}$ and
    $\|\widehat\Sigma^{-1}\|_{2\to2}=1/\widehat\sigma_r$. If $\rk(M)=r$ and
    $M=U\Sigma V^\top$ is its compact singular value decomposition, then $U$ and $V$
    span the column and row spaces of $M$, and
    $M_{\cdot j}=U\Sigma V_{j\cdot}^\top$ for every $j$. Since $U\Sigma$ has rank $r$,
    two rows of $V$ are equal, or positive multiples of one another, exactly when the
    corresponding columns of $M$ are.
\end{lemma}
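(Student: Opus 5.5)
The statement is \Cref{lem:orthonormal-facts}. Each part reduces to the identity $W^\top W=I_r$ or to the singular value decomposition, so I would prove the parts in order, reusing \Cref{lem:norm-facts} throughout.

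\textbf{Part (i).} Every singular value of $W$ equals one, because $W^\top W=I_r$. Hence $\|W\|_{2\to2}=\|W^\top\|_{2\to2}=1$ and $\|W\|_F^2=\operatorname{tr}(W^\top W)=r$. The bound $\|W\|_{2\to\infty}\leq\|W\|_{2\to2}$ comes from \eqref{eq:norm-relations}.

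For the product claims:
\begin{itemize}
\item $\|WX\|_F^2=\operatorname{tr}(X^\top W^\top WX)=\|X\|_F^2$.
\item $\|WXy\|_2=\|Xy\|_2$ for every $y$, which gives the operator-norm equality.
\item The bound $\|W^\top X\|_\star\leq\|X\|_\star$ follows from \Cref{lem:norm-facts}(ii), since $\|W^\top\|_{2\to2}=1$.
\item If $x=Wy$, then $\|W^\top x\|_2=\|y\|_2=\|Wy\|_2=\|x\|_2$.
\end{itemize}

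\textbf{Part (ii).} $WW^\top$ is symmetric and idempotent, since $WW^\top WW^\top=WW^\top$, and its range is $\range(W)$. So it is the orthogonal projection onto $\range(W)$, and $I-WW^\top$ is the complementary projection. For the displayed inequality, expand
\[
G^\top(I-WW^\top)X=G^\top X-(G^\top W)(W^\top X).
\]
Apply the triangle inequality, and bound the second term with the product rule $\|AB\|_{2\to\infty}\leq\|A\|_{2\to\infty}\|B\|_{2\to2}$ from \eqref{eq:norm-relations}.

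\textbf{Part (iii).} The claims follow from $QQ^\top=Q^\top Q=I_r$.
\begin{itemize}
\item Row $i$ of $XQ$ is $X_{i\cdot}Q$, and $\|X_{i\cdot}Q\|_2=\|X_{i\cdot}\|_2$. This gives the $2\to\infty$ equality. The $2\to2$ and Frobenius equalities follow from part (i) applied to $Q^\top$ acting on $X^\top$, using transposition.
\item Normalization commutes with right multiplication by $Q$: $X_{i\cdot}Q/\|X_{i\cdot}Q\|_2=(X_{i\cdot}/\|X_{i\cdot}\|_2)Q$.
\item Conjugation by $Q$ preserves singular values, so $\|QYQ^\top\|_{2\to2}=\|Y\|_{2\to2}$.
\item The rows of $Q$ are orthonormal because $QQ^\top=I_r$.
\item Let $W_1,W_2\in\mathbb O_{p\times r}$ have the same range. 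Then $W_2=W_1W_1^\top W_2$, because $W_1W_1^\top$ projects onto $\range(W_1)=\range(W_2)$. Set $Q:=W_1^\top W_2$. Then $Q^\top Q=W_2^\top W_1W_1^\top W_2=W_2^\top W_2=I_r$, so $Q$ is square and orthogonal, and $W_2=W_1Q$.
\end{itemize}

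\textbf{Part (iv).} Write $R=\widehat U\widehat\Sigma\widehat V^\top+R_\perp$ for the full singular value decomposition. The remainder satisfies $R_\perp\widehat V=0$ and $R_\perp^\top\widehat U=0$, because it is built from singular vectors orthogonal to those retained. Therefore $R\widehat V=\widehat U\widehat\Sigma$ and $R^\top\widehat U=\widehat V\widehat\Sigma$. The matrix $\widehat\Sigma$ is invertible because $\widehat\sigma_r>0$, so both inverse formulas follow, and $\|\widehat\Sigma^{-1}\|_{2\to2}=1/\widehat\sigma_r$.

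For $M=U\Sigma V^\top$ of rank $r$, the range of $M$ lies in $\range(U)$ and both have dimension $r$, so they are equal; the row space statement for $V$ is symmetric. Reading off column $j$ gives $M_{\cdot j}=U\Sigma V_{j\cdot}^\top$.

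The matrix $U\Sigma$ has full column rank, so $y\mapsto U\Sigma y$ is injective and linear. Hence $V_{j\cdot}=cV_{k\cdot}$ holds if and only if $M_{\cdot j}=cM_{\cdot k}$, for any $c>0$ including $c=1$.

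None of these steps is a real obstacle. The only points needing care are choosing $Q=W_1^\top W_2$ in part (iii) and using the orthogonality of the remainder in part (iv).
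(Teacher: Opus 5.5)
Your proof is correct and follows the paper's argument essentially step for step: you use $W^\top W=I_r$ for (i), the expansion $G^\top X-(G^\top W)(W^\top X)$ with the $2\to\infty$ product rule for (ii), the choice $Q=W_1^\top W_2$ for (iii), and injectivity of $U\Sigma$ for (iv). The differences are cosmetic: you obtain $R\widehat V=\widehat U\widehat\Sigma$ from the orthogonal remainder of a full SVD, the $2\to2$ invariance in (iii) via transposition, and the range identity in (iv) by a dimension count, where the paper reads these off the SVD or the row-norm identity directly.
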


\begin{proof}
    (i) $W^\top W=I_r$ gives $\|Wy\|_2=\|y\|_2$ for every $y$, so
    $\|W\|_{2\to2}=\|W^\top\|_{2\to2}=1$ by \Cref{lem:norm-facts}(i),
    $\|W\|_F^2=\operatorname{tr}(W^\top W)=r$, and every row has norm at most $\|W\|_{2\to2}=1$ by
    \eqref{eq:norm-relations}. Applying $\|Wy\|_2=\|y\|_2$ to $y=Xz$ and to each
    column of $X$ gives $\|WX\|_\star=\|X\|_\star$, and \Cref{lem:norm-facts}(ii)
    gives $\|W^\top X\|_\star\leq\|X\|_\star$. If $x=Wy$, then $W^\top x=y$ and
    $\|x\|_2=\|y\|_2$.

    (ii) $WW^\top$ is symmetric and idempotent with the same range as $W$. The
    display follows from $(I-WW^\top)X=X-W(W^\top X)$, the triangle inequality and
    \eqref{eq:norm-relations}.

    (iii) Row $i$ of $XQ$ is $X_{i\cdot}Q$, and $\|yQ\|_2=\|y\|_2$ for every row
    vector $y$ because $QQ^\top=I_r$; normalizing $X_{i\cdot}Q$ gives
    $(X_{i\cdot}/\|X_{i\cdot}\|_2)Q$. The matrix $QYQ^\top$ has the same singular
    values as $Y$, and $QQ^\top=I_r$ states that the rows of $Q$ are orthonormal. If
    $W'\in\mathbb O_{p\times r}$ has the same range as $W$, then $WW^\top W'=W'$, so
    $W'=W(W^\top W')$ and $(W^\top W')^\top W^\top W'=W'^\top WW^\top W'=W'^\top W'=I_r$.

    (iv) The first claims are the singular value decomposition written column by
    column. If $\rk(M)=r$, then $M=U(\Sigma V^\top)$ and $M^\top=V(\Sigma U^\top)$
    with $\Sigma V^\top$ and $\Sigma U^\top$ of rank $r$, so the column and row spaces
    of $M$ are the ranges of $U$ and $V$. The identity
    $M_{\cdot j}=Me_j=U\Sigma V_{j\cdot}^\top$ and the injectivity of $U\Sigma$ give
    the last claim.
\end{proof}

\subsection{Subspace distances and perturbation bounds}

For $V, V' \in \mathbb{O}_{p \times r}$, the singular values of $V^\top V'$ lie in
$[0, 1]$ by \Cref{lem:orthonormal-facts}(i), and we define
\begin{equation}
    \Theta(V, V') = \diag(\cos^{-1}(\sigma_1(V^\top V')),
       \ldots, \cos^{-1}(\sigma_r(V^\top V'))).
\end{equation}
Since $V^\top V'$ and $V'^\top V$ have the same singular values,
$\Theta(V, V') = \Theta(V', V)$.
We can then define distances between the column spaces of $V$ and $V'$
via taking norms of the matrix
$\sin \Theta(V, V')$ where $\sin(\cdot)$ is
applied entrywise; for instance, we can define the metrics
\begin{align}
    D_{2 \to 2}(V, V') &:= \| \sin \Theta(V, V') \|_{2 \to 2}, \\
    D_{F}(V, V') &:= \| \sin \Theta(V, V') \|_F.
\end{align}
these define metrics on the Grassmannian of $r$-dimensional subspaces of
$\mathbb{R}^p$ that are invariant to the choice of orthonormal basis for each subspace
\cite{cai_rate-optimal_2020}. We can also define the metrics
\begin{align}
    d_{2 \to 2}(V, V') & := \min_{ Q \in \mathbb{O}_r} \| V - V' Q \|_{2 \to 2}, \\
    d_{F}(V, V') & := \min_{Q \in \mathbb{O}_r} \| V - V' Q \|_F.
\end{align}
The following lemma shows that the two types of metrics are equivalent in a
dimension-free fashion \cite[Lemma~1]{cai_rate-optimal_2020}, and identifies a
single alignment that attains the upper bounds in both norms.

\begin{lemma}[Procrustes alignment]
    \label{lem:procrustes}
    Let $V, \widehat V \in \mathbb{O}_{p \times r}$, write
    $\Theta := \Theta(\widehat V, V)$, and let
    $V^\top \widehat V = Q_1 (\cos\Theta) Q_2^\top$ be a singular value
    decomposition. We call $Q := Q_2 Q_1^\top \in \mathbb{O}_r$ the
    \emph{Procrustes alignment} of $\widehat V$ to $V$. It has the following
    properties, where $\star \in \{2 \to 2, F\}$.

    \textup{(i)} $Q$ minimizes $\| \widehat V Q' - V \|_F$ over
    $Q' \in \mathbb{O}_r$.

    \textup{(ii)} $\| (I - VV^\top) \widehat V \|_\star = \| \sin\Theta \|_\star$.

    \textup{(iii)} $V^\top \widehat V Q = Q_1 (\cos\Theta) Q_1^\top$, and
    $\| V^\top \widehat V Q - I \|_{2 \to 2} \leq \| \sin\Theta \|_{2 \to 2}^2$.

    \textup{(iv)} $\| \widehat V Q - V \|_\star \leq \sqrt 2\, \| \sin\Theta \|_\star$.

    \textup{(v)} $\| \widehat V Q - V \|_\star = \| \widehat V - V Q^\top \|_\star$,
    and the same holds for $\| \cdot \|_{2 \to \infty}$.

    In particular, $D_\star(\widehat V, V) \leq d_\star(\widehat V, V) \leq \sqrt 2\,
    D_\star(\widehat V, V)$.
\end{lemma}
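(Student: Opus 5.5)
The plan is to reduce everything to the singular value decomposition of $V^\top\widehat V$ and then read off each claim by direct computation. Write $V^\top\widehat V=Q_1 C Q_2^\top$ with $C:=\cos\Theta$ and $S:=\sin\Theta$. The key identity, used repeatedly, is that $\widehat V^\top(I-VV^\top)\widehat V=I-\widehat V^\top VV^\top\widehat V=Q_2(I-C^2)Q_2^\top=Q_2S^2Q_2^\top$.

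\textbf{Part (i).} Expand $\|\widehat VQ'-V\|_F^2=2r-2\operatorname{tr}(Q'^\top\widehat V^\top V)$. The trace equals $\operatorname{tr}(Q'^\top Q_2CQ_1^\top)$, and this is maximized over $\mathbb O_r$ by $Q'=Q_2Q_1^\top$. This is the von Neumann trace inequality, or equivalently the fact that $|\operatorname{tr}(WC)|\le\operatorname{tr}C$ for orthogonal $W$ because the diagonal entries of $W$ are bounded by $1$.

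\textbf{Part (ii).} From the key identity, $(I-VV^\top)\widehat V$ has singular values equal to the diagonal of $S$. Hence both of its $\star$-norms match those of $S$.

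\textbf{Part (iii).} Compute $V^\top\widehat VQ=Q_1CQ_2^\top Q_2Q_1^\top=Q_1CQ_1^\top$. Then $\|Q_1CQ_1^\top-I\|_{2\to2}=\max_i(1-\cos\theta_i)$. Since $\cos\theta_i\in[0,1]$, we have $1-\cos\theta_i\le 1-\cos^2\theta_i=\sin^2\theta_i$.

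\textbf{Part (iv).} This is the main step. Decompose
\[
\widehat VQ-V=(I-VV^\top)\widehat VQ+V(V^\top\widehat VQ-I).
\]
The two terms have ranges in orthogonal subspaces. Therefore, for any unit vector $x$, $\|(\widehat VQ-V)x\|^2$ is the sum of the two squared norms, and column by column the same holds for the Frobenius norm.

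For the first term, part (ii) and orthogonal invariance give norm $\|S\|_\star$. For the second, part (iii) gives $V^\top\widehat VQ-I=Q_1(C-I)Q_1^\top$, whose $\star$-norm is $\|I-C\|_\star\le\|S\|_\star$, by the entrywise inequality $1-\cos\theta\le\sin\theta$ on $[0,\pi/2]$.

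Some care is needed with the spectral norm, since summing squares of operator norms is only an upper bound. Still, $\|(\widehat VQ-V)x\|^2\le\|S\|^2+\|I-C\|^2\le 2\|S\|^2$, so the bound $\sqrt2\|S\|_\star$ holds in both norms. Alternatively, a cleaner exact computation is available:
\[
(\widehat VQ-V)^\top(\widehat VQ-V)=2I-Q_1CQ_1^\top-Q_1CQ_1^\top=2Q_1(I-C)Q_1^\top.
\]
This gives singular values $\sqrt{2(1-\cos\theta_i)}\le\sqrt2\sin\theta_i$ directly, and I will use this route.

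\textbf{Part (v).} Write $\widehat VQ-V=(\widehat V-VQ^\top)Q$. Right multiplication by an orthogonal matrix preserves every row norm by \Cref{lem:orthonormal-facts}(iii), so this handles $2\to2$, $F$ and $2\to\infty$ simultaneously.

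\textbf{Final claim.} The upper bound $d_\star(\widehat V,V)\le\sqrt2\,D_\star(\widehat V,V)$ follows from (iv) together with (v).

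For the lower bound, take any $Q'\in\mathbb O_r$. Since $(I-VV^\top)V=0$,
\[
(I-VV^\top)\widehat V=(I-VV^\top)(\widehat V-VQ'),
\]
and projections do not increase $\star$-norms. Hence $\|S\|_\star\le\|\widehat V-VQ'\|_\star$. Taking the minimum over $Q'$ (after the relabelling allowed by (v), matching the order of arguments in $d_\star$) gives $D_\star(\widehat V,V)\le d_\star(\widehat V,V)$.

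The only delicate point is keeping track of which orientation ($Q$ versus $Q^\top$) appears in $d_\star$. Part (v) resolves this.
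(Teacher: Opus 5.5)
Your proposal is correct and follows the paper's proof essentially step for step: the trace bound through the diagonal entries of an orthogonal matrix for (i), the Gram identity $\widehat V^\top(I-VV^\top)\widehat V=Q_2\sin^2\Theta\,Q_2^\top$ for (ii), the exact computation $(\widehat VQ-V)^\top(\widehat VQ-V)=2Q_1(I-\cos\Theta)Q_1^\top$ for (iv), and the identity $(I-VV^\top)(\widehat V-VQ')=(I-VV^\top)\widehat V$ for the lower bound. The relabelling you invoke in the lower bound is not needed, because $\|\widehat V-VQ'\|_\star$ is already the quantity minimized in $d_\star(\widehat V,V)$.
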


\begin{proof}
    Write $\cos\Theta = \diag(\cos\theta_1, \ldots, \cos\theta_r)$, whose entries
    lie in $[0, 1]$.

    (i) For $Q' \in \mathbb{O}_r$,
    $\| \widehat V Q' - V \|_F^2 = 2r - 2 \operatorname{tr}(Q'^\top \widehat V^\top V)
    = 2r - 2 \operatorname{tr}(Q_1^\top Q'^\top Q_2 \cos\Theta)$. The diagonal entries
    of the orthogonal matrix $Q_1^\top Q'^\top Q_2$ are at most one, so the trace is
    at most $\sum_k \cos\theta_k$, with equality for $Q' = Q_2 Q_1^\top$.

    (ii) The Gram matrix of $(I - VV^\top) \widehat V$ is
    $I - \widehat V^\top V V^\top \widehat V = Q_2 (I - \cos^2\Theta) Q_2^\top
    = Q_2 (\sin^2\Theta) Q_2^\top$, so its singular values are the
    $\sin\theta_k$.

    (iii) $V^\top \widehat V Q = Q_1 (\cos\Theta) Q_2^\top Q_2 Q_1^\top
    = Q_1 (\cos\Theta) Q_1^\top$, so
    $\| V^\top \widehat V Q - I \|_{2 \to 2} = \max_k (1 - \cos\theta_k)
    \leq \max_k (1 - \cos^2\theta_k) = \| \sin\Theta \|_{2 \to 2}^2$.

    (iv) By (iii),
    $(\widehat V Q - V)^\top (\widehat V Q - V) = 2I - V^\top \widehat V Q
    - (V^\top \widehat V Q)^\top = 2 Q_1 (I - \cos\Theta) Q_1^\top$, whose
    eigenvalues $2(1 - \cos\theta_k)$ are at most $2 \sin^2\theta_k$. The largest
    eigenvalue gives the spectral bound and the trace the Frobenius bound.

    (v) $\widehat V Q - V = (\widehat V - V Q^\top) Q$; apply
    \Cref{lem:orthonormal-facts}(iii).

    For the final claim, $(I - VV^\top)(\widehat V - V Q') = (I - VV^\top) \widehat V$
    for every $Q' \in \mathbb{O}_r$, so \Cref{lem:norm-facts}(ii) and (ii) give
    $\| \widehat V - V Q' \|_\star \geq \| \sin\Theta \|_\star$. The upper bound is
    (iv) combined with (v).
\end{proof}

\begin{lemma}[Weyl's inequality \cite{horn_matrix_2013}]
    \label{lem:weyl}
    For $A, B \in \mathbb{R}^{p \times q}$ and every $k$,
    $| \sigma_k(A + B) - \sigma_k(A) | \leq \| B \|_{2 \to 2}$. In particular, if
    $\rk(M) = r$, then $\sigma_r(M + N) \geq \sigma_r(M) - \| N \|_{2 \to 2}$ and
    $\sigma_{r+1}(M + N) \leq \| N \|_{2 \to 2}$.
\end{lemma}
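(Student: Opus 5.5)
The plan is to use the variational (Courant--Fischer) characterization of singular values already recorded before \Cref{lem:norm-facts}, applied with $A$ and with $A+B$. There is a unique reverse direction: $A=(A+B)+(-B)$ with $\|-B\|_{2\to2}=\|B\|_{2\to2}$. So it suffices to prove the one-sided bound
\[
\sigma_k(A+B)\leq\sigma_k(A)+\|B\|_{2\to2}
\]
for every $k$ and every pair $A,B$. Swapping the roles of $A$ and $A+B$ then gives the absolute-value statement.

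For the one-sided bound, fix a subspace $S$ of dimension $k$ and a unit vector $x\in S$. By the triangle inequality and the definition of the operator norm,
\[
\|(A+B)x\|_2\leq\|Ax\|_2+\|B\|_{2\to2}.
\]
Taking the minimum over unit $x\in S$ preserves this inequality, since the additive constant does not depend on $x$. Maximizing over all $k$-dimensional $S$ then gives
\[
\sigma_k(A+B)=\max_{\dim S=k}\min_{x}\|(A+B)x\|_2\leq\max_{\dim S=k}\min_{x}\|Ax\|_2+\|B\|_{2\to2}=\sigma_k(A)+\|B\|_{2\to2}.
\]
If $k$ exceeds $q$, both singular values are zero by convention and there is nothing to prove.

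The particular cases follow by applying this to the pair $(M,N)$. With $k=r$, the lower half of the absolute-value bound reads $\sigma_r(M+N)\geq\sigma_r(M)-\|N\|_{2\to2}$. With $k=r+1$, we have $\sigma_{r+1}(M)=0$ because $\rk(M)=r$, so the upper half gives $\sigma_{r+1}(M+N)\leq\|N\|_{2\to2}$.

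The only step needing care is the min/max interchange in the display above. That the inequality passes through $\min$ and then $\max$ is routine monotonicity, and the Courant--Fischer formula is taken as given, so I expect no real obstacle; the result can equally be cited directly from \cite{horn_matrix_2013}.
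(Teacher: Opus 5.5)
Your proof is correct: the pointwise bound $\|(A+B)x\|_2\leq\|Ax\|_2+\|B\|_{2\to2}$ survives the min and then the max in the Courant--Fischer formula, and $A=(A+B)+(-B)$ gives the reverse inequality. The two special cases then follow with $k=r$ and with $k=r+1$, using $\sigma_{r+1}(M)=0$. The paper gives no argument of its own, only the citation to \cite{horn_matrix_2013}; your derivation is the standard proof behind that citation and uses only the variational formula the paper records before \Cref{lem:norm-facts}, so it makes the lemma self-contained.
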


The Davis--Kahan $\sin\Theta$ theorem \cite{davis_rotation_1970} concerns a
symmetric matrix and a symmetric perturbation of it. It bounds the principal angles
between an eigenspace of one and the corresponding eigenspace of the other by a
residual of the perturbation divided by a gap between eigenvalues. Wedin's
$\sin\Theta$ theorem \cite{wedin_perturbation_1972} is the analogue for the left and
right singular subspaces of rectangular matrices. It has one residual for each of
the two subspaces, and its gap is between singular values. The matrix $R = OA$ is
rectangular and not symmetric, so all of our subspace perturbation bounds are
instances of Wedin's theorem. When the matrices are symmetric, the top-$r$ left and
right singular subspaces are both spanned by eigenvectors for the $r$ eigenvalues of
largest absolute value, and Wedin's theorem gives a bound of Davis--Kahan type; this
is the case $O = I_n$ discussed in \Cref{loo:rem-literature}. We state Wedin's
theorem for a general pair of matrices in \Cref{loo:wedin-res}, which the
leave-one-out arguments apply with the full-rank unperturbed matrix $R^{(i)}$, and
for a rank-$r$ signal in \Cref{lem:wedin}.

Both lemmas use the following notation. For $M, N \in \mathbb{R}^{p \times q}$, let $R := M + N$ and fix $r \leq \min(p,
q)$. Let $U, V$ collect top-$r$ left and right singular vectors of $M$, and let
$\widehat U \widehat\Sigma \widehat V^\top$ be the top-$r$ singular value
decomposition of $R$. Let $Q_U$ and $Q_V$ be the Procrustes alignments
(\Cref{lem:procrustes}) of $\widehat U$ to $U$ and of $\widehat V$ to $V$.

\begin{lemma}[Wedin's $\sin\Theta$ theorem]
    \label{loo:wedin-res}
    If
    \begin{equation}
        \label{eq:loo-wedin-gap}
        \gamma := \sigma_r(R) - \sigma_{r+1}(M) > 0,
    \end{equation}
    then
    \begin{equation}
        \label{eq:wedin-spectral}
        \max\big\{ \| \sin\Theta(\widehat U, U) \|_{2 \to 2},\
        \| \sin\Theta(\widehat V, V) \|_{2 \to 2} \big\}
        \;\leq\; \frac{\max\{ \| N \widehat V \|_{2 \to 2},\
        \| N^\top \widehat U \|_{2 \to 2} \}}{\gamma}
    \end{equation}
    and
    \begin{equation}
        \label{eq:loo-wedin-res}
        \max\big\{ \| \sin\Theta(\widehat U, U) \|_F,\
        \| \sin\Theta(\widehat V, V) \|_F \big\}
        \;\leq\; \frac{\| N \widehat V \|_F + \| N^\top \widehat U \|_F}{\gamma}.
    \end{equation}
    Consequently,
    \[
        \max\big\{\| \widehat U Q_U - U \|_F,\ \| \widehat V Q_V - V \|_F\big\}
        \;\leq\;
        \frac{\sqrt 2\big(\| N \widehat V \|_F + \| N^\top \widehat U \|_F\big)}{\gamma},
    \]
    and the same bound holds for $\| \widehat U - U Q_U^\top \|_F$ and
    $\| \widehat V - V Q_V^\top \|_F$.
\end{lemma}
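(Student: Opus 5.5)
The plan is to reduce Wedin's theorem to a Sylvester-type residual argument on the blocks of $M$ and $R$ in the singular bases, and then pass to aligned bases via \Cref{lem:procrustes}. Write the full SVD of $M$ as $U\Sigma V^\top+U_\perp\Sigma_\perp V_\perp^\top$. Here $U_\perp,V_\perp$ complete $U,V$ to orthonormal bases, and $\|\Sigma_\perp\|_{2\to2}=\sigma_{r+1}(M)$. By \Cref{lem:procrustes}(ii), $\|\sin\Theta(\widehat V,V)\|_\star=\|V_\perp^\top\widehat V\|_\star$ and $\|\sin\Theta(\widehat U,U)\|_\star=\|U_\perp^\top\widehat U\|_\star$. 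It therefore suffices to bound these two blocks.

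First take $R\widehat V=\widehat U\widehat\Sigma$ and $R^\top\widehat U=\widehat V\widehat\Sigma$ from \Cref{lem:orthonormal-facts}(iv), and substitute $R=M+N$. Multiplying on the left by $U_\perp^\top$ and $V_\perp^\top$ respectively gives
\[
U_\perp^\top\widehat U\,\widehat\Sigma-\Sigma_\perp V_\perp^\top\widehat V=U_\perp^\top N\widehat V,\qquad
V_\perp^\top\widehat V\,\widehat\Sigma-\Sigma_\perp^\top U_\perp^\top\widehat U=V_\perp^\top N^\top\widehat U .
\]
Set $X:=U_\perp^\top\widehat U$, $Y:=V_\perp^\top\widehat V$, $E_1:=U_\perp^\top N\widehat V$ and $E_2:=V_\perp^\top N^\top\widehat U$. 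Then $X\widehat\Sigma=\Sigma_\perp Y+E_1$ and $Y\widehat\Sigma=\Sigma_\perp^\top X+E_2$. Note that $\Sigma_\perp$ may be rectangular, with zero padding, but its spectral norm is still $\sigma_{r+1}(M)$.

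For the spectral bound, let $a:=\max\{\|X\|_{2\to2},\|Y\|_{2\to2}\}$. Since the smallest singular value of $\widehat\Sigma$ is $\sigma_r(R)$, we get $\|X\widehat\Sigma\|\ge\sigma_r(R)\|X\|$, and similarly for $Y$. Taking the index attaining $a$ then gives $\sigma_r(R)\,a\le\sigma_{r+1}(M)\,a+\max\{\|E_1\|,\|E_2\|\}$, so $\gamma a\le\max\{\|N\widehat V\|_{2\to2},\|N^\top\widehat U\|_{2\to2}\}$ because $U_\perp^\top,V_\perp^\top$ are contractions.

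For the Frobenius bound, apply the same argument to the sum: $\sigma_r(R)(\|X\|_F+\|Y\|_F)\le\sigma_{r+1}(M)(\|X\|_F+\|Y\|_F)+\|E_1\|_F+\|E_2\|_F$. Hence $\|X\|_F+\|Y\|_F\le(\|N\widehat V\|_F+\|N^\top\widehat U\|_F)/\gamma$. This is stronger than the stated max bound. The required inequality $\|X\widehat\Sigma\|_F\ge\sigma_r(R)\|X\|_F$ follows by applying \Cref{lem:norm-facts}(vi) column-wise, or simply by scaling columns.

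The consequence then follows from \Cref{lem:procrustes}(iv) with the $\sqrt2$ factor, and \Cref{lem:procrustes}(v) transfers the bound to $\widehat V-VQ_V^\top$.

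The main obstacle is keeping track of the block bookkeeping when $\Sigma_\perp$ is rectangular ($p\ne q$) and when $\sigma_r(R)$ may fail to exceed $\sigma_{r+1}(R)$. The argument above never needs the latter, because only $\sigma_{\min}(\widehat\Sigma)=\sigma_r(R)>\sigma_{r+1}(M)$ is used. One must also note that $\widehat U,\widehat V$ are any top-$r$ singular pairs, so uniqueness plays no role.
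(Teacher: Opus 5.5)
Your proof is correct, but it takes a different route from the paper. The paper does not derive the inequalities. It records the two residual identities $M\widehat V-\widehat U\widehat\Sigma=-N\widehat V$ and $M^\top\widehat U-\widehat V\widehat\Sigma=-N^\top\widehat U$, then cites Wedin's theorem (in Cai and Zhang's spectral restatement) for the spectral bound. For the Frobenius bound it cites Wedin's $\ell_2$-combined form $(\|\sin\Theta(\widehat U,U)\|_F^2+\|\sin\Theta(\widehat V,V)\|_F^2)^{1/2}\le(\|N\widehat V\|_F^2+\|N^\top\widehat U\|_F^2)^{1/2}/\gamma$ and weakens it to the stated form. The Procrustes consequences then follow from the same alignment lemma you use.

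You instead reprove the core of Wedin's theorem. Projecting the residual identities onto $U_\perp$ and $V_\perp$ gives the coupled system $X\widehat\Sigma=\Sigma_\perp Y+E_1$, $Y\widehat\Sigma=\Sigma_\perp^\top X+E_2$. The lower bound $\sigma_r(R)$ on $\widehat\Sigma$, set against the upper bound $\sigma_{r+1}(M)$ on $\|\Sigma_\perp\|_{2\to2}$, then closes the argument. Each step checks out, including $\|\sin\Theta(\widehat V,V)\|_\star=\|V_\perp^\top\widehat V\|_\star$, which follows from $I-VV^\top=V_\perp V_\perp^\top$.

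Your version is self-contained and shows exactly where the gap $\gamma=\sigma_r(R)-\sigma_{r+1}(M)$ enters. It handles a rectangular $\Sigma_\perp$ and non-unique top-$r$ subspaces with no extra work. It also gives slightly sharper conclusions: the sum $\|X\|_F+\|Y\|_F$ on the left. Running your max argument in Frobenius norm as well gives $\max\{\|X\|_F,\|Y\|_F\}\le\max\{\|E_1\|_F,\|E_2\|_F\}/\gamma$. The paper's citation route is shorter and rests on a standard reference, but it leaves the mechanism hidden.
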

    
\begin{proof}
    By \Cref{lem:orthonormal-facts}(iv),
    $M \widehat V - \widehat U \widehat\Sigma = - N \widehat V$ and
    $M^\top \widehat U - \widehat V \widehat\Sigma = - N^\top \widehat U$. These are
    the two residuals in Wedin's theorem \cite{wedin_perturbation_1972}, restated
    for the spectral norm in \cite[Section~2.3]{cai_rate-optimal_2020}. With the gap
    $\gamma$, the theorem
    bounds the two spectral distances by the larger spectral norm of the residuals
    divided by $\gamma$, which is \eqref{eq:wedin-spectral}. In Frobenius norm,
    Wedin's theorem \cite{wedin_perturbation_1972} bounds
    $(\| \sin\Theta(\widehat U, U) \|_F^2 + \| \sin\Theta(\widehat V, V) \|_F^2)^{1/2}$
    by $(\| N \widehat V \|_F^2 + \| N^\top \widehat U \|_F^2)^{1/2} / \gamma$, which
    is at most the right side of \eqref{eq:loo-wedin-res}. 
    The Procrustes bounds
    follow from \Cref{lem:procrustes}(iv) and (v).
\end{proof}

\begin{lemma}[Wedin's theorem for a rank-$r$ signal]
    \label{lem:wedin}
    \label{lem:sin-theta}
    Suppose that $\rk(M) = r$, write $M = U \Sigma V^\top$ for its compact singular
    value decomposition and $\sigma_r := \sigma_r(M)$, and write $Q := Q_V$. If
    $\| N \|_{2 \to 2} \leq \sigma_r/2$, then
    \begin{equation}
        \label{eq:wedin}
        \max\big\{ \| \sin\Theta(\widehat U, U) \|_{2 \to 2},\
        \| \sin\Theta(\widehat V, V) \|_{2 \to 2} \big\}
        \;\leq\; \frac{2 \| N \|_{2 \to 2}}{\sigma_r},
    \end{equation}
    and
    \begin{equation}
        \label{eq:sin-theta}
        \| \widehat V Q - V \|_{2 \to 2}
        \leq \sqrt 2\, \| \sin\Theta(\widehat V, V) \|_{2 \to 2},
        \qquad
        \| \widehat V Q - V \|_F
        \leq \sqrt{2}\, \| \sin \Theta(\widehat V, V) \|_F
        \leq \frac{2\sqrt{2r}\,\| N \|_{2 \to 2}}{\sigma_r}.
    \end{equation}
    The same bounds hold for $\widehat U Q_U - U$, and, by
    \Cref{lem:procrustes}(v), for $\widehat V - V Q^\top$ and
    $\widehat U - U Q_U^\top$.
\end{lemma}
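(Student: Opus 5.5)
The plan is to reduce to \Cref{loo:wedin-res} with the gap bounded from below by Weyl's inequality, and then to pass from the $\sin\Theta$ distances to the Procrustes-aligned errors via \Cref{lem:procrustes}.

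\emph{Gap.} Since $\rk(M)=r$, we have $\sigma_{r+1}(M)=0$. By \Cref{lem:weyl}, $\sigma_r(R)\geq\sigma_r-\|N\|_{2\to2}\geq\sigma_r/2>0$ under the hypothesis $\|N\|_{2\to2}\leq\sigma_r/2$. Hence the gap in \eqref{eq:loo-wedin-gap} satisfies $\gamma=\sigma_r(R)-0\geq\sigma_r/2$.

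\emph{Spectral bound.} By \Cref{lem:orthonormal-facts}(i) and \Cref{lem:norm-facts}(ii),
\[
\|N\widehat V\|_{2\to2}\leq\|N\|_{2\to2},\qquad \|N^\top\widehat U\|_{2\to2}\leq\|N\|_{2\to2}.
\]
Inserting these and $\gamma\geq\sigma_r/2$ into \eqref{eq:wedin-spectral} gives \eqref{eq:wedin}.

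\emph{Procrustes bounds.} The spectral inequality in \eqref{eq:sin-theta} is \Cref{lem:procrustes}(iv) with $\star=2\to2$, and the first Frobenius inequality is the same result with $\star=F$. For the final Frobenius inequality, $\sin\Theta(\widehat V,V)$ is an $r\times r$ diagonal matrix, so by \eqref{eq:norm-relations}
\[
\|\sin\Theta\|_F\leq\sqrt r\,\|\sin\Theta\|_{2\to2}\leq 2\sqrt r\,\|N\|_{2\to2}/\sigma_r,
\]
and multiplying by $\sqrt2$ gives $2\sqrt{2r}\,\|N\|_{2\to2}/\sigma_r$. This route deliberately bypasses the Frobenius form \eqref{eq:loo-wedin-res}, which would cost an extra factor of two.

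\emph{Remaining claims.} The statements for $\widehat U Q_U-U$ follow from the identical argument, since \eqref{eq:wedin} already controls $\|\sin\Theta(\widehat U,U)\|_{2\to2}$. The versions $\widehat V-VQ^\top$ and $\widehat U-UQ_U^\top$ follow from \Cref{lem:procrustes}(v).

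The only point needing care is the gap step: one has to use $\sigma_{r+1}(M)=0$, which is exactly what the rank-$r$ assumption provides. Everything else is direct substitution into the earlier lemmas.
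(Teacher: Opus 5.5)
Your proof is correct and follows the paper's own argument step for step. You bound the gap by Weyl's inequality using $\sigma_{r+1}(M)=0$, bound both Wedin residuals by $\|N\|_{2\to2}$ in \eqref{eq:wedin-spectral}, and then combine $\|\sin\Theta\|_F\leq\sqrt r\,\|\sin\Theta\|_{2\to2}$ with \Cref{lem:procrustes}(iv) and (v).
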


\begin{proof}
    By \Cref{lem:weyl}, $\gamma = \sigma_r(R) \geq \sigma_r - \| N \|_{2 \to 2} \geq
    \sigma_r / 2$. By \Cref{lem:orthonormal-facts}(i), $\| N \widehat V \|_{2 \to 2}$
    and $\| N^\top \widehat U \|_{2 \to 2}$ are at most $\| N \|_{2 \to 2}$, so
    \eqref{eq:wedin-spectral} gives \eqref{eq:wedin}. Since $\sin\Theta$ is an $r
    \times r$ diagonal matrix, $\| \sin\Theta \|_F \leq \sqrt r\, \| \sin\Theta \|_{2
    \to 2}$, and \Cref{lem:procrustes}(iv) gives \eqref{eq:sin-theta}.
\end{proof}

\subsection{Spherical projection and $k$-means clustering}

The degree-corrected estimator normalizes rows to unit length before clustering;
the following elementary bound converts unnormalized error into normalized error at
the price of the population row norm. We adopt the convention that $u / \| u \|_2$
denotes an arbitrary fixed unit vector when $u = 0$, matching step~2 of
Algorithm~1.

\begin{lemma}[Spherical projection]
    \label{lem:spherical}
    For any $u \in \mathbb{R}^K$ and $v \in \mathbb{R}^K \setminus \{0\}$,
    \begin{equation}
        \label{eq:spherical}
        \Big\| \frac{u}{\| u \|_2} - \frac{v}{\| v \|_2} \Big\|_2
        \;\leq\; \frac{2\, \| u - v \|_2}{\| v \|_2} .
    \end{equation}
\end{lemma}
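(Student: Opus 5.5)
We first treat the degenerate case $u=0$. Here $u/\|u\|_2$ is an arbitrary fixed unit vector $e$, so the left side is $\|e-v/\|v\|_2\|_2\leq 2$ by the triangle inequality. The right side equals $2\|v\|_2/\|v\|_2=2$, so the bound holds with room to spare.

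For $u\neq0$, the plan is to insert the intermediate point $u/\|v\|_2$ and apply the triangle inequality:
\[
    \Big\|\frac{u}{\|u\|_2}-\frac{v}{\|v\|_2}\Big\|_2
    \leq \Big\|\frac{u}{\|u\|_2}-\frac{u}{\|v\|_2}\Big\|_2
    +\Big\|\frac{u}{\|v\|_2}-\frac{v}{\|v\|_2}\Big\|_2 .
\]
The second term is exactly $\|u-v\|_2/\|v\|_2$. The first term equals
\[
    \|u\|_2\,\Big|\frac{1}{\|u\|_2}-\frac{1}{\|v\|_2}\Big|
    =\frac{\big|\|v\|_2-\|u\|_2\big|}{\|v\|_2},
\]
and the reverse triangle inequality bounds this by $\|u-v\|_2/\|v\|_2$. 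Adding the two terms gives the factor $2$.

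There is no real obstacle here. The only points needing care are the $u=0$ convention and choosing the intermediate point $u/\|v\|_2$ rather than $v/\|u\|_2$. The first choice keeps the denominator as $\|v\|_2$, which is what the statement requires; the second would produce $\|u\|_2$ in the denominator, and that cannot be controlled.
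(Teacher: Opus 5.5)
Your proof is correct and is essentially the paper's argument. The paper writes the difference as $(u-v)/\|v\|_2 + (u/\|u\|_2)\,(\|v\|_2-\|u\|_2)/\|v\|_2$, which is your intermediate point $u/\|v\|_2$, bounds the second piece by the reverse triangle inequality, and handles $u=0$ the same way you do.
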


\begin{proof}
    If $u = 0$ the left side is at most $2$ and $\| u - v \|_2 = \| v \|_2$, so
    \eqref{eq:spherical} holds with equality of the two sides' bounds. If $u \neq
    0$, decompose
    \[
        \frac{u}{\|u\|} - \frac{v}{\|v\|}
        = \frac{u - v}{\|v\|} + \frac{u}{\|u\|} \cdot \frac{\|v\| - \|u\|}{\|v\|},
    \]
    and bound the second summand's norm by $\abs{\|v\| - \|u\|} / \|v\| \leq \| u -
    v \| / \|v\|$ via the reverse triangle inequality.
\end{proof}

The following lemma adapts the $k$-means argument of
\cite[Lemma~5.3]{lei_consistency_2015}; it additionally records the centroid
accuracy needed for exact recovery. The approximation tolerance
$\varepsilon_{\rm km}$ refers to the squared $k$-means objective.

\begin{lemma}[$k$-means: weak and exact recovery]
    \label{lem:kmeans}
    Let $V \in \mathbb{R}^{n\times q}$ have rows $V_{i\cdot}=\mu_{z_i}$ taking
    $K\geq2$ distinct values, with
    \[
        \delta:=\min_{a\neq b}\|\mu_a-\mu_b\|_2>0,
        \qquad n_{\min}:=\min_a\#\{i:z_i=a\}>0.
    \]
    For $\widehat V\in\mathbb{R}^{n\times q}$ and an orthogonal
    $Q\in\mathbb{R}^{q\times q}$, write
    \[
        \varrho:=\|\widehat V-VQ\|_F,
        \qquad \eta:=\|\widehat V-VQ\|_{2\to\infty},
        \qquad C_{\rm km}:=16(1+\sqrt{1+\varepsilon_{\rm km}})^2.
    \]
    Let $\hat z_i\in[K]$ and $\hat\mu_1,\ldots,\hat\mu_K$ be a
    $(1+\varepsilon_{\rm km})$-approximate $k$-means solution, where
    $\varepsilon_{\rm km}\geq0$, and set
    $\widehat C_{i\cdot}:=\hat\mu_{\hat z_i}$. Approximation is measured relative
    to the minimum squared objective over all $K$ centers and assignments;
    in particular,
    $\|\widehat V-\widehat C\|_F^2\leq
    (1+\varepsilon_{\rm km})\varrho^2$.
    Suppose
    \begin{equation}
        \label{eq:kmeans-cond}
        C_{\rm km}\,\frac{\varrho^2}{\delta^2}<n_{\min}.
    \end{equation}
    Then the following statements hold.

    \textup{(i) Misclassification and fitted centers.}
    There is a permutation $\pi$ of $[K]$ such that
    \begin{equation}
        \label{eq:kmeans-bound}
        \#\{i:\pi(\hat z_i)\neq z_i\}
        \leq C_{\rm km}\,\frac{\varrho^2}{\delta^2},
        \qquad
        \max_{b\in[K]}\|\hat\mu_b-\mu_{\pi(b)}Q\|_2<\delta/4.
    \end{equation}

    \textup{(ii) Exact recovery after reassignment.}
    If, in addition,
    \begin{equation}
        \label{eq:kmeans-exact-cond}
        \eta<\delta/4,
    \end{equation}
    then assigning every row of $\widehat V$ to its nearest fitted center
    recovers $z$ exactly, up to relabelling. The same conclusion holds if the fitted
    centers are first recomputed as the means of their assigned rows, as in
    Algorithm~1.
\end{lemma}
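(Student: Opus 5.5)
The proof follows the $k$-means argument of \cite[Lemma~5.3]{lei_consistency_2015} after replacing $V$ by $VQ$. This replacement is harmless: $VQ$ has rows $\mu_{z_i}Q$, and these have the same pairwise distances as the $\mu_a$. So without loss of generality $Q=I$.

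\textbf{Bounding the $k$-means residual.} The true centers $\mu_a$ with assignment $z$ give a feasible $k$-means solution with objective $\varrho^2$. Hence $\|\widehat V-\widehat C\|_F^2\leq(1+\varepsilon_{\rm km})\varrho^2$. By the triangle inequality,
\[
\|\widehat C-V\|_F\leq(1+\sqrt{1+\varepsilon_{\rm km}})\varrho=\tfrac14\sqrt{C_{\rm km}}\,\varrho .
\]

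\textbf{Part (i).} Call node $i$ \emph{bad} if $\|\widehat C_{i\cdot}-\mu_{z_i}\|_2\geq\delta/4$. Markov's inequality on the squared row norms gives
\[
\#\text{bad}\leq\frac{16\|\widehat C-V\|_F^2}{\delta^2}\leq C_{\rm km}\frac{\varrho^2}{\delta^2}<n_{\min}.
\]
Therefore every true community $a$ contains a good node, and $T_a:=C_a\cap\{\text{good}\}$ is nonempty.

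Good nodes in different communities $a\neq b$ receive different fitted centers. Indeed, their fitted centers lie within $\delta/4$ of $\mu_a$ and $\mu_b$ respectively, and these two balls are disjoint because $\|\mu_a-\mu_b\|\geq\delta$. Good nodes in the same community also cannot sit in different fitted clusters. Otherwise at least $K+1$ distinct fitted centers would be used among good nodes, but only $K$ centers exist.

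Hence $a\mapsto$ (the fitted label of $T_a$) is injective on $[K]$, so it is a bijection; let $\pi$ be its inverse. Misclassified nodes are then bad, which gives the count bound. For each label $b$, the center $\hat\mu_b$ equals $\widehat C_{i\cdot}$ for some good $i\in C_{\pi(b)}$, so $\|\hat\mu_b-\mu_{\pi(b)}\|<\delta/4$. The strict inequality holds by the definition of good.

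\textbf{Part (ii).} For every $i$,
\[
\|\widehat V_{i\cdot}-\hat\mu_{\pi^{-1}(z_i)}\|<\eta+\delta/4<\delta/2 .
\]
For any other label $b$, with $\pi(b)\neq z_i$,
\[
\|\widehat V_{i\cdot}-\hat\mu_b\|\geq\delta-\eta-\delta/4>\delta/2 .
\]
So nearest-center assignment returns exactly $\pi^{-1}(z_i)$.

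\textbf{Recomputed means.} The recomputed mean variant needs extra care, and this is the main obstacle. Each recomputed center is an average of rows $\widehat V_{j\cdot}$ with $\hat z_j=b$, and these clusters can contain misclassified nodes, so the mean may drift away from $\mu_{\pi(b)}$.

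The plan is to first observe that the recomputed means are still within $\delta/4$ of the truth, and then rerun the part (ii) argument. Write $S_b$ for cluster $b$ and $m$ for the number of its misclassified members. Every member satisfies $\|\widehat V_{j\cdot}-\mu_{z_j}\|\leq\eta$. The correctly classified members number at least $n_{\min}-m$ and contribute average error at most $\eta$. The $m$ wrong members contribute deviation at most $\eta+\mathrm{diam}$, which is bounded via Cauchy--Schwarz by $\sqrt{m}\,\|\widehat V-\widehat C\|_F$-type terms.

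It may be simpler to avoid this: the first nearest-center pass already recovers $z$ exactly, by the part (ii) argument. Recomputing means on a correct partition gives centers $\bar\mu_a$ with $\|\bar\mu_a-\mu_a\|\leq\eta<\delta/4$, since each is an average of rows within $\eta$ of $\mu_a$. A second nearest-center pass therefore again yields exact recovery. I would adopt this ordering, reading Algorithm~1's refinement as being applicable after the exact assignment.
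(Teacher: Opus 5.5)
Your residual bound, part (i), and the nearest-center step of part (ii) are correct and follow the paper's argument essentially line by line. The paper uses the same ingredients: a bad set of size at most $C_{\rm km}\varrho^2/\delta^2<n_{\min}$, disjoint $\delta/4$-balls forcing a bijection of labels, and the $\delta/2$ comparison.

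The gap is in the last assertion of (ii). That assertion concerns the procedure of Algorithm~1, which has three steps. First keep the $k$-means assignment $\hat z$. Then replace each $\hat\mu_b$ by the mean $\bar\mu_b$ of the rows $\{\widehat V_{j\cdot}:\hat z_j=b\}$. Only then reassign every row to its nearest $\bar\mu_b$. Your final plan changes this to reassign, recompute, reassign. That is a different algorithm, so the statement as written is left unproved.

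Your first sketch does not close either. The clusters of $\hat z$ may still contain misclassified rows, and their population centers can be far from $\mu_{\pi(b)}$ relative to $\delta$. Knowing only that each row lies within $\eta$ of its own center, and that fewer than $n_{\min}$ rows are bad, does not bound $\|\bar\mu_b-\mu_{\pi(b)}Q\|_2$. What controls the drift is the $k$-means objective itself.

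The missing idea is that the mean update cannot increase the objective. With $\hat z$ held fixed, the cluster mean minimizes $\sum_{j:\hat z_j=b}\|\widehat V_{j\cdot}-c\|_2^2$ over $c$. Hence
$\sum_i\|\widehat V_{i\cdot}-\bar\mu_{\hat z_i}\|_2^2\leq\|\widehat V-\widehat C\|_F^2\leq(1+\varepsilon_{\rm km})\min_{c,z'}\sum_i\|\widehat V_{i\cdot}-c_{z'_i}\|_2^2$.
The means are well defined because, by your part (i), cluster $b$ contains the good rows of community $\pi(b)$ and so is nonempty.

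Thus $(\hat z,\bar\mu_1,\dots,\bar\mu_K)$ is itself a $(1+\varepsilon_{\rm km})$-approximate $k$-means solution, and part (i) applies to it verbatim. This gives a permutation $\pi'$ with $\max_b\|\bar\mu_b-\mu_{\pi'(b)}Q\|_2<\delta/4$. Your $\delta/2$ comparison then shows that nearest-$\bar\mu$ reassignment recovers $z$ exactly, up to relabelling. This is how the paper concludes.
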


\begin{proof}[Proof by the $k$-means argument of
    {\cite[Lemma~5.3]{lei_consistency_2015}}]
    The population centers and labels are a feasible $k$-means solution, so
    the objective guarantee and the triangle inequality give
    \[
        \|\widehat C-VQ\|_F
        \leq\|\widehat C-\widehat V\|_F+\varrho
        \leq(1+\sqrt{1+\varepsilon_{\rm km}})\varrho.
    \]
    Hence the set
    $\mathcal B:=\{i:\|\widehat C_{i\cdot}-\mu_{z_i}Q\|_2\geq\delta/4\}$
    has at most $C_{\rm km}\varrho^2/\delta^2<n_{\min}$ elements. Every
    population community therefore contains a row outside $\mathcal B$, whose
    fitted center lies within $\delta/4$ of its population center. These $K$
    balls are disjoint and there are only $K$ fitted centers. They consequently
    determine a bijection of the centers, and every row outside $\mathcal B$
    has the correct label under that bijection. This proves (i).

    Under \eqref{eq:kmeans-exact-cond}, each observed row has distance less
    than $\eta+\delta/4<\delta/2$ from its matched fitted center and distance
    greater than $\delta-\eta-\delta/4>\delta/2$ from every other fitted
    center. Nearest-center reassignment is therefore exact. All fitted
    clusters are nonempty by (i), and recomputing their means does not increase
    the objective. Part (i) thus also applies to the recomputed centers,
    proving the last assertion.
\end{proof}

\begin{lemma}[Exact recovery by single linkage]
    \label{lem:single-linkage}
    Let $V\in\mathbb R^{n\times q}$ have rows $V_{i\cdot}=\mu_{z_i}$
    taking $K\geq2$ distinct values, all represented, with separation
    $\delta=\min_{a\ne b}\|\mu_a-\mu_b\|_2>0$.
    Suppose there exists an orthogonal matrix $Q$ such that
    $\eta:=\|\widehat VQ-V\|_{2\to\infty}<\delta/4$.
    Regard the rows of $\widehat V$ as points and connect them with
    $n-1$ edges so that all points are connected and the sum of the
    Euclidean edge lengths is as small as possible. Such a set of
    edges is called a minimum spanning tree. Delete its $K-1$ longest
    edges and assign points to the same cluster when they remain connected.
    This procedure recovers $z$ exactly, up to relabeling.
\end{lemma}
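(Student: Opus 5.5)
We argue geometrically with the aligned rows $x_i:=(\widehat VQ)_{i\cdot}$. Right multiplication by the orthogonal $Q$ preserves all pairwise Euclidean distances between rows (\Cref{lem:orthonormal-facts}(iii)). Hence the complete weighted graph on the rows of $\widehat V$ and the one on the rows of $\widehat VQ$ have identical edge weights. They therefore have the same minimum spanning trees and the same output after deleting the $K-1$ longest edges. So we may replace $\widehat V$ by $\widehat VQ$ and assume $\|x_i-\mu_{z_i}\|_2\leq\eta<\delta/4$ for every $i$.

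The first step is the two distance bounds.

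\emph{Within a community.} If $z_i=z_j$, then $\|x_i-x_j\|_2\leq2\eta<\delta/2$.

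\emph{Across communities.} If $z_i\neq z_j$, then $\|x_i-x_j\|_2\geq\delta-2\eta>\delta/2$.

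Thus there is a threshold $t:=\delta/2$ such that every within-community pair is strictly closer than $t$ and every cross-community pair is strictly farther than $t$.

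The second step is a structural claim about any minimum spanning tree $T$. The claim is that $T$ contains exactly $K-1$ edges of length greater than $t$, and that these are precisely its cross-community edges. We use the cycle property of minimum spanning trees: a strictly heaviest edge on a cycle of the graph lies in no MST.

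First, suppose some community $C_a$ were not connected in $T$ restricted to $C_a$. Take $i,j\in C_a$ lying in different components of $T|_{C_a}$. The path in $T$ from $i$ to $j$ must leave $C_a$, so it uses some cross-community edge of length $>t$. Now take the cycle formed by that path together with the edge $ij$, whose length is $<t$. Removing the heavy cross-community path edge and adding $ij$ yields a spanning tree of strictly smaller total length, a contradiction. This is the standard exchange argument.

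Hence each $T|_{C_a}$ is connected, so it is a spanning tree of $C_a$ with $n_a-1$ edges. Summing gives $n-K$ within-community edges, so $T$ has exactly $K-1$ cross-community edges. All within-community edges have length $<t$ and all cross-community edges have length $>t$. The $K-1$ longest edges are therefore exactly the cross-community ones. This holds even with ties elsewhere, because of the strict gap at $t$.

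Deleting those $K-1$ edges leaves the forest $\bigcup_a T|_{C_a}$. Its connected components are exactly $C_1,\dots,C_K$, since each $T|_{C_a}$ is connected and no remaining edge joins distinct communities. Every community is represented, so there are exactly $K$ clusters, and the output equals $z$ up to relabelling.

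The main obstacle is making the exchange argument rigorous without a uniqueness assumption on the MST. Some care is also needed that the "$K-1$ longest edges" are well defined despite possible ties. The strict inequalities $2\eta<\delta/2<\delta-2\eta$ handle both, because any tie can only occur within one side of the threshold.
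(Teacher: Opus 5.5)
Your proof is correct and takes essentially the paper's route: reduce to $\widehat VQ$ by orthogonal invariance, then use the triangle inequality to show that within-community distances are at most $2\eta$ while cross-community distances are at least $\delta-2\eta>2\eta$. The only difference is in the combinatorial step. The paper cites Ackerman--Ben-David for single linkage and Gower--Ross for its equivalence with cutting the $K-1$ longest MST edges; you instead prove that step directly with a self-contained exchange argument that holds for every MST and is unaffected by ties.
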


\begin{proof}[Proof using
    {\cite[Section~3.2, Lemma~1 and Theorem~3]{ackerman_clusterability_2009}}]
    Orthogonal transformations preserve distances, so we may work with
    the rows of $\widehat VQ$. The triangle inequality gives
    \[
        \max_{i,j:\,z_i=z_j}\|\widehat V_{i\cdot}-\widehat V_{j\cdot}\|_2
        \leq2\eta<\delta-2\eta
        \leq\min_{i,j:\,z_i\ne z_j}\|\widehat V_{i\cdot}-\widehat V_{j\cdot}\|_2.
    \]
    Thus the largest distance within a community is smaller than the
    smallest distance between communities. This is the separation
    condition in the cited result, which gives exact recovery by
    single linkage stopped at $K$ clusters. Deleting the $K-1$ longest edges
    of a minimum spanning tree produces the same clusters
    \cite{gower_minimum_1969}.
\end{proof}

\begin{remark}[Comparison with the $k$-means condition]
    \label{rem:rounding-conditions}
    Write $F=\|\widehat VQ-V\|_F$ and
    $\eta=\|\widehat VQ-V\|_{2\to\infty}$.
    In addition to $\eta<\delta/4$, \Cref{lem:kmeans} assumes
    $C_{\rm km}F^2<n_{\min}\delta^2$ to control the fitted centers.
    Since $F^2\leq n\eta^2$,
    \[
        \frac{C_{\rm km}F^2}{n_{\min}\delta^2}
        \leq C_{\rm km}\frac{n}{n_{\min}}
               \left(\frac{\eta}{\delta}\right)^2.
    \]
    For balanced communities, $n/n_{\min}\asymp K$, so
    $\eta<\delta\sqrt{n_{\min}/(C_{\rm km}n)}$, of order
    $\delta/\sqrt K$ for fixed $\varepsilon_{\rm km}$, suffices for
    this additional condition. For fixed $K$, this only requires a
    smaller constant in the rowwise bound. For growing $K$, balance
    alone does not make $\eta<\delta/4$ imply the Frobenius condition.
\end{remark}

\section{Concentration inequalities}
\label{sec:concentration}
This appendix collects concentration inequalities used throughout. In line with the
conventions of \Cref{sec:prelim}, every statement is for a \emph{fixed} design ($O$,
or a weight matrix $W$) and a random adjacency matrix. In particular,
we only make use of the fact that the noise matrix $E = A - P$ 
has independent entries (up to the diagonal symmetry) 
with entry variances at most $\rho_n \|
\bar P \|_\infty$, meaning every statement covers the SBM and the DCSBM alike.

\begin{lemma}[Spectral norm of a sparse Bernoulli matrix]
    \label{lem:bernoulli-spectral}
    Let $A \in \{0,1\}^{n \times n}$ be symmetric and hollow ($A_{ii} = 0$) with
    independent entries $A_{ij} \sim \mathrm{Bernoulli}(p_{ij})$ for $i < j$ and
    mean $P = \mathbb{E}[A]$, and let $d$ be any number with $n \max_{ij} p_{ij}
    \leq d$---in our setting $d = n\rho_n (1 + \| \bar P \|_\infty)$ always
    qualifies. For each fixed $\alpha>1$ there is a constant $C_0=C_0(\alpha)$
    such that, if $d \geq C_0 \log n$, then
    \begin{equation}
        \| A - P \|_{2 \to 2} \leq C_0 \sqrt{d}
    \end{equation}
    with probability at least $1 - n^{-\alpha}$.
\end{lemma}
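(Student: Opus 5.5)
This is the standard sparse random-matrix bound, and I would prove it by a symmetrization-free reduction to a known result. I would not reprove it from scratch but cite the argument of Lei and Rinaldo \cite{lei_consistency_2015} (Theorem~5.2), which itself refines Feige--Ofek. Their result states: if $\max_{ij}p_{ij}\leq d/n$ with $d\geq c_0\log n$, then for any $r>0$ there is $C=C(r,c_0)$ such that $\|A-P\|_{2\to2}\leq C\sqrt d$ with probability at least $1-n^{-r}$. Taking $r=\alpha$ and enlarging the constant so that one $C_0$ serves both as the lower threshold on $d/\log n$ and as the multiplier gives the statement. The remark that $d=n\rho_n(1+\|\bar P\|_\infty)$ qualifies is immediate, since every $p_{ij}\leq\rho_n\|\bar P\|_\infty$.

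Should a self-contained argument be wanted, I would follow the Feige--Ofek / Lei--Rinaldo scheme.

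\emph{Step 1 (discretization).} Write $\|A-P\|_{2\to2}=\sup_{x,y}x^\top(A-P)y$ over the unit sphere. Replace the sphere by a grid $\mathcal T$ of vectors with coordinates in $(\tfrac{1}{2\sqrt n})\mathbb Z$ and norm at most $1$. This loses only a constant factor and satisfies $|\mathcal T|\leq e^{cn}$.

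\emph{Step 2 (light pairs).} Split each bilinear form into light pairs, where $|x_iy_j|\leq\sqrt d/n$, and heavy pairs. For the light part, Bernstein's inequality applies: each summand is bounded by $\sqrt d/n$ and the total variance is at most $d/n$. This gives a tail $\exp(-c\,C^2 n)$ at level $C\sqrt d$. A union bound over $|\mathcal T|^2$ then handles the light part.

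\emph{Step 3 (heavy pairs).} This is the expected obstacle. Here one uses the \emph{bounded degree} and \emph{discrepancy} properties. Degrees are at most $C'd$ with probability $1-n^{-\alpha}$; this needs Chernoff bounds and the hypothesis $d\geq C_0\log n$. For all vertex sets $S,T$, the edge count $e(S,T)$ is either at most $C\mu(S,T)$ or satisfies $e(S,T)\log\frac{e(S,T)}{\mu(S,T)}\leq C|T|\log\frac{n}{|T|}$, where $\mu(S,T)=|S||T|d/n$. These properties are proved by Chernoff bounds and a union bound over sets.

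A dyadic decomposition of $x$ and $y$ by coordinate size then converts these properties into the bound $\sum_{\text{heavy}}x_iA_{ij}y_j\lesssim\sqrt d$. The heavy part of $P$ is trivially at most $\sqrt d$, since $|P_{ij}|\leq d/n$ and the heavy pairs contribute at most $\sum|x_iy_j|^2 n/\sqrt d$ in this weighting.

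The dyadic bookkeeping in the heavy-pair step is the delicate part. Once the two graph properties hold, though, it is deterministic, so I would simply cite it.
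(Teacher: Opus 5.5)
Your proposal is correct and takes the same route as the paper, whose entire proof is the citation of \cite[Theorem~5.2]{lei_consistency_2015}; your extra step of fixing a threshold $c_0$ and taking $C_0=\max\{c_0,C(\alpha,c_0)\}$, so that one constant serves as both the degree threshold and the multiplier, is exactly the bookkeeping the citation needs. Your optional Feige--Ofek sketch reproduces the argument behind that theorem and is not needed; if you keep it, state the discrepancy property for $|S|\leq|T|$ as in the source, which costs nothing by symmetry.
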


This is \cite[Theorem~5.2]{lei_consistency_2015}.

\begin{lemma}[Aggregated noise]
    \label{lem:noise}
    There is a constant $C_0$, depending only on $\| \bar P \|_\infty$ and $\alpha$, such that,
    if $n \rho_n \geq C_0 \log n$, then with probability at least $1 - n^{-\alpha}$ we have
    \begin{equation}
        \label{eq:noise-bound}
        \| OE \|_{2 \to 2} \leq \| O \|_{2 \to 2} \, \| E \|_{2 \to 2}
        \leq C_0 \, \| O \|_{2 \to 2} \, \sqrt{n \rho_n} .
    \end{equation}
\end{lemma}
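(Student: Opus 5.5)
The plan is to get \eqref{eq:noise-bound} from \Cref{lem:bernoulli-spectral} and submultiplicativity, checking the degree condition that lemma needs. Write $E := A - P$. Then $OE$ is exactly the noise in the aggregated observation, since $R - \overline R = O(A-P)$.

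\textbf{First inequality.} This is deterministic and holds for every realization. It is the product bound $\|OE\|_{2\to2}\leq\|O\|_{2\to2}\|E\|_{2\to2}$ of \Cref{lem:norm-facts}(ii). The design $O$ is fixed, so no probability is involved.

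\textbf{Second inequality.} Apply \Cref{lem:bernoulli-spectral} to $A$, which is symmetric, hollow and Bernoulli with independent entries for $i<j$.
\begin{itemize}
\item Take $d := n\rho_n(1+\|\bar P\|_\infty)$. This qualifies because $\max_{ij} p_{ij}\leq\|P^0\|_\infty=\rho_n\|\bar P\|_\infty$, so $n\max_{ij}p_{ij}\leq n\rho_n\|\bar P\|_\infty\leq d$. The bound covers both the SBM and the DCSBM, since we only use $P\leq P^0$ entrywise off the diagonal.
\item The hypothesis $d\geq C_0(\alpha)\log n$ is implied by $n\rho_n\geq C_0(\alpha)\log n$, because $1+\|\bar P\|_\infty\geq1$.
\item On the resulting event, of probability at least $1-n^{-\alpha}$,
\[
\|E\|_{2\to2}\leq C_0(\alpha)\sqrt{1+\|\bar P\|_\infty}\,\sqrt{n\rho_n}.
\]
\end{itemize}

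\textbf{Constants.} The new constant is $C_0:=C_0(\alpha)\sqrt{1+\|\bar P\|_\infty}$. Since this is at least $C_0(\alpha)$, the same constant can also serve as the threshold in $n\rho_n\geq C_0\log n$, consistent with the paper's convention that a single $C_0$ depends only on $\|\bar P\|_\infty$ and $\alpha$. Multiplying the bound on $\|E\|_{2\to2}$ by $\|O\|_{2\to2}$ then gives \eqref{eq:noise-bound}.

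There is no substantive obstacle. The only care needed is this constant bookkeeping, and confirming that $\bar P$ bounds the hollowed mean entries in the DCSBM as well, which holds because $P^0=\rho_n\bar P$ by definition.
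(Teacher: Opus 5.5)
Your proposal is correct and follows the paper's proof: submultiplicativity gives the first inequality, and \Cref{lem:bernoulli-spectral} with $d=n\rho_n(1+\|\bar P\|_\infty)$ gives the second, with the factor $(1+\|\bar P\|_\infty)$ absorbed into $C_0$. Your explicit check that the enlarged constant still serves as the threshold in $n\rho_n\geq C_0\log n$ is the same constant bookkeeping the paper performs implicitly.
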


\begin{proof}
    The first inequality is sub-multiplicativity of the operator norm. For the
    second, $E = A - \bbe{A}$ is a symmetric matrix with independent zero-mean
    entries above the diagonal, bounded by $1$ and with variances at most
    $\max_{ij} P_{ij} \leq \rho_n \| \bar P \|_\infty$; \Cref{lem:bernoulli-spectral}
    gives $\| E \|_{2 \to 2} \leq C_0 \sqrt{n \rho_n}$ with probability at least
    $1 - n^{-\alpha}$ once $n \rho_n \geq C_0 \log n$, absorbing $(1 + \| \bar P
    \|_\infty)$ factors into $C_0$.
\end{proof}

\begin{lemma}
    \label{lem:vector-bernstein}
    Let $X_1,\dots,X_n$ be independent, mean-zero scalars with
    $|X_j|\leq1$ and $\Var(X_j)\leq\sigma^2$, and let
    $w_1,\dots,w_n\in\mathbb R^K$ be fixed vectors.
    Let $W\in\mathbb R^{n\times K}$ have rows $w_j^\top$, and write
    $v:=\sigma^2\sum_j\|w_j\|_2^2=\sigma^2\|W\|_F^2$ and
    $b:=\max_j\|w_j\|_2$.
    For every $\epsilon\in(0,1)$, each of the following bounds holds
    with probability at least $1-\epsilon$.

    \textup{(i)} The Bernstein-type inequality
    \begin{equation}
        \label{eq:vector-bernstein}
        \Big\|\sum_{j=1}^n X_jw_j\Big\|_2
        \leq\sqrt{2v\log\!\big((K+1)/\epsilon\big)}
            +\tfrac23 b\log\!\big((K+1)/\epsilon\big).
    \end{equation}

    \textup{(ii)} The mean-plus-deviation bound, from
    \cite[Theorem~1.1(c)]{klein_concentration_2005},
    \begin{equation}
        \label{eq:vector-mean-tail}
        \Big\|\sum_{j=1}^n X_jw_j\Big\|_2
        \leq C\left\{\sqrt v
            +\sigma\|W\|_{2\to2}\sqrt{\log(1/\epsilon)}
            +b\log(1/\epsilon)\right\},
    \end{equation}
    where $C$ is universal.
    Both statements hold conditionally when the $w_j$ are random but
    measurable with respect to a $\sigma$-field independent of $(X_j)_j$,
    with $W$, $v$, and $b$ evaluated at the realized weights.
\end{lemma}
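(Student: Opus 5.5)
Part (i) is the standard matrix Bernstein inequality applied to the $(K+1)\times(K+1)$ Hermitian dilation of the vector sum; part (ii) follows from Talagrand-type concentration of a supremum of an empirical process.

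\emph{Part (i).} View each summand $X_jw_j$ as a $K\times1$ matrix and form the symmetric dilation
\[
    S_j:=\begin{pmatrix}0&X_jw_j^\top\\ X_jw_j&0\end{pmatrix}\in\mathbb R^{(K+1)\times(K+1)},
\]
with $S_j$ independent and mean zero. Each dilation satisfies $\|S_j\|_{2\to2}=|X_j|\,\|w_j\|_2\leq b$. Its square is block diagonal,
\[
    S_j^2=X_j^2\,\diag\big(\|w_j\|_2^2,\ w_jw_j^\top\big),
\]
so
\[
    \Big\|\sum_j\bbe{S_j^2}\Big\|_{2\to2}\leq\sigma^2\max\Big\{\sum_j\|w_j\|_2^2,\ \Big\|\sum_jw_jw_j^\top\Big\|_{2\to2}\Big\}=\sigma^2\|W\|_F^2=v,
\]
using $\|W^\top W\|_{2\to2}\leq\|W\|_F^2$.

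Tropp's matrix Bernstein inequality for Hermitian matrices of dimension $K+1$ gives
\[
    \bbp{\Big\|\sum_jS_j\Big\|_{2\to2}\geq t}\leq(K+1)\exp\!\Big(-\frac{t^2/2}{v+bt/3}\Big).
\]
Since $\|\sum_jS_j\|_{2\to2}=\|\sum_jX_jw_j\|_2$, it remains to choose $t$. Set $\ell:=\log((K+1)/\epsilon)$ and $t=\sqrt{2v\ell}+\tfrac23b\ell$. Then $t^2\geq2v\ell+\tfrac23bt\ell$, because $t^2\geq t\sqrt{2v\ell}+t\cdot\tfrac23b\ell$ and $t\sqrt{2v\ell}\geq2v\ell$. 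Hence the exponent is at least $\ell$, which yields \eqref{eq:vector-bernstein}.

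\emph{Part (ii).} Write
\[
    \Big\|\sum_jX_jw_j\Big\|_2=\sup_{\|u\|_2\leq1}\sum_jX_j\langle u,w_j\rangle,
\]
a supremum over a countable dense set of $u$ of sums of independent, mean-zero, bounded terms. Each term satisfies $|X_j\langle u,w_j\rangle|\leq b$. The weak variance is
\[
    \sup_{\|u\|_2\leq1}\sum_j\Var(X_j)\langle u,w_j\rangle^2\leq\sigma^2\|Wu\|_2^2\leq\sigma^2\|W\|_{2\to2}^2.
\]
For the mean, Jensen's inequality gives
\[
    \mathbb E\Big\|\sum_jX_jw_j\Big\|_2\leq\Big(\sum_j\Var(X_j)\|w_j\|_2^2\Big)^{1/2}\leq\sqrt v.
\]

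\cite[Theorem~1.1(c)]{klein_concentration_2005} bounds the upper tail of such a supremum $Z$ by
\[
    Z\leq\mathbb EZ+\sqrt{2\ell'(\sigma_w^2+2b\,\mathbb EZ)}+b\ell'/3
\]
with probability at least $1-e^{-\ell'}$, where $\sigma_w^2$ is the weak variance above. Take $\ell'=\log(1/\epsilon)$ and absorb the cross term using $\sqrt{2b\ell'\,\mathbb EZ}\leq\mathbb EZ+b\ell'$. This leaves
\[
    Z\leq C\big\{\sqrt v+\sigma\|W\|_{2\to2}\sqrt{\log(1/\epsilon)}+b\log(1/\epsilon)\big\},
\]
which is \eqref{eq:vector-mean-tail}.

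\emph{Conditional version.} If the $w_j$ are measurable with respect to a $\sigma$-field independent of $(X_j)_j$, apply both bounds conditionally with the weights frozen at their realized values. The independence and the moment conditions on the $X_j$ are unchanged by this conditioning, so the conditional probability of the bad event is at most $\epsilon$. Integrating over the $\sigma$-field then gives the unconditional statement.

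\emph{Main obstacle.} The only delicate point is matching the constants in the cited Klein--Rio form, in particular the sign convention and the way the $b\,\mathbb EZ$ term enters the variance proxy. Since part (ii) allows a universal constant $C$, any Bousquet- or Talagrand-type version of this inequality suffices.
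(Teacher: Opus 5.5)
Your proof is correct and takes the same route as the paper. In part (i), you apply matrix Bernstein to the $K\times1$ summands; your explicit Hermitian dilation, using that its top eigenvalue equals $\|\sum_jX_jw_j\|_2$, is exactly how Tropp's rectangular Theorem~1.6.2 with dimension factor $K+1$ arises. In part (ii), you apply Klein--Rio's Theorem~1.1(c) to the norm as a supremum over a countable dense set, with $\mathbb E Z\le\sqrt v$, weak variance at most $\sigma^2\|W\|_{2\to2}^2$, and AM--GM absorption of the $b\,\mathbb E Z$ term, just as the paper does.

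The only discrepancy is cosmetic. The Klein--Rio tail is $\exp\{-x^2/(2(v_{\mathrm{dir}}+2b\,\mathbb E Z)+3bx)\}$, which inverts to a $3b\log(1/\epsilon)$ term, whereas the $b\log(1/\epsilon)/3$ you quote is Bousquet's constant. This only changes the universal constant $C$.
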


\begin{proof}
    Part (i) is rectangular matrix Bernstein
    \cite[Theorem~1.6.2]{tropp_introduction_2015} applied to the
    $K\times1$ matrices $X_jw_j$.
    For part (ii), apply \cite[Theorem~1.1(c)]{klein_concentration_2005}
    to the norm as a supremum of centered linear forms. Put
    $Z_0=\|\sum_{j=1}^nX_jw_j\|_2$. Independence gives
    \[
        \mu:=\bbe{Z_0}\leq\sqrt v,\qquad
        v_{\rm dir}:=\sup_{\|u\|_2=1}
            \sum_j\Var(X_j)\langle w_j,u\rangle^2
        \leq\sigma^2\|W\|_{2\to2}^2.
    \]
    If $b=0$ the conclusion is deterministic. Otherwise, represent
    $Z_0$ using a countable dense subset of the unit sphere and rescale
    each function $x\mapsto x\langle w_j,u\rangle$ by $b$.
    The cited inequality gives
    \[
        \bbp{Z_0\geq\mu+x}
        \leq\exp\!\left\{-\frac{x^2}{2(v_{\rm dir}+2b\mu)+3bx}\right\}.
    \]
    Set $t=\log(1/\epsilon)$ and
    $x=\sqrt{2(v_{\rm dir}+2b\mu)t}+3bt$.
    The exponent is at most $-t$, and
    $2\sqrt{b\mu t}\leq\mu+bt$ gives
    $Z_0\leq2\mu+\sqrt{2v_{\rm dir}t}+4bt$ outside an event of
    probability at most $\epsilon$. This proves
    \eqref{eq:vector-mean-tail}. Applying each argument under the
    conditional law proves the final assertion.
\end{proof}

\begin{lemma}[Entrywise aggregated noise]
    \label{lem:entrywise-noise}
Fix any deterministic $W\in\mathbb R^{n\times q}$, with $1\leq q\leq n$.
    There is a constant $C_1$, depending only on $\|\bar P\|_\infty$ and $\alpha$,
    such that, with probability at least $1-n^{-\alpha}$,
    \begin{equation}
        \label{eq:entrywise-noise}
        \|EW\|_{2\to\infty}
        \leq C_1\left(\sqrt{\rho_n}\,\|W\|_F+\sqrt{\rho_n\log n}\,\|W\|_{2\to2}
                      +\|W\|_{2\to\infty}\log n\right).
    \end{equation}
    In particular, take $W=O^\top U_M$, where $U_M$ contains the $r$ left
    singular vectors of $M=OP^0$, and put
    $b_\infty=\|O^\top U_M\|_{2\to\infty}$. Then
    \[
        \|(OE)^\top U_M\|_{2\to\infty}
        \leq C_1\left(\sqrt{\rho_n}(\sqrt r+\sqrt{\log n})\,\|O\|_{2\to2}
                         +b_\infty\log n\right).
    \]
    This specialization uses $\|O^\top U_M\|_F\leq\sqrt r\|O\|_{2\to2}$
    and $\|O^\top U_M\|_{2\to2}\leq\|O\|_{2\to2}$.
    The fixed-weight statement also applies to singular vectors of $OP$;
    no rank assumption on that matrix is needed here.
\end{lemma}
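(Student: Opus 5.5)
The plan is to bound each row of $EW$ separately with \Cref{lem:vector-bernstein} and then take a union bound over the $n$ rows. Row $i$ of $EW$ is $\sum_{j} E_{ij} W_{j\cdot}$. For fixed $i$, the scalars $X_j := E_{ij}$, $j\neq i$, are independent and mean zero, with $|X_j|\leq 1$ and $\Var(X_j)\leq \rho_n\|\bar P\|_\infty =: \sigma^2$. The diagonal term $E_{ii}=0$ vanishes because $A$ and $P$ are both hollow. Symmetry of $E$ plays no role: a single row involves each unordered pair only once, so independence holds within the row. The weights are the fixed vectors $w_j = W_{j\cdot}^\top\in\mathbb R^q$. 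Hence $v=\sigma^2\sum_{j\neq i}\|W_{j\cdot}\|_2^2\leq \rho_n\|\bar P\|_\infty\|W\|_F^2$ and $b\leq\|W\|_{2\to\infty}$. Deleting a row does not increase the spectral norm, by \Cref{lem:norm-facts}(v), so the $W$ appearing in the lemma has spectral norm at most $\|W\|_{2\to2}$.

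I will apply part (ii) of \Cref{lem:vector-bernstein}, the mean-plus-deviation bound, with $\epsilon=n^{-\alpha-1}$. With probability at least $1-n^{-\alpha-1}$ this gives
\[
\|(EW)_{i\cdot}\|_2\leq C\Big\{\sqrt{\rho_n\|\bar P\|_\infty}\,\|W\|_F+\sqrt{\rho_n\|\bar P\|_\infty}\,\|W\|_{2\to2}\sqrt{(\alpha+1)\log n}+\|W\|_{2\to\infty}(\alpha+1)\log n\Big\}.
\]
Part (ii) is the right choice because the dimension enters through $\sqrt v\lesssim\sqrt{\rho_n}\|W\|_F$, and the $\log$ factor then multiplies only the spectral norm. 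Part (i) would instead give $\sqrt{\rho_n\log n}\,\|W\|_F$, which loses a $\sqrt{\log n}$ on the Frobenius term and would not yield \eqref{eq:entrywise-noise}. A union bound over the $n$ rows gives failure probability at most $n^{-\alpha}$. Absorbing $\|\bar P\|_\infty$ and $\alpha$ into $C_1$ proves \eqref{eq:entrywise-noise}.

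For the specialization, set $W=O^\top U_M$. This $W$ is deterministic, since $U_M$ depends only on $M=OP^0$. Then $(OE)^\top U_M=E^\top O^\top U_M=EW$ by the symmetry of $E$. By \Cref{lem:orthonormal-facts}(i), $\|W\|_{2\to2}\leq\|O^\top\|_{2\to2}=\|O\|_{2\to2}$ and $\|W\|_F\leq\sqrt r\,\|W\|_{2\to2}\leq\sqrt r\,\|O\|_{2\to2}$, since $W$ has rank at most $r$; also $\|W\|_{2\to\infty}=b_\infty$. Substituting these into \eqref{eq:entrywise-noise} gives the stated bound. The argument used only that $W$ is deterministic, so the same bound holds with the singular vectors of $OP$ in place of $U_M$, and no rank assumption is needed. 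The only point requiring care is the choice of part (ii) over part (i) of \Cref{lem:vector-bernstein}; the other steps are direct substitutions.
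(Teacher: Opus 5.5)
Your proposal is correct and follows the paper's proof essentially verbatim. Both apply \Cref{lem:vector-bernstein}(ii) row by row with $\epsilon=n^{-(\alpha+1)}$, take a union bound over the $n$ rows, and obtain the specialization from $(OE)^\top U_M=EO^\top U_M$ by symmetry of $E$, together with the bounds $\|O^\top U_M\|_F\leq\sqrt r\,\|O\|_{2\to2}$ and $\|O^\top U_M\|_{2\to2}\leq\|O\|_{2\to2}$.
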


\begin{proof}
    For each $i$, $(EW)_{i\cdot}=\sum_{j\ne i}E_{ij}W_{j\cdot}$
    is a sum of independent centered vectors, with scalar variances
    at most $\rho_n\|\bar P\|_\infty$. Apply
    \Cref{lem:vector-bernstein}(ii) with $\epsilon=n^{-(\alpha+1)}$ and union bound
    over the $n$ rows. This proves \eqref{eq:entrywise-noise} without
    requiring independence between different rows. Symmetry gives
    $(OE)^\top U_M=EO^\top U_M$, proving the specialization.
    For the total perturbation, additionally use
    $\|D^\top U_M\|_{2\to\infty}\leq
    \rho_n\|\bar P\|_\infty b_\infty$; this deterministic term is
    absorbed by the displayed $b_\infty\log n$ term.
\end{proof}

\begin{lemma}[Good event]
    \label{loo:event}
    There are constants $C_0, C_{10}$ depending only on $\| \bar P \|_\infty$ and $\alpha$ such
    that, if $n \rho_n \geq C_0 \log n$, the event
    \begin{equation}
        \label{eq:loo-event}
        \mcE \;:=\;
        \Big\{ \| E \|_{2 \to 2} \leq C_0 \sqrt{n \rho_n} \Big\}
        \;\cap\;
        \Big\{ \max_{i \in [n]} \| E_{\cdot i} \|_2 \leq C_{10} \sqrt{n \rho_n} \Big\}
    \end{equation}
    has probability at least $1 - 2n^{-\alpha}$. On $\mcE$, for every $i \in [n]$
    simultaneously,
    \begin{equation}
        \label{eq:loo-event-consequences}
        \max\big\{\|OE\|_{2\to2},\ \|OE^{(i)}\|_{2\to2}\big\}
        \leq C_0\|O\|_{2\to2}\sqrt{n\rho_n},\qquad
        \|OE_{\cdot i}\|_2
        \leq C_{10}\|O\|_{2\to2}\sqrt{n\rho_n}.
    \end{equation}
    Here $E^{(i)}$ is $E$ with row and column $i$ set to zero. Enlarging
    $C_0,C_{10}$ to include the deterministic correction in
    Lemma~\ref{lem:diag} %% updated this
    % \eqref{eq:total-perturbation},
    the same event gives
    \begin{equation}
        \label{eq:total-noise-event}
        \max\{\|\mcN\|_{2\to2},\|OE^{(i)}-D\|_{2\to2}\}
        \leq C_0\|O\|_{2\to2}\sqrt{n\rho_n}, \qquad 
        \|\mcN_{\cdot i}\|_2 \leq C_{10}\|O\|_{2\to2}\sqrt{n\rho_n}.
    \end{equation}
\end{lemma}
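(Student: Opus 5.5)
The plan is to bound the two events separately and then take a union bound. I then deduce the consequences \eqref{eq:loo-event-consequences} and \eqref{eq:total-noise-event} deterministically on $\mcE$.

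\textbf{Spectral-norm event.} \Cref{lem:bernoulli-spectral} with $d=n\rho_n(1+\|\bar P\|_\infty)$ gives $\|E\|_{2\to2}\leq C_0\sqrt{n\rho_n}$ with probability at least $1-n^{-\alpha}$. The factor $(1+\|\bar P\|_\infty)^{1/2}$ is absorbed into $C_0$, and the hypothesis $d\geq C_0\log n$ follows from $n\rho_n\geq C_0\log n$ after enlarging $C_0$.

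\textbf{Column-norm event.} For each $i$, $\|E_{\cdot i}\|_2$ is the Euclidean norm of $\sum_{j\neq i}E_{ji}e_j$, a sum of independent centered scalars bounded by $1$ with variance at most $\rho_n\|\bar P\|_\infty$. I apply \Cref{lem:vector-bernstein}(i) with $w_j=e_j$, so $v\leq n\rho_n\|\bar P\|_\infty$, $b=1$, and dimension $n$ in place of $K$. Taking $\epsilon=n^{-(\alpha+1)}$, the bound is
\[
\sqrt{2n\rho_n\|\bar P\|_\infty\log((n+1)n^{\alpha+1})}+\tfrac23\log((n+1)n^{\alpha+1})\lesssim\sqrt{n\rho_n\log n}+\log n .
\]
This carries an unwanted $\sqrt{\log n}$ factor. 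To remove it, I would instead use the scalar Bernstein inequality on $\|E_{\cdot i}\|_2^2=\sum_jE_{ji}^2$. Each summand lies in $[0,1]$, and the mean is at most $n\rho_n\|\bar P\|_\infty$, so Bernstein gives $\|E_{\cdot i}\|_2^2\leq 2n\rho_n\|\bar P\|_\infty+C\log n\leq C_{10}^2n\rho_n$ with probability at least $1-n^{-(\alpha+1)}$, using $n\rho_n\geq C_0\log n$. Alternatively, one can apply \Cref{lem:vector-bernstein}(ii), which has mean term $\sqrt v\asymp\sqrt{n\rho_n}$, plus $\sigma\|W\|_{2\to2}\sqrt{\log n}=\sqrt{\rho_n\log n}$, plus $\log n$; all three are $\lesssim\sqrt{n\rho_n}$. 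A union bound over $i\in[n]$ gives the event with probability at least $1-n^{-\alpha}$, and combining the two events gives $\mathbb P(\mcE)\geq1-2n^{-\alpha}$.

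\textbf{Consequences on $\mcE$.} These follow deterministically. $E^{(i)}$ is $E$ with row and column $i$ zeroed, that is, $E^{(i)}=\Pi_iE\Pi_i$ with $\Pi_i$ the coordinate projection, so $\|E^{(i)}\|_{2\to2}\leq\|E\|_{2\to2}$ by \Cref{lem:norm-facts}(ii). Submultiplicativity then gives the first bound in \eqref{eq:loo-event-consequences}, and $\|OE_{\cdot i}\|_2\leq\|O\|_{2\to2}\|E_{\cdot i}\|_2$ gives the second. For \eqref{eq:total-noise-event}, the total perturbation $\mcN$ differs from $OE$ by the deterministic term $D=O\diag(P^0)$, up to sign. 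By \Cref{lem:norm-facts}(iv),
\[
\|D\|_{2\to2}\leq\|O\|_{2\to2}\rho_n\|\bar P\|_\infty\leq\|O\|_{2\to2}\sqrt{n\rho_n}\,\|\bar P\|_\infty ,
\]
since $\rho_n\leq\sqrt{n\rho_n}$. Each column satisfies $\|D_{\cdot i}\|_2\leq\|O\|_{2\to2}\rho_n\|\bar P\|_\infty$ as well. The triangle inequality and enlarging $C_0,C_{10}$ by $\|\bar P\|_\infty$ finish the argument.

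\textbf{Main obstacle.} The only delicate step is getting the column norm at order $\sqrt{n\rho_n}$ rather than $\sqrt{n\rho_n\log n}$, which the squared-norm Bernstein argument handles.
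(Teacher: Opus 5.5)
Your proof is correct and follows the paper's argument: the spectral event comes from \Cref{lem:bernoulli-spectral}, each column norm gets a scalar Bernstein bound followed by a union bound over $i$, and the consequences follow deterministically from $E^{(i)}=\Pi_iE\Pi_i$, submultiplicativity and $\rho_n\leq\sqrt{n\rho_n}$. The only cosmetic difference is that the paper first bounds $\|E_{\cdot i}\|_2^2\leq\deg_i+\sum_jP_{ij}^2$ and applies Bernstein to $\deg_i$, whereas you apply it directly to $\sum_jE_{ji}^2$. You should, however, drop the ``alternatively'' via \Cref{lem:vector-bernstein}(ii). Its term $b\log(1/\epsilon)=(\alpha+1)\log n$ is not $O(\sqrt{n\rho_n})$ when $n\rho_n\asymp\log n$, a regime the hypothesis $n\rho_n\geq C_0\log n$ allows, so that route would need $n\rho_n\gtrsim(\log n)^2$. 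Only your squared-norm Bernstein argument gives the column bound at order $\sqrt{n\rho_n}$ in the stated regime.
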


\begin{proof}
    The first event has probability at least $1 - n^{-\alpha}$ by
    \Cref{lem:bernoulli-spectral} (equivalently \Cref{lem:noise}). For the second,
    fix $i$ and expand
    \[
        \begin{aligned}
        \| E_{\cdot i} \|_2^2 &= \sum_{j \neq i} E_{ij}^2
        = \sum_{j \neq i} \big( A_{ij}(1 - P_{ij})^2 + (1 - A_{ij}) P_{ij}^2 \big)\\
        &\leq \deg_i + \sum_{j \neq i} P_{ij}^2
        \leq \deg_i + n \rho_n^2 \| \bar P \|_\infty^2.
        \end{aligned}
    \]
where $\deg_i=\sum_{j\ne i}A_{ij}$ has mean
    $d_i=\sum_{j\ne i}P_{ij}\leq n\rho_n\|\bar P\|_\infty$.
    Bernstein's inequality gives
    \[
        \deg_i\leq d_i+\sqrt{2(\alpha+1)d_i\log n}
                         +\tfrac23(\alpha+1)\log n
    \]
    with probability at least $1 - n^{-(\alpha+1)}$. Under $n\rho_n\geq C_0\log n$ this
    is bounded by $C n\rho_n$, with $C$ depending on $\alpha$ and
    $\|\bar P\|_\infty$. The deterministic term
    $n\rho_n^2\|\bar P\|_\infty^2$ has the same upper-bound scale.
    A union bound over $i$ and intersection with the spectral event prove
    the claimed probability after choosing $C_{10}$ sufficiently large.

    As a consequence, sub-multiplicativity gives $\| OE \|_{2\to2} \leq \| O
    \|_{2\to2} \| E \|_{2\to2}$ and $\| OE^{(i)} \|_{2\to2} \leq \| O \|_{2\to2} \|
    E^{(i)} \|_{2\to2}$. Writing $D_i:=I-e_ie_i^\top$, we have
    $E^{(i)}=D_iED_i$ and $\|D_i\|_{2\to2}=1$, hence
    $\|E^{(i)}\|_{2\to2}\leq\|E\|_{2\to2}$ by sub-multiplicativity. The column
    bound is $\| O
    E_{\cdot i} \|_2 \leq \| O \|_{2 \to 2} \| E_{\cdot i} \|_2$.
    Finally, $\|D\|_{2\to2}\leq\rho_n\|\bar P\|_\infty\|O\|_{2\to2}$
    and $\|D_{\cdot i}\|_2\leq\rho_n\|\bar P\|_\infty b_O$.
    The triangle inequality proves \eqref{eq:total-noise-event}, since
    $\rho_n\leq\sqrt{n\rho_n}$ and $b_O\leq\|O\|_{2\to2}$.
\end{proof}

\section{Procrustes reductions and the leave-one-out chain}
\label{sec:procrustes}
Throughout, as in the notation of \Cref{sec:recovery},
\begin{equation}
    \label{eq:loo-notation}
    \begin{gathered}
    M=OP^0,\qquad N:=\mcN=OE-D,\qquad R=M+N,\\
    M=U\Sigma V^\top\ \text{(compact SVD, }\rk M=r\text{)},\qquad
    \sigma_r=\sigma_r(M).
    \end{gathered}
\end{equation}
and $\widehat U \in \mathbb{R}^{L \times r}$, $\widehat V \in \mathbb{R}^{n \times
r}$ are the top-$r$ left and right singular vectors of $R$. For each node $i \in
[n]$, the leave-one-out matrices are
\begin{equation}
    \label{eq:loo-defs}
    E^{(i)} := (I - e_i e_i^\top)\, E\, (I - e_i e_i^\top),
    \qquad
    R^{(i)} := M + O E^{(i)}-D,
\end{equation}
so $E^{(i)}$ is $E$ with its $i$-th row and column zeroed (since $E_{ii} = 0$), and
$\widehat U^{(i)}, \widehat V^{(i)}$ are the top-$r$ singular vectors of $R^{(i)}$.
Write $N^{(i)}:=OE^{(i)}-D$. Fix the failure exponent $\alpha>1$ of
\Cref{sec:concentration}. In every statement, $C, C', c$ denote positive constants
depending only on $\|\bar P \|_\infty$ and $\alpha$. All
randomness is controlled under the good event $\mcE$ of \Cref{loo:event}.

The bounds in this appendix are stated in terms of the following quantities,
where we abbreviate $o:=\|O\|_{2\to2}$ and
$\varepsilon:=\varepsilon_{\rm SNR}=o\sqrt{n\rho_n}/\sigma_r$ \eqref{eq:eps-def}.
\begin{itemize}
    \item \emph{Design quantities} (\Cref{subsec:design-anatomy}): the
    leverage $b_\infty=\|O^\top U\|_{2\to\infty}$, the column norm
    $b_O=\max_j\|O_{\cdot j}\|_2$, the Gram overlap $g_1=g_1(O)$, and
    $\varsigma=\varsigma(O)=g_1/(ob_O)$. They satisfy
    $b_\infty\leq b_O\leq o$ and $g_1\geq b_O^2$, the latter from the
    diagonal term $\langle O_{\cdot j},O_{\cdot j}\rangle$.
    \item \emph{Deterministic scales}: the first-order noise scale
    $\gamma_1:=\sqrt{\rho_n}(\sqrt r+\sqrt{\log n})$, the incoherence
    $h:=\|V\|_{2\to\infty}$, and the sizes $q_D:=\rho_n\|\bar P\|_\infty b_O$
    and $q_1:=o\,q_D$ of the diagonal correction, so that
    $\|D_{\cdot i}\|_2\leq q_D$ and
    $\|D^\top U\|_{2\to\infty}\leq\rho_n\|\bar P\|_\infty b_\infty$.
    \item \emph{Smallness parameters}: besides $\varepsilon$,
    \[
        \lambda:=\frac{b_O\log n}{\sigma_r},\qquad
        \lambda_\varsigma:=\frac{\varsigma o\log n}{\sigma_r}
        =\frac{g_1\log n}{b_O\sigma_r}\;\geq\;\lambda.
    \]
    \item \emph{Row noise}: the random quantities to be controlled are
    \begin{equation}
        \label{eq:loo-scalars}
        \Gamma := \| N^\top \widehat U \|_{2 \to \infty},
        \qquad
        \Gamma_V:=\|E\widehat V\|_{2\to\infty}=\max_i\|\widehat V^\top E_{\cdot i}\|_2,
        \qquad
        S:=\Gamma+b_O\Gamma_V,
    \end{equation}
    where the second identity for $\Gamma_V$ uses $E=E^\top$. For the
    \emph{fixed} subspace $U$, \Cref{lem:entrywise-noise} achieves the rate
    $\Gamma_1 := \sqrt{\rho_n}\,\|O^\top U\|_F
    +\sqrt{\rho_n\log n}\,\|O^\top U\|_{2\to2}+b_\infty\log n$.
    \item \emph{Estimated leverage and residuals}:
    \[
        \widehat b:=\|O^\top\widehat U\|_{2\to\infty},\qquad
        \widehat h:=\|\widehat V\|_{2\to\infty},\qquad
        b_\infty^{(i)}:=\|O^\top\widehat U^{(i)}\|_{2\to\infty},\qquad
        h^{(i)}:=\|\widehat V^{(i)}\|_{2\to\infty},
    \]
    and $\Psi_i:=\|\Delta_i^\top\widehat U\|_F+\|\Delta_i\widehat V\|_F$,
    $\Psi:=\max_i\Psi_i$, for the leave-one-out perturbation $\Delta_i$ of
    \Cref{loo:structure} below.
\end{itemize}

The Procrustean reduction below shows the entire entrywise error is governed by
$\Gamma = \| N^\top \widehat U \|_{2\to\infty}$. Reading it row by row,
\begin{equation}
    \label{eq:loo-row}
    \big( N^\top \widehat U \big)_{i \cdot}^\top
    = \widehat U^\top N_{\cdot i}
    = \widehat U^\top (O E_{\cdot i}-D_{\cdot i}),
    \qquad i \in [n],
\end{equation}
the $i$-th row pairs $\widehat U$ with the aggregated noise carried by node $i$'s
own edges. The obstruction is a dependence: $\widehat U$ is a global function of the
entire matrix $A$, hence of $E_{\cdot i}$. Estimating the offending component by operator norms costs an
additive $\| N \|_{2\to2}^2/\sigma_r \lesssim \varepsilon_{\rm SNR}^2 \sigma_r$, which is higher than the true bound we can achieve. The leave-one-out
argument chain we employ restores some degree of independence we can use to achieve
tighter bounds. Splitting along \eqref{eq:loo-defs}, we write
\begin{equation}
    \label{eq:loo-split}
    \widehat U^\top N_{\cdot i}
    = \underbrace{\widehat U^{(i)\top} N_{\cdot i}}_{\textbf{(A) independent}}
    + \underbrace{\big(\widehat U - \widehat U^{(i)}\big)^\top
    N_{\cdot i}}_{\textbf{(B) coupling}}.
\end{equation}
For term (A), independence of $\widehat U^{(i)}$ and $E_{\cdot i}$
allows us to apply Bernstein's inequality conditional on
$\widehat U^{(i)}$, leaving the bias term
$-\widehat U^{(i)\top}D_{\cdot i}$.
For term (B), we use that $R-R^{(i)}$ is a rank-two perturbation arising
from the edges incident to node $i$. This lets us control the product
of $\widehat U-\widehat U^{(i)}$ with $N_{\cdot i}$ using the column norm
$b_O$ rather than the operator norm $\|O\|_{2\to2}$. The residuals of this
perturbation also contain $\|O_{\cdot i}\|_2\|\widehat V^\top E_{\cdot i}\|_2$,
node $i$'s own column paired with the right singular vectors, and we control
$\widehat V^\top E_{\cdot i}$ by the same splitting. The argument therefore
bounds $\Gamma$ and $\Gamma_V$ jointly, with the leverage of
$\widehat U^{(i)}$ bounded through that of $\widehat U$
(\Cref{loo:stability-full}) and the latter through \Cref{lem:master-left}.
This is an adaptation of the leave-one-out arguments of
\cite{abbe_entrywise_2020,lei_unified_2019} to the aggregated noise $N=OE-D$; the adaptation is mostly bookkeeping, but it surfaces two new design
quantities---the column norm $b_O$ and the Gram overlap $g_1(O)$---which we
track explicitly rather than hide in the constants.

\begin{lemma}[Procrustean reduction]
    \label{lem:master}
    Suppose $\| N \|_{2 \to 2} \leq \sigma_r/4$, and let $Q$ be the orthogonal
    Procrustes alignment of \Cref{lem:wedin}. Then
    \begin{equation}
        \label{eq:master}
        \| \widehat V Q - V \|_{2 \to \infty}
        \;\leq\; \frac{4}{3\sigma_r}\Big(
        \| N^\top \widehat U \|_{2 \to \infty}
        + \| V \|_{2 \to \infty}\, \| N \|_{2 \to 2} \Big)
        + \| V \|_{2 \to \infty}\,
        \| \sin\Theta(\widehat V, V) \|_{2 \to 2}^2 .
    \end{equation}
\end{lemma}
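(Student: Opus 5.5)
Write $R=M+N$ and use the identity from \Cref{lem:orthonormal-facts}(iv), $\widehat V=R^\top\widehat U\widehat\Sigma^{-1}$. Since $M^\top=V\Sigma U^\top$, this gives the decomposition
\[
    \widehat V \;=\; V\,\Sigma U^\top\widehat U\,\widehat\Sigma^{-1} \;+\; N^\top\widehat U\,\widehat\Sigma^{-1}.
\]
The second term is controlled rowwise by $\|N^\top\widehat U\|_{2\to\infty}\,\|\widehat\Sigma^{-1}\|_{2\to2}$. By Weyl (\Cref{lem:weyl}) and $\|N\|_{2\to2}\leq\sigma_r/4$ we have $\widehat\sigma_r\geq 3\sigma_r/4$, which supplies the factor $4/(3\sigma_r)$. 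For the first term the plan is to compare $\Sigma U^\top\widehat U\widehat\Sigma^{-1}$ with $V^\top\widehat V$. Transposing $\widehat U\widehat\Sigma=R\widehat V$ and multiplying by $V^\top$ from the left gives $V^\top R^\top\widehat U=\Sigma U^\top\widehat U+V^\top N^\top\widehat U$. Combined with $R^\top\widehat U=\widehat V\widehat\Sigma$, this yields
\[
    \Sigma U^\top\widehat U\widehat\Sigma^{-1} \;=\; V^\top\widehat V \;-\; V^\top N^\top\widehat U\,\widehat\Sigma^{-1}.
\]
Hence
\[
    \widehat V=VV^\top\widehat V-VV^\top N^\top\widehat U\widehat\Sigma^{-1}+N^\top\widehat U\widehat\Sigma^{-1}.
\]

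Next we right-multiply by the Procrustes alignment $Q$ and subtract $V$:
\[
    \widehat VQ-V=V(V^\top\widehat VQ-I)-VV^\top N^\top\widehat U\widehat\Sigma^{-1}Q+N^\top\widehat U\widehat\Sigma^{-1}Q.
\]
Taking $\|\cdot\|_{2\to\infty}$ and using \eqref{eq:norm-relations} together with \Cref{lem:orthonormal-facts}(i),(iii), each term is handled separately. The first is at most $\|V\|_{2\to\infty}\|V^\top\widehat VQ-I\|_{2\to2}\leq\|V\|_{2\to\infty}\|\sin\Theta\|_{2\to2}^2$ by \Cref{lem:procrustes}(iii). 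The second is at most $\|V\|_{2\to\infty}\|N\|_{2\to2}\cdot\frac{4}{3\sigma_r}$. The third is at most $\|N^\top\widehat U\|_{2\to\infty}\cdot\frac{4}{3\sigma_r}$. Summing these three bounds gives \eqref{eq:master} exactly.

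The only delicate point is the algebra linking $\Sigma U^\top\widehat U\widehat\Sigma^{-1}$ to $V^\top\widehat V$, which I have checked above. I would also confirm that \Cref{lem:wedin}'s $Q$ is the Procrustes alignment of \Cref{lem:procrustes}, so that part (iii) applies; its hypothesis $\|N\|\leq\sigma_r/2$ is implied by the stronger assumption $\|N\|\leq\sigma_r/4$. No probabilistic input is needed, since the statement is purely deterministic.
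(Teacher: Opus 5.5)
Your proof is correct and is essentially the paper's argument. Your identity $\widehat V=VV^\top\widehat V+(I-VV^\top)N^\top\widehat U\widehat\Sigma^{-1}$ is exactly the paper's split into an aligned part and a perpendicular part. Expanding $I-VV^\top$ by hand is just the projection triangle inequality the paper quotes, and the remaining steps (Weyl for the factor $4/(3\sigma_r)$, Procrustes part (iii) for the aligned term) are the same. One small wording point: the relation $V^\top R^\top\widehat U=\Sigma U^\top\widehat U+V^\top N^\top\widehat U$ follows directly from $R^\top=V\Sigma U^\top+N^\top$, not from transposing $\widehat U\widehat\Sigma=R\widehat V$, but this does not affect the argument.
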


\begin{proof}
    Decompose $\widehat V Q - V = (I - VV^\top)\widehat V Q + V(V^\top \widehat V Q -
    I)$ and bound the two pieces; the identity is exact because $(I - VV^\top)V = 0$
    and $V^\top V = I$.

    \emph{Perpendicular part.} From $R^\top \widehat U = \widehat V \widehat \Sigma$ we
    have $\widehat V = R^\top \widehat U \widehat\Sigma^{-1} = (V\Sigma U^\top +
    N^\top)\widehat U \widehat\Sigma^{-1}$, and since $(I - VV^\top)V = 0$,
    \[
        (I - VV^\top)\widehat V = (I - VV^\top) N^\top \widehat U \widehat\Sigma^{-1}.
    \]
    Weyl's inequality gives $\widehat\sigma_r\geq\sigma_r-\|N\|_{2\to2}
    \geq3\sigma_r/4$. Therefore
    $\| \widehat\Sigma^{-1} \|_{2 \to 2} = \widehat\sigma_r^{-1} \leq
    (\sigma_r - \| N \|_{2 \to 2})^{-1} \leq \tfrac{4}{3}\sigma_r^{-1}$.
    Using this and $\|(I -
    VV^\top) X\|_{2\to\infty} \leq \|X\|_{2\to\infty} + \|V\|_{2\to\infty}\|V^\top
    X\|_{2\to2}$ with $X = N^\top\widehat U$,
    \[
        \| (I - VV^\top)\widehat V \|_{2 \to \infty}
        \leq \tfrac{4}{3}\sigma_r^{-1}\Big( \| N^\top \widehat U \|_{2 \to \infty}
        + \| V \|_{2 \to \infty}\, \| V^\top N^\top \widehat U \|_{2 \to 2} \Big)
        \leq \tfrac{4}{3}\sigma_r^{-1}\Big( \| N^\top \widehat U \|_{2 \to \infty}
        + \| V \|_{2 \to \infty}\, \| N \|_{2 \to 2} \Big).
    \]
    where the last step uses $\| V^\top N^\top \widehat U \|_{2 \to 2} \leq \| V \|_{2
    \to 2}\| N \|_{2 \to 2}\| \widehat U \|_{2 \to 2} = \| N \|_{2 \to 2}$. The same
    bound holds for $\| (I - VV^\top)\widehat V Q \|_{2 \to \infty}$, since right
    multiplication by the orthogonal $Q$ leaves each row's $\ell_2$ norm unchanged.

    \emph{Aligned part.} $\| V(V^\top \widehat V Q - I) \|_{2 \to \infty} \leq \| V \|_{2
    \to \infty}\| V^\top \widehat V Q - I \|_{2 \to 2}$. Writing the singular value
    decomposition $V^\top \widehat V = Q_1 (\cos\Theta) Q_2^\top$, the Procrustes
    alignment of \Cref{lem:wedin}---the minimizer of $\| \widehat V Q - V \|$ over
    orthogonal $Q$---is $Q = Q_2 Q_1^\top$, so $V^\top \widehat V Q = Q_1 (\cos\Theta)
    Q_1^\top$ and $\| V^\top \widehat V Q - I \|_{2 \to 2} = \| I - \cos\Theta \|_{2
    \to 2} = \max_k(1 - \cos\theta_k) \leq \max_k(1 - \cos^2\theta_k) = \|
    \sin\Theta(\widehat V, V) \|_{2 \to 2}^2$. Summing the two bounds gives
    \eqref{eq:master}.
\end{proof}

The same reduction, applied to the left factor and measured in design
coordinates, controls the leverage $\widehat b$ of $\widehat U$.

\begin{lemma}[Procrustean reduction for the left factor, in design coordinates]
    \label{lem:master-left}
    Suppose $\|N\|_{2\to2}\leq\sigma_r/4$, and let $Q_U$ be the orthogonal
    Procrustes alignment of $\widehat U$ to $U$. Then
    \begin{equation}
        \label{eq:master-left}
        \|O^\top(\widehat UQ_U-U)\|_{2\to\infty}
        \leq\frac{4}{3\sigma_r}\Big(\|O^\top N\widehat V\|_{2\to\infty}
            +b_\infty\|N\|_{2\to2}\Big)
        +b_\infty\|\sin\Theta(U,\widehat U)\|_{2\to2}^2,
    \end{equation}
    and
    \begin{equation}
        \label{eq:master-left-cross}
        \|O^\top N\widehat V\|_{2\to\infty}
        \leq\min\big\{b_O\|N\|_{2\to2},\ g_1\Gamma_V+q_1\big\}.
    \end{equation}
\end{lemma}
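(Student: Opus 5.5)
The plan is to mirror the proof of \Cref{lem:master}, with the roles of the left and right factors exchanged and a final left-multiplication by $O^\top$. We decompose
\[
    O^\top(\widehat UQ_U-U)=O^\top(I-UU^\top)\widehat UQ_U+O^\top U(U^\top\widehat UQ_U-I).
\]
\emph{Aligned part.} The second term is bounded in $2\to\infty$ norm by $\|O^\top U\|_{2\to\infty}\|U^\top\widehat UQ_U-I\|_{2\to2}$, which is $b_\infty$ times a quantity at most $\|\sin\Theta(U,\widehat U)\|_{2\to2}^2$. This follows from \Cref{lem:procrustes}(iii) applied to the left factor, together with the symmetry $\Theta(\widehat U,U)=\Theta(U,\widehat U)$.

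\emph{Perpendicular part.} By \Cref{lem:orthonormal-facts}(iv), $\widehat U=R\widehat V\widehat\Sigma^{-1}=(U\Sigma V^\top+N)\widehat V\widehat\Sigma^{-1}$. Since $(I-UU^\top)U=0$, this gives $(I-UU^\top)\widehat U=(I-UU^\top)N\widehat V\widehat\Sigma^{-1}$. Weyl's inequality and $\|N\|_{2\to2}\le\sigma_r/4$ give $\|\widehat\Sigma^{-1}\|_{2\to2}\le 4/(3\sigma_r)$, and right multiplication by $Q_U$ preserves row norms. We then apply \Cref{lem:orthonormal-facts}(ii) with $G=O$, $W=U$ and $X=N\widehat V$:
\[
    \|O^\top(I-UU^\top)N\widehat V\|_{2\to\infty}\le\|O^\top N\widehat V\|_{2\to\infty}+b_\infty\|U^\top N\widehat V\|_{2\to2},
\]
where the last factor is at most $\|N\|_{2\to2}$. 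Collecting the two parts yields \eqref{eq:master-left}.

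\emph{Cross bound \eqref{eq:master-left-cross}.} Row $j$ of $O^\top N\widehat V$ is $O_{\cdot j}^\top N\widehat V$, whose norm is at most $\|O_{\cdot j}\|_2\|N\|_{2\to2}\le b_O\|N\|_{2\to2}$; this gives the first term of the minimum. For the second, write $N=OE-D$ with $D=O\diag(P^0)$, the convention of \Cref{lem:diag}.

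\begin{itemize}
    \item The noise part of row $j$ is $O_{\cdot j}^\top OE\widehat V=\sum_k (O^\top O)_{jk}(E\widehat V)_{k\cdot}$. Its norm is at most $\sum_k|(O^\top O_{\cdot j})_k|\,\|E\widehat V\|_{2\to\infty}\le g_1\Gamma_V$. Equivalently, $\|O^\top OE\widehat V\|_{2\to\infty}\le\|O^\top O\|_{\infty\to\infty}\Gamma_V$ by \Cref{lem:norm-facts}(ii), using the symmetry of $O^\top O$ to identify row and column sums.
    \item The bias part is $O_{\cdot j}^\top D\widehat V$. Its norm is at most $b_O\|D\|_{2\to2}\le b_O\,o\,\rho_n\|\bar P\|_\infty=q_1$, using the bound $\|D\|_{2\to2}\le\rho_n\|\bar P\|_\infty\|O\|_{2\to2}$ from \Cref{loo:event}.
\end{itemize}

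The main point requiring care is the bookkeeping of the diagonal correction $D$, so that its contribution is exactly $q_1$. The rest is a direct transcription of \Cref{lem:master}.
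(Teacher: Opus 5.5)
Your proposal is correct and follows the paper's proof essentially line for line. It uses the same perpendicular/aligned decomposition, the identity $(I-UU^\top)\widehat U=(I-UU^\top)N\widehat V\widehat\Sigma^{-1}$ with Weyl's bound on $\widehat\Sigma^{-1}$, and the split $N=OE-D$ for the cross term. The only cosmetic difference is in the bias row: you bound it by $\|O_{\cdot j}\|_2\|D\|_{2\to2}$, whereas the paper uses $\|O^\top O_{\cdot j}\|_2\max_i P^0_{ii}$, and both give exactly $q_1$.
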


\begin{proof}
    Write $\widehat UQ_U-U=(I-UU^\top)\widehat UQ_U+U(U^\top\widehat UQ_U-I)$.
    From $R\widehat V=\widehat U\widehat\Sigma$,
    \[
        \widehat U=(M+N)\widehat V\widehat\Sigma^{-1}
        =U\Sigma V^\top\widehat V\widehat\Sigma^{-1}+N\widehat V\widehat\Sigma^{-1},
        \qquad\text{so}\qquad
        (I-UU^\top)\widehat U=(I-UU^\top)N\widehat V\widehat\Sigma^{-1},
    \]
    since the first term is annihilated by $I-UU^\top$. As in the proof of
    \Cref{lem:master}, Weyl's inequality gives
    $\|\widehat\Sigma^{-1}\|_{2\to2}\leq4/(3\sigma_r)$. For any $X$,
    \[
        \|O^\top(I-UU^\top)X\|_{2\to\infty}
        \leq\|O^\top X\|_{2\to\infty}
        +\|O^\top U\|_{2\to\infty}\|U^\top X\|_{2\to2};
    \]
    with $X=N\widehat V\widehat\Sigma^{-1}$ this is at most
    $\frac{4}{3\sigma_r}(\|O^\top N\widehat V\|_{2\to\infty}+b_\infty\|N\|_{2\to2})$,
    and right multiplication by $Q_U$ does not change row norms. The aligned
    part is bounded by
    $b_\infty\|U^\top\widehat UQ_U-I\|_{2\to2}\leq
    b_\infty\|\sin\Theta(U,\widehat U)\|_{2\to2}^2$, exactly as in the proof of
    \Cref{lem:master}. This proves \eqref{eq:master-left}.

    For \eqref{eq:master-left-cross}, the first bound is
    $\|O^\top N\widehat V\|_{2\to\infty}
    \leq\|O^\top\|_{2\to\infty}\|N\widehat V\|_{2\to2}\leq b_O\|N\|_{2\to2}$,
    by \eqref{eq:norm-relations}. For the second, $O^\top N=O^\top OE-O^\top D$.
    Row $j$ of $O^\top OE\widehat V$ is
    $\sum_k(O^\top O)_{jk}(E\widehat V)_{k\cdot}$, whose norm is at most
    $\sum_k|(O^\top O)_{jk}|\,\|(E\widehat V)_{k\cdot}\|_2\leq g_1\Gamma_V$.
    Row $j$ of $O^\top D\widehat V$ is
    $(O^\top O_{\cdot j})^\top\diag(P^0)\widehat V$, whose norm is at most
    $\|O^\top O_{\cdot j}\|_2\,\rho_n\|\bar P\|_\infty
    \leq ob_O\rho_n\|\bar P\|_\infty=q_1$.
\end{proof}

On $\mcE$ with $\varepsilon\leq1/(8C_0)$, \eqref{eq:total-noise-event} and
\Cref{lem:wedin} give $\|N\|_{2\to2}\leq C_0\varepsilon\sigma_r\leq\sigma_r/8$
and $\|\sin\Theta(U,\widehat U)\|_{2\to2}\leq2C_0\varepsilon\leq1/4$. Then
\Cref{lem:master-left} gives the two usable forms
\begin{equation}
    \label{eq:loo-leverage}
    \widehat b\leq b_\infty+\|O^\top(\widehat UQ_U-U)\|_{2\to\infty}
    \leq2b_\infty+\frac43\min\Big\{C_0b_O\varepsilon,\
        \frac{g_1\Gamma_V+q_1}{\sigma_r}\Big\},
\end{equation}
and \Cref{lem:master}, with $\widehat h=\|\widehat VQ\|_{2\to\infty}$, gives
\begin{equation}
    \label{eq:loo-incoherence}
    \widehat h\leq h+\|\widehat VQ-V\|_{2\to\infty}
    \leq\frac{4\Gamma}{3\sigma_r}+3h.
\end{equation}

The next lemma packages the structure of the leave-one-out perturbation: it is
rank two, supported on node $i$'s edges, and independent of the leave-one-out
graph.

\begin{lemma}[Structure of the coupling]
    \label{loo:structure}
    For every $i \in [n]$,
    \begin{equation}
        \label{eq:loo-rank2}
        \Delta_i := R - R^{(i)} = O\big( E - E^{(i)} \big)
        = (O E_{\cdot i})\, e_i^\top + O_{\cdot i}\, E_{\cdot i}^\top ,
    \end{equation}
    a matrix of rank at most two. Moreover $E^{(i)}$ is a function of $\{ A_{jk} : j,
    k \neq i \}$ alone, hence independent of $E_{\cdot i} = \{ E_{ij} \}_{j \neq i}$;
    in particular $\widehat U^{(i)}$ and $\widehat V^{(i)}$ are independent of
    $N_{\cdot i} = O E_{\cdot i}-D_{\cdot i}$.
\end{lemma}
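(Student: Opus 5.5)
The lemma has three parts: the identity for $\Delta_i$, the rank bound, and the independence claim. The identity is a direct computation from the definitions in \eqref{eq:loo-defs}. Since $R=M+OE-D$ and $R^{(i)}=M+OE^{(i)}-D$, the deterministic pieces $M$ and $D$ cancel, and $\Delta_i=O(E-E^{(i)})$.

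The next step is to compute $E-E^{(i)}$. Put $P_i:=I-e_ie_i^\top$ and expand
\[
    E^{(i)}=P_iEP_i=E-e_ie_i^\top E-Ee_ie_i^\top+e_ie_i^\top Ee_ie_i^\top .
\]
Hollowness of $A$ and $P$ gives $E_{ii}=0$, so the last term vanishes. Symmetry of $E$ gives $e_i^\top E=E_{\cdot i}^\top$, and $Ee_i=E_{\cdot i}$. Hence
\[
    E-E^{(i)}=E_{\cdot i}e_i^\top+e_iE_{\cdot i}^\top .
\]
Left-multiplying by $O$ and using $Oe_i=O_{\cdot i}$ yields \eqref{eq:loo-rank2}. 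The right-hand side is a sum of two outer products, so its rank is at most two.

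For independence, recall that $A$ is symmetric and hollow with independent entries $\{A_{jk}\}_{j<k}$. The entries of $E^{(i)}$ are $E_{jk}=A_{jk}-P_{jk}$ for $j,k\neq i$ and zero otherwise, so $E^{(i)}$ is a measurable function of $\{A_{jk}:j<k,\ j,k\neq i\}$. The column $E_{\cdot i}$ depends only on $\{A_{ij}\}_{j\neq i}$. These two index sets of unordered pairs are disjoint, so $E^{(i)}$ and $E_{\cdot i}$ are independent.

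Two consequences follow because $M$, $D$ and $O$ are deterministic. First, $R^{(i)}$ is a function of $E^{(i)}$, so $R^{(i)}$ is independent of $N_{\cdot i}=OE_{\cdot i}-D_{\cdot i}$. Second, $\widehat U^{(i)}$ and $\widehat V^{(i)}$ are functions of $R^{(i)}$, and so inherit the same independence.

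The only subtle step is that the singular vectors must be a measurable function of $R^{(i)}$ even when ties in the SVD occur. This is handled by fixing any measurable selection rule for the top-$r$ singular vectors, which makes the argument routine.
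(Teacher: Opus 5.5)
Your proposal is correct and follows the paper's proof essentially line by line. You expand $(I-e_ie_i^\top)E(I-e_ie_i^\top)$ using $E_{ii}=0$ and the symmetry of $E$, then obtain independence from the disjointness of the edge sets $\{A_{jk}: j,k\neq i\}$ and $\{A_{ij}\}_{j\neq i}$, which is also what the paper does. Your point about fixing a measurable selection of singular vectors when there are ties is a detail the paper leaves implicit, and it does not change the argument.
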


\begin{proof}
    Since $E_{ii} = 0$, expanding \eqref{eq:loo-defs} gives $E - E^{(i)} = E_{\cdot
    i} e_i^\top + e_i E_{\cdot i}^\top$ (the $(i,i)$ entry, counted twice, vanishes),
    and multiplying by $O$ on the left yields \eqref{eq:loo-rank2} because $O e_i =
    O_{\cdot i}$ and, by symmetry of $E$, the $i$-th row of $E$ equals $E_{\cdot
    i}^\top$. Independence holds because the entries $\{ E_{jk} : j, k \neq i \}$
    and $\{ E_{ij} : j \neq i \}$ are disjoint collections of independent variables,
    and $R^{(i)} = M + OE^{(i)}-D$ is a deterministic function of the former.
\end{proof}

\begin{lemma}[Leave-one-out stability]
    \label{loo:stability-full}
    On the event $\mcE$ of \Cref{loo:event}, suppose $\varepsilon \leq 1/(8 C_0 \vee
    8 C_{10})$. Then for every $i \in [n]$ simultaneously there are orthogonal
    $Q_i, Q_i' \in \mathbb{R}^{r \times r}$ with
    \begin{equation}
        \label{eq:loo-stability-refined}
        \| \widehat U - \widehat U^{(i)} Q_i \|_F
        \vee \| \widehat V - \widehat V^{(i)} Q_i' \|_F
        \leq \frac{2\sqrt2}{\sigma_r}\Psi_i,
        \qquad
        \Psi_i \leq \Gamma + q_D + b_O\Gamma_V
        + C_{10}\varepsilon\sigma_r\Big(\frac{\widehat b}{o}+\widehat h\Big).
    \end{equation}
    Consequently, for every $i\in[n]$,
    \begin{equation}
        \label{eq:loo-stability-rows}
        b_\infty^{(i)}\leq\widehat b+\frac{2\sqrt2\,b_O}{\sigma_r}\Psi,
        \qquad
        h^{(i)}\leq\widehat h+\frac{2\sqrt2}{\sigma_r}\Psi.
    \end{equation}
\end{lemma}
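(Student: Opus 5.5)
The plan is to apply Wedin's theorem (\Cref{loo:wedin-res}) to the pair $R^{(i)}$ (unperturbed) and $R=R^{(i)}+\Delta_i$ (perturbed), with $\Delta_i$ the rank-two matrix of \Cref{loo:structure}. Because the gap must use the perturbed singular values, I take $\gamma:=\sigma_r(R)-\sigma_{r+1}(R^{(i)})$.

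On $\mcE$ we have $\|N\|_{2\to2},\|N^{(i)}\|_{2\to2}\le C_0 o\sqrt{n\rho_n}=C_0\varepsilon\sigma_r\le\sigma_r/8$ by \eqref{eq:total-noise-event}. Weyl (\Cref{lem:weyl}) then gives $\sigma_r(R)\ge 7\sigma_r/8$ and, since $\rk M=r$, $\sigma_{r+1}(R^{(i)})\le\sigma_r/8$. Hence $\gamma\ge 3\sigma_r/4\ge\sigma_r/2$. The Frobenius conclusion of \Cref{loo:wedin-res}, with roles assigned so that $\widehat U,\widehat V$ are the perturbed singular vectors, yields orthogonal $Q_i,Q_i'$ (Procrustes alignments of $\widehat U^{(i)},\widehat V^{(i)}$ to $\widehat U,\widehat V$, via \Cref{lem:procrustes}(v)) with
\[
\|\widehat U-\widehat U^{(i)}Q_i\|_F\vee\|\widehat V-\widehat V^{(i)}Q_i'\|_F\le \sqrt2\,\Psi_i/\gamma\le 2\sqrt2\,\Psi_i/\sigma_r .
\]

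Next I bound $\Psi_i=\|\Delta_i^\top\widehat U\|_F+\|\Delta_i\widehat V\|_F$ using $\Delta_i=(OE_{\cdot i})e_i^\top+O_{\cdot i}E_{\cdot i}^\top$. Each rank-one piece contributes a Frobenius norm equal to a product of two vector norms.

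For the first residual, $\Delta_i^\top\widehat U=e_i(\widehat U^\top OE_{\cdot i})^\top+E_{\cdot i}(\widehat U^\top O_{\cdot i})^\top$. The first term has norm $\|\widehat U^\top OE_{\cdot i}\|_2$. Writing $OE_{\cdot i}=N_{\cdot i}+D_{\cdot i}$, this is at most $\Gamma+q_D$. The second term is $\|E_{\cdot i}\|_2\|\widehat U^\top O_{\cdot i}\|_2\le C_{10}\sqrt{n\rho_n}\,\widehat b=C_{10}\varepsilon\sigma_r\,\widehat b/o$.

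For the second residual, $\Delta_i\widehat V=(OE_{\cdot i})\widehat V_{i\cdot}+O_{\cdot i}(E_{\cdot i}^\top\widehat V)$. The first term is at most $\|OE_{\cdot i}\|_2\,\widehat h\le C_{10}o\sqrt{n\rho_n}\,\widehat h=C_{10}\varepsilon\sigma_r\widehat h$. The second is at most $b_O\|\widehat V^\top E_{\cdot i}\|_2\le b_O\Gamma_V$.

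Summing the four terms gives exactly the displayed bound on $\Psi_i$. This holds for all $i$ simultaneously because everything is deterministic on $\mcE$.

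For \eqref{eq:loo-stability-rows}, I use $b_\infty^{(i)}=\|O^\top\widehat U^{(i)}Q_i\|_{2\to\infty}$, since right multiplication by $Q_i$ does not change row norms. Then
\[
b_\infty^{(i)}\le\widehat b+\|O^\top(\widehat U^{(i)}Q_i-\widehat U)\|_{2\to\infty}\le\widehat b+b_O\|\widehat U^{(i)}Q_i-\widehat U\|_{2\to2}.
\]
The last inequality uses $\|O^\top X\|_{2\to\infty}\le\|O^\top\|_{2\to\infty}\|X\|_{2\to2}$ and $\|O^\top\|_{2\to\infty}=b_O$. I then bound the spectral norm by the Frobenius bound and replace $\Psi_i$ by $\Psi$. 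The bound on $h^{(i)}$ follows the same way, using $\|X\|_{2\to\infty}\le\|X\|_F$.

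The main point needing care is the orientation of Wedin's theorem, so that the residuals involve the full-data $\widehat U,\widehat V$ (which define $\Gamma,\Gamma_V,\widehat b,\widehat h$) rather than the leave-one-out vectors. The gap must likewise be arranged as $\sigma_r(R)-\sigma_{r+1}(R^{(i)})$. The rest is bookkeeping.
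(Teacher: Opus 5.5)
Your proposal is correct and follows the paper's proof essentially step for step. It applies Wedin's residual theorem with $R^{(i)}$ as the unperturbed matrix and bounds the gap $\sigma_r(R)-\sigma_{r+1}(R^{(i)})\geq\tfrac34\sigma_r$ via Weyl; it then uses the same four-term residual bound from the rank-two form of $\Delta_i$ and the same row-norm transfer through $\|O^\top\|_{2\to\infty}=b_O$ and $\|\cdot\|_{2\to\infty}\leq\|\cdot\|_F$.
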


\begin{proof}
    We apply \Cref{loo:wedin-res} with $R^{(i)}$ and $\Delta_i$ from
    \Cref{loo:structure} in the roles of $M$ and $N$, so that $M + N = R$.

    \emph{Gap.} By Weyl's inequality and $\rk M=r$,
    \[
        \sigma_r(R)\geq\sigma_r-\|N\|_{2\to2},
        \qquad
        \sigma_{r+1}(R^{(i)})\leq\sigma_{r+1}(M)+\|N^{(i)}\|_{2\to2}
        =\|N^{(i)}\|_{2\to2}.
    \]
    By \eqref{eq:total-noise-event}, on $\mcE$ both total noise norms are at most
    $C_0\|O\|_{2\to2}\sqrt{n\rho_n}=C_0\varepsilon\sigma_r\leq\sigma_r/8$.
    Subtracting the second Weyl bound from the first therefore gives
    \[
        \gamma = \sigma_r(R) - \sigma_{r+1}(R^{(i)})
        \geq \sigma_r - 2 C_0 \varepsilon\, \sigma_r \geq \tfrac{3}{4}\sigma_r .
    \]
    \Cref{loo:wedin-res} then bounds both Procrustes distances by
    $\sqrt2\,\Psi_i/\gamma\leq2\sqrt2\,\Psi_i/\sigma_r$.

    \emph{Residuals.} Using \eqref{eq:loo-rank2} and $\| x y^\top \|_F = \| x \|_2
    \| y \|_2$,
    \begin{align*}
        \| \Delta_i^\top \widehat U \|_F
        &= \big\| e_i (O E_{\cdot i})^\top \widehat U
            + E_{\cdot i} (O_{\cdot i}^\top \widehat U) \big\|_F
        \leq \| \widehat U^\top N_{\cdot i} \|_2 +q_D
            + \| E_{\cdot i} \|_2\, \| \widehat U^\top O_{\cdot i} \|_2, \\
        \| \Delta_i \widehat V \|_F
        &= \big\| (O E_{\cdot i}) (e_i^\top \widehat V)
            + O_{\cdot i} (E_{\cdot i}^\top \widehat V) \big\|_F
        \leq \| OE_{\cdot i} \|_2\, \| \widehat V_{i \cdot} \|_2
            + \| O_{\cdot i} \|_2\, \| \widehat V^\top E_{\cdot i} \|_2 .
    \end{align*}
    Here $OE_{\cdot i}=N_{\cdot i}+D_{\cdot i}$ accounts for $q_D$.
    Bound $\| \widehat U^\top N_{\cdot i} \|_2\leq\Gamma$,
    $\| \widehat U^\top O_{\cdot i} \|_2 \leq \widehat b$,
    $\| \widehat V_{i \cdot} \|_2 \leq \widehat h$ and
    $\| \widehat V^\top E_{\cdot i} \|_2 \leq \Gamma_V$. Inserting $\|E_{\cdot i}\|_2\leq C_{10}\sqrt{n\rho_n}
    =C_{10}\varepsilon\sigma_r/o$, $\|OE_{\cdot i}\|_2\leq C_{10}\varepsilon\sigma_r$
    and $\|O_{\cdot i}\|_2\leq b_O$ from \eqref{eq:loo-event}--\eqref{eq:loo-event-consequences}
    gives the bound on $\Psi_i$ in \eqref{eq:loo-stability-refined}. All bounds are
    uniform in $i$ because $\mcE$ controls every column of $E$ simultaneously.

    \emph{Rows.} For \eqref{eq:loo-stability-rows}, \eqref{eq:norm-relations} gives
    \[
        b_\infty^{(i)}=\|O^\top\widehat U^{(i)}Q_i\|_{2\to\infty}
        \leq\widehat b+\|O^\top\|_{2\to\infty}\|\widehat U^{(i)}Q_i-\widehat U\|_{2\to2}
        \leq\widehat b+b_O\|\widehat U-\widehat U^{(i)}Q_i\|_F,
    \]
    and $h^{(i)}\leq\widehat h+\|\widehat V-\widehat V^{(i)}Q_i'\|_{2\to\infty}
    \leq\widehat h+\|\widehat V-\widehat V^{(i)}Q_i'\|_F$. Taking
    $\Psi_i\leq\Psi$ in \eqref{eq:loo-stability-refined} concludes.
\end{proof}

\begin{lemma}[Sharp row concentration]
    \label{loo:concentration-full}
    There are constants $c_\star \in (0,1)$ and $C_{11}$, depending only on $\| \bar P
    \|_\infty$ and $\alpha$, such that the following holds. Suppose
    \begin{enumerate}
        \item[\textup{(H)}] $n\rho_n \geq C_0 \log n$, $\varepsilon \leq c_\star$
        and $\lambda\leq c_\star$.
    \end{enumerate}
    Then with probability at least $1 - 4 n^{-\alpha}$,
    \begin{equation}
        \label{eq:loo-gamma-final}
        S=\Gamma+b_O\Gamma_V
        \leq C_{11} \Big[
        \gamma_1(o+b_O)+(b_\infty+b_Oh+\vartheta)\log n
        +\varepsilon^2\sigma_r\Big(h+\frac{b_\infty+\vartheta}{o}\Big)
        +q_D\Big(1+\frac{o\log n}{\sigma_r}\Big)\Big]
    \end{equation}
    with $\vartheta=b_O\varepsilon$. If in addition
    \begin{enumerate}
        \item[\textup{(H$'$)}] $\varsigma\varepsilon\leq c_\star$ and
        $\lambda_\varsigma\leq c_\star$,
    \end{enumerate}
    then \eqref{eq:loo-gamma-final} also holds with $\vartheta=0$.
\end{lemma}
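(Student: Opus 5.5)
We work on the good event $\mcE$ of \Cref{loo:event}, intersected with two row-concentration events, one for each leave-one-out term, each of probability at least $1-n^{-\alpha}$ after a union bound over $i$. This gives the total failure probability $4n^{-\alpha}$. The strategy is a self-bounding argument. We bound $\Gamma$ and $\Gamma_V$ row by row using the split \eqref{eq:loo-split}, and the result is an inequality of the form $S\leq T+\kappa S$ with $\kappa\leq 1/2$ under (H), which then rearranges.

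\textbf{Independent terms.} Fix $i$. Conditionally on $\widehat U^{(i)}$, which is independent of $E_{\cdot i}$ by \Cref{loo:structure}, term (A) equals $\widehat U^{(i)\top}OE_{\cdot i}-\widehat U^{(i)\top}D_{\cdot i}$. We apply \Cref{lem:vector-bernstein}(ii) with weights $w_j=(O^\top\widehat U^{(i)})_{j\cdot}$ and $\epsilon=n^{-(\alpha+1)}$. These weights have $\|W\|_F\leq\sqrt r\,o$, $\|W\|_{2\to2}\leq o$ and $b=b_\infty^{(i)}$, so the term is at most $C(\gamma_1 o+b_\infty^{(i)}\log n)$, plus $q_D$ for the bias. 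Likewise $\widehat V^{(i)\top}E_{\cdot i}$ is at most $C(\gamma_1+h^{(i)}\log n)$, which after multiplying by $b_O$ gives $b_O\gamma_1$ and $b_Oh^{(i)}\log n$. Next we substitute \eqref{eq:loo-stability-rows} for $b^{(i)}_\infty$ and $h^{(i)}$, and then \eqref{eq:loo-leverage} and \eqref{eq:loo-incoherence} for $\widehat b$ and $\widehat h$. In \eqref{eq:loo-leverage} the generic branch $C_0b_O\varepsilon$ produces $\vartheta=b_O\varepsilon$. Under (H$'$) we instead use the branch $(g_1\Gamma_V+q_1)/\sigma_r$: its contribution $g_1\Gamma_V\log n/\sigma_r=\lambda_\varsigma b_O\Gamma_V$ is absorbed into $b_O\Gamma_V\leq S$, which yields $\vartheta=0$. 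The cross terms $b_O\Psi\log n/\sigma_r=\lambda\Psi$ and $\Gamma\log n/\sigma_r\cdot b_O$ are absorbed into $S$ using $\lambda\leq c_\star$. This is exactly the role of the hypothesis $b_O\log n\leq c_\star\sigma_r$.

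\textbf{Coupling terms.} For term (B), we align first. Cauchy--Schwarz with \Cref{loo:stability-full} gives $\|(\widehat U-\widehat U^{(i)}Q_i)^\top N_{\cdot i}\|_2\leq \frac{2\sqrt2}{\sigma_r}\Psi_i\,\|N_{\cdot i}\|_2\leq 2\sqrt2 C_{10}\varepsilon\Psi_i$, using $\|N_{\cdot i}\|_2\leq C_{10}\varepsilon\sigma_r$ from \eqref{eq:total-noise-event}. The $\Gamma_V$ coupling is similar, with $\|E_{\cdot i}\|_2\leq C_{10}\varepsilon\sigma_r/o$ and an extra factor $b_O/o\leq1$. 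Inserting the bound on $\Psi_i$ from \eqref{eq:loo-stability-refined} gives $\varepsilon(S+q_D)$, which is absorbable since $\varepsilon\leq c_\star$. It also gives $\varepsilon^2\sigma_r(\widehat b/o+\widehat h)$. Replacing $\widehat h$ via \eqref{eq:loo-incoherence} contributes $\varepsilon^2\Gamma$ (absorbable) and $\varepsilon^2\sigma_r h$. Replacing $\widehat b$ via \eqref{eq:loo-leverage} contributes $\varepsilon^2\sigma_r(b_\infty+\vartheta)/o$, with $q_1/o=q_D$ giving the $q_D$ terms. These are precisely the $\varepsilon^2$ terms in \eqref{eq:loo-gamma-final}.

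\textbf{Main obstacle.} The main difficulty is the circularity: $\widehat b$, $\widehat h$ and $\Psi$ all depend on $\Gamma$ and $\Gamma_V$ themselves. The task is to check carefully that every occurrence of $S$ on the right-hand side carries a coefficient of order $\varepsilon$, $\lambda$, or $\lambda_\varsigma$. With $c_\star$ chosen small, the total coefficient is then at most $1/2$, and taking the maximum over $i$ and rearranging yields \eqref{eq:loo-gamma-final}. A second point needs care: conditioning on $\widehat U^{(i)}$ is only legitimate for the deterministic-weight form of \Cref{lem:vector-bernstein}, which the final clause of that lemma allows. The bounds on $b^{(i)}_\infty$ and $h^{(i)}$ are then applied on $\mcE$ only after the concentration step, so no independence is needed for them.
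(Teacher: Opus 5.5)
Your proposal is correct and follows essentially the same route as the paper. Both arguments use a conditional Bernstein bound for $\widehat U^{(i)\top}N_{\cdot i}$ and $\widehat V^{(i)\top}E_{\cdot i}$, control the coupling terms through \Cref{loo:stability-full}, substitute \eqref{eq:loo-leverage}--\eqref{eq:loo-incoherence}, and close with a self-bounding absorption in which the two branches of \eqref{eq:loo-leverage} give $\vartheta=b_O\varepsilon$ under (H) and $\vartheta=0$ under (H$'$). The differences are only in ordering and in how the failure probability is split, and in both versions the $q_D\,o\log n/\sigma_r$ term comes from $q_1\log n/\sigma_r$ in the second branch.
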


\begin{proof}
    The proof has three steps.
    \begin{enumerate}
        \item[]\textbf{Step 1.} A conditional Bernstein bound for the independent
        terms $\widehat U^{(i)\top}N_{\cdot i}$ and $\widehat V^{(i)\top}E_{\cdot i}$,
        uniform in $i$, on a new event $\mcA$ of probability at least
        $1 - 2n^{-(\alpha+1)}$.
        \item[]\textbf{Step 2.} The coupling terms: splitting as in
        \eqref{eq:loo-split} and inserting \Cref{loo:stability-full} gives a
        self-bounding inequality for $S$ in terms of $\widehat b$ and $\widehat h$.
        \item[]\textbf{Step 3.} Substituting
        \eqref{eq:loo-leverage}--\eqref{eq:loo-incoherence} for $\widehat b$ and
        $\widehat h$ and absorbing: the first branch of \eqref{eq:loo-leverage}
        under (H) gives $\vartheta=b_O\varepsilon$ (Step~3a), and the second
        branch under (H$'$) gives $\vartheta=0$ (Step~3b).
    \end{enumerate}
    Assume $c_\star \leq 1/(8 C_0 \vee 8 C_{10})$. On $\mcE$,
    \Cref{loo:event,loo:stability-full,lem:master-left},
    \eqref{eq:loo-leverage}--\eqref{eq:loo-incoherence} and the Wedin bound
    \Cref{lem:wedin} (which needs $\| N \|_{2\to2} \leq \sigma_r/2$) all apply;
    $c_\star$ shrinks once more in Step~3.

    \textbf{Step 1: the independent terms.}
    Fix $i$ and condition on $\mcF_i := \sigma(A_{jk} : j, k \neq i)$. By
    \Cref{loo:structure}, $\widehat U^{(i)}$ and $\widehat V^{(i)}$ are
    $\mcF_i$-measurable while $E_{\cdot i}$ is independent of $\mcF_i$, so,
    conditionally,
    \[
        \widehat U^{(i)\top} N_{\cdot i}+\widehat U^{(i)\top}D_{\cdot i}
        = \sum_{j \neq i} E_{ij}\; \widehat U^{(i)\top} O_{\cdot j},
        \qquad
        \widehat V^{(i)\top}E_{\cdot i}
        = \sum_{j \neq i} E_{ij}\; \big(\widehat V^{(i)}_{j\cdot}\big)^\top
    \]
    are sums of independent, mean-zero vectors in $\mathbb{R}^r$ with
    $\mcF_i$-measurable coefficients, and the $E_{ij}$ have variances at most
    $\rho_n\|\bar P\|_\infty$. For the first sum the coefficient matrix
    $W_i=O^\top\widehat U^{(i)}$ has $\|W_i\|_F\leq\sqrt r\,o$,
    $\|W_i\|_{2\to2}\leq o$ and $\|W_i\|_{2\to\infty}=b_\infty^{(i)}$; for the
    second, $W_i'=\widehat V^{(i)}$ has $\|W_i'\|_F=\sqrt r$,
    $\|W_i'\|_{2\to2}=1$ and $\|W_i'\|_{2\to\infty}=h^{(i)}$.
    Apply \Cref{lem:vector-bernstein}(ii) conditionally on $\mcF_i$ to each
    sum, with $\epsilon=n^{-(\alpha+2)}$. There is
    $C_1=C_1(\|\bar P\|_\infty,\alpha)$ such that the event $\mcA_i$ on which
    \begin{equation}
        \label{eq:loo-A-final}
        \big\|\widehat U^{(i)\top}N_{\cdot i}\big\|_2
        \leq C_1\big(\gamma_1o+b_\infty^{(i)}\log n\big)+q_D,
        \qquad
        \big\|\widehat V^{(i)\top}E_{\cdot i}\big\|_2
        \leq C_1\big(\gamma_1+h^{(i)}\log n\big)
    \end{equation}
    satisfies $\bbp{\mcA_i^c\mid\mcF_i}\leq2n^{-(\alpha+2)}$ almost surely,
    without restricting to $\mcE$; here $q_D$ bounds
    $\|\widehat U^{(i)\top}D_{\cdot i}\|_2\leq\|D_{\cdot i}\|_2$. The thresholds
    are $\mcF_i$-measurable, so taking expectations removes the conditioning,
    $\bbp{\mcA_i^c}=\bbe{\bbp{\mcA_i^c\mid\mcF_i}}\leq2n^{-(\alpha+2)}$, and
    $\mcA:=\bigcap_{i=1}^n\mcA_i$ satisfies $\bbp{\mcA^c}\leq2n^{-(\alpha+1)}$.

    \textbf{Step 2: the coupling terms.}
    On $\mcE\cap\mcA$, fix $i$ and split both rows as in \eqref{eq:loo-split},
    aligned by the $Q_i,Q_i'$ of \Cref{loo:stability-full}:
    \[
        \| \widehat U^\top N_{\cdot i} \|_2
        \leq \big\| \widehat U^{(i)\top} N_{\cdot i} \big\|_2
        + \| \widehat U - \widehat U^{(i)} Q_i \|_F\, \| N_{\cdot i} \|_2,
        \qquad
        \| \widehat V^\top E_{\cdot i} \|_2
        \leq \big\| \widehat V^{(i)\top} E_{\cdot i} \big\|_2
        + \| \widehat V - \widehat V^{(i)} Q_i' \|_F\, \| E_{\cdot i} \|_2,
    \]
    the first summands being unchanged by the orthogonal $Q_i,Q_i'$. Insert
    \eqref{eq:loo-A-final}, \eqref{eq:loo-stability-refined}, the bounds
    $\|N_{\cdot i}\|_2\leq C_{10}\varepsilon\sigma_r$ and
    $\|E_{\cdot i}\|_2\leq C_{10}\varepsilon\sigma_r/o$ of
    \eqref{eq:loo-event} and \eqref{eq:total-noise-event}, then
    \eqref{eq:loo-stability-rows} and $\lambda=b_O\log n/\sigma_r$. Taking the
    maximum over $i\in[n]$,
    \[
    \begin{aligned}
        \Gamma&\leq C_1\gamma_1o+C_1\widehat b\log n+q_D
        +\big(2\sqrt2C_1\lambda+2\sqrt2C_{10}\varepsilon\big)\Psi,\\
        b_O\Gamma_V&\leq C_1\gamma_1b_O+C_1b_O\widehat h\log n
        +\big(2\sqrt2C_1\lambda+2\sqrt2C_{10}\varepsilon\,b_O/o\big)\Psi.
    \end{aligned}
    \]
    Adding, and using $b_O\leq o$ and
    $\Psi\leq S+q_D+C_{10}\varepsilon\sigma_r(\widehat b/o+\widehat h)$ from
    \eqref{eq:loo-stability-refined},
    \begin{equation}
        \label{eq:loo-selfbound}
        S\leq C_1\gamma_1(o+b_O)+C_1\log n\,(\widehat b+b_O\widehat h)+q_D
        +\kappa\Big[S+q_D+C_{10}\varepsilon\sigma_r\Big(\frac{\widehat b}{o}
        +\widehat h\Big)\Big],
        \qquad
        \kappa:=4\sqrt2\,(C_1\lambda+C_{10}\varepsilon).
    \end{equation}
    This is the self-bounding inequality: it remains to substitute
    \eqref{eq:loo-leverage}--\eqref{eq:loo-incoherence} for $\widehat b$ and
    $\widehat h$ and absorb. Since
    $S\leq\|N\|_{2\to2}+b_O\|E\|_{2\to2}<\infty$, a term $cS$ with $c\leq1/2$
    on the right may be subtracted from both sides.

    \textbf{Step 3a: the branch $\vartheta=b_O\varepsilon$.}
    Use the first branch of \eqref{eq:loo-leverage},
    $\widehat b\leq\bar b:=2b_\infty+\tfrac43C_0b_O\varepsilon$, and
    \eqref{eq:loo-incoherence} with $\Gamma\leq S$. Then
    \[
        \widehat b+b_O\widehat h\leq\bar b+3b_Oh+\frac{4b_O}{3\sigma_r}S,
        \qquad
        \frac{\widehat b}{o}+\widehat h\leq\frac{\bar b}{o}+3h+\frac{4}{3\sigma_r}S,
    \]
    and, since $\kappa\leq1$ for $c_\star$ small, \eqref{eq:loo-selfbound} becomes
    \[
        S\leq C_1\gamma_1(o+b_O)+C_1\log n\,(\bar b+3b_Oh)+2q_D
        +\kappa C_{10}\varepsilon\sigma_r\Big(\frac{\bar b}{o}+3h\Big)
        +\Big[\tfrac43C_1\lambda+\kappa\big(1+\tfrac43C_{10}\varepsilon\big)\Big]S.
    \]
    For $c_\star$ small enough, depending only on $C_1$ and $C_{10}$, (H) makes
    the bracket at most $\tfrac12$; subtract and double. Finally
    $\kappa C_{10}\varepsilon\sigma_r
    =4\sqrt2C_1C_{10}\,b_O\varepsilon\log n+4\sqrt2C_{10}^2\varepsilon^2\sigma_r$
    and $\bar b/o+3h\leq6$, so the term
    $\kappa C_{10}\varepsilon\sigma_r(\bar b/o+3h)$ is at most a constant times
    $b_O\varepsilon\log n+\varepsilon^2\sigma_r\big(h+(b_\infty+b_O\varepsilon)/o\big)$.
    Collecting, and using $C_1\bar b\log n\leq C(b_\infty+b_O\varepsilon)\log n$,
    \[
        S\leq C\Big[\gamma_1(o+b_O)+(b_\infty+b_Oh+b_O\varepsilon)\log n
        +\varepsilon^2\sigma_r\Big(h+\frac{b_\infty+b_O\varepsilon}{o}\Big)+q_D\Big],
    \]
    which is \eqref{eq:loo-gamma-final} with $\vartheta=b_O\varepsilon$.

    \textbf{Step 3b: the branch $\vartheta=0$.}
    Under (H$'$), use the second branch of \eqref{eq:loo-leverage},
    together with $\Gamma_V\leq S/b_O$ and $g_1/b_O=\varsigma o$:
    \[
        \widehat b\leq2b_\infty+\frac{4\varsigma o}{3\sigma_r}S+\frac{4q_1}{3\sigma_r}.
    \]
    Now
    \[
        \widehat b+b_O\widehat h\leq2b_\infty+3b_Oh+\frac{4q_1}{3\sigma_r}
        +\frac{4(\varsigma o+b_O)}{3\sigma_r}S,
        \qquad
        \frac{\widehat b}{o}+\widehat h\leq\frac{2b_\infty}{o}+3h
        +\frac{4q_1}{3o\sigma_r}+\frac{4(\varsigma+1)}{3\sigma_r}S,
    \]
    so the coefficient of $S$ in \eqref{eq:loo-selfbound} is
    \[
        \tfrac43C_1(\lambda_\varsigma+\lambda)
        +\kappa\Big[1+\tfrac43C_{10}(\varsigma\varepsilon+\varepsilon)\Big]
        \leq\tfrac12
    \]
    for $c_\star$ small, by (H), (H$'$) and $\lambda\leq\lambda_\varsigma$.
    Subtracting and doubling,
    \[
        S\leq2\Big[C_1\gamma_1(o+b_O)
        +C_1\log n\Big(2b_\infty+3b_Oh+\frac{4q_1}{3\sigma_r}\Big)+2q_D
        +\kappa C_{10}\varepsilon\sigma_r\Big(\frac{2b_\infty}{o}+3h
        +\frac{4q_1}{3o\sigma_r}\Big)\Big].
    \]
    In the last term, the $\lambda$-part of $\kappa$ contributes at most a
    constant times
    $b_O\varepsilon\log n\,(b_\infty/o+h)+\varepsilon\lambda q_D
    \leq\log n\,(b_\infty+b_Oh)+q_D$, because $b_O\varepsilon/o\leq1$ and
    $\varepsilon,\lambda\leq1$; the $\varepsilon$-part contributes at most a
    constant times
    $\varepsilon^2\sigma_r(h+b_\infty/o)+\varepsilon^2q_1/o
    \leq\varepsilon^2\sigma_r(h+b_\infty/o)+q_D$. Since
    $q_1\log n/\sigma_r=q_D\,o\log n/\sigma_r$, this is
    \eqref{eq:loo-gamma-final} with $\vartheta=0$.

    A union bound, requiring no independence between $\mcE$ and $\mcA$, gives
    failure probability at most $\bbp{\mcE^c}+\bbp{\mcA^c}\leq2n^{-\alpha}
    +2n^{-(\alpha+1)}\leq4n^{-\alpha}$.
\end{proof}

Feeding \Cref{loo:concentration-full} into the Procrustean reduction
\Cref{lem:master} yields the bound that the master theorems consume.

\begin{theorem}[Sharp $\ell_{2\to\infty}$ bound for the ARD embedding]
    \label{loo:twoinfty-sharp}
    Under hypothesis (H) of \Cref{loo:concentration-full}, with probability at
    least $1 - 4n^{-\alpha}$ there is an orthogonal $Q \in \mathbb{R}^{r \times r}$
    with
    \begin{equation}
        \label{eq:loo-twoinfty-sharp}
        \| \widehat V Q - V \|_{2 \to \infty}
        \;\leq\; C_{13}\bigg[
        \frac{\gamma_1(o+b_O)+(b_\infty+b_Oh+\vartheta)\log n}{\sigma_r}
        +\varepsilon h+\varepsilon^2\frac{b_\infty+\vartheta}{o}
        +\frac{q_D}{\sigma_r}\Big(1+\frac{o\log n}{\sigma_r}\Big)\bigg],
    \end{equation}
    with $\vartheta=b_O\varepsilon$; under (H$'$) also with $\vartheta=0$. Here
    $C_{13}$ depends only on $\| \bar P \|_\infty$ and $\alpha$.
\end{theorem}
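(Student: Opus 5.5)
The bound follows by inserting \Cref{loo:concentration-full} into the Procrustean reduction \Cref{lem:master}, with everything done on the event $\mcE\cap\mcA$ of probability at least $1-4n^{-\alpha}$ constructed in that proof.

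\textbf{Step 1: apply the reduction.} Under (H) with $c_\star\leq 1/(8C_0)$, \eqref{eq:total-noise-event} gives
\[
\|N\|_{2\to2}\leq C_0\varepsilon\sigma_r\leq\sigma_r/8 ,
\]
so the hypothesis $\|N\|_{2\to2}\leq\sigma_r/4$ of \Cref{lem:master} holds. Take $Q$ to be the Procrustes alignment of \Cref{lem:wedin}. Then \eqref{eq:master} bounds $\|\widehat VQ-V\|_{2\to\infty}$ by three terms:
\[
\frac{4}{3\sigma_r}\Gamma,\qquad \frac{4}{3\sigma_r}\,h\,\|N\|_{2\to2},\qquad h\,\|\sin\Theta(\widehat V,V)\|_{2\to2}^2 .
\]
Using $\Gamma=\|N^\top\widehat U\|_{2\to\infty}$, the second term is at most $\tfrac43C_0\varepsilon h$. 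By \Cref{lem:wedin}, $\|\sin\Theta\|_{2\to2}\leq 2C_0\varepsilon\leq 1/4$, so the third term is at most $4C_0^2\varepsilon^2h\leq C\varepsilon h$. Both are absorbed into the $\varepsilon h$ term of \eqref{eq:loo-twoinfty-sharp}.

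\textbf{Step 2: bound $\Gamma$.} Since $\Gamma\leq S$, \eqref{eq:loo-gamma-final} bounds $\Gamma$ on the same event. Dividing it by $\sigma_r$ produces each remaining term of \eqref{eq:loo-twoinfty-sharp} as follows.
\begin{itemize}
\item The term $\gamma_1(o+b_O)$ gives $\gamma_1(o+b_O)/\sigma_r$ directly.
\item The term $(b_\infty+b_Oh+\vartheta)\log n$ gives the logarithmic term directly.
\item The term $\varepsilon^2\sigma_r\,(h+(b_\infty+\vartheta)/o)$ gives $\varepsilon^2h+\varepsilon^2(b_\infty+\vartheta)/o$, and $\varepsilon^2h\leq\varepsilon h$ because $\varepsilon\leq1$.
\item The term $q_D(1+o\log n/\sigma_r)$ gives the $q_D/\sigma_r$ term exactly.
\end{itemize}
Under (H$'$) the same substitution uses the $\vartheta=0$ version of \Cref{loo:concentration-full}.

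\textbf{Step 3: the constant and the probability.} Collect all constants into $C_{13}$; each depends only on $C_0,C_{10},C_{11}$, hence only on $\|\bar P\|_\infty$ and $\alpha$. The event on which these bounds hold is the one from \Cref{loo:concentration-full}, which gives the stated probability.

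This proof has no real obstacle, since the leave-one-out work is already done in \Cref{loo:concentration-full}. The one point needing care is that the constant $c_\star$ in (H) must be small enough for both \Cref{lem:master} ($\varepsilon\leq1/(8C_0)$) and \Cref{loo:concentration-full}. We therefore take it to be the minimum of the two requirements.
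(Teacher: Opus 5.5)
Your proposal is correct and follows the paper's proof essentially line for line. You apply \Cref{lem:master} on the event of \Cref{loo:concentration-full}, which is valid since $\|N\|_{2\to2}\leq C_0\varepsilon\sigma_r\leq\sigma_r/8$, and absorb the $h\|N\|_{2\to2}/\sigma_r$ and $h\|\sin\Theta\|_{2\to2}^2$ terms into $C\varepsilon h$ via \Cref{lem:wedin}; you then insert $\Gamma\leq S$ and \eqref{eq:loo-gamma-final} and divide by $\sigma_r$ using $\varepsilon^2h\leq\varepsilon h$. Your closing remark on $c_\star$ is harmless but unnecessary, because \Cref{loo:concentration-full} already assumes $c_\star\leq1/(8C_0\vee8C_{10})$.
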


\begin{proof}
    Work on the event of \Cref{loo:concentration-full}, which is contained in $\mcE$. By
    \Cref{lem:master} (valid since $\| N \|_{2\to2} \leq C_0\varepsilon\sigma_r
    \leq \sigma_r/8$) and \Cref{lem:wedin},
    \[
        \| \widehat V Q - V \|_{2 \to \infty}
        \leq \frac{4}{3\sigma_r}\Big( \Gamma
        + h\, \| N \|_{2\to2} \Big)
        + h\,
        \frac{4 \| N \|_{2\to2}^2}{\sigma_r^2}
        \leq \frac{4\Gamma}{3\sigma_r}
        + C h\, \varepsilon.
    \]
    Insert $\Gamma\leq S$ and \eqref{eq:loo-gamma-final}, and divide each summand
    by $\sigma_r$, using $\varepsilon^2\sigma_r/\sigma_r=\varepsilon^2$ and
    $\varepsilon^2h\leq\varepsilon h$.
\end{proof}

\begin{remark}[Relation to the entrywise eigenvector literature]
    \label{loo:rem-literature}
    This remark places the proof relative to
    \cite{abbe_entrywise_2020,lei_unified_2019,zhang_leave-one-out_2024}. The first
    two works analyse eigenvectors of a
    symmetric random matrix, comparing the full and leave-one-out eigenvectors
    through the Davis--Kahan theorem and closing the argument with a
    self-bounding step. Here the argument runs on both singular subspaces of the
    rectangular matrix $R$: the aggregation $O$ is deterministic, so it commutes with
    the conditioning on $\mcF_i$, and the residual form of Wedin
    (\Cref{loo:wedin-res}) controls $\widehat U-\widehat U^{(i)}$ and
    $\widehat V-\widehat V^{(i)}$ simultaneously, producing the self-bounding
    coefficient $\kappa$ of \eqref{eq:loo-selfbound}, of order
    $\varepsilon_{\rm SNR}+\lambda$. At $O=I_n$, $R=A$ is
    symmetric, so $\widehat V=\widehat US_\pm$ with $S_\pm$ the diagonal matrix of
    signs of the $r$ eigenvalues of largest absolute value; hence $\Gamma_V\leq\Gamma+q_D$ and
    $\widehat b=\widehat h$, and the bounds for the two singular subspaces
    coincide. Zhang and
    Zhou's leave-one-column-out singular-subspace theorem treats a one-column
    augmentation; it does not directly cover the rank-two update
    $OE_{\cdot i}e_i^\top+O_{\cdot i}E_{\cdot i}^\top$ here. The only new
    bookkeeping relative to those works is that every noise functional is of the
    mixed form $OE$, which is why the design functionals $\| O \|_{2\to2}$, $b_O$,
    $b_\infty$, $\| O^\top U \|_F$ and $g_1(O)$---and no other property of
    $O$---appear in the final bounds.
\end{remark}

\section{Proofs of the main guarantees for the SBM}
\label{app:sbm-proofs}
This appendix proves the SBM results of \Cref{sec:recovery}:
\Cref{lem:pop-embedding}(i), the master guarantee \Cref{thm:sbm-main}, the regime
corollary \Cref{cor:sbm-regimes}, and the weak-recovery bound of
Lemma~\ref{lem:weak-recovery-precise}. %% updated this
% \Cref{rem:weak-recovery}. 
We prove the master theorem as follows:
\begin{enumerate}
    \item We show that the right singular vectors of $M=OP^0$ have identical
    rows for vertices in the same community. Distinct profiles give $\delta>0$;
    at full rank $\delta\geq\sqrt{2/n_{\max}}$ (\Cref{lem:pop-embedding}).
    We write $R=M+\mcN$, where $\mcN=OE-D$ and $D$ is bounded in \Cref{lem:diag}.
    \item We use \Cref{loo:twoinfty-sharp} to bound
    $\|\widehat VQ-V_M\|_{2\to\infty}$ by $\eta_\infty$ in
    \eqref{eq:eta-infty}, for an orthogonal matrix $Q$.
    \item We apply \Cref{lem:single-linkage} to recover all community labels
    using single linkage when $\eta_\infty<\delta/4$.
    For approximate $k$-means, \Cref{lem:kmeans}(ii) gives exact recovery
    after nearest-centroid reassignment under \eqref{eq:master-rounding}.
\end{enumerate}
The regime corollary then substitutes the signal bound (\Cref{lem:signal}) and the
balance and leverage hypotheses into the three summands of $\eta_\infty$. The same
three steps, with the substitutions recorded there, prove the DCSBM results in
\Cref{app:dcsbm-proofs} and the stacked results in \Cref{app:fanout-proofs}.

\subsection{Population structure}

\begin{lemma}[Population embedding]
    \label{lem:pop-embedding}
    Suppose 
    we observe data from the aggregated observation model defined in Section~\ref{subsec:obs-model}, %% uppdated this
    % \Cref{ass:standing}
    % holds, 
    and set $r := \rk(M) \geq 1$.

    \textup{(i) (SBM.)} The top-$r$ right singular vectors $V_M \in
    \mathbb{R}^{n \times r}$ of $M$ satisfy $V_M = ZC$ for a matrix $C \in
    \mathbb{R}^{K \times r}$ with $C^\top N C = I_r$, so the rows are
    community-constant:
    \begin{equation}
        \label{eq:embedding-rows}
        (V_M)_{i \cdot} = \mu_{z_i},
        \qquad \mu_a := n_a^{-1/2}\, W_{a \cdot},
        \qquad W := N^{1/2} C \in \mathbb{O}_{K \times r} .
    \end{equation}
    Consequently $\sum_a n_a \mu_a \mu_a^\top = I_r$ and the embedding is
    incoherent,
    \begin{equation}
        \label{eq:incoherence}
        \| V_M \|_{2 \to \infty} = \max_a \| \mu_a \|_2 \leq n_{\min}^{-1/2} .
    \end{equation}
    The $K$ community centroids are pairwise distinct, that is,
    $\delta := \min_{a \neq b} \| \mu_a - \mu_b \|_2 > 0$, if and only if
    $O w_{ab} \neq 0$ for every pair $a \neq b$, as in \Cref{cor:partition}.
    Quantitatively,
    \begin{equation}
        \label{eq:delta-margin}
        \frac{\| O w_{ab} \|_2}{\sigma_1(OP^0)}
        \;\leq\; \| \mu_a - \mu_b \|_2
        \;\leq\; \frac{\| O w_{ab} \|_2}{\sigma_r(OP^0)},
        \qquad a \neq b :
    \end{equation}
    the centroid separation is the separation margin of \Cref{prop:pairwise},
    weighted by the spectrum of the signal. If moreover $r = K$ (equivalently,
    $\rk(OZ) = K$ and $B$ is non-singular), then $W$ is orthogonal and the
    separation and incoherence take the closed forms
    \begin{equation}
        \label{eq:sep}
        \| \mu_a - \mu_b \|_2^2
        = \frac{1}{n_a} + \frac{1}{n_b} \geq \frac{2}{n_{\max}},
        \qquad \| V_M \|_{2 \to \infty} = n_{\min}^{-1/2} .
    \end{equation}

    \textup{(ii) (DCSBM.)} The top-$r$ right singular vectors satisfy $V_M =
    \Theta Z C$ with $C^\top \widetilde N C = I_r$, where $\widetilde N :=
    Z^\top \Theta^2 Z = \diag(\tilde n_1, \dots, \tilde n_K)$, $\tilde n_a :=
    \sum_{i \in C_a} \theta_i^2$; the rows are $(V_M)_{i \cdot} = \theta_i\,
    \tilde\mu_{z_i}$ with $\tilde\mu_a := C_{a \cdot}$, and after normalizing
    each nonzero row to unit length and assuming all $\tilde\mu_a\ne0$,
    they take the $K$ values $\tilde\mu_a / \|
    \tilde\mu_a \|_2$, which are pairwise distinct if and only if the rays
    $\{\mathbb{R}_{\geq 0}\, \nu_a\}$ of \Cref{prop:dcsbm}
    are pairwise distinct. If moreover $r =
    K$, then $\widetilde W := \widetilde N^{1/2} C$ is orthogonal and
    \begin{equation}
        \label{eq:dcsbm-embedding-rows}
        \tilde\mu_a=\tilde n_a^{-1/2}\,\widetilde W_{a\cdot},
        \qquad \frac{\tilde\mu_a}{\|\tilde\mu_a\|_2}=\widetilde W_{a\cdot},
        \qquad \Big\|\frac{\tilde\mu_a}{\|\tilde\mu_a\|_2}
        -\frac{\tilde\mu_b}{\|\tilde\mu_b\|_2}\Big\|_2=\sqrt2,
        \qquad a\neq b,
    \end{equation}
    so the normalized population rows are orthonormal.
\end{lemma}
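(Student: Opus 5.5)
The plan is to write $M=OP^0=(OZB)Z^\top$ for the SBM and $M=(O\Theta ZB)Z^\top\Theta$ for the DCSBM, and to read off the right singular structure directly from this factorization.

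\textbf{SBM, structure of $V_M$.} Every row of $M$ lies in $\range(Z)^\top$, so the row space of $M$ is contained in $\mathcal S$. Hence $V_M=ZC$ for some $C\in\mathbb R^{K\times r}$. Orthonormality $V_M^\top V_M=I_r$ gives $C^\top NC=I_r$. Setting $W=N^{1/2}C$ then yields $W^\top W=I_r$, so $W\in\mathbb O_{K\times r}$, and the row formula $(V_M)_{i\cdot}=C_{z_i\cdot}=n_{z_i}^{-1/2}W_{z_i\cdot}$ follows. Two further consequences are immediate:
\begin{itemize}
\item $\sum_a n_a\mu_a\mu_a^\top=C^\top NC=I_r$;
\item $\|\mu_a\|_2\le n_a^{-1/2}\|W_{a\cdot}\|_2\le n_{\min}^{-1/2}$, using \Cref{lem:orthonormal-facts}(i).
\end{itemize}

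\textbf{SBM, separation.} By \Cref{lem:orthonormal-facts}(iv), $M_{\cdot j}=U\Sigma(V_M)_{j\cdot}^\top$. Taking $j\in C_a$ and $k\in C_b$,
\[
U\Sigma(\mu_a-\mu_b)^\top=M_{\cdot j}-M_{\cdot k}=OZ(B_{\cdot a}-B_{\cdot b})=Ow_{ab},
\]
using \Cref{lem:column-constancy}. Since $U$ has orthonormal columns, $\|U\Sigma x\|_2$ lies between $\sigma_r\|x\|_2$ and $\sigma_1\|x\|_2$. This gives \eqref{eq:delta-margin}, and therefore the equivalence $\delta>0\iff Ow_{ab}\ne0$ for all pairs, matching \Cref{cor:partition}.

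\textbf{SBM, the case $r=K$.} When $r=K$, $W$ is a square matrix with orthonormal columns, hence orthogonal. Its rows are therefore orthonormal, so
\[
\|\mu_a-\mu_b\|_2^2=\frac1{n_a}+\frac1{n_b}\ge\frac{2}{n_{\max}},\qquad \|\mu_a\|_2=n_a^{-1/2},
\]
which gives $\|V_M\|_{2\to\infty}=n_{\min}^{-1/2}$. For the parenthetical equivalence, $\rk(M)=\rk(OZB)$ because $Z^\top$ has full row rank. Hence $r=K$ iff $OZB$ has rank $K$, which holds iff $\rk(OZ)=K$ and $B$ is nonsingular, since both factors are $K\times K$-constrained.

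\textbf{DCSBM.} The argument is parallel. The row space of $M$ is contained in $\range(\Theta Z)$, so $V_M=\Theta ZC$ with $C^\top\widetilde NC=I_r$, and the rows are $(V_M)_{i\cdot}=\theta_iC_{z_i\cdot}$. Because $\theta_i>0$, normalization removes $\theta_i$ and leaves $\tilde\mu_a/\|\tilde\mu_a\|_2$. For distinctness:
\begin{itemize}
\item by \Cref{lem:orthonormal-facts}(iv), two rows of $V_M$ are positive multiples of each other exactly when the corresponding columns of $M$ are;
\item by \Cref{lem:column-constancy}, those columns are $\theta_j\nu_{z_j}$ with $\nu_a\neq0$ by assumption;
\item so the normalized rows coincide iff the rays $\mathbb R_{\ge0}\nu_a$ coincide.
\end{itemize}
For $r=K$, $\widetilde W=\widetilde N^{1/2}C$ is orthogonal, so $\tilde\mu_a=\tilde n_a^{-1/2}\widetilde W_{a\cdot}$ normalizes to the unit row $\widetilde W_{a\cdot}$. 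Distinct orthonormal rows are at distance $\sqrt2$.

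\textbf{Main obstacle.} The only delicate point is the claim that the nonzero $\tilde\mu_a$ and the DCSBM ray correspondence interact correctly with row normalization. Here I must check that $\tilde\mu_a\ne0$ follows from $\nu_a\ne0$: $M_{\cdot j}=U\Sigma\theta_j\tilde\mu_a^\top$ is nonzero iff $\tilde\mu_a\ne0$, so this holds. Everything else is routine linear algebra.
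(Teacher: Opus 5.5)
Your proof is correct and follows the paper's argument: you factor $M$ through $Z^\top$ (resp.\ $(\Theta Z)^\top$) to get $V_M=ZC$ with $C^\top NC=I_r$, and you use injectivity of $U\Sigma$ together with \Cref{lem:column-constancy} for the separation and ray criteria. The differences are cosmetic. You bound $\|\mu_a\|_2$ via $\|W\|_{2\to\infty}\le1$ rather than the paper's positive-semidefinite-summand argument. You also obtain \eqref{eq:delta-margin} directly from $\|U\Sigma x\|_2=\|\Sigma x\|_2$, instead of first placing $Ow_{ab}$ in $\range(M)$ and applying $U^\top$.
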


\begin{proof}[Proof of \Cref{lem:pop-embedding}(i)]
    By \eqref{eq:P0}, $M = O Z B Z^\top = (OZ) B Z^\top$, so $\range(M^\top)
    \subseteq \range(Z) = \mathcal{S}$: every right singular vector with nonzero
    singular value lies in $\mathcal{S}$, and $V_M = ZC$ for a (unique, since $Z$
    has full column rank) matrix $C \in \mathbb{R}^{K \times r}$. Orthonormality
    gives $I_r = V_M^\top V_M = C^\top Z^\top Z C = C^\top N C$, i.e.\ $W := N^{1/2}
    C$ has orthonormal columns, $W \in \mathbb{O}_{K \times r}$. Then $(V_M)_{i
    \cdot} = (ZC)_{i \cdot} = C_{z_i \cdot} = n_{z_i}^{-1/2} W_{z_i \cdot} =
    \mu_{z_i}$, which is \eqref{eq:embedding-rows}, and $\sum_a n_a \mu_a
    \mu_a^\top = \sum_a W_{a\cdot}^\top W_{a\cdot} = W^\top W = I_r$. Since the $K$
    summands $n_a \mu_a \mu_a^\top$ are positive semi-definite with sum $I_r$, each
    satisfies $n_a \| \mu_a \|_2^2 \leq 1$, giving \eqref{eq:incoherence}.

    For the distinctness claim, write the compact SVD $M = U \Sigma V_M^\top$, so
    that column $j$ of $M$ is $M_{\cdot j} = U \Sigma\, \mu_{z_j}$; on the other
    hand $M_{\cdot j} = (OZ) B_{\cdot z_j}$ from the factorization. Since $U
    \Sigma$ has full column rank $r$, for any pair $a \neq b$
    \[
        \mu_a = \mu_b
        \iff U\Sigma \mu_a = U\Sigma \mu_b
        \iff (OZ)(B_{\cdot a} - B_{\cdot b}) = 0
        \iff O w_{ab} = 0.
    \]
    using $Z(B_{\cdot a} - B_{\cdot b}) = w_{ab}$ from \eqref{eq:diff-profile}.
    Hence $\delta > 0$ iff $O w_{ab} \neq 0$ for every pair, the condition of
    \Cref{cor:partition}. The same identity gives \eqref{eq:delta-margin}:
    subtracting the two column expressions, $U \Sigma (\mu_a - \mu_b) = O w_{ab}$,
    and since $O w_{ab} = (OZ)(B_{\cdot a} - B_{\cdot b}) \in (OZ)\,\range(B) =
    \range(M)$ (the columns of the symmetric $B$ lie in $\range(B)$, and $Z^\top$
    is surjective), applying $U^\top$ preserves its norm: $\Sigma(\mu_a - \mu_b) =
    U^\top O w_{ab}$ with $\| U^\top O w_{ab} \|_2 = \| O w_{ab} \|_2$. Since the
    diagonal entries of $\Sigma$ lie in $[\sigma_r(M), \sigma_1(M)]$, this yields
    $\| O w_{ab} \|_2 / \sigma_1(M) \leq \| \mu_a - \mu_b \|_2 \leq \| O w_{ab}
    \|_2 / \sigma_r(M)$.

    Finally, if $r = K$ then $W$ is square and orthogonal. The row-norm and
    pairwise-distance identities in \eqref{eq:sep} follow as in the proof of
    \cite[Lemma~2.1]{lei_consistency_2015}. Conversely $r = K$ requires
    $\rk(OZ) = K$ and $\rk(B)
    = K$ since $\rk(M) \leq \min\{\rk(OZ), \rk(B)\}$, and equals their common value
    $K$ when both hold, as each factor in $(OZ)BZ^\top$ is then injective on the
    relevant range.
\end{proof}

\begin{lemma}[Signal lower bound]
    \label{lem:signal}
    Under
    the observed data model of Section~\ref{subsec:obs-model} where $B=\rho_n \overline{B}$ %% updated this
    % \Cref{ass:standing} 
    with $\rk(OZ) = K$, for every $1 \leq j \leq \rk(B)$,
    \begin{equation}
        \label{eq:sigma-lb}
        \sigma_j(OP^0)
        \geq \sigma_{\min}(OZ)\,\sigma_j(B)\,\sqrt{n_{\min}}
        = \sigma_{\min}(OZ)\,\rho_n\,\sigma_j(\bar B)\,\sqrt{n_{\min}}.
    \end{equation}
    in particular $\rk(OP^0) = \rk(B)$, and at full rank $\sigma_K(OP^0) \geq
    \sigma_{\min}(OZ)\, \sigma_K(B) \sqrt{n_{\min}}$.
\end{lemma}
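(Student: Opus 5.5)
The plan is to factor $OP^0=(OZ)\,B\,Z^\top$ and peel off the two outer factors with the singular-value comparison of \Cref{lem:norm-facts}(vi). Because $\rk(OZ)=K$ and $OZ$ has $K$ columns, that lemma applied with $A=OZ$ gives $\sigma_j(OZ\,X)\geq\sigma_{\min}(OZ)\,\sigma_j(X)$ for any $X$, where $\sigma_{\min}(OZ)=\sigma_K(OZ)$. Taking $X=BZ^\top$ reduces the claim to
\[
\sigma_j(BZ^\top)\geq\sigma_j(B)\sqrt{n_{\min}}.
\]

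For the inner factor we transpose: $\sigma_j(BZ^\top)=\sigma_j(ZB)$ by symmetry of $B$. Since $Z$ has full column rank $K$ with $\sigma_K(Z)=\sqrt{n_{\min}}$ (from $Z^\top Z=N$), \Cref{lem:norm-facts}(vi) with $A=Z$ gives $\sigma_j(ZB)\geq\sqrt{n_{\min}}\,\sigma_j(B)$. Chaining the two inequalities yields \eqref{eq:sigma-lb}, and substituting $B=\rho_n\bar B$ gives $\sigma_j(B)=\rho_n\sigma_j(\bar B)$, which is the second form.

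For the rank statement, $\rk(OP^0)\leq\rk(B)$ holds trivially from the factorization. For the reverse inequality, take $j=\rk(B)$: then $\sigma_j(B)>0$, so the bound forces $\sigma_j(OP^0)>0$ and hence $\rk(OP^0)\geq\rk(B)$. The full-rank statement is the case $j=K$ when $\rk(B)=K$.

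I expect no real obstacle. The only care needed is checking that \Cref{lem:norm-facts}(vi) is applied with the full-column-rank factor on the left each time, which is why the transpose step is used for $Z$.
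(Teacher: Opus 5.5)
Your proof is correct and follows the paper's argument. You use the same factorization $OP^0=(OZ)(BZ^\top)$ and peel off the full-column-rank factor $OZ$ with the same singular-value comparison. The only variation is in the inner factor: the paper uses the Loewner bound $BNB^\top\succeq n_{\min}BB^\top$, whereas you transpose and reapply \Cref{lem:norm-facts}(vi) with $A=Z$. Your explicit derivation of $\rk(OP^0)=\rk(B)$ also fills in a step the paper leaves implicit.
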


\begin{proof}
    Write $OP^0 = (OZ)(BZ^\top)$. Since $(BZ^\top)(BZ^\top)^\top = B N B^\top
    \succeq n_{\min} B B^\top$, we have $\sigma_j(BZ^\top) \geq \sqrt{n_{\min}}\,
    \sigma_j(B)$ for every $j$; and since $OZ$ is injective with smallest singular
    value $\sigma_{\min}(OZ)$, $\sigma_j((OZ)Y) \geq \sigma_{\min}(OZ)\,
    \sigma_j(Y)$ for any $Y$. Composing the two gives \eqref{eq:sigma-lb}.
\end{proof}

\begin{lemma}[Deterministic diagonal correction]
    \label{lem:diag}
    Under
    the observed data model of Section~\ref{subsec:obs-model} where $P^0=\rho_n \overline{P}$
    % \Cref{ass:standing}, 
    let $M=OP^0$ and $D=O\diag(P^0)$.
    Then
    \begin{equation}
        \label{eq:diag-bound}
        \|D\|_{2\to2}\leq\|O\|_{2\to2}\max_j P^0_{jj}
        \leq\rho_n\|\bar P\|_\infty\|O\|_{2\to2}.
    \end{equation}
    For any matrix $W$ with orthonormal columns,
    \[
        \|D^\top W\|_{2\to\infty}
        \leq\rho_n\|\bar P\|_\infty b_O,
        \qquad
        \|D^\top U_M\|_{2\to\infty}
        \leq\rho_n\|\bar P\|_\infty b_\infty.
    \]
    Thus $R=M+\mcN$ with $\mcN=OE-D$, and
    \[
        \frac{\|D\|_{2\to2}}{\|O\|_{2\to2}\sqrt{n\rho_n}}
        \leq\|\bar P\|_\infty\sqrt{\rho_n/n},
        \qquad
        \frac{\|D\|_{2\to2}}{\sigma_r(M)}
        \leq\varepsilon_{\rm SNR}\|\bar P\|_\infty\sqrt{\rho_n/n}.
    \]
    The deterministic correction can therefore be absorbed into the stated
    noise scales when $\varepsilon_{\rm SNR}$ is bounded. All signal
    quantities remain those of $M=OP^0$.
\end{lemma}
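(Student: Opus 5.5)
The proof is direct linear algebra, handling the operator-norm bound, the two $2\to\infty$ bounds, and the final scale comparisons in that order. Throughout we use that $P^0=\rho_n\bar P$ has nonnegative entries bounded by $\rho_n\|\bar P\|_\infty$. In particular the diagonal matrix $\diag(P^0)$ has entries $P^0_{jj}\in[0,\rho_n\|\bar P\|_\infty]$.

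\emph{Operator norm.} For \eqref{eq:diag-bound}, write $D=O\diag(P^0)$ and apply submultiplicativity. \Cref{lem:norm-facts}(iv) gives $\|\diag(P^0)\|_{2\to2}=\max_j P^0_{jj}$, so
\[
\|D\|_{2\to2}\leq\|O\|_{2\to2}\max_j P^0_{jj}\leq\rho_n\|\bar P\|_\infty\|O\|_{2\to2}.
\]

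\emph{Row bounds.} Row $j$ of $D^\top W$ is $P^0_{jj}\,O_{\cdot j}^\top W$. Its norm is at most $\rho_n\|\bar P\|_\infty\|W^\top O_{\cdot j}\|_2$.
\begin{itemize}
\item For a general $W$ with orthonormal columns, \Cref{lem:orthonormal-facts}(i) gives $\|W^\top O_{\cdot j}\|_2\leq\|O_{\cdot j}\|_2\leq b_O$.
\item For $W=U_M$, we have $\|U_M^\top O_{\cdot j}\|_2=\|(O^\top U_M)_{j\cdot}\|_2\leq b_\infty$ by the definition of $b_\infty$.
\end{itemize}
Taking the maximum over $j$ gives both displayed $2\to\infty$ bounds.

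\emph{Decomposition.} The identity $R=M+\mcN$ follows from $R=OA=OP+OE$ and $P=P^0-\diag(P^0)$. Together these give $OP=M-D$, hence $R=M+(OE-D)$.

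\emph{Scale comparisons.} Divide the operator-norm bound by $\|O\|_{2\to2}\sqrt{n\rho_n}$ to get $\|\bar P\|_\infty\rho_n/\sqrt{n\rho_n}=\|\bar P\|_\infty\sqrt{\rho_n/n}$. For the second ratio, multiply this by $\|O\|_{2\to2}\sqrt{n\rho_n}/\sigma_r(M)=\varepsilon_{\rm SNR}$.

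The absorption remark then follows because $\sqrt{\rho_n/n}\leq1$. Consequently $\|D\|_{2\to2}\lesssim\|O\|_{2\to2}\sqrt{n\rho_n}$, which is the same order as the bound on $\|OE\|_{2\to2}$ in \Cref{lem:noise}.

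No step here is a real obstacle. The only point needing care is the $U_M$ row bound, which requires recognizing $b_\infty$ as the row norm of $O^\top U_M$ rather than reaching for the cruder bound $b_O$.
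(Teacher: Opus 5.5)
Your proposal is correct and follows essentially the same route as the paper: submultiplicativity with $\|\diag(P^0)\|_{2\to2}=\max_j P^0_{jj}$ for the operator norm. For the row bounds, the identity $(D^\top W)_{j\cdot}=P^0_{jj}(O^\top W)_{j\cdot}$, bounded by $\|O_{\cdot j}\|_2$ in general or by $b_\infty$ when $W=U_M$. The two ratios then follow by direct division. Your explicit derivation of $R=M+(OE-D)$ from $P=P^0-\diag(P^0)$ just spells out a step the paper leaves implicit.
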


\begin{proof}
    The diagonal entries of $P^0$ are bounded by
    $\rho_n\|\bar P\|_\infty$, so sub-multiplicativity gives
    \eqref{eq:diag-bound}. Moreover
    $(D^\top W)_{i\cdot}=P^0_{ii}(O^\top W)_{i\cdot}$ and
    $\|(O^\top W)_{i\cdot}\|_2\leq\|O_{\cdot i}\|_2$.
    Taking $W=U_M$ gives the sharper $b_\infty$ bound. Dividing
    \eqref{eq:diag-bound} by the noise scale and by $\sigma_r(M)$ proves
    the two ratios. The total-noise event and its leave-one-out versions
    are \eqref{eq:total-noise-event}.
\end{proof}

\subsection{Design quantities of the examples in \Cref{subsec:design-anatomy}}
\label{app:design-examples}

\begin{lemma}[Indicator partitions]
    \label{lem:design-examples}
    Let $O\in\{0,1\}^{L\times n}$ be the indicator matrix of a partition of $[n]$
    into $L$ groups of equal size $n/L$. Then
    \[
        b_O=1,\qquad\|O\|_{2\to2}=\sqrt{n/L},\qquad g_1(O)=n/L,\qquad
        \varsigma(O)=\sqrt{n/L}.
    \]
    Hence $O$ is spread in the sense of \Cref{def:design-regimes} if and only if
    $L\leq C_2^2K$, and near-orthogonal if and only if $L\geq n/C_\varsigma^2$. In
    particular:
    \begin{enumerate}
        \item[(i)] the identity $O=I_n$, the case $L=n$, has
        $b_O=\|O\|_{2\to2}=g_1(O)=\varsigma(O)=1$, so it is near-orthogonal and, for
        $K<n/C_2^2$, not spread;
        \item[(ii)] the design of \Cref{ex:enriched-groups}, the case $L=K$ with
        $m=n/K$, has $\|O\|_{2\to2}=\sqrt m$, $b_O=1=\|O\|_{2\to2}\sqrt{K/n}$,
        $g_1(O)=m$ and $\varsigma(O)=\sqrt m$, so it is spread with $C_2=1$ and
        not near-orthogonal unless $m\leq C_\varsigma^2$. Moreover
        \[
            OZ=m\Big((1-\tau)I_K+\frac\tau K\mathbf 1_K\mathbf 1_K^\top\Big),
            \qquad
            \sigma_{\min}(OZ)=m(1-\tau)=(1-\tau)\|O\|_{2\to2}\sqrt{n_{\min}},
        \]
        so the conditioning hypothesis of \Cref{cor:sbm-regimes} holds with
        $c_1=1-\tau$, and $b_\infty\leq1$.
    \end{enumerate}
    For the degree query $O=\mathbf 1_n^\top$, $g_1(O)=n$.
\end{lemma}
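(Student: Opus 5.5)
The proof reduces to computing four quantities directly from the block structure of an equal-size partition indicator. Let $O\in\{0,1\}^{L\times n}$ have rows $\mathbf 1_{G_1}^\top,\dots,\mathbf 1_{G_L}^\top$ with $|G_g|=n/L$.

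\emph{Column norm.} Each column $O_{\cdot j}$ is the standard basis vector $e_{g(j)}$, where $g(j)$ is the group containing $j$. Hence $b_O=1$.

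\emph{Operator norm.} Since the groups are disjoint, $OO^\top=(n/L)I_L$. Then \Cref{lem:norm-facts}(i) gives $\|O\|_{2\to2}=\sqrt{n/L}$.

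\emph{Gram overlap.} We have $(O^\top O)_{jk}=\langle O_{\cdot j},O_{\cdot k}\rangle=\mathbf 1\{g(j)=g(k)\}$. So $O^\top O_{\cdot j}$ is the indicator of $G_{g(j)}$, with $\ell_1$ norm $n/L$. This gives $g_1(O)=n/L$ and $\varsigma(O)=(n/L)/(\sqrt{n/L}\cdot1)=\sqrt{n/L}$.

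\emph{Regime characterizations.} These follow from \Cref{def:design-regimes}. Spread means $1\le C_2\sqrt{n/L}\sqrt{K/n}=C_2\sqrt{K/L}$, i.e.\ $L\le C_2^2K$. Near-orthogonal means $n/L\le C_\varsigma\sqrt{n/L}$, i.e.\ $L\ge n/C_\varsigma^2$.

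\emph{Case (i).} The identity corresponds to $L=n$, so all four quantities equal $1$. It is therefore near-orthogonal, since $C_\varsigma\ge1$ is needed; I would note that near-orthogonality requires $1\le C_\varsigma$, and otherwise read it as holding for $C_\varsigma\ge1$. It is not spread when $n>C_2^2K$.

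\emph{Case (ii): formulas.} The design of \Cref{ex:enriched-groups} is a partition into $L=K$ groups of size $m$. Each group has size $m\big((1-\tau)+K\cdot\tau/K\big)=m$, so the partition is equal-sized. Substituting $L=K$ gives $\|O\|_{2\to2}=\sqrt m$, $b_O=1=\sqrt m\sqrt{K/n}$, $g_1=m$ and $\varsigma=\sqrt m$.

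\emph{Case (ii): $OZ$ and conditioning.} The entries are $(OZ)_{ga}=|G_g\cap C_a|$, which yields the stated matrix $m\big((1-\tau)I_K+\tfrac{\tau}{K}\mathbf 1\mathbf 1^\top\big)$. This is symmetric with eigenvalues $m$ (on $\mathbf 1$) and $m(1-\tau)$ (on $\mathbf 1^\perp$). Hence $\sigma_{\min}(OZ)=m(1-\tau)=(1-\tau)\sqrt m\sqrt m=(1-\tau)\|O\|_{2\to2}\sqrt{n_{\min}}$.

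\emph{Case (ii): leverage.} The bound $b_\infty\le b_O=1$ follows from $\|O^\top U\|_{2\to\infty}\le\|O^\top\|_{2\to\infty}\|U\|_{2\to2}$ and \eqref{eq:norm-relations}.

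\emph{Degree query.} For $O=\mathbf 1_n^\top$, the Gram matrix $O^\top O$ is the all-ones matrix, so each row has $\ell_1$ norm $n$ and $g_1=n$.

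The only real obstacle is bookkeeping. Beyond that, one must confirm that the enriched design is genuinely a partition with equal group sizes, which is what allows the general formulas to be reused.
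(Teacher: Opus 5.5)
Your proposal is correct and follows the paper's proof essentially line by line. It uses the same computations $O_{\cdot j}=e_{g(j)}$, $OO^\top=(n/L)I_L$ and $\langle O_{\cdot j},O_{\cdot k}\rangle=\mathbf 1\{g(j)=g(k)\}$, checks in the same way that the enriched design is an equal partition with the stated $OZ$ (eigenvalues $m$ and $m(1-\tau)$), and gives the same bound $b_\infty\le b_O$. Your caveat that $O=I_n$ is near-orthogonal only for $C_\varsigma\ge1$ is accurate and harmless: $\varsigma(O)\ge1$ for every design, because $g_1(O)=\|O^\top O\|_{\infty\to\infty}\ge\|O\|_{2\to2}^2\ge\|O\|_{2\to2}b_O$, so the paper's definition is only meaningful with $C_\varsigma\ge1$, which it tacitly assumes.
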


\begin{proof}
    Write $g(j)$ for the group of node $j$. Each column of $O$ is a standard basis
    vector, $O_{\cdot j}=e_{g(j)}$, so $b_O=1$. The rows of $O$ have disjoint
    supports, so $OO^\top=(n/L)I_L$ and $\|O\|_{2\to2}=\sqrt{n/L}$. Moreover
    $\langle O_{\cdot k},O_{\cdot j}\rangle=\mathbf 1\{g(k)=g(j)\}$, so
    $\|O^\top O_{\cdot j}\|_1=|G_{g(j)}|=n/L$; hence $g_1(O)=n/L$ and
    $\varsigma(O)=\sqrt{n/L}$. Spreadness reads $1\leq C_2\sqrt{n/L}\sqrt{K/n}
    =C_2\sqrt{K/L}$, and near-orthogonality reads $\sqrt{n/L}\leq C_\varsigma$.
    Part (i) is the case $L=n$.

    For (ii), each node of $C_a$ lies in exactly one group, and
    $|G_g|=\sum_a|G_g\cap C_a|=m\big((1-\tau)+K\cdot\tau/K\big)=m$, so the design is
    an equal partition with $L=K$. Since $(OZ)_{ga}=|G_g\cap C_a|$, $OZ$ has the
    displayed form; this symmetric matrix has eigenvalue $m$ on $\mathbf 1_K$ and
    $m(1-\tau)$ on $\mathbf 1_K^\perp$, so $\sigma_{\min}(OZ)=m(1-\tau)$, and
    $\|O\|_{2\to2}\sqrt{n_{\min}}=\sqrt m\sqrt m=m$. Finally
    $b_\infty=\max_j\|U_M^\top O_{\cdot j}\|_2\leq b_O=1$. For the degree query,
    $\langle O_{\cdot k},O_{\cdot j}\rangle=1$ for all $k,j$, so $g_1(O)=n$.
\end{proof}

\subsection{Proof of the weak-recovery bound}

In the appendix calculations, write $\beta:=n_{\max}/n_{\min}$;
under the assumed balance, its factors are absorbed into the constants
in the main statements.

\begin{lemma}[Weak recovery, precise form]
    \label{lem:weak-recovery-precise}
    Set $r=\rk(OP^0)$ and suppose $\delta>0$. Let
    $\nu=C_0\|O\|_{2\to2}\sqrt{n\rho_n}$, with $C_0$ large enough that
    $\|\mcN\|_{2\to2}\leq\nu$ on the event of \Cref{lem:noise},
    using \Cref{lem:diag}. Under $n\rho_n\geq C_0\log n$ and
    $\nu\leq(1-2^{-1/2})\sigma_r(M)$,
    with probability at least $1 - n^{-\alpha}$,
    \begin{equation}
        \label{eq:general-rate}
        n_{\mathrm{err}}(\hat z, z)
        \;\leq\; C(\varepsilon_{\rm km})\; \frac{r\, \| \mcN \|_{2\to2}^2}
        {\sigma_r(OP^0)^2\, \delta^2},
    \end{equation}
    provided the right side is smaller than $n_{\min}$. At full rank ($r = K$),
    substituting $\delta^2 \geq 2/n_{\max}$ and the design condition of
    \Cref{cor:sbm-regimes}, with $\bar\sigma:=\sigma_K(\bar B)$, gives the balanced
    form
    \begin{equation}
        \label{eq:main-rate}
        n_{\mathrm{err}}
        \leq C(\varepsilon_{\rm km})\,n_{\max}K
        \frac{\|\mcN\|_{2\to2}^2}{\sigma_K(OP^0)^2},
        \qquad
        \frac{n_{\mathrm{err}}}{n}
        \leq C\frac{K}{\bar\sigma^2}\,\beta\,
        \frac{\|O\|_{2\to2}^2}{\sigma_{\min}(OZ)^2}\frac{1}{\rho_n}
        \leq\frac{C\beta^2K^2}{\bar\sigma^2c_1^2\,n\rho_n}.
    \end{equation}
    %%% TO DO, check this
\end{lemma}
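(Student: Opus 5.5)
The argument combines a Frobenius-norm subspace perturbation bound with the $k$-means counting argument of \Cref{lem:kmeans}(i). We work on the event of \Cref{lem:noise}, enlarged through \Cref{lem:diag} so that $\|\mcN\|_{2\to2}\leq\nu$. This event has probability at least $1-n^{-\alpha}$.

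\textbf{Subspace step.} Write $R=M+\mcN$ with $\rk M=r$. The hypothesis $\nu\leq(1-2^{-1/2})\sigma_r(M)$ in particular gives $\|\mcN\|_{2\to2}\leq\sigma_r/2$. \Cref{lem:wedin} then produces the Procrustes alignment $Q$ with
\[
\varrho:=\|\widehat VQ-V_M\|_F\leq\frac{2\sqrt{2r}\,\|\mcN\|_{2\to2}}{\sigma_r(OP^0)}.
\]
By \Cref{lem:procrustes}(v), the same bound holds for $\|\widehat V-V_MQ^\top\|_F$, which is the form \Cref{lem:kmeans} uses. The slightly stronger constant $1-2^{-1/2}$ is only needed if one prefers Wedin's form with gap $\sigma_r(R)$; I will simply note that it implies the half-gap condition.

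\textbf{Clustering step.} By \Cref{lem:pop-embedding}(i), the rows of $V_M$ equal $\mu_{z_i}$ and take $K$ distinct values with separation $\delta>0$. This fits the setting of \Cref{lem:kmeans} with $q=r$ and orthogonal $Q^\top$. If $C_{\rm km}\varrho^2/\delta^2<n_{\min}$, part (i) of that lemma gives
\[
n_{\mathrm{err}}\leq C_{\rm km}\frac{\varrho^2}{\delta^2}\leq 8C_{\rm km}\frac{r\|\mcN\|_{2\to2}^2}{\sigma_r^2\delta^2}.
\]
This is \eqref{eq:general-rate} with $C(\varepsilon_{\rm km})=8C_{\rm km}$. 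The proviso ``right side smaller than $n_{\min}$'' is interpreted as exactly the condition \eqref{eq:kmeans-cond} for this constant.

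\textbf{Balanced form.} Set $r=K$ and use \eqref{eq:sep}, which gives $\delta^{-2}\leq n_{\max}/2$; this yields the first display of \eqref{eq:main-rate}. Next bound $\|\mcN\|_{2\to2}\leq\nu=C_0\|O\|_{2\to2}\sqrt{n\rho_n}$. \Cref{lem:signal} gives $\sigma_K(OP^0)\geq\sigma_{\min}(OZ)\rho_n\bar\sigma\sqrt{n_{\min}}$. Substituting, dividing by $n$, and using $n_{\max}/n_{\min}=\beta$ gives $CK\beta\|O\|^2_{2\to2}/(\bar\sigma^2\sigma_{\min}(OZ)^2\rho_n)$. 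Finally the design condition $\sigma_{\min}(OZ)\geq c_1\|O\|_{2\to2}\sqrt{n_{\min}}$ together with $n_{\min}\geq n/(\beta K)$ gives the last inequality.

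\textbf{Main obstacle.} The only nonroutine point is bookkeeping. One must make sure that $\mcN$ includes the deterministic diagonal correction $D$, which is covered by \Cref{lem:diag}. One must also make sure that \Cref{lem:kmeans} is applied to $\widehat V$ against the rotated population $V_MQ^\top$ rather than $\widehat VQ$. Everything else is direct substitution.
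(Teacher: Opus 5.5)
Your proposal is correct and follows the paper's proof essentially step for step. Both arguments use the event of \Cref{lem:noise} enlarged via \Cref{lem:diag}, the Frobenius bound of \Cref{lem:wedin}, \Cref{lem:pop-embedding}(i) with \Cref{lem:kmeans}(i) to get $C(\varepsilon_{\rm km})=8C_{\rm km}$, and then $\delta^2\geq 2/n_{\max}$, \Cref{lem:signal} and the design condition for the balanced form; your explicit use of \Cref{lem:procrustes}(v) to pass between $\widehat VQ-V_M$ and $\widehat V-V_MQ^\top$ is a detail the paper leaves implicit.
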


\begin{proof}[Proof of the bounds in \Cref{lem:weak-recovery-precise}]
    Work on the event of \Cref{lem:noise} (probability $\geq 1 - n^{-\alpha}$), and
    use $M=OP^0$ and $\mcN=OE-D$. By \Cref{lem:diag},
    $\|\mcN\|_{2\to2}\leq\nu$ after enlarging $C_0$.
    Under the stated SNR condition,
    \Cref{lem:sin-theta} gives $\varrho := \| \widehat V - V_M Q \|_F \leq
    2\sqrt{2r}\, \| \mcN \|_{2 \to 2} / \sigma_r(OP^0)$. By \Cref{lem:pop-embedding}(i)
    the rows of $V_M$ take the $K$ values $\mu_a$ at separation $\delta > 0$, so
    \Cref{lem:kmeans} gives
    \[
        n_{\mathrm{err}}
        \leq 16 (1 + \sqrt{1+\varepsilon_{\rm km}})^2 \frac{\varrho^2}{\delta^2}
        \leq 16 (1 + \sqrt{1+\varepsilon_{\rm km}})^2
        \frac{8r\| \mcN \|_{2 \to 2}^2}{\sigma_r(OP^0)^2\delta^2}
        = C(\varepsilon_{\rm km})\frac{r\| \mcN \|_{2\to2}^2}{\sigma_r(OP^0)^2\delta^2},
    \]
    which is \eqref{eq:general-rate}. At full rank $r = K$, substituting $\delta^2
    \geq 2/n_{\max}$ from \eqref{eq:sep} gives
    \[
        n_{\mathrm{err}}
        \leq 16 (1 + \sqrt{1+\varepsilon_{\rm km}})^2 \cdot \frac{n_{\max}}{2}
        \cdot \frac{8 K \| \mcN \|_{2 \to 2}^2}{\sigma_K(OP^0)^2}
        = C(\varepsilon_{\rm km})\, n_{\max} K \frac{\| \mcN \|_{2 \to 2}^2}{\sigma_K(OP^0)^2},
    \]
    the first bound of \eqref{eq:main-rate}.
    %% TO DO - check this
    For the second, substitute the total-noise bound $\|\mcN\|_{2\to2}\leq\nu$
    and the signal bound \eqref{eq:sigma-lb}, and
    divide by $n$:
    \[
        \frac{n_{\mathrm{err}}}{n}
        \leq C K\frac{n_{\max}}{n}
        \frac{\| O \|_{2\to2}^2\, n \rho_n}{\sigma_{\min}(OZ)^2\, \rho_n^2\,
        \bar\sigma^2\, n_{\min}}
        = C\frac{K}{\bar\sigma^2}\frac{n_{\max}}{n_{\min}}
        \frac{\|O\|_{2\to2}^2}{\sigma_{\min}(OZ)^2}\frac{1}{\rho_n},
    \]
    the factors of $n$ cancelling as $(n_{\max}/n)(n/n_{\min}) = n_{\max}/n_{\min}
    = \beta$. Under the design condition $\sigma_+(OZ)=\sigma_{\min}(OZ) \geq c_1 \| O
    \|_{2\to2} \sqrt{n_{\min}}$ of \Cref{cor:sbm-regimes}, $\| O
    \|_{2\to2}^2/\sigma_{\min}(OZ)^2
    \leq 1/(c_1^2 n_{\min})$ and $\beta / (n_{\min} \rho_n) \leq \beta^2 K /
    (n\rho_n)$ using $n_{\min} \geq n/(\beta K)$, giving the final bound of
    \eqref{eq:main-rate}.
\end{proof}

\subsection{Proof of the master theorem}

\begin{proof}[Proof of \Cref{thm:sbm-main}]
    Work with $M=OP^0$ and $\mcN=OE-D$ throughout, using the total-noise
    event \eqref{eq:total-noise-event}. By \Cref{lem:pop-embedding}(i), $M$ has rank $r$
    with community-constant rows and incoherence $\| V_M \|_{2\to\infty} \leq
    n_{\min}^{-1/2}$ \eqref{eq:incoherence}. The leave-one-out chain of
    \Cref{sec:procrustes} runs at the retained dimension
    $r$. The stated hypotheses
    $n\rho_n\geq C_0\log n$, $\varepsilon_{\rm SNR}\leq c_\star$ and
    $b_O\log n\leq c_\star\sigma_r$ are hypothesis (H) of
    \Cref{loo:concentration-full}, and the two additional conditions for
    $\vartheta=0$ are (H$'$). \Cref{loo:twoinfty-sharp} therefore gives, with
    probability at least $1 - 4n^{-\alpha}$, an orthogonal $Q$ satisfying
    \eqref{eq:loo-twoinfty-sharp} with $h=\|V_M\|_{2\to\infty}\leq n_{\min}^{-1/2}=h_\star$.
    Here $\gamma_1=\sqrt{\rho_n}(\sqrt r+\sqrt{\log n})$ and
    $\gamma_1(o+b_O)\leq2\gamma_1o$, as $b_O\leq o=\|O\|_{2\to2}$. The two $q_D$-terms
    are absorbed into the first term: since $q_D=\rho_n\|\bar P\|_\infty b_O$ and
    $\rho_n\leq\sqrt{\rho_n}\leq\gamma_1$,
    \[
        \frac{q_D}{\sigma_r}\leq\|\bar P\|_\infty\frac{\gamma_1o}{\sigma_r},
        \qquad
        \frac{q_D\,o\log n}{\sigma_r^2}
        =\|\bar P\|_\infty\frac{b_O\log n}{\sigma_r}\cdot\frac{\rho_no}{\sigma_r}
        \leq\|\bar P\|_\infty\frac{\gamma_1o}{\sigma_r},
    \]
    the second by the hypothesis $b_O\log n\leq c_\star\sigma_r$. This is
    \eqref{eq:eta-infty} with $C:=2C_{13}(1+\|\bar P\|_\infty)$.

On the same event,
    $\|\mcN\|_{2\to2}\leq C_0\|O\|_{2\to2}\sqrt{n\rho_n}$.
    Wedin's bound gives $\|\widehat VQ-V_M\|_F\leq F_\star$
    for the same orthogonal matrix $Q$, as in
    \eqref{eq:master-frobenius}. The population rows are the $K$ distinct
    centers $\mu_{z_i}$. If $\eta_\infty<\delta/4$,
    \Cref{lem:single-linkage} proves exact recovery with single linkage.
    For approximate $k$-means, under \eqref{eq:master-rounding},
    \Cref{lem:kmeans}(ii) therefore gives exact recovery by the
    nearest-centroid refinement of Algorithm~1.
\end{proof}

\subsection{Proof of the regime corollary}

\begin{proof}[Proof of \Cref{cor:sbm-regimes}]
    Throughout, $\sigma_r:=\sigma_r(OP^0)$, and by the signal condition in
    \eqref{eq:regime-conditions},
    \begin{equation}
        \label{eq:regime-signal}
        \sigma_r \geq c_1\|O\|_{2\to2}\rho_n\bar\sigma n_{\min},\qquad
        \varepsilon_{\rm SNR} = \frac{\| O \|_{2\to2}\sqrt{n\rho_n}}{\sigma_r}
        \leq \frac{\sqrt{n\rho_n}}{c_1\bar\sigma\rho_n n_{\min}}
        \leq \frac{\beta K}{c_1\bar\sigma\sqrt{n\rho_n}}.
    \end{equation}
    using $1/n_{\min} \leq \beta K / n$ (from $n_{\min} = n_{\max}/\beta \geq
    n/(\beta K)$, since $K n_{\max} \geq n$). We also record, using the separation
    condition in \eqref{eq:regime-conditions} for the first bound,
    \begin{equation}
        \label{eq:regime-geometry}
        \frac{1}{\delta} \leq \frac{\sqrt{n_{\max}}}{c_\delta},
        \qquad
        \frac{\sqrt{n_{\max}}}{n_{\min}} = \frac{\sqrt\beta}{\sqrt{n_{\min}}}
        \leq \beta \sqrt{\frac{K}{n}},\qquad
        \sqrt{\frac{n_{\max}}{n_{\min}}} = \sqrt\beta,
        \qquad
        \sqrt{\frac{K n_{\max}}{n}} \leq \sqrt\beta.
    \end{equation}
    the last using $n_{\max} \leq \beta n / K$.

    \emph{(i) Spread designs.} Put $d=n\rho_n$, $\ell=\log n$,
    $o=\|O\|_{2\to2}$, and $b=b_\infty+b_O$. Since $\vartheta\leq b_O$ and
    $h_\star\leq1$, the bound \eqref{eq:eta-infty} gives
    \[
        \eta_\infty\leq C\left\{\frac{o\sqrt{\rho_n}(\sqrt K+\sqrt\ell)+b\ell}{\sigma_r}
        +\frac{\varepsilon_{\rm SNR}}{\sqrt{n_{\min}}}
        +\varepsilon_{\rm SNR}^2\frac{b}{o}\right\}.
    \]
    Divide each term by $\delta$, using
    $b\leq2C_2o\sqrt{K/n}$, which follows from \Cref{def:design-regimes} and
    $b_\infty\leq b_O$, and \eqref{eq:regime-signal}--\eqref{eq:regime-geometry}:
    \begin{equation}
        \label{eq:three-terms}
        \frac{o\sqrt{K\rho_n}}{\sigma_r\delta}\lesssim\frac{\beta K}{\sqrt d},
        \quad \frac{o\sqrt{\rho_n\ell}}{\sigma_r\delta}\lesssim\beta\sqrt{\frac{K\ell}{d}},
        \quad \frac{b\ell}{\sigma_r\delta}\lesssim\frac{\beta K\ell}{d},
        \quad \frac{\varepsilon_{\rm SNR}n_{\min}^{-1/2}}{\delta}\lesssim\frac{\beta^{3/2}K}{\sqrt d},
        \quad \frac{\varepsilon_{\rm SNR}^2b}{o\delta}\lesssim\frac{\beta^{5/2}K^2}{d}.
    \end{equation}
    The first three inequalities use
    $o/(\sigma_r\delta)\lesssim\sqrt{n_{\max}}/(\rho_n n_{\min})$
    and $\sqrt{n_{\max}}/n_{\min}\leq\beta\sqrt{K/n}$.
    The fourth uses $n_{\min}^{-1/2}/\delta\leq\sqrt\beta/c_\delta$;
    the fifth uses $b/(o\delta)\leq2C_2\sqrt\beta/c_\delta$.
    All constants depend only on the conditioning constants stated in
    the corollary and on $\alpha$, not on $n,K,\beta$.

    Set $H_\beta=\beta^3K^2+\beta^2K\log n$. Since $\beta\geq1$,
    the five terms give
    \begin{equation}
        \label{eq:spread-normalized-rate}
        \frac{\eta_\infty}{\delta}
        \leq C\left\{\sqrt{\frac{H_\beta}{d}}+\frac{H_\beta}{d}\right\}.
    \end{equation}
    For sufficiently large fixed $c$, $d\geq cH_\beta$ makes this
    strictly below $1/4$ and ensures $d\geq C_0\log n$,
    $\varepsilon_{\rm SNR}\leq c_\star$ and, since $\delta\leq2$ and
    $b_O\ell/\sigma_r\leq\delta\cdot b\ell/(\sigma_r\delta)$, also
    $b_O\log n\leq c_\star\sigma_r$. The master theorem and
    \Cref{lem:single-linkage} prove the stated exact-recovery result
    with failure at most $4n^{-\alpha}$. For bounded imbalance the degree
    condition is $d\geq c(K^2+K\log n)$. The normalized error bound
    need only be below this fixed margin, rather than converge to zero.

    For the approximate-$k$-means option, the constants may additionally
    depend on the fixed $\varepsilon_{\rm km}$. The same event gives
    \[
        \frac{C_{\rm km}F_\star^2}{n_{\min}\delta^2}
        \leq C\beta K\varepsilon_{\rm SNR}^2
        \leq\frac{C\beta^3K^3}{d}.
    \]
    Hence $d\geq c(\beta^3K^3+\beta^2K\log n)$ ensures both
    conditions of \eqref{eq:master-rounding}, and
    \Cref{lem:kmeans}(ii) proves exact recovery after nearest-centroid reassignment.

    \emph{(ii) Near-orthogonal designs.} First,
    $\varsigma(O)\leq C_\varsigma$ gives
    $\varsigma\varepsilon_{\rm SNR}\leq C_\varsigma\beta K/(c_1\bar\sigma\sqrt d)$ and
    \[
        \frac{\varsigma o\log n}{\sigma_r}
        \leq\frac{C_\varsigma\log n}{c_1\bar\sigma\rho_nn_{\min}}
        \leq\frac{C_\varsigma\beta K\log n}{c_1\bar\sigma d},
    \]
    so both conditions for $\vartheta=0$ in \Cref{thm:sbm-main} hold once
    $d\geq cH_\beta$; so does $b_O\log n\leq c_\star\sigma_r$, since
    $b_O\leq\varsigma o$. The leverage also satisfies
    $b_\infty\leq C_2o\sqrt{K/n}$ with $C_2:=C_\varsigma\sqrt\beta/c_1$. Indeed,
    $\range(U_M)=\range(M)\subseteq\range(OZ)$, so $U_M=OZX$ for some
    $X\in\mathbb R^{K\times r}$ with columns orthogonal to $\ker(OZ)$, and
    $\|X\|_{2\to2}\leq1/\sigma_+(OZ)$ since
    $\|y\|_2=\|OZXy\|_2\geq\sigma_+(OZ)\|Xy\|_2$ for every $y$. Row $j$ of
    $O^\top U_M$ is $(O^\top OZ)_{j\cdot}X$, and row $j$ of $O^\top OZ$ has
    $\ell_1$ norm at most $\|O^\top O_{\cdot j}\|_1\leq g_1(O)$, so
    \[
        \|(O^\top U_M)_{j\cdot}\|_2\leq\frac{g_1(O)}{\sigma_+(OZ)}
        \leq\frac{C_\varsigma ob_O}{c_1o\sqrt{n_{\min}}}
        \leq\frac{C_\varsigma\sqrt\beta}{c_1}\,o\sqrt{\frac Kn},
    \]
    using $b_O\leq o$ and $n_{\min}\geq n/(\beta K)$. With $\vartheta=0$, divide the
    terms of \eqref{eq:eta-infty} by $\delta$, using $b_O\leq o$,
    $h_\star=n_{\min}^{-1/2}\leq\sqrt{\beta K/n}$,
    $b_\infty\leq C_2o\sqrt{K/n}$ and
    \eqref{eq:regime-signal}--\eqref{eq:regime-geometry}:
    \[
    \begin{gathered}
        \frac{\sqrt{\rho_n}(\sqrt K+\sqrt\ell)(o+b_O)}{\sigma_r\delta}
        \lesssim\frac{\beta K}{\sqrt d}+\beta\sqrt{\frac{K\ell}{d}},
        \qquad
        \frac{(b_\infty+b_Oh_\star)\ell}{\sigma_r\delta}
        \lesssim\frac{\beta^{3/2}K\ell}{d},\\
        \frac{\varepsilon_{\rm SNR}h_\star}{\delta}\lesssim\frac{\beta^{3/2}K}{\sqrt d},
        \qquad
        \frac{\varepsilon_{\rm SNR}^2b_\infty}{o\delta}\lesssim\frac{\beta^3K^2}{d},
    \end{gathered}
    \]
    These give \eqref{eq:spread-normalized-rate}, and the conclusion follows as in (i),
    including the Frobenius condition for approximate $k$-means, which
    depends on the design only through \eqref{eq:master-frobenius}.

    \emph{(iii) Generic column designs.} Put $s=b_O/o$. \Cref{thm:sbm-main} gives
    two bounds, and each yields exact recovery under one term of the minimum in
    \eqref{eq:sbm-generic-degree}. In both, the terms of \eqref{eq:eta-infty} that do
    not contain $\vartheta$ are bounded exactly as in (ii), which used only
    $b_O\leq o$ and $b_\infty\leq C_2o\sqrt{K/n}$; they require only
    $d\gtrsim K^2+K\log n$.

    \emph{The bound with $\vartheta=b_O\varepsilon_{\rm SNR}=so\,\varepsilon_{\rm SNR}$.}
    The two terms containing $\vartheta$ satisfy
    \[
        \frac{\vartheta\log n}{\sigma_r\delta}
        \leq\frac{s\,\varepsilon_{\rm SNR}\log n\sqrt{n_{\max}}}
        {c_\delta c_1\bar\sigma\rho_nn_{\min}}
        \leq\frac{s\,\beta^2K^{3/2}\sqrt n\log n}{c_\delta c_1^2\bar\sigma^2d^{3/2}},
        \qquad
        \frac{\varepsilon_{\rm SNR}^2\vartheta}{o\delta}
        \leq\frac{s\,\varepsilon_{\rm SNR}^3}{c_\delta}\sqrt{\frac{\beta n}{K}}
        \leq\frac{s\,\beta^{7/2}K^{5/2}\sqrt n}{c_\delta c_1^3\bar\sigma^3d^{3/2}}.
    \]
    For balanced communities ($\beta=O(1)$) these are below any fixed margin
    when $d\gtrsim s^{2/3}Kn^{1/3}(\log n)^{2/3}$ and
    $d\gtrsim s^{2/3}K^{5/3}n^{1/3}$, that is, when $d\gtrsim A_s$. The hypothesis
    $b_O\log n\leq c_\star\sigma_r$ holds because
    $b_O\log n/\sigma_r=s\,\varepsilon_{\rm SNR}\log n/\sqrt d\lesssim K\log n/d$.

    \emph{The bound with $\vartheta=0$.} Its two conditions hold when
    $d\gtrsim G_\varsigma$, since \eqref{eq:regime-signal} gives
    $\varsigma\varepsilon_{\rm SNR}\leq\varsigma\beta K/(c_1\bar\sigma\sqrt d)$ and
    $\varsigma o\log n/\sigma_r\leq\varsigma\beta K\log n/(c_1\bar\sigma d)$; the
    hypothesis $b_O\log n\leq c_\star\sigma_r$ then follows from $b_O\leq\varsigma o$.

    Either bound therefore gives $\eta_\infty/\delta<1/4$ under
    \eqref{eq:sbm-generic-degree}, and single linkage recovers all community labels up
    to a permutation, by \Cref{lem:single-linkage}. For approximate $k$-means, the
    Frobenius condition requires $d\geq c(\beta^3K^3+\beta^2K\log n)$ exactly as in (i),
    and \Cref{lem:kmeans}(ii) proves exact recovery.

    Since $K^{4/3}(\log n)^{2/3}\leq K^2+K\log n$ by the inequality of arithmetic and
    geometric means, $A_s\lesssim K^2+K\log n$ when $s\leq C_2\sqrt{K/n}$, and
    $G_\varsigma\lesssim K^2+K\log n$ when $\varsigma\leq C_\varsigma$; the single-linkage
    thresholds of (i) and (ii) are thus the two extreme cases of
    \eqref{eq:sbm-generic-degree}. For indicator queries of a partition of $[n]$ into
    $L$ equal-sized groups, $s=\sqrt{L/n}$ and $\varsigma=\sqrt{n/L}$, so
    $A_s=L^{1/3}(K^{5/3}+K(\log n)^{2/3})$ and
    $G_\varsigma=K^2n/L+K\sqrt{n/L}\log n$.
\end{proof}

\section{Proofs of the main guarantees for the DCSBM}
\label{app:dcsbm-proofs}
This appendix proves the first-order DCSBM results of \Cref{sec:recovery}:
\Cref{lem:pop-embedding}(ii), the master guarantee \Cref{thm:dcsbm-main}, the
regime corollary \Cref{cor:dcsbm-regimes}, and the weak-recovery bound of
% \Cref{rem:dcsbm-weak}.
Lemma~\ref{lem:dcsbm-weak-precise}. %% updated this
The argument is that of \Cref{app:sbm-proofs} with three
substitutions, and we prove only what the substitutions change.
\begin{enumerate}
    \item We replace $(OZ) B Z^\top$ by
    $M = (O \Theta Z) B (\Theta Z)^\top$. The embedding rows become $\theta_i
    \tilde\mu_{z_i}$, the incoherence becomes $\| V_M
    \|_{2\to\infty} \leq \theta_{\max} \tilde n_{\min}^{-1/2}$, and, when
    $\rk(O\Theta Z)=K$, the signal bound of \Cref{lem:signal} becomes
    $\sigma_j(M) \geq \sigma_{\min}(O \Theta Z)\, \rho_n \sigma_j(\bar B)
    \sqrt{\smash[b]{\tilde n_{\min}}}$ for $j\leq\rk(B)$, by the same
    three-factor argument with $\sigma_K((\Theta Z)^\top) = \min_a \tilde
    n_a^{1/2}$.
    \item We apply the concentration and leave-one-out bounds with
    $E=A-P$, $\mcN=OE-D$, and $M=OP^0$, as in the SBM.
    These bounds require only independent hollow Bernoulli noise, with variances bounded
    by $\rho_n\|\bar P\|_\infty$. Here
    $\|\bar P\|_\infty\leq\theta_{\max}^2\|\bar B\|_\infty$; the
    deterministic correction satisfies \Cref{lem:diag} with this bound.
    \item We normalize the rows before clustering, as in Algorithm~1.
    \Cref{lem:spherical} bounds the error after normalization using the original
    rowwise error and the smallest row norm $\delta_{\mathrm{row}}$ in
    \eqref{eq:dcsbm-rownorm}. At full rank, the normalized community vectors
    are orthonormal, with separation $\sqrt2$.
\end{enumerate}

\subsection{Population structure}

\begin{proof}[Proof of \Cref{lem:pop-embedding}(ii)]
    Now $M = O \Theta Z B Z^\top \Theta$, so $\range(M^\top) \subseteq \range(\Theta
    Z) = \mathcal{S}_\theta$ and $V_M = \Theta Z C$ for a matrix $C \in
    \mathbb{R}^{K \times r}$; the computation of part (i), with $\Theta Z$ in place
    of $Z$, gives $C^\top \widetilde N C = I_r$, so $\widetilde W := \widetilde
    N^{1/2} C \in \mathbb{O}_{K \times r}$. Then $(V_M)_{i \cdot} = \theta_i
    C_{z_i\cdot} = \theta_i\, \tilde\mu_{z_i}$ with $\tilde\mu_a := C_{a
    \cdot}$, and $\theta_i > 0$ preserves direction under normalization. For the
    distinctness criterion, write the compact SVD $M = U \Sigma V_M^\top$: column
    $j$ of $M$ equals both $U \Sigma\, \theta_j \tilde\mu_{z_j}$ and $\theta_j
    \nu_{z_j}$ (\Cref{lem:column-constancy}), so $\nu_a = U\Sigma\, \tilde\mu_a$
    with $U\Sigma$ injective; hence $\tilde\mu_a$ and $\tilde\mu_b$ are positively
    collinear if and only if $\nu_a$ and $\nu_b$ are, and the normalized values are
    pairwise distinct exactly under the ray condition of \Cref{prop:dcsbm}. If $r =
    K$, the degree-scaled row representation and its spherical normalization in
    \eqref{eq:dcsbm-embedding-rows} follow as in the proof of
    \cite[Lemma~4.1]{lei_consistency_2015}.
\end{proof}

\subsection{Proof of the weak-recovery bound}

\begin{lemma}[DCSBM weak recovery, precise form]
    \label{lem:dcsbm-weak-precise}
    Set $r=\rk(OP^0)$ and suppose $\Delta_\star>0$. Under the degree and
    total-noise gap conditions of \Cref{lem:weak-recovery-precise}, with
    probability at least $1 - n^{-\alpha}$,
    \begin{equation}
        \label{eq:dcsbm-general-rate}
        n_{\mathrm{err}}(\hat z,z)
        \;\leq\; C(\varepsilon_{\rm km})\,
        \frac{r\,\| \mcN \|_{2\to2}^2}{\sigma_r(OP^0)^2\,\Delta_\star^2},
    \end{equation}
    provided the right side is smaller than $n_{\min}$. At full rank ($r=K$),
    this gives
    \begin{equation}
        \label{eq:dcsbm-rate}
        n_{\mathrm{err}}(\hat z,z)
        \;\leq\; C(\varepsilon_{\rm km})K
        \left(\frac{\theta_{\max}}{\theta_{\min}}\right)^{\!2}n_{\max}
        \frac{\| \mcN \|_{2\to2}^2}{\sigma_K(OP^0)^2}.
    \end{equation}
\end{lemma}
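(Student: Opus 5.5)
The plan is to follow the proof of \Cref{lem:weak-recovery-precise}, with spherical projection inserted before the $k$-means step.

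First, work on the event of \Cref{lem:noise}, enlarged through \Cref{lem:diag} so that $\|\mcN\|_{2\to2}\leq\nu$. This event has probability at least $1-n^{-\alpha}$. Under the gap condition $\nu\leq(1-2^{-1/2})\sigma_r(M)$, \Cref{lem:sin-theta} supplies an orthogonal $Q$ with
\[
\|\widehat VQ-V_M\|_F\leq 2\sqrt{2r}\,\|\mcN\|_{2\to2}/\sigma_r(OP^0).
\]
By \Cref{lem:pop-embedding}(ii), the rows of $V_M$ are $\theta_i\tilde\mu_{z_i}$. Their norms are at least $\delta_{\mathrm{row}}>0$, which is implied by $\Delta_\star>0$.

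Next, transfer this Frobenius bound to the row-normalized embedding. Let $X$ be the normalized $\widehat V$ and $X^\ast$ the normalized $V_M$. Normalization commutes with right multiplication by $Q$ (\Cref{lem:orthonormal-facts}(iii)). So \Cref{lem:spherical}, applied row by row with $v=(V_M)_{i\cdot}\neq0$, gives
\[
\|XQ-X^\ast\|_F\leq\frac{2}{\delta_{\mathrm{row}}}\,\|\widehat VQ-V_M\|_F .
\]
This holds even for zero rows of $\widehat V$, by the fixed-unit-vector convention. The rows of $X^\ast$ take the $K$ values $\tilde\mu_a/\|\tilde\mu_a\|_2$, which are separated by $\delta_\theta$.

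Now apply \Cref{lem:kmeans}(i) to $XQ$ against $X^\ast$, with $\varrho\leq 4\sqrt{2r}\|\mcN\|_{2\to2}/(\delta_{\mathrm{row}}\sigma_r)$ and separation $\delta_\theta$. Since $\delta_{\mathrm{row}}\delta_\theta=2\Delta_\star$, this gives
\[
n_{\mathrm{err}}\leq C_{\rm km}\,\frac{\varrho^2}{\delta_\theta^2}\leq C_{\rm km}\,\frac{32\, r\,\|\mcN\|_{2\to2}^2}{4\,\sigma_r^2\Delta_\star^2},
\]
which is \eqref{eq:dcsbm-general-rate}. The proviso that the right side is below $n_{\min}$ is exactly \eqref{eq:kmeans-cond}.

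For the full-rank form \eqref{eq:dcsbm-rate}, use \eqref{eq:dcsbm-embedding-rows}. It gives $\delta_\theta=\sqrt2$ and $\|(V_M)_{i\cdot}\|_2=\theta_i\tilde n_{z_i}^{-1/2}$, so $\delta_{\mathrm{row}}\geq\theta_{\min}\tilde n_{\max}^{-1/2}$. Then bound
\[
\tilde n_{\max}\leq\theta_{\max}^2 n_{\max},
\]
which gives $\Delta_\star^{-2}\leq 2\,\theta_{\max}^2 n_{\max}/\theta_{\min}^2$. Substituting this with $r=K$ yields the stated form.

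The main obstacle is the row-normalization step. The bound on $n_{\mathrm{err}}$ must be charged through the smallest population row norm $\delta_{\mathrm{row}}$ uniformly over rows, which is why the final bound carries the factor $(\theta_{\max}/\theta_{\min})^2$. One must also check that the Procrustes $Q$ from Wedin's theorem is the same matrix that \Cref{lem:kmeans} uses; it is, since \Cref{lem:kmeans} allows an arbitrary orthogonal $Q$.
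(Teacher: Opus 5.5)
Your proposal is correct and follows the paper's proof essentially step for step. The steps are the Frobenius bound from \Cref{lem:sin-theta}, row-by-row spherical projection charged at $\delta_{\mathrm{row}}$, \Cref{lem:kmeans}(i) with separation $\delta_\theta$ (giving the constant $8C_{\rm km}$), and the full-rank substitution $\delta_\theta=\sqrt2$, $\delta_{\mathrm{row}}\geq\theta_{\min}\tilde n_{\max}^{-1/2}$, $\tilde n_{\max}\leq\theta_{\max}^2n_{\max}$. The only difference is cosmetic: you align $\widehat VQ$ to $V_M$ rather than $V_MQ$ to $\widehat V$, which is immaterial by \Cref{lem:procrustes}(v) and the rotation invariance of the $k$-means objective.
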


\begin{proof}[Proof of Lemma~\ref{lem:dcsbm-weak-precise}]
    Work on the event of \Cref{lem:noise}, use $M=OP^0$ and
    $\mcN=OE-D$ with \Cref{lem:diag}, and apply \Cref{lem:sin-theta} to obtain
    an orthogonal $Q$ with
    \[
        \varrho := \| \widehat V - V_M Q \|_F
        \leq \frac{2\sqrt{2r}\, \| \mcN \|_{2\to2}}{\sigma_r(OP^0)} .
    \]
    By \Cref{lem:pop-embedding}(ii), the population rows are $(V_M)_{i\cdot} =
    \theta_i\tilde\mu_{z_i}$, with norms at least $\delta_{\mathrm{row}}$, and the
    normalized population rows take the $K$ values
    $\tilde\mu_a/\|\tilde\mu_a\|_2$, with separation $\delta_\theta$. Let
    $\widehat V^+$ denote the row-normalized $\widehat V$ of step~2 of Algorithm~1
    (zero rows mapped to a fixed unit vector). The population row norms are
    invariant under $Q$, so by \Cref{lem:spherical} applied row by row with
    $u = \widehat V_{i\cdot}$, $v = (V_M Q)_{i\cdot}$,
    \[
        \tilde\varrho^2 := \sum_{i=1}^n
        \Big\|\widehat V^+_{i\cdot}-\frac{\tilde\mu_{z_i}}{\|\tilde\mu_{z_i}\|_2}Q\Big\|_2^2
        \leq \sum_{i=1}^n
        \frac{4\big\|\widehat V_{i\cdot}-(V_MQ)_{i\cdot}\big\|_2^2}
        {\|(V_M)_{i\cdot}\|_2^2}
        \leq \frac{4\varrho^2}{\delta_{\mathrm{row}}^2}.
    \]
    \Cref{lem:kmeans} (with $\varrho \leftarrow \tilde\varrho$ and
    $\delta = \delta_\theta$; its hypothesis \eqref{eq:kmeans-cond} is the
    ``smaller than $n_{\min}$'' proviso) gives
    \[
        n_{\mathrm{err}}
        \leq C_{\rm km}\frac{\tilde\varrho^2}{\delta_\theta^2}
        \leq C_{\rm km}\frac{\varrho^2}{\Delta_\star^2}
        \leq \frac{8r\,C_{\rm km}\| \mcN \|_{2\to2}^2}{\sigma_r(OP^0)^2\,\Delta_\star^2},
    \]
    which is \eqref{eq:dcsbm-general-rate}, since
    $\Delta_\star=\tfrac12\delta_{\mathrm{row}}\delta_\theta$. At full rank,
    \eqref{eq:dcsbm-embedding-rows} gives $\delta_\theta=\sqrt2$ and
    $\delta_{\mathrm{row}}\geq\theta_{\min}\tilde n_{\max}^{-1/2}$, so
    $\Delta_\star^2\geq\theta_{\min}^2/(2\tilde n_{\max})$, and
    $\tilde n_{\max} \leq \theta_{\max}^2 n_{\max}$ gives \eqref{eq:dcsbm-rate}.
\end{proof}

\subsection{Proof of the master theorem}

\begin{proof}[Proof of \Cref{thm:dcsbm-main}]
    By substitution~2, \Cref{loo:twoinfty-sharp} holds for the DCSBM verbatim:
    with probability at least $1 - 4n^{-\alpha}$ there is an orthogonal $Q$ with
    \[
        \| \widehat V Q - V_M \|_{2 \to \infty} \;\leq\; \eta_\infty^\theta,
    \]
    where $\eta_\infty^\theta$ substitutes the DCSBM incoherence $\| V_M
    \|_{2\to\infty} \leq \theta_{\max} \tilde n_{\min}^{-1/2}$ into
    \eqref{eq:loo-twoinfty-sharp}; the chain runs at the retained dimension $r$ as
    recorded in \Cref{sec:procrustes}. Suppose now that the exact-recovery
    condition $\eta_\infty^\theta < \Delta_\star/4$ of \Cref{thm:dcsbm-main}
    holds; by the definition of $\Delta_\star$ in \eqref{eq:dcsbm-rownorm} this
    reads
    \begin{equation}
        \label{eq:dcsbm-exact-cond}
        \eta_\infty^\theta<\tfrac18\delta_{\mathrm{row}}\delta_\theta,
    \end{equation}
    and since $\delta_\theta \leq 2$, it implies in particular
    $\eta_\infty^\theta < \delta_{\mathrm{row}}/4$.

    \emph{No degenerate rows.} For every $i$, $\| \widehat V_{i\cdot} \|_2 \geq \|
    (V_M)_{i\cdot} \|_2 - \eta_\infty^\theta \geq \tfrac34\,
    \delta_{\mathrm{row}} > 0$ by \eqref{eq:dcsbm-rownorm}, so step~2 of
    Algorithm~1 normalizes every row and the arbitrary-assignment convention is
    never invoked.

Write $V_M^+$ and
    $\widehat V^+$ for the row-normalized embeddings. For the same
    alignment $Q$, \Cref{lem:spherical} and \eqref{eq:master-frobenius} give
    \[
        \|\widehat V^+Q-V_M^+\|_F\leq\frac{2F_\star}{\delta_{\mathrm{row}}},
        \qquad
        \|\widehat V^+Q-V_M^+\|_{2\to\infty}
        \leq\frac{2\eta_\infty^\theta}{\delta_{\mathrm{row}}}
        <\frac{\delta_\theta}{4}.
    \]
    The normalized population rows take $K$ distinct values at separation
    $\delta_\theta>0$. The rowwise inequality alone proves exact
    recovery with single linkage by \Cref{lem:single-linkage}.
    For approximate $k$-means, since
    $\Delta_\star=\delta_{\mathrm{row}}\delta_\theta/2$,
    the first inequality in \eqref{eq:master-rounding} is precisely a
    sufficient Frobenius condition for these normalized embeddings.
    \Cref{lem:kmeans}(ii) now proves exact recovery after the prescribed
    recompute-means and nearest-centroid refinement.
\end{proof}

\subsection{Proof of the regime corollary}

\begin{proof}[Proof of \Cref{cor:dcsbm-regimes}]
    We first record the effect of the normalization $\sum_{i \in C_a} \theta_i =
    n_a$ of
    Section~\ref{subsec:models}:
    % \Cref{ass:standing}:
    block averages of $\theta$ are one, so
    $\theta_{\min} \leq 1 \leq \theta_{\max}$, and by Cauchy--Schwarz and the
    trivial bound,
    \[
        n_a \;\leq\; \tilde n_a = \sum_{i \in C_a} \theta_i^2 \;\leq\;
        \theta_{\max} n_a .
    \]
    With $\theta_{\max}/\theta_{\min} \leq \bar\kappa$ and balanced communities,
    the separation and signal conditions of \Cref{cor:dcsbm-regimes} give, up to
    constants depending on $\bar\kappa$ and $c_\delta$,
    \[
        \delta_{\mathrm{row}} \geq
        \frac{c_\delta\theta_{\min}}{\sqrt{\theta_{\max}n_{\max}}}\asymp\sqrt{K/n},
        \qquad
        \|V_M\|_{2\to\infty}\leq
        \frac{\theta_{\max}}{\sqrt{n_{\min}}}\asymp\sqrt{K/n},
    \]
    \[
        \sigma_r(OP^0)\geq c_1\bar\sigma\|O\|_{2\to2}\rho_n\tilde n_{\min}
        \asymp c_1\bar\sigma\|O\|_{2\to2}\rho_nn/K,
    \]
    using $\tilde n_{\max}\leq\theta_{\max}n_{\max}$ for the first bound. These
    are, up to constants depending on $\bar\kappa$, $c_1$, $\bar\sigma$ and
    $c_\delta$, the same three quantities that enter the proof of
    \Cref{cor:sbm-regimes} in the balanced case, with $\delta_{\mathrm{row}}$ in the
    role of $\delta$. The target inequality
    $\eta_\infty^\theta<\Delta_\star/4=\delta_{\mathrm{row}}\delta_\theta/8$ holds
    when $\eta_\infty^\theta<c_\delta\delta_{\mathrm{row}}/8$, which takes the role
    of $\eta_\infty < \delta/4$. Under
    hypothesis (ii), the leverage bound of the proof of \Cref{cor:sbm-regimes}(ii)
    holds with $O\Theta Z$ in place of $OZ$ and an extra factor
    $\theta_{\max}/\theta_{\min}\leq\bar\kappa$, since each row of $O^\top O\Theta Z$
    has $\ell_1$ norm at most $\theta_{\max}g_1(O)$ and
    $\tilde n_{\min}\geq\theta_{\min}^2n_{\min}$. Under
    hypothesis (i) or (ii), the computations in the proof of
    \Cref{cor:sbm-regimes}(i)--(ii) therefore bound
    $\eta_\infty^\theta/\delta_{\mathrm{row}}$ by a constant times
    \[
        \frac K{\sqrt d}+\sqrt{\frac{K\log n}{d}}
            +\frac{K\log n}{d}+\frac{K^2}{d},\qquad d=n\rho_n.
    \]
    Taking $d\geq c(K^2+K\log n)$ makes the normalized rowwise error
    smaller than the required fixed margin. By \Cref{lem:spherical}
    and \Cref{lem:single-linkage}, single linkage on the normalized
    rows is exact. For the approximate-$k$-means option,
    $F_\star^2\lesssim K^3/d$ and, by the separation conditions,
    $n_{\min}\Delta_\star^2\gtrsim1$, so the additional condition
    $d\geq c(K^3+K\log n)$ verifies \eqref{eq:master-rounding}.
    \Cref{thm:dcsbm-main} concludes. Under the generic-column hypothesis (iii), the
    estimates in the proof of \Cref{cor:sbm-regimes}(iii), for both values of
    $\vartheta$, apply with the same substitutions, giving the
    threshold \eqref{eq:sbm-generic-degree} for single linkage; approximate
    $k$-means additionally requires $d\geq cK^3$ for the Frobenius condition, as
    above. For unbalanced communities, the three population quantities displayed
    above carry the factors of \eqref{eq:regime-geometry} with $\tilde n_{\min},
    \tilde n_{\max}$ in place of $n_{\min}, n_{\max}$ (comparable up to
    $\bar\kappa$), and the $\beta$-dependence of \Cref{cor:sbm-regimes} follows
    unchanged.
\end{proof}

\section{Proofs for second-order (fan-out) designs}
\label{app:fanout-proofs}
This appendix states and proves the second-order results by extending the arguments
of \Cref{app:sbm-proofs} and \Cref{sec:procrustes} to the stacked queries.

\subsection{Master guarantee for second-order queries}

For a fixed weight $\omega>0$, use the matrices in
\eqref{eq:fanout-stack} and \eqref{eq:fanout-dcsbm-H}. For the
degree-corrected model, write
\[
    B^{(2)}_\theta:=B\widetilde N B,\qquad
    M^\theta_\omega:=H^\theta_\omega(\Theta Z)^\top.
\]
For the stochastic block model, the exact mean and deterministic correction are
\begin{equation}
 \label{eq:fanout-mean-correction}
 \bbe{A^{\langle2\rangle}}=\hol(P^2),\qquad
 \bbe{R_\omega}=M_\omega+B_{\rm det},\qquad
 B_{\rm det}=\begin{bmatrix}
 -O_1\diag(P^0)\\
 \omega O_2\{\hol(P^2)-(P^0)^2\}
 \end{bmatrix}.
\end{equation}
The correction and centered fluctuations are controlled in
\Cref{lem:fanout-pop,lem:fanout-noise} and \eqref{eq:loo2-notation}.
Write $(H_\omega^\star,M_\omega^\star)$ for $(H_\omega,M_\omega)$ in the SBM
and $(H_\omega^\theta,M_\omega^\theta)$ in the DCSBM, and set
$r_\omega:=\rk(H_\omega^\star)$. The stacked analogues of the design scalars are
\begin{equation}
    \label{eq:fanout-Omega}
    \Omega_\omega:=\|O_1\|_{2\to2}
        +\omega(1+\|\bar B\|_\infty)n\rho_n\|O_2\|_{2\to2},\qquad
    \varepsilon_\omega:=\frac{\Omega_\omega\sqrt{n\rho_n}}
        {\sigma_{r_\omega}(M_\omega^\star)},\qquad
    b_\infty^\omega:=\|\bar{\mcO}^\top U_\omega\|_{2\to\infty}.
\end{equation}
where $U_\omega$ collects the left singular vectors of $M_\omega^\star$ and $\bar{\mcO}
:= [\,O_1;\ \omega O_2 P\,]$ is the \emph{population effective design}. 
Correspondingly, the column
norm $b_O$ is replaced by
\begin{equation}
    \label{eq:bomega}
    b_\omega:=b_{O_1}+\omega\max_j\|O_2P_{\cdot j}\|_2
        +\omega b_{O_2}\sqrt{n\rho_n\log n}
        +\Omega_\omega\frac{b_{O_2}}{\|O_2\|_{2\to2}},
\end{equation}
where $b_{O_k}:=\max_j\|(O_k)_{\cdot j}\|_2$. Write
$\tau_\omega:=\omega(1+\|\bar B\|_\infty)n\rho_n$, so that
$\Omega_\omega=\|O_1\|_{2\to2}+\tau_\omega\|O_2\|_{2\to2}$, and $\tau_\omega=1$ for
the canonical weight. The leverage of the second battery and the Gram overlap
$g_1$ of \Cref{subsec:design-anatomy} enter through
\begin{equation}
    \label{eq:fanout-gram}
    b_{\infty,2}^\omega:=\tau_\omega\|O_2^\top U_{\omega,2}\|_{2\to\infty},
    \qquad
    g_\omega:=g_1(O_1)+\tau_\omega^2\,g_1(O_2),
    \qquad
    \varsigma_\omega:=\frac{g_\omega}{\Omega_\omega b_\omega},
\end{equation}
where $U_{\omega,2}$ denotes the last $L_2$ rows of $U_\omega$.
We require a weight to normalize the magnitude of the two sets of queries; the
canonical weight is $\omega \asymp (n\rho_n)^{-1}$, which equalizes the two blocks'
noise levels.

\begin{theorem}[Master guarantee, fan-out queries for SBM and DCSBM]
    \label{thm:fanout-main}
    \label{prop:fanout-dcsbm-main}
    Consider either the $\mathrm{SBM}(z,B)$ or $\mathrm{DCSBM}(z,B,\theta)$ 
    defined in
    Section~\ref{subsec:models}, %% updated this
    % \Cref{ass:standing},
    with $B=\rho_n\bar B$, batteries satisfying
    $L_1\vee L_2\leq n$, and a fixed weight $\omega>0$. Assume
    $r_\omega\geq1$. Run Algorithm~1
    on $R_\omega$ with $r_\omega$ components and the corresponding model flag.
    Fix $\alpha>1$. There are constants $C_0,c_\star,C$, depending only on
    $\|\bar P\|_\infty$ and $\alpha$, such that if $n\rho_n\geq C_0\log n$,
    $\varepsilon_\omega\leq c_\star$ and
    $b_\omega\log n\leq c_\star\sigma_{r_\omega}(M_\omega^\star)$, then with probability
    at least $1-4n^{-\alpha}$ there is an orthogonal $Q$ such that
    \begin{equation}
        \label{eq:fanout-row-bound}
        \|\widehat VQ-V_\omega\|_{2\to\infty}
        \leq\eta_\infty^{(2),\star},
    \end{equation}
    where $V_\omega$ contains the top-$r_\omega$ right singular vectors of
    $M_\omega^\star$ and
    \begin{equation}
        \label{eq:fanout-eta}
        \eta_\infty^{(2),\star}=C\Bigg[
        \frac{\sqrt{\rho_n}(\sqrt{r_\omega}+\sqrt{\log n})\,\Omega_\omega}
        {\sigma_{r_\omega}(M_\omega^\star)}
        +\frac{(b_\infty^\omega+b_{\infty,2}^\omega+b_\omega h_\star
        +\vartheta_\omega)\log n}
        {\sigma_{r_\omega}(M_\omega^\star)}
        +\varepsilon_\omega h_\star
        +\varepsilon_\omega^2
        \frac{b_\infty^\omega+b_{\infty,2}^\omega+\vartheta_\omega}{\Omega_\omega}\Bigg],
    \end{equation}
    with $\vartheta_\omega:=b_\omega\varepsilon_\omega$. If in addition
    $\varsigma_\omega\varepsilon_\omega\leq c_\star$ and
    $\varsigma_\omega\Omega_\omega\log n\leq c_\star\sigma_{r_\omega}(M_\omega^\star)$,
    the same holds with $\vartheta_\omega:=0$.
    Here $h_\star=n_{\min}^{-1/2}$ in the SBM and
    $h_\star\leq\theta_{\max}\tilde n_{\min}^{-1/2}$ in the DCSBM; write the
    corresponding errors as $\eta_\infty^{(2)}$ and
    $\eta_\infty^{(2),\theta}$.
    On the same event and for the same alignment,
    \[
        \|\widehat VQ-V_\omega\|_F\leq
        F_\omega:=2\sqrt{2r_\omega}\,C_0\varepsilon_\omega,
    \]
    with $C_0$ enlarged to bound the total stacked noise.
    Define $\Delta_\omega=\delta$ in the SBM and
    $\Delta_\omega=\tfrac12\delta_{\mathrm{row}}\delta_\theta$ in the DCSBM, where
    $\delta$, $\delta_{\mathrm{row}}$ and $\delta_\theta$ are the quantities of
    \Cref{subsec:sbm-main} computed for $M_\omega^\star$.
    Require $\Delta_\omega>0$. For the approximate-$k$-means option
    additionally require the Frobenius condition
    \begin{equation}
        \label{eq:fanout-rounding}
        C_{\rm km}\frac{F_\omega^2}{\Delta_\omega^2}<n_{\min}.
    \end{equation}
    For the SBM, exact recovery holds if
    $\eta_\infty^{(2)}<\delta/4$, where $\delta$ is the stacked-embedding
    centroid separation, positive exactly when the columns of $H_\omega$ are
    pairwise distinct and at least $\sqrt{2/n_{\max}}$ at full rank. For the
    DCSBM, exact recovery holds if
    \begin{equation}
        \label{eq:fanout-dcsbm-exact}
        \eta_\infty^{(2),\theta}
        <\frac{\Delta_\omega}{4}=\frac{\delta_{\mathrm{row}}\delta_\theta}{8}.
    \end{equation}
    Under the displayed rowwise condition, Algorithm~1 with single
    linkage recovers the partition exactly by \Cref{lem:single-linkage}.
    The approximate-$k$-means option with nearest-centroid refinement
    also recovers it under \eqref{eq:fanout-rounding}, by
    \Cref{lem:kmeans}. In either case $n_{\mathrm{err}}(\hat z,z)=0$.
\end{theorem}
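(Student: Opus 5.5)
The plan is to rerun the three-step scheme used for \Cref{thm:sbm-main}, now on the stacked observation $R_\omega$. Write
\[
R_\omega=M_\omega^\star+\mcN_\omega,\qquad
\mcN_\omega:=\begin{bmatrix}O_1E-D\\ \omega O_2\{A^{\langle2\rangle}-(P^0)^2\}\end{bmatrix},
\]
so that the diagonal correction $B_{\rm det}$ of \eqref{eq:fanout-mean-correction} is absorbed into the noise. The first step is the population structure. Since $M_\omega^\star=H_\omega^\star(\Theta Z)^\top$, with $\Theta=I$ in the SBM, the argument of \Cref{lem:pop-embedding} applies verbatim with $H_\omega^\star$ in place of $OZB$. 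It gives community-constant rows (respectively $\theta_i$-scaled rows), incoherence $h_\star$, and $\delta>0$ exactly when the columns of $H_\omega$ are pairwise distinct, with $\delta\geq\sqrt{2/n_{\max}}$ at full rank.

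The second step is linearization. Expanding $A^2=(P+E)^2$ gives
\[
A^{\langle2\rangle}-(P^0)^2=\hol(PE+EP)+\hol(E^2)+\hol(P^2)-(P^0)^2.
\]
Left-multiplied by $\omega O_2$, the pieces behave as follows.

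\begin{itemize}
\item The term $\omega O_2PE$ is an aggregated first-order noise with effective design $\omega O_2P$. This is why $\bar{\mcO}=[O_1;\omega O_2P]$ appears, and why $\|\omega O_2P\|_{2\to2}\lesssim\tau_\omega\|O_2\|_{2\to2}$.
\item The term $\omega O_2EP$ has column space inside $\range(O_2 E)$ and operator norm at most $\omega\|O_2\|_{2\to2}\|E\|_{2\to2}\,n\rho_n$.
\item The term $\omega O_2\hol(E^2)$ has norm $\lesssim\omega\|O_2\|_{2\to2}n\rho_n$ on the good event, using $\|E^2-\diag\|\lesssim n\rho_n$.
\item The deterministic parts are of order $\omega\|O_2\|\,n\rho_n^2$.
\end{itemize}

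With $\tau_\omega=\omega(1+\|\bar B\|_\infty)n\rho_n$, every piece is bounded by $C\Omega_\omega\sqrt{n\rho_n}$. This yields the stacked analogue of \eqref{eq:total-noise-event}, and Wedin (\Cref{lem:wedin}) then gives $F_\omega$.

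The third step, and the main obstacle, is the leave-one-out chain of \Cref{sec:procrustes} for this quadratic noise. Node $i$'s edges now enter $\mcN_\omega$ both through column $i$ and through every column $j$ with $A_{ij}=1$, via paths $j\to i\to k$. I would define $R^{(i)}_\omega$ by zeroing row and column $i$ of $E$ in both blocks; the second block uses $A^{(i)}$. The difference $\Delta_i$ is then no longer rank two. It is a sum of
\begin{itemize}
\item a rank-two part of the form \eqref{eq:loo-rank2}, with $O_1$ replaced by the effective columns $\omega O_2P_{\cdot j}$;
\item the term $\omega O_2[(E_{\cdot i}E_{\cdot i}^\top)+e_i(\cdots)+(\cdots)e_i^\top]$, which is rank at most three, with column norms controlled by $b_{O_2}$ and $\|E_{\cdot i}\|_2\lesssim\sqrt{n\rho_n}$.
\end{itemize}
The residual bounds of \Cref{loo:stability-full} go through with $b_O$ replaced by $b_\omega$ in \eqref{eq:bomega}. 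The extra summands $\omega b_{O_2}\sqrt{n\rho_n\log n}$ and $\Omega_\omega b_{O_2}/\|O_2\|$ are exactly the prices of these additional rank-one pieces, after a Bernstein bound on $O_2^\top$-weighted row sums of $E$.

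For Step~1 of \Cref{loo:concentration-full}, the conditional Bernstein argument needs the coefficients multiplying $E_{\cdot i}$ to be $\mcF_i$-measurable. The coefficient $O_2A^{(i)}$ is measurable, and the remaining $O_2P$ part is deterministic. This is what introduces the second leverage $b^\omega_{\infty,2}$ alongside $b_\infty^\omega$. The leverage branch of \Cref{lem:master-left} with $g_1$ replaced by $g_\omega$ gives the $\vartheta_\omega=0$ version. Self-bounding and absorption then proceed as in Step~3 under $\varepsilon_\omega,\ b_\omega\log n/\sigma_{r_\omega}\leq c_\star$, and \Cref{lem:master} yields \eqref{eq:fanout-eta}.

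Exact recovery then follows as in the proofs of \Cref{thm:sbm-main} and \Cref{thm:dcsbm-main}. Single linkage works via \Cref{lem:single-linkage} under $\eta<\Delta_\omega/4$, after row normalization and \Cref{lem:spherical} in the DCSBM. Approximate $k$-means works via \Cref{lem:kmeans} under \eqref{eq:fanout-rounding}.
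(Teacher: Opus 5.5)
You have the paper's architecture right: population embedding from $M_\omega^\star=H_\omega^\star(\Theta Z)^\top$, a stacked total-noise bound feeding Wedin for $F_\omega$, a leave-one-out chain on the $\mcF_i$-measurable effective design $\mcO^{(i)}=[O_1;\omega O_2(P+E^{(i)})]$, then \Cref{lem:master} and the rounding lemmas. But the leave-one-out step, which is where \eqref{eq:fanout-eta} is earned, is described incorrectly. The coupling $\Delta_i=R_\omega-R_\omega^{(i)}$ is not a bounded-rank update. Because $A^{\langle2\rangle}$ is hollowed, $\hol$ strips the diagonals of $P(E-E^{(i)})$, $(E-E^{(i)})P$ and $E^2-(E^{(i)})^2$. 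Besides rank-one terms, this leaves $-\omega\binom{0}{O_2(2D_{Pu}+D_{u^2})}$, with $u=E_{\cdot i}$, $D_{Pu}=\diag(P_{ji}u_j)_j$ and $D_{u^2}=\diag(u_j^2)_j$, which generically has $n-1$ nonzero diagonal entries. Your term $E_{\cdot i}E_{\cdot i}^\top$ is really $\hol(E_{\cdot i}E_{\cdot i}^\top)$, and you also omit the rank-one piece $\omega\binom{0}{O_2u}P_{i\cdot}$ coming from $EP$. The paper writes $\Delta_i$ as a rank-four part minus this diagonal remainder (\Cref{loo2:structure}). It bounds the remainder's Wedin residuals in Frobenius norm using $\|D_{Pu}\|_F+\|D_{u^2}\|_F\lesssim\sqrt{n\rho_n}$. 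On the left this costs a multiple of $\omega\sqrt{n\rho_n}\,\|O_2^\top\widehat U_2\|_{2\to\infty}$, which is one route by which the second-battery leverage $b_{\infty,2}^\omega$ enters.

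More seriously, the residuals do not go through under the substitution $b_O\mapsto b_\omega$. Three new quantities appear that your proposal never controls.
\begin{itemize}
\item The residuals of the low-rank part contain $\widehat U_2^\top O_2u_i$, $\widehat V^\top Pu_i$ and $\widehat V^\top E^{(i)}u_i$, i.e.\ rows of $EO_2^\top\widehat U_2$, $EP\widehat V$ and $E^2\widehat V$.
\item The random part $\omega O_2E^{(i)}$ of $\mcO^{(i)}$ puts $\|EO_2^\top\widehat U_2\|_{2\to\infty}$ into the leverage.
\item The Gram-overlap branch you invoke for $\vartheta_\omega=0$ involves the rows of $\omega^2PO_2^\top O_2\Psi(E)\widehat V$, hence $EP\widehat V$ and $E^2\widehat V$ again, not just $g_\omega\Gamma_{V,\omega}$.
\end{itemize}
Operator-norm bounds on these are too weak. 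For example, $\|\widehat V^\top Pu_i\|_2\le\|P\|_{2\to2}\|u_i\|_2\lesssim(n\rho_n)^{3/2}$ puts a multiple of $\tau_\omega b_{O_2}\sqrt{n\rho_n}$ into the coupling residual. That in turn puts $\varepsilon_\omega^2\tau_\omega b_{O_2}/\Omega_\omega$ into the row error. For $O_1=O_2=I_n$ with the canonical weight this is of order $\varepsilon_\omega^2$, whereas \eqref{eq:fanout-eta} has $\varepsilon_\omega^3+\varepsilon_\omega^2n_{\min}^{-1/2}$ in that slot. It would only yield a single-linkage threshold $n\rho_n\gtrsim K^{3/2}\sqrt n$.

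The paper controls these terms at fluctuation scale by conditional Bernstein bounds on five leave-one-out sums, with weights $\mcO^{(i)\top}\widehat U^{(i)}$, $\widehat V^{(i)}$, $O_2^\top\widehat U_2^{(i)}$, $P\widehat V^{(i)}$ and $E^{(i)}\widehat V^{(i)}$. This relies on the decoupling $(N_\omega)_{\cdot i}=\mcO^{(i)}u_i+r_i$; the remainder $r_i$ still depends on $u_i$ and is handled only through $\|r_i\|_2\lesssim\sqrt{\rho_n}\,\Omega_\omega$. The argument then closes a coupled self-bounding system in $\Upsilon_\omega=\Gamma_\omega+b_\omega\Gamma_{V,\omega}$ and $\max_i\Phi_i$ (\Cref{loo2:stability,loo2:leverage,loo2:concentration}), rather than the single inequality of the first-order chain. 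With these ingredients supplied, the rest of your plan matches the paper.
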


\subsection{The structural rank identity}

\begin{proof}[Proof of \Cref{prop:stacked-rank}]
    We prove the statement in four steps: the kernel identity, the deterministic
    upper bound on the rank, the observation that the full-rank locus is the
    complement of an algebraic set, and the construction of an explicit witness
    showing that this set is proper. The last step is where the earlier
    ``general position'' intuition is made rigorous.

    \emph{Step 1: kernel identity and a reformulation.}
    A vector $c \in \mathbb{R}^K$ lies in $\ker H$ iff $O_k Z B^{(k)} c = 0$ for every
    $k$, i.e.\ iff $B^{(k)} c \in \ker(O_k Z)$, i.e.\ iff $c \in (B^{(k)})^{-1}\ker(O_k
    Z)$ (each $B^{(k)}$ is invertible since $B$ is, which holds for generic $B$);
    intersecting over $k$
    gives $\ker H = \bigcap_{k=1}^m (B^{(k)})^{-1}\ker(O_k Z)$, and each subspace
    $(B^{(k)})^{-1}\ker(O_k Z)$ has
    dimension $K - r_k$, $r_k := \rk(O_k Z)$. For the rank formula it is convenient
    to transpose. Set $T := NB$ and $S_k := \range\big( (O_k Z)^\top \big) \subseteq
    \mathbb{R}^K$, $\dim S_k = r_k$. Since $(B^{(k)})^\top = (BN)^{k-1} B = N^{-1}
    T^k$ (induction on $k$: for $k = 1$ this reads $B = N^{-1}NB$, and multiplying
    by $BN$ on the left advances $k$), the column space of $H^\top$ is
    \[
        \sum_{k=1}^m (B^{(k)})^\top S_k = N^{-1} \sum_{k=1}^m T^k S_k,
        \qquad\text{so}\qquad
        \rk H = \dim \sum_{k=1}^m T^k S_k .
    \]
    As $B$ ranges over invertible symmetric matrices, $T = NB$ ranges over the
    invertible matrices that are self-adjoint for the inner product
    $\langle x, y \rangle_{N^{-1}} := x^\top N^{-1} y$; every such $T$ is
    diagonalizable with real nonzero eigenvalues and an
    $\langle\cdot,\cdot\rangle_{N^{-1}}$-orthonormal eigenbasis, and conversely
    every matrix of that form is $NB$ for the symmetric invertible $B = N^{-1}T$.

    \emph{Step 2: upper bound.} Since $T$ is invertible, $\dim T^k S_k = r_k$ and,
    for every admissible $B$,
    \[
        \rk H = \dim \sum_k T^k S_k
        \leq \min\Big( K, \sum_k r_k \Big) =: R.
    \]

    \emph{Step 3: the full-rank locus is Zariski-open.} The entries of $H$ are
    polynomials in the entries of $B$, so each $R \times R$ minor of $H$ is a
    polynomial in $B$, and $\{ B : \rk H(B) < R \}$ is the common zero set of all
    such minors: an algebraic subset of the symmetric matrices. If it is proper
    (i.e.\ some minor is not identically zero), then $\rk H(B) = R$ holds for all
    $B$ outside the zero set of a nonzero polynomial, which is closed and
    Lebesgue-null, as required by the definition of ``generic'' in
    \Cref{subsec:fanout}. A
    witness $B_0$ with $\rk H(B_0) = R$ establishes properness, and $B_0$ need not
    itself be a valid connectivity matrix: nonvanishing of a polynomial anywhere
    on $\mathrm{Sym}_K$ implies its zero set is null in every open subset,
    including the interior of $[0,1]^{K \times K}$.

    \emph{Step 4: witness.} It suffices to treat $\sum_k r_k \leq K$: otherwise
    replace each $S_k$ by a subspace $S_k' \subseteq S_k$ with $\sum_k \dim S_k' =
    K$; a witness for the reduced data gives $\rk H \geq \dim \sum_k T^k S_k' = K =
    R$. Fix a basis $s_{k,1}, \dots, s_{k,r_k}$ of each $S_k$, choose disjoint
    index blocks $J_1, \dots, J_m \subseteq [K]$ with $\abs{J_k} = r_k$, and build
    $T$ from an eigenbasis $v_j := N^{1/2} w_j$ with $w = (w_1, \dots, w_K) \in
    O(K)$ and eigenvalues constant on blocks,
    \[
        T v_j = \lambda_j v_j,
        \qquad
        \lambda_j := t^{k} \text{ for } j \in J_k,
        \qquad
        \lambda_j := 1 \text{ for } j \notin \textstyle\bigcup_k J_k,
    \]
    with $t > 1$ to be chosen large. Any such $T$ is admissible: $N^{-1}T =
    N^{-1/2} (w \diag(\lambda) w^\top) N^{-1/2}$ is symmetric and invertible.
    Writing $c_j(s) := w_j^\top N^{-1/2} s$ for the eigencoordinates of $s$, we
    have $T^k s = \sum_j \lambda_j^k c_j(s) v_j$, so, the basis $(v_j)$ being
    invertible, $\rk H$ equals the rank of the coordinate matrix $\big[
    \lambda_j^k\, c_j(s_{k,l}) \big]_{j \in [K],\, (k,l)}$.

    Choose $w$ so that, for every $k$, the $r_k \times r_k$ block $\Gamma_k :=
    \big[ c_j(s_{k,l}) \big]_{j \in J_k,\, l \in [r_k]}$ is invertible. This is
    possible: $\Gamma_k$ is singular exactly when the fixed $r_k$-dimensional
    subspace $N^{-1/2} S_k$ meets $\mathrm{span}\{ w_j : j \notin J_k \}$
    nontrivially, and for a uniformly random $w \in O(K)$ the latter is a uniformly
    random $(K - r_k)$-dimensional subspace, which intersects a fixed subspace of
    complementary dimension trivially with probability one; a union bound over the
    $m$ blocks leaves almost every $w$ admissible.

    Now consider the $R \times R$ submatrix $D(t)$ with rows $j \in \bigcup_k J_k$.
    Each permutation $\sigma$ in the Leibniz expansion of $\det D(t)$ contributes
    the $t$-exponent $E(\sigma) = \sum_{(k,l)} k \cdot b(\sigma(k,l))$, where
    $b(j) = k'$ for $j \in J_{k'}$: a pairing of the multiset of column powers $\{
    k \text{ with multiplicity } r_k \}$ with the identical multiset of row-block
    labels. If a pairing matches some $k_1 < k_2$ to labels $k_2' > k_1'$
    respectively, swapping increases $E$ by $(k_2 - k_1)(k_2' - k_1') > 0$, so
    every maximizing pairing is monotone, and the monotone pairing of two identical
    multisets matches equal values. Hence $E(\sigma) \leq E^* := \sum_k k^2 r_k$,
    with equality exactly when $\sigma$ maps each column block $k$ bijectively onto
    $J_k$; summing the signed products of those permutations factorizes over
    blocks, giving
    \[
        \det D(t) \;=\; t^{E^*} \prod_{k=1}^m \det \Gamma_k \;+\; O\big( t^{E^* - 1}
        \big) \;\neq\; 0 \quad \text{for all $t$ large enough,}
    \]
    since each $\det \Gamma_k \neq 0$. Thus $\rk H(B_0) \geq R$ for $B_0 :=
    N^{-1}T$ with $t$ large, which with Steps 2--3 completes the proof of
    \eqref{eq:stacked-rank}. The final claim of the proposition follows since
    the population right singular space equals the entire block subspace iff
    $\rk H=K$, equivalently $\sum_k r_k\geq K$ for generic $B$.
    Partition separation only requires pairwise distinct columns of $H$,
    as in \Cref{lem:fanout-embedding}.
\end{proof}

\subsection{Population structure and operator-norm noise at order two}

\begin{lemma}[Population structure of the fan-out matrix]
    \label{lem:fanout-pop}
    Under
    the model of Section~\ref{subsec:models}
    % \Cref{ass:standing},
    for the SBM with $B = \rho_n \bar B$,
    \begin{equation}
        \label{eq:fanout-pop}
        \bbe{O A^{\langle 2 \rangle}} = O Z B^{(2)} Z^\top + \Delta_2,
        \qquad \| \Delta_2 \|_{2 \to 2}
        \leq 4\,\|O\|_{2 \to 2}\,n\rho_n^2\,\|\bar B\|_\infty^2.
    \end{equation}
    for any $O \in \mathbb{R}^{L \times n}$. Moreover $\ker B^{(2)} = \ker B$
    (since $B^{(2)} c = 0$ implies $(Bc)^\top N (Bc) = 0$, hence $Bc = 0$, $N$
    being positive definite); in particular $B^{(2)} = BNB$ is invertible whenever
    $B$ is, and then $\rk(O Z B^{(2)}) = \rk(OZ)$.
\end{lemma}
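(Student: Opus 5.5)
Our plan is to compute $\mathbb{E}[A^{\langle2\rangle}]$ entrywise and compare it with $(P^0)^2 = ZB^{(2)}Z^\top$. The difference is then controlled by bounding $\|\hol(P^2)-(P^0)^2\|_{2\to2}$ and multiplying by $\|O\|_{2\to2}$.

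\textbf{Step 1: the mean of the fan-out matrix.} For $i\neq j$, $(A^2)_{ij}=\sum_{k}A_{ik}A_{kj}$, and the sum effectively runs over $k\notin\{i,j\}$ because $A$ is hollow. For such $k$, the edges $\{i,k\}$ and $\{k,j\}$ are distinct, so $A_{ik}$ and $A_{kj}$ are independent. Hence $\mathbb{E}[(A^2)_{ij}]=\sum_k P_{ik}P_{kj}=(P^2)_{ij}$. The diagonal is removed by $\hol$, so $\mathbb{E}[A^{\langle2\rangle}]=\hol(P^2)$, matching \eqref{eq:fanout-mean-correction}. Since $O$ is deterministic, $\mathbb{E}[OA^{\langle2\rangle}]=O\,\hol(P^2)$. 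We therefore set $\Delta_2:=O\{\hol(P^2)-(P^0)^2\}$, using $(P^0)^2=ZBZ^\top ZBZ^\top=ZBNBZ^\top=ZB^{(2)}Z^\top$.

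\textbf{Step 2: bounding the correction.} Write $P=P^0-G$ with $G:=\diag(P^0)$, whose entries are at most $\rho_n\|\bar B\|_\infty$. Then
\[
\hol(P^2)-(P^0)^2=-\diag(P^2)-P^0G-GP^0+G^2 .
\]
We bound each term in operator norm.
\begin{itemize}
\item $\|\diag(P^2)\|_{2\to2}=\max_i\sum_k P_{ik}^2\le n\rho_n^2\|\bar B\|_\infty^2$.
\item $\|P^0\|_{2\to2}\le\|P^0\|_{\infty\to\infty}\le n\rho_n\|\bar B\|_\infty$ by \Cref{lem:norm-facts}(iii), and $\|G\|_{2\to2}\le\rho_n\|\bar B\|_\infty$. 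So each cross term has norm at most $n\rho_n^2\|\bar B\|_\infty^2$.
\item $\|G^2\|_{2\to2}\le\rho_n^2\|\bar B\|_\infty^2\le n\rho_n^2\|\bar B\|_\infty^2$.
\end{itemize}
Summing the four terms gives $4n\rho_n^2\|\bar B\|_\infty^2$. Sub-multiplicativity with $\|O\|_{2\to2}$ then yields \eqref{eq:fanout-pop}.

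\textbf{Step 3: kernel and rank claims.} If $BNBc=0$, then $c^\top BNBc=(Bc)^\top N(Bc)=0$. Since $N\succ0$, this forces $Bc=0$, so $\ker B^{(2)}\subseteq\ker B$. The reverse inclusion is trivial. If $B$ is invertible, then $B^{(2)}$ is invertible as well, so right multiplication by it preserves rank and $\rk(OZB^{(2)})=\rk(OZ)$.

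The only delicate point is Step 1, where we must check that the hollowness of $A$ makes the two factors independent for every surviving index $k$. The rest is routine norm bookkeeping; the constant $4$ has slack.
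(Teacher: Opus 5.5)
Your proposal is correct and follows the paper's proof essentially step for step. Both compute $\mathbb{E}[A^{\langle 2\rangle}]=\hol(P^2)$ entrywise from independence of distinct edges and hollowness. Both then split $P=P^0-\diag(P^0)$ so that the three cross/square terms and $\diag(P^2)$ are each bounded by $n\rho_n^2\|\bar B\|_\infty^2$, and both obtain $\ker B^{(2)}=\ker B$ from the quadratic form $(Bc)^\top N(Bc)=0$ with $N\succ0$.
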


\begin{proof}
    For $i \neq j$, the edges $\{i,k\}$ and $\{k,j\}$ are distinct for every $k$, so
    by independence $\bbe{(A^2)_{ij}} = \sum_k \bbe{A_{ik}}\bbe{A_{kj}} = (P^2)_{ij}$
    (the terms $k \in \{i, j\}$ vanish since $P$ is hollow). Hence
    $\bbe{A^{\langle 2 \rangle}} = \hol(P^2)$. Writing $P = P^0 - D_0$ with
    $D_0 := \diag(P^0)$, $\| D_0 \|_{2\to2} \leq \rho_n \| \bar B \|_\infty$, and
    using $\| P^0 \|_{2\to2} \leq \max_i \sum_j P^0_{ij} \leq n \rho_n \| \bar B
    \|_\infty$,
    \[
        \| P^2 - (P^0)^2 \|_{2\to2}
        = \| -P^0 D_0 - D_0 P^0 + D_0^2 \|_{2\to2}
        \leq 3\, n \rho_n^2 \| \bar B \|_\infty^2,\qquad
        \| \diag(P^2) \|_{2\to2} = \max_i \sum_k P_{ik}^2
        \leq n \rho_n^2 \| \bar B \|_\infty^2.
    \]
    Since $(P^0)^2 = Z B (Z^\top Z) B Z^\top = Z B^{(2)} Z^\top$, the display
    \eqref{eq:fanout-pop} follows with $\Delta_2 := O(\hol(P^2) - (P^0)^2)$ and
    sub-multiplicativity. The kernel identity is the parenthetical computation in
    the statement, and invertibility for non-singular $B$ follows from it.
\end{proof}

\begin{lemma}[Second-order aggregated noise]
    \label{lem:fanout-noise}
    There is a constant $C_0=C_0(\|\bar P\|_\infty,\alpha)$ (the same as in \Cref{lem:noise}, enlarged if
    necessary) such that, if $n\rho_n \geq C_0 \log n$, then with probability at
    least $1 - n^{-\alpha}$, for any fixed $O \in \mathbb{R}^{L \times n}$,
    \begin{equation}
        \label{eq:fanout-noise}
        \big\| O \big( A^{\langle 2 \rangle} - \bbe{A^{\langle 2 \rangle}} \big)
        \big\|_{2 \to 2}
        \leq C_4\, \| O \|_{2 \to 2}\, (n\rho_n)^{3/2},
        \qquad C_4 := 6\, C_0^2\, (1 + \| \bar B \|_\infty)^{3/2}.
    \end{equation}
    The constant $C_4$ depends only on $C_0$ and $\| \bar B \|_\infty$; in
    particular it carries no hidden dependence on $n$, $K$, or the design $O$.
\end{lemma}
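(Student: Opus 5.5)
The plan is to decompose the centred fan-out matrix into pieces that reduce to the first-order spectral bound of \Cref{lem:bernoulli-spectral} (equivalently \Cref{lem:noise}), plus a diagonal piece controlled entrywise. Write $A=P+E$. Then
\[
    A^2-\bbe{A^2}=PE+EP+\big(E^2-\bbe{E^2}\big),
\]
and since $A^{\langle2\rangle}=A^2-D_A$ with $\bbe{A^{\langle2\rangle}}=\hol(P^2)$ by \Cref{lem:fanout-pop}, we have
\[
    A^{\langle2\rangle}-\bbe{A^{\langle2\rangle}}
    =\hol\big(PE+EP+E^2\big)-\hol\big(\bbe{E^2}\big).
\]
Here $\bbe{E^2}$ is diagonal, so its hollowed part vanishes. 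Hence the centred fan-out matrix equals $\hol(PE+EP+E^2)$. Using $\|\hol(X)\|_{2\to2}\leq2\|X\|_{2\to2}$ from \Cref{lem:norm-facts}(iv) and sub-multiplicativity with $\|O\|_{2\to2}$ pulled out, it suffices to bound $\|PE\|$, $\|EP\|$ and $\|E^2\|$ on a single event of probability at least $1-n^{-\alpha}$.

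On the event of \Cref{lem:bernoulli-spectral} with $d=n\rho_n(1+\|\bar B\|_\infty)$, we have $\|E\|_{2\to2}\leq C_0\sqrt d$. Since $P$ is symmetric and nonnegative, \Cref{lem:norm-facts}(iii) gives $\|P\|_{2\to2}\leq\max_i\sum_jP_{ij}\leq n\rho_n\|\bar B\|_\infty\leq d$. It follows that $\|PE\|+\|EP\|\leq 2C_0d^{3/2}$ and $\|E^2\|\leq C_0^2 d$. The first is already of order $(n\rho_n)^{3/2}$, and the second is smaller since $d\geq1$.

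Collecting terms, the bound is $2\|O\|_{2\to2}\big(2C_0d^{3/2}+C_0^2d\big)\leq 6C_0^2\|O\|_{2\to2}\,d^{3/2}$, using $C_0\geq1$ and $d\leq d^{3/2}$. Substituting $d^{3/2}=(1+\|\bar B\|_\infty)^{3/2}(n\rho_n)^{3/2}$ gives exactly $C_4$. The design enters only through $\|O\|_{2\to2}$, and only one spectral event is used, so the probability is $1-n^{-\alpha}$ and the bound holds for every fixed $O$ simultaneously.

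The main obstacle is the diagonal bookkeeping: showing that subtracting $D_A$ removes exactly the non-concentrating diagonal of $E^2$. The diagonal entries $\sum_kE_{ik}^2$ have mean of order $n\rho_n$ but are not centred. The identity above handles this only if hollowing is applied before taking norms, so I would verify it entrywise: for $i\neq j$ it is the expansion of $(A^2)_{ij}-(P^2)_{ij}$, and for $i=j$ both sides vanish. If the crude bound $\|E^2\|\leq\|E\|^2$ proves insufficient in some regime, it is still dominated by the $PE$ terms, so no finer decoupling argument should be needed.
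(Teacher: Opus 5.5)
Your proposal is correct and follows the paper's proof essentially line by line. You use the identity $A^{\langle2\rangle}-\mathbb{E}[A^{\langle2\rangle}]=\hol(PE)+\hol(EP)+\hol(E^2)$, which holds because $\mathbb{E}[E^2]$ is diagonal, together with $\|\hol(X)\|_{2\to2}\le2\|X\|_{2\to2}$, $\|P\|_{2\to2}\le n\rho_n\|\bar B\|_\infty$, and the single spectral event of \Cref{lem:bernoulli-spectral}. Your explicit bookkeeping with $C_0\ge1$ and $d\ge1$ checks the stated constant $C_4=6C_0^2(1+\|\bar B\|_\infty)^{3/2}$, which the paper only asserts by absorbing factors into $C_4$.
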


\begin{proof}
    Write $E = A - P$, so $A^2 = P^2 + PE + EP + E^2$ and $\bbe{A^2} = P^2 +
    \bbe{E^2}$, where $\bbe{E^2} = \diag\!\big(\sum_k \Var(A_{ik})\big)_i$ is
    diagonal. Applying the linear map $\hol$ to $A^2 - \bbe{A^2}$ kills every
    diagonal term:
    \begin{equation}
        \label{eq:fanout-noise-decomp}
        A^{\langle 2 \rangle} - \bbe{A^{\langle 2 \rangle}}
        = \hol(PE) + \hol(EP) + \hol(E^2) .
    \end{equation}
    For any square $X$, $\| \hol(X) \|_{2\to2} \leq \| X \|_{2\to2} + \max_i |X_{ii}|
    \leq 2 \| X \|_{2\to2}$. On the event of \Cref{lem:bernoulli-spectral}
    (probability $\geq 1 - n^{-\alpha}$, with $d = n\rho_n (1 + \| \bar B \|_\infty)$,
    using $n \max_{ij} P_{ij} \leq n\rho_n \| \bar B \|_\infty \leq d$ and
    $n\rho_n \geq C_0 \log n$), $\| E \|_{2\to2} \leq C_0 \sqrt{n\rho_n (1 + \| \bar
    B \|_\infty)}$, and $\| P \|_{2\to2} \leq n\rho_n \| \bar B \|_\infty$
    deterministically. Hence
    \[
        \| \hol(PE) + \hol(EP) + \hol(E^2) \|_{2\to2}
        \leq 2\big( 2 \| P \|_{2\to2} \| E \|_{2\to2} + \| E \|_{2\to2}^2 \big)
        \leq C_4\, (n\rho_n)^{3/2}.
    \]
    absorbing $\| \bar B \|_\infty$ factors into $C_4$ as stated. Multiplying by $O$
    and using sub-multiplicativity gives \eqref{eq:fanout-noise}.
\end{proof}

\begin{lemma}[Population embedding, stacked]
    \label{lem:fanout-embedding}
    Suppose
    the model defined in Section~\ref{subsec:models}
    % \Cref{ass:standing}
    and the stacked definitions of \ref{eq:fanout-stack} %%updated this
    hold, and set $r_\omega := \rk(H_\omega) =
    \rk(M_\omega)$. Then the conclusions of \Cref{lem:pop-embedding}(i) hold for
    the top-$r_\omega$ right singular vectors $V_\omega$ of $M_\omega$:
    community-constant rows \eqref{eq:embedding-rows}, the incoherence bound
    \eqref{eq:incoherence}, centroids pairwise distinct iff the columns of
    $H_\omega$ are, and when $r_\omega=K$, the closed forms \eqref{eq:sep}
    with $\range(V_\omega) = \mathcal{S}$.
\end{lemma}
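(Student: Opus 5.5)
The plan is to reduce everything to \Cref{lem:pop-embedding}(i), by observing that $M_\omega$ has exactly the same factorized form as $M=OP^0=(OZ)BZ^\top$, with a rank-$r_\omega$ left factor in place of $(OZ)B$. Concretely, by \eqref{eq:fanout-stack}, $M_\omega=H_\omega Z^\top$ with $H_\omega\in\mathbb R^{(L_1+L_2)\times K}$. First I record that $\rk(M_\omega)=\rk(H_\omega)$, since $Z^\top$ is surjective onto $\mathbb R^K$ ($Z$ has full column rank). Then $\range(M_\omega^\top)=\range(ZH_\omega^\top)\subseteq\mathcal S$, so $V_\omega=ZC$ for a unique $C\in\mathbb R^{K\times r_\omega}$. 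Orthonormality gives $C^\top NC=I_{r_\omega}$, so $W:=N^{1/2}C\in\mathbb O_{K\times r_\omega}$, and the rows are $(V_\omega)_{i\cdot}=n_{z_i}^{-1/2}W_{z_i\cdot}=\mu_{z_i}$. The identity $\sum_an_a\mu_a\mu_a^\top=W^\top W=I$ and the PSD argument then give $\|\mu_a\|_2\le n_a^{-1/2}$, i.e.\ \eqref{eq:incoherence}, verbatim as in the SBM proof; none of those steps used the specific form of the left factor.

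For distinctness, I write the compact SVD $M_\omega=U_\omega\Sigma_\omega V_\omega^\top$. Column $j$ of $M_\omega$ is then both $U_\omega\Sigma_\omega\mu_{z_j}$ and $(H_\omega)_{\cdot z_j}$. Because $U_\omega\Sigma_\omega$ is injective (\Cref{lem:orthonormal-facts}(iv)), $\mu_a=\mu_b$ if and only if $(H_\omega)_{\cdot a}=(H_\omega)_{\cdot b}$. I also note the quantitative analogue of \eqref{eq:delta-margin}: $\Sigma_\omega(\mu_a-\mu_b)=U_\omega^\top H_\omega(e_a-e_b)$, where the right side is norm-preserving because $H_\omega(e_a-e_b)\in\range(H_\omega)=\range(M_\omega)$. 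This holds whenever needed.

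For the full-rank case $r_\omega=K$, $W$ is square with orthonormal columns, hence orthogonal, so its rows are orthonormal too. This gives $\|\mu_a\|_2^2=1/n_a$ and $\|\mu_a-\mu_b\|_2^2=1/n_a+1/n_b\ge2/n_{\max}$, so $\|V_\omega\|_{2\to\infty}=n_{\min}^{-1/2}$. Also $\range(V_\omega)=\range(ZC)=\mathcal S$, since $C=N^{-1/2}W$ is invertible.

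I do not expect a real obstacle here. The only point needing care is the silent use, in \Cref{lem:pop-embedding}(i), of $\range(M^\top)\subseteq\mathcal S$ and of the distinctness criterion phrased through $w_{ab}$. Both must be rephrased in terms of columns of the generic left factor $H_\omega$, rather than through the symmetry of $B$, and the argument above does exactly that.
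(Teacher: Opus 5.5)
Your proposal is correct and matches the paper's proof. The paper observes that the proof of \Cref{lem:pop-embedding}(i) uses only the factorization $M=FZ^\top$, so substituting $F=H_\omega$ changes nothing, and it gets distinctness from the same full-column-rank (injectivity of $U_\omega\Sigma_\omega$) argument you use. You simply write out the steps the paper leaves implicit: the rank identity via surjectivity of $Z^\top$, $V_\omega=ZC$ with $W=N^{1/2}C$, incoherence, and the full-rank closed forms with $\range(V_\omega)=\mathcal S$.
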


\begin{proof}
    The proof of \Cref{lem:pop-embedding}(i) uses only that $M = F Z^\top$ for some
    matrix $F$ (there $F = (OZ)B$, whose columns $F_{\cdot a} = (OZ)B_{\cdot a}$
    index the communities); the single change is $F = H_\omega$, which is
    immaterial because no step of that proof inspects the internal structure of
    $F$. The distinctness criterion becomes $\mu_a = \mu_b \iff (H_\omega)_{\cdot
    a} = (H_\omega)_{\cdot b}$, by the same full-column-rank argument.
\end{proof}

\subsection{The weak-recovery bound}

Under the canonical weight $\omega = ((1+\|\bar B\|_\infty) n\rho_n)^{-1}$ of
\Cref{cor:fanout-regimes}, the stacked signals factor as $H_\omega = \rho_n \bar H$
and $H_\omega^\theta = \rho_n \bar H^\theta$, where
\begin{equation}
    \label{eq:fanout-Hbar}
    \bar H=
    \begin{bmatrix}
    O_1Z\bar B\\
    \dfrac{O_2Z\bar B(N/n)\bar B}{1+\|\bar B\|_\infty}
    \end{bmatrix},
    \qquad
    \bar H^\theta=
    \begin{bmatrix}
    O_1\Theta Z\bar B\\
    \dfrac{O_2\Theta Z\bar B(\widetilde N/n)\bar B}
    {1+\|\bar B\|_\infty}
    \end{bmatrix}.
\end{equation}

\begin{remark}[Weak recovery, stacked]
    \label{rem:fanout-weak}
    As at first order, the Frobenius route needs none of the local scalars.
    Suppose $n\rho_n\geq C_0\log n$, the canonical weight,
    $\Delta_\omega>0$ with $\Delta_\omega$ as in \Cref{thm:fanout-main}, and
    $C_3\Omega_\omega\sqrt{n\rho_n}\leq(1-2^{-1/2})
    \sigma_{r_\omega}(M_\omega^\star)$. Then, with probability at least
    $1-2n^{-\alpha}$,
    \begin{equation}
        \label{eq:fanout-general-rate}
        n_{\mathrm{err}}(\hat z,z)
        \leq C(\varepsilon_{\rm km})C_3^2\,
        \frac{r_\omega\,\Omega_\omega^2\,n\rho_n}
        {\sigma_{r_\omega}(M_\omega^\star)^2\,\Delta_\omega^2},
    \end{equation}
    provided the right side is smaller than $n_{\min}$. At $r_\omega=K$, the SBM
    satisfies
    \begin{equation}
        \label{eq:fanout-frac}
        \frac{n_{\mathrm{err}}}{n}
        \leq C(\varepsilon_{\rm km})C_3^2K\beta
        \frac{\Omega_\omega^2}{\sigma_K(\bar H)^2}\frac{1}{\rho_n},
        \qquad
        \frac{n_{\mathrm{err}}}{n}
        \lesssim\frac{\beta^2K^2}{c_1^2n\rho_n},
    \end{equation}
    the latter under \eqref{eq:fanout-well-cond}, provided $n$ times the applicable
    upper bound in \eqref{eq:fanout-frac} is smaller than $n_{\min}$, and the DCSBM
    satisfies
    \begin{equation}
        \label{eq:fanout-dcsbm-rate}
        n_{\mathrm{err}}(\hat z,z)
        \leq C(\varepsilon_{\rm km})C_3^2K
        \left(\frac{\theta_{\max}}{\theta_{\min}}\right)^{\!2}
        \frac{n_{\max}\Omega_\omega^2n\rho_n}
        {\sigma_K(M_\omega^\theta)^2},
    \end{equation}
    provided the right side is smaller than $n_{\min}$.
    Under these degree, gap, $k$-means, and conditioning assumptions,
    the error fraction vanishes if $K^2/(n\rho_n)\to0$, at fixed imbalance
    and bounded degree heterogeneity. For fixed $K$, this last condition is
    $n\rho_n\to\infty$; for growing $K$, the $k$-means condition remains an
    additional requirement.
\end{remark}

\begin{proof}[Proof of the bounds in \Cref{rem:fanout-weak}]
    Write $N_\omega := R_\omega - M_\omega$ and bound the two blocks separately.
    The first block is $O_1 A - O_1 Z B Z^\top = O_1 E - O_1 \diag(P^0)$, so by
    \Cref{lem:noise} and \eqref{eq:diag-bound}, with probability at least $1 -
    n^{-\alpha}$,
    \[
        \| O_1 A - O_1 Z B Z^\top \|_{2\to2}
        \leq \| O_1 \|_{2\to2} \big( C_0 \sqrt{n\rho_n} + \rho_n \| \bar B \|_\infty \big)
        \leq 2 C_0 \| O_1 \|_{2\to2} \sqrt{n\rho_n}\, (1 + \| \bar B \|_\infty).
    \]
    The second block is $\omega O_2 (A^{\langle 2 \rangle} - \bbe{A^{\langle 2
    \rangle}}) + \omega \Delta_2$ with $\Delta_2$ as in \Cref{lem:fanout-pop}; by
    \Cref{lem:fanout-noise} and \eqref{eq:fanout-pop}, with probability at least $1 -
    n^{-\alpha}$,
    \[
        \omega \big\| O_2 A^{\langle 2 \rangle} - O_2 Z B^{(2)} Z^\top \big\|_{2\to2}
        \leq \omega \| O_2 \|_{2\to2} \Big( C_4 (n\rho_n)^{3/2}
        + 4 n \rho_n^2 \| \bar B \|_\infty^2 \Big)
        \leq 2 C_4\, \omega \| O_2 \|_{2\to2} (n\rho_n)^{3/2}.
    \]
    the second inequality because $4 n\rho_n^2 \| \bar B \|_\infty^2 = 4 (n\rho_n)
    (\rho_n \| \bar B \|_\infty) \| \bar B \|_\infty \leq 4\, n\rho_n \| \bar B
    \|_\infty \leq C_4 (n\rho_n)^{3/2}$, using $\rho_n \| \bar B \|_\infty \leq 1$
    (entries of $B$ are probabilities) and $\sqrt{n\rho_n} \geq 1$. Since $\| [X; Y]
    \|_{2\to2} \leq \| X \|_{2\to2} + \| Y \|_{2\to2}$ for a stacked matrix, on the
    intersection of the two events (probability $\geq 1 - 2n^{-\alpha}$),
    \begin{equation}
        \label{eq:fanout-total-noise}
        \| N_\omega \|_{2\to2}
        \leq C_3 \Big( \| O_1 \|_{2\to2} + \omega (1 + \| \bar B \|_\infty)
        n\rho_n \| O_2 \|_{2\to2} \Big) \sqrt{n\rho_n}
        = C_3\, \Omega_\omega \sqrt{n\rho_n}.
    \end{equation}
    where
    \[
        C_3 := 2C_0(1 + \|\bar B\|_\infty)
        \vee \frac{2C_4}{1 + \| \bar B \|_\infty}.
    \]
    Under the stated SNR condition, \Cref{lem:sin-theta} applies with $(M, \mcN)$
    replaced by $(M_\omega,N_\omega)$ and gives an orthogonal matrix $Q$
    such that
    \[
        \varrho := \| \widehat V - V_\omega Q \|_F
        \leq \frac{2\sqrt{2r_\omega}\, \| N_\omega \|_{2\to2}}
        {\sigma_{r_\omega}(M_\omega)} .
    \]
    By \Cref{lem:fanout-embedding} the rows of $V_\omega$ take $K$ distinct values
    with separation $\delta=\Delta_\omega>0$. Applying
    \Cref{lem:kmeans}(i) under the stated $k$-means condition gives
    \[
        n_{\mathrm{err}}(\hat z, z)
        \leq C_{\rm km}\frac{\varrho^2}{\delta^2}
        \;\leq\; C(\varepsilon_{\rm km})\, C_3^2\,
        \frac{r_\omega\,\Omega_\omega^2\, n\rho_n}{\sigma_{r_\omega}(M_\omega)^2\,\delta^2},
    \]
    which is \eqref{eq:fanout-general-rate} for the SBM. At $r_\omega=K$,
    $\delta^2\geq2/n_{\max}$ by \Cref{lem:fanout-embedding}, so
    \[
        n_{\mathrm{err}}(\hat z, z)
        \leq C(\varepsilon_{\rm km})\, C_3^2\, K\,
        \frac{n_{\max}\, \Omega_\omega^2\, n\rho_n}{\sigma_K(M_\omega)^2} .
    \]
    For the fraction bound \eqref{eq:fanout-frac}: the canonical weight gives
    $H_\omega = \rho_n \bar H$ with $\bar H$ as in \eqref{eq:fanout-Hbar}, and the
    argument of \Cref{lem:signal} gives $\sigma_K(M_\omega) = \sigma_K(H_\omega
    Z^\top) \geq \sigma_K(H_\omega) \sqrt{n_{\min}} = \rho_n \sigma_K(\bar H)
    \sqrt{n_{\min}}$; substituting and dividing by $n$,
    \[
        \frac{n_{\mathrm{err}}}{n}
        \leq C(\varepsilon_{\rm km}) C_3^2 K\, \frac{n_{\max}}{n}\,
        \frac{\Omega_\omega^2\, n \rho_n}{\rho_n^2\, \sigma_K(\bar H)^2\, n_{\min}}
        = C(\varepsilon_{\rm km}) C_3^2 K\, \beta\,
        \frac{\Omega_\omega^2}{\sigma_K(\bar H)^2}\, \frac{1}{\rho_n} .
    \]
    Under \eqref{eq:fanout-well-cond} the middle factor is at most $1/(c_1^2
    n_{\min})$, and $\beta / (n_{\min} \rho_n) \leq \beta^2 K / (n\rho_n)$ gives
    the final bound of \eqref{eq:fanout-frac}.
\end{proof}

\subsection{The stacked leave-one-out chain}

We now extend \Cref{loo:concentration-full,loo:twoinfty-sharp} to the stacked
estimator. Assume throughout $L_1 \vee L_2 \leq n$; this
is used only to absorb the Bernstein dimension factor $\log(L_2 + 1)$ into
constants, and carrying it explicitly would replace $\log n$ by $\log(n \vee L_2)$
in the third piece of \eqref{eq:bomega}. Throughout this perturbation chain,
$r_\omega=\rk(M_\omega)\geq1$ is the retained rank; $K$ remains the
number of communities in the applications.
Write
\begin{equation}
    \label{eq:loo2-notation}
    \begin{aligned}
    R_\omega &= M_\omega + N_\omega, \qquad
    N_\omega = B_{\mathrm{det}} + S(E),\\
    S(E) &:= \begin{bmatrix} O_1 E \\ \omega\, O_2\, \Psi(E) \end{bmatrix},
    \qquad \Psi(E) := \hol(PE) + \hol(EP) + \hol(E^2).
    \end{aligned}
\end{equation}
Here the deterministic matrix
\[
    B_{\mathrm{det}}:=\begin{bmatrix}
        -O_1\diag(P^0)\\
        \omega O_2\{\hol(P^2)-(P^0)^2\}
    \end{bmatrix}
\]
satisfies, by \Cref{lem:fanout-pop},
\[
    \|B_{\mathrm{det}}\|_{2\to2}
    \leq\rho_n\|\bar P\|_\infty\|O_1\|_{2\to2}
       +4\omega\|O_2\|_{2\to2}n\rho_n^2\|\bar P\|_\infty^2.
\]
Define the leave-one-out estimator by
\[
    R_\omega^{(i)}:=M_\omega+B_{\mathrm{det}}+S(E^{(i)}),
\]
with $E^{(i)}$ as in \eqref{eq:loo-defs}. Let
$\widehat U_\omega^{(i)},\widehat V_\omega^{(i)}$ be its top-$r_\omega$
singular vectors;
 $\widehat U_\omega, \widehat V_\omega$ are those of
$R_\omega$, and $U_\omega, V_\omega$ those of $M_\omega$. The \emph{effective
design at node $i$} is the $\mcF_i$-measurable matrix
\begin{equation}
    \label{eq:loo2-effective}
    \mcO^{(i)} := \begin{bmatrix} O_1 \\ \omega\, O_2 (P + E^{(i)}) \end{bmatrix}
    \in \mathbb{R}^{(L_1 + L_2) \times n},\qquad
    \bar{\mcO} := \begin{bmatrix} O_1 \\ \omega\, O_2 P \end{bmatrix}.
\end{equation}
as in \Cref{subsec:fanout}. The scalars $\Omega_\omega, \varepsilon_\omega,
b_\infty^\omega, b_\omega$ are those of
\eqref{eq:fanout-Omega}--\eqref{eq:bomega}, and $\Gamma_\omega := \| N_\omega^\top
\widehat U_\omega \|_{2\to\infty}$.

\begin{lemma}[Good event, stacked]
    \label{loo2:event}
    Let $\mcE$ be the event of \Cref{loo:event} and let
    \[
        \mcB := \Big\{ \max_{j \in [n]} \| O_2 E_{\cdot j} \|_2 \leq C_B\,
        b_{O_2} \sqrt{n \rho_n \log n} \Big\},
    \]
    with $C_B = C_B(\| \bar P \|_\infty,\alpha)$. If $n\rho_n \geq C_0 \log n$ then
    \[
        \bbp{\mcE\cap\mcB}\geq1-2n^{-\alpha}-n^{-(\alpha+1)}.
    \]
    On $\mcE\cap\mcB$, for every $i,j\in[n]$ simultaneously,
    \begin{equation}
        \label{eq:loo2-event-consequences}
        \| S(E) \|_{2\to2} \vee \| S(E^{(i)}) \|_{2\to2}
        \leq C_{14}\, \Omega_\omega \sqrt{n\rho_n},\qquad
        \| \mcO^{(i)} \|_{2\to2} \leq C_{14}\, \Omega_\omega,\qquad
        \| \mcO^{(i)}_{\cdot j} \|_2 \leq C_{14}\, b_\omega.
    \end{equation}
    with $C_{14}=C_{14}(\|\bar P\|_\infty,\alpha)$.
    Writing $N_\omega^{(i)}:=R_\omega^{(i)}-M_\omega$, we also have
    \[
        \|N_\omega\|_{2\to2}\vee\|N_\omega^{(i)}\|_{2\to2}
        \leq2C_{14}\Omega_\omega\sqrt{n\rho_n},
    \]
    since
    \[
        \|B_{\mathrm{det}}\|_{2\to2}\leq C\sqrt{\rho_n}\Omega_\omega.
    \]
\end{lemma}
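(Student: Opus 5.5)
The plan is to prove the probability bound first and then derive the deterministic consequences on $\mcE\cap\mcB$.

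\textbf{Probability.} \Cref{loo:event} gives $\bbp{\mcE^c}\leq2n^{-\alpha}$. For $\mcB$, fix $j$ and write $O_2E_{\cdot j}=\sum_{k\neq j}E_{kj}(O_2)_{\cdot k}$. This is a sum of independent centred vectors in $\mathbb R^{L_2}$ with coefficient norms at most $b_{O_2}$ and entry variances at most $\rho_n\|\bar P\|_\infty$. Apply \Cref{lem:vector-bernstein}(i) with $\epsilon=n^{-(\alpha+2)}$, using $v\leq\rho_n\|\bar P\|_\infty\, n\, b_{O_2}^2$, $b\leq b_{O_2}$, and $\log((L_2+1)/\epsilon)\lesssim\log n$ (from $L_2\leq n$). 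This gives
\[
\|O_2E_{\cdot j}\|_2\lesssim b_{O_2}\sqrt{n\rho_n\log n}+b_{O_2}\log n\lesssim b_{O_2}\sqrt{n\rho_n\log n},
\]
where the last step uses $n\rho_n\geq C_0\log n$. A union bound over $j$ gives $\bbp{\mcB^c}\leq n^{-(\alpha+1)}$.

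\textbf{Operator norms of $S(E)$ and $S(E^{(i)})$.} Bound the two blocks separately via \Cref{lem:norm-facts}(v). The first block satisfies $\|O_1E\|_{2\to2}\leq C_0\|O_1\|_{2\to2}\sqrt{n\rho_n}$ on $\mcE$. For the second block, \Cref{lem:norm-facts}(iv) gives
\[
\|\Psi(E)\|_{2\to2}\leq2\bigl(2\|P\|_{2\to2}\|E\|_{2\to2}+\|E\|_{2\to2}^2\bigr)\lesssim(n\rho_n)^{3/2},
\]
as in \Cref{lem:fanout-noise}. Hence
\[
\omega\|O_2\Psi(E)\|_{2\to2}\lesssim\tau_\omega\|O_2\|_{2\to2}\sqrt{n\rho_n}.
\]
The same bounds hold for $E^{(i)}$, because $\|E^{(i)}\|_{2\to2}\leq\|E\|_{2\to2}$ (as in \Cref{loo:event}). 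This gives the first claim of \eqref{eq:loo2-event-consequences}.

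\textbf{Operator norm of $\mcO^{(i)}$.} Similarly,
\[
\|\mcO^{(i)}\|_{2\to2}\leq\|O_1\|_{2\to2}+\omega\|O_2\|_{2\to2}\bigl(\|P\|_{2\to2}+\|E\|_{2\to2}\bigr).
\]
Here $\|P\|_{2\to2}\leq n\rho_n\|\bar P\|_\infty$ and $\|E\|_{2\to2}\leq C_0\sqrt{n\rho_n}\leq C_0 n\rho_n$, so $\|\mcO^{(i)}\|_{2\to2}\lesssim\Omega_\omega$.

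\textbf{Column norms of $\mcO^{(i)}$.} Decompose
\[
\mcO^{(i)}_{\cdot j}=\bigl[(O_1)_{\cdot j};\ \omega O_2P_{\cdot j}+\omega O_2E^{(i)}_{\cdot j}\bigr].
\]
The first two pieces contribute $b_{O_1}+\omega\max_j\|O_2P_{\cdot j}\|_2$ directly. For the last piece, $E^{(i)}_{\cdot j}$ equals $E_{\cdot j}$ with its $i$-th entry removed when $j\neq i$, and equals zero when $j=i$. Thus, for $j\neq i$,
\[
O_2E^{(i)}_{\cdot j}=O_2E_{\cdot j}-E_{ij}(O_2)_{\cdot i}.
\]
On $\mcB$ the first term is at most $C_Bb_{O_2}\sqrt{n\rho_n\log n}$. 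The correction is at most $b_{O_2}$ because $|E_{ij}|\leq1$, and it is absorbed into the fourth term of $b_\omega$, since $\Omega_\omega b_{O_2}/\|O_2\|_{2\to2}\geq\tau_\omega b_{O_2}$. This matching of the removed-entry correction to a term of $b_\omega$ is the main bookkeeping obstacle.

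I expect the absorption to need $\omega\leq\tau_\omega$, which holds because $\tau_\omega=\omega(1+\|\bar B\|_\infty)n\rho_n\geq\omega$. If instead one insists on bounding the correction by $\omega b_{O_2}$, it is already dominated by $\omega b_{O_2}\sqrt{n\rho_n\log n}$. Either route yields $\|\mcO^{(i)}_{\cdot j}\|_2\lesssim b_\omega$ with constants depending only on $\|\bar P\|_\infty$ and $\alpha$.

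\textbf{The deterministic term and the total noise.} By \Cref{lem:fanout-pop},
\[
\|B_{\mathrm{det}}\|_{2\to2}\leq\rho_n\|\bar P\|_\infty\|O_1\|_{2\to2}+4\omega\|O_2\|_{2\to2}\,n\rho_n^2\|\bar P\|_\infty^2.
\]
Using $\rho_n\leq\sqrt{\rho_n}$ and $\omega n\rho_n\leq\tau_\omega$, this is at most $C\sqrt{\rho_n}\,\Omega_\omega$. Since $\sqrt{\rho_n}\leq\sqrt{n\rho_n}$, the triangle inequality with the bound on $S(E)$ gives
\[
\|N_\omega\|_{2\to2}\vee\|N^{(i)}_\omega\|_{2\to2}\leq2C_{14}\,\Omega_\omega\sqrt{n\rho_n},
\]
after enlarging $C_{14}$.
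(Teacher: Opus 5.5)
Your proposal is correct and follows the paper's proof essentially step for step. It uses the vector Bernstein bound with $\epsilon=n^{-(\alpha+2)}$ and a union bound over columns for $\mcB$, and blockwise operator-norm bounds via $\|\hol(X)\|_{2\to2}\leq2\|X\|_{2\to2}$ and $\|E^{(i)}\|_{2\to2}\leq\|E\|_{2\to2}$. It uses the column decomposition in which deleting entry $i$ costs at most $\omega b_{O_2}$, and bounds $B_{\mathrm{det}}$ from \Cref{lem:fanout-pop} using $\rho_n\leq\sqrt{\rho_n}$ and $\omega n\rho_n\leq\tau_\omega$. Either of your two ways of absorbing the $\omega b_{O_2}$ correction is valid; the paper simply lists it as an extra term and absorbs it into $C_{14}b_\omega$.
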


\begin{proof}
For each $j$, write
\[
    O_2E_{\cdot j}=\sum_k E_{kj}(O_2)_{\cdot k}.
\]
The summands are independent, centered, and bounded in norm by $b_{O_2}$.
Their variance proxy satisfies
\[
    \sum_k\Var(E_{kj})\|(O_2)_{\cdot k}\|_2^2
    \leq\rho_n\|\bar P\|_\infty\|O_2\|_F^2
    \leq\rho_n\|\bar P\|_\infty n b_{O_2}^2.
\]
Apply \Cref{lem:vector-bernstein}(i) with $\epsilon=n^{-(\alpha+2)}$ and
use $n\rho_n\geq\log n$ to absorb the $b_{O_2}\log n$ term.
A union bound over $j$ gives
\[
    \bbp{\mcB^c}\leq n^{-(\alpha+1)},\qquad
    \bbp{\mcE\cap\mcB}\geq1-2n^{-\alpha}-n^{-(\alpha+1)}.
\]
As in the proof of \Cref{lem:fanout-noise},
\[
    \|\Psi(E)\|_{2\to2}
    \leq2\bigl(2\|P\|_{2\to2}\|E\|_{2\to2}+\|E\|_{2\to2}^2\bigr),
    \qquad \|E^{(i)}\|_{2\to2}\leq\|E\|_{2\to2}.
\]
Consequently, on $\mcE$,
\[
\begin{aligned}
    \|S(E)\|_{2\to2}\vee\|S(E^{(i)})\|_{2\to2}
    &\leq C_0\|O_1\|_{2\to2}\sqrt{n\rho_n}
       +C\omega\|O_2\|_{2\to2}(n\rho_n)^{3/2}
     \leq C_{14}\Omega_\omega\sqrt{n\rho_n},\\
    \|\mcO^{(i)}\|_{2\to2}
    &\leq\|O_1\|_{2\to2}
       +\omega\|O_2\|_{2\to2}(\|P\|_{2\to2}+\|E\|_{2\to2})
     \leq C_{14}\Omega_\omega.
\end{aligned}
\]
For the column bounds, use
\[
    \mcO^{(i)}_{\cdot j}
    =\begin{bmatrix}(O_1)_{\cdot j}\\
       \omega(O_2P_{\cdot j}+O_2E^{(i)}_{\cdot j})\end{bmatrix},
    \qquad
    \|O_2E^{(i)}_{\cdot j}\|_2\leq\|O_2E_{\cdot j}\|_2+b_{O_2}.
\]
For $j\ne i$, the second inequality follows because only entry $i$ of the
column is removed; for $j=i$, the left side is zero. Thus, on $\mcB$,
\[
    \|\mcO^{(i)}_{\cdot j}\|_2
    \leq b_{O_1}+\omega\max_j\|O_2P_{\cdot j}\|_2
       +\omega C_Bb_{O_2}\sqrt{n\rho_n\log n}+\omega b_{O_2}
    \leq C_{14}b_\omega.
\]
Finally, the deterministic correction satisfies
\[
\begin{aligned}
    \|B_{\mathrm{det}}\|_{2\to2}
    &\leq\rho_n\|\bar P\|_\infty\|O_1\|_{2\to2}
       +4\omega\|O_2\|_{2\to2}n\rho_n^2\|\bar P\|_\infty^2\\
    &\leq C\sqrt{\rho_n}\Omega_\omega.
\end{aligned}
\]
Adding this bound proves the assertion for $N_\omega$ and $N_\omega^{(i)}$.
\end{proof}

\begin{lemma}[Column decoupling]
    \label{loo2:column}
    Write $u := E_{\cdot i}$. For every $i$,
    \begin{equation}
        \label{eq:loo2-column}
        (N_\omega)_{\cdot i} \;=\; \mcO^{(i)} u \;+\; r_i,
        \qquad
        r_i := \begin{bmatrix}
        -P^0_{ii}(O_1)_{\cdot i} \\[2pt]
        \begin{aligned}
        &\omega O_2E^{(i)}P_{\cdot i}
        -\omega(P_{i\cdot}u)(O_2)_{\cdot i}\\
        &\quad+\omega[O_2\{\hol(P^2)-(P^0)^2\}]_{\cdot i}
        \end{aligned}
        \end{bmatrix},
    \end{equation}
    On $\mcE$, uniformly in $i$,
    \[
        \|r_i\|_2\leq C_7\sqrt{\rho_n}\Omega_\omega,
        \qquad C_7=C_7(\|\bar P\|_\infty,\alpha).
    \]
    Conditional on $\mcF_i$, $\mcO^{(i)}$ is fixed and $u$ is independent
    of $\mcF_i$. The remainder $r_i$ can still depend on $u$ through
    \[
        -\omega(P_{i\cdot}u)(O_2)_{\cdot i}.
    \]
    We use only the displayed norm bound for $r_i$.
\end{lemma}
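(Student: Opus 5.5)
The plan is to derive the decomposition \eqref{eq:loo2-column} by direct algebra from $N_\omega=B_{\mathrm{det}}+S(E)$, splitting $E$ by the rank-two identity of \Cref{loo:structure}. The norm bound on $r_i$ then follows by bounding its four pieces on $\mcE$.

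\emph{Step 1: first block.} Column $i$ of $O_1E-O_1\diag(P^0)$ is $O_1u-P^0_{ii}(O_1)_{\cdot i}$. This is the first block of $\mcO^{(i)}u$ plus the first entry of $r_i$.

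\emph{Step 2: second block.} Here I compute column $i$ of $\Psi(E)=\hol(PE)+\hol(EP)+\hol(E^2)$, using $\hol(X)_{\cdot i}=X_{\cdot i}-X_{ii}e_i$ and
\[
E=E^{(i)}+ue_i^\top+e_iu^\top,\qquad u_i=E_{ii}=0,\qquad P_{ii}=0.
\]
\begin{itemize}
\item $(PE)_{\cdot i}=Pu$ with diagonal entry $P_{i\cdot}u$, so $\hol(PE)_{\cdot i}=Pu-(P_{i\cdot}u)e_i$.
\item $(EP)_{\cdot i}=E^{(i)}P_{\cdot i}+P_{ii}u+(u^\top P_{\cdot i})e_i$. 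Since $P_{ii}=0$, the diagonal entry is $u^\top P_{\cdot i}$, and $E^{(i)}P_{\cdot i}$ has zero $i$-th entry. Hence $\hol(EP)_{\cdot i}=E^{(i)}P_{\cdot i}$.
\item $(E^2)_{\cdot i}=Eu=E^{(i)}u+\|u\|_2^2e_i$, with diagonal entry $\|u\|_2^2$. Hence $\hol(E^2)_{\cdot i}=E^{(i)}u$.
\end{itemize}
Summing the three columns and multiplying by $\omega O_2$ gives
\[
\omega O_2(P+E^{(i)})u+\omega O_2E^{(i)}P_{\cdot i}-\omega(P_{i\cdot}u)(O_2)_{\cdot i}.
\]
Adding column $i$ of the second block of $B_{\mathrm{det}}$ yields exactly \eqref{eq:loo2-column}.

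\emph{Step 3: measurability.} $\mcO^{(i)}$ depends on $E$ only through $E^{(i)}$, which is $\mcF_i$-measurable and independent of $u$ by \Cref{loo:structure}. The only $u$-dependent piece of $r_i$ is $-\omega(P_{i\cdot}u)(O_2)_{\cdot i}$, as the statement records.

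\emph{Step 4: norm bound.} Throughout I use $\Omega_\omega\geq\|O_1\|_{2\to2}\geq b_{O_1}$ and $\Omega_\omega\geq\omega(1+\|\bar B\|_\infty)n\rho_n\|O_2\|_{2\to2}$. In the DCSBM, $\|\bar P\|_\infty\leq\theta_{\max}^2\|\bar B\|_\infty$, so the discrepancy is absorbed into the constant. The four pieces are bounded as follows.
\begin{itemize}
\item $P^0_{ii}\|(O_1)_{\cdot i}\|_2\leq\rho_n\|\bar P\|_\infty b_{O_1}\leq\|\bar P\|_\infty\sqrt{\rho_n}\,\Omega_\omega$.
\item $\omega\|O_2E^{(i)}P_{\cdot i}\|_2\leq\omega\|O_2\|_{2\to2}\,\|E\|_{2\to2}\,\|P_{\cdot i}\|_2\leq\omega\|O_2\|_{2\to2}\,C_0\sqrt{n\rho_n}\cdot\sqrt n\rho_n\|\bar P\|_\infty$. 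This is $C\sqrt{\rho_n}\,\Omega_\omega$, using $\|E^{(i)}\|_{2\to2}\leq\|E\|_{2\to2}$ on $\mcE$.
\item $\omega|P_{i\cdot}u|\,b_{O_2}\leq\omega\|O_2\|_{2\to2}\cdot\sqrt n\rho_n\|\bar P\|_\infty\cdot C_{10}\sqrt{n\rho_n}$, which is again $C\sqrt{\rho_n}\,\Omega_\omega$, using the column event of $\mcE$.
\item $\omega\|O_2\|_{2\to2}\cdot4n\rho_n^2\|\bar P\|_\infty^2\leq C\sqrt{\rho_n}\,\Omega_\omega$, by \Cref{lem:fanout-pop} and $\rho_n\leq\sqrt{\rho_n}$.
\end{itemize}
All four bounds hold uniformly in $i$ on $\mcE$, which gives $C_7$.

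The main obstacle is Step 2: the $e_i$-directed terms from the diagonal must cancel exactly under $\hol$ (notably $\|u\|_2^2e_i$ and $(u^\top P_{\cdot i})e_i$). The term $(P_{i\cdot}u)e_i$ survives and is what leaves the $u$-dependence in $r_i$.
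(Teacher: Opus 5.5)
Your proposal is correct and follows essentially the paper's proof. It uses the same split $E=E^{(i)}+ue_i^\top+e_iu^\top$ to compute column $i$ of $\hol(PE)$, $\hol(EP)$ and $\hol(E^2)$, and the same termwise bounds on $\mcE$ via $\omega n\rho_n\|O_2\|_{2\to2}\leq\Omega_\omega$. The only cosmetic difference is that you bound the two deterministic pieces of $r_i$ directly, whereas the paper bounds them through $\|B_{\mathrm{det}}\|_{2\to2}$ from \Cref{loo2:event}.
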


\begin{proof}
Since $u=E_{\cdot i}$ and $P_{ii}=u_i=0$,
\[
    E=E^{(i)}+ue_i^\top+e_iu^\top,\qquad
    EP_{\cdot i}=E^{(i)}P_{\cdot i}+(u^\top P_{\cdot i})e_i,\qquad
    Eu=E^{(i)}u+\|u\|_2^2e_i.
\]
The three terms in column $i$ of $\Psi(E)$ are therefore
\[
\begin{aligned}
    \hol(PE)_{\cdot i}&=Pu-(P_{i\cdot}u)e_i,\\
    \hol(EP)_{\cdot i}&=EP_{\cdot i}-(E_{i\cdot}P_{\cdot i})e_i
        =E^{(i)}P_{\cdot i},\\
    \hol(E^2)_{\cdot i}&=Eu-\|u\|_2^2e_i=E^{(i)}u.
\end{aligned}
\]
Thus setting the diagonal to zero removes the term quadratic in $u$, and
\[
    \Psi(E)_{\cdot i}
    =(P+E^{(i)})u+E^{(i)}P_{\cdot i}-(P_{i\cdot}u)e_i.
\]
Combining this identity with the first block and the deterministic correction,
\[
    (N_1)_{\cdot i}=O_1u-P^0_{ii}(O_1)_{\cdot i},\qquad
    \omega\bigl[O_2\{\hol(P^2)-(P^0)^2\}\bigr]_{\cdot i},
\]
gives \eqref{eq:loo2-column}. On $\mcE$, the terms in the remainder satisfy
\[
\begin{aligned}
    \|O_2E^{(i)}P_{\cdot i}\|_2
    &\leq\|O_2\|_{2\to2}\|E\|_{2\to2}\|P_{\cdot i}\|_2
     \leq C\|O_2\|_{2\to2}n\rho_n\sqrt{\rho_n},\\
    |P_{i\cdot}u|
    &\leq\|P_{i\cdot}\|_2\|u\|_2
     \leq Cn\rho_n\sqrt{\rho_n}.
\end{aligned}
\]
Since $\omega n\rho_n\|O_2\|_{2\to2}\leq\Omega_\omega$, these terms contribute
at most $C\sqrt{\rho_n}\Omega_\omega$ to $\|r_i\|_2$. The first block and the
remaining deterministic term are bounded by $\|B_{\mathrm{det}}\|_{2\to2}$,
which has the same bound by \Cref{loo2:event}.
\end{proof}

\begin{lemma}[Structure of the coupling, stacked]
    \label{loo2:structure}
    For every $i$, with $u = E_{\cdot i}$,
    \begin{equation}
        \label{eq:loo2-rank4}
        \begin{aligned}
        \Delta^\omega_i := R_\omega-R_\omega^{(i)} &= \Lambda_i-\Theta_i,
        \qquad \rk(\Lambda_i)\leq4,\\
        \Lambda_i &:= g_ie_i^\top+h_iu^\top
        +\omega\binom{0}{O_2u}P_{i\cdot}
        +\omega\binom{0}{(O_2)_{\cdot i}}(Pu+E^{(i)}u)^\top.
        \end{aligned}
    \end{equation}
    where
    \[
        \begin{aligned}
        g_i&:=\binom{O_1u}{\omega O_2((P+E^{(i)})u-2(P_{i\cdot}u)e_i)},\\
        h_i&:=\binom{(O_1)_{\cdot i}}{\omega O_2(P_{\cdot i}+u)},
        \qquad
        \Theta_i:=\omega\binom{0}{O_2(2D_{Pu}+D_{u^2})}.
        \end{aligned}
    \]
    Here
    \[
        D_{Pu}:=\diag(P_{ji}u_j)_j,\qquad D_{u^2}:=\diag(u_j^2)_j,
    \]
    and
    \[
        \|\Theta_i\|_{2\to2}
        \leq\omega\|O_2\|_{2\to2}(1+2\rho_n\|\bar P\|_\infty)
        \leq\frac{C\Omega_\omega}{n\rho_n}.
    \] Moreover $R_\omega^{(i)}$,
    and hence $\widehat U_\omega^{(i)}$, is $\mcF_i$-measurable, independent of
    $u$.
\end{lemma}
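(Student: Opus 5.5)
We compute the difference block by block, writing $u:=E_{\cdot i}$ and using, as in the proof of \Cref{lem:fanout-noise} and \Cref{loo2:column}, the decomposition
\[
E=E^{(i)}+\Delta,\qquad \Delta:=ue_i^\top+e_iu^\top .
\]
This decomposition holds because $E_{ii}=u_i=0$ and $E$ is symmetric. Since $M_\omega$ and $B_{\rm det}$ are common to $R_\omega$ and $R^{(i)}_\omega$, we have $\Delta^\omega_i=S(E)-S(E^{(i)})$. The first block is exactly \Cref{loo:structure} with $O_1$ in place of $O$, namely $O_1\Delta=(O_1u)e_i^\top+(O_1)_{\cdot i}u^\top$. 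This supplies the first coordinates of $g_i$ and of $h_i$.

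For the second block we expand $\Psi(E)-\Psi(E^{(i)})$ before hollowing, and only then apply the linear map $\hol$. The products $PE$ and $EP$ change by rank-two pieces:
\[
PE-PE^{(i)}=Pu\,e_i^\top+P_{\cdot i}u^\top,\qquad
EP-E^{(i)}P=uP_{i\cdot}+e_i(Pu)^\top .
\]
For the quadratic term we use $E^{(i)}e_i=0$, $e_i^\top E^{(i)}=0$ and $u_i=0$, which give
\[
E^2-(E^{(i)})^2=E^{(i)}u\,e_i^\top+e_i(E^{(i)}u)^\top+uu^\top+\|u\|_2^2e_ie_i^\top .
\]
The next step is to subtract the diagonals. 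The diagonal of $P_{\cdot i}u^\top$ is $D_{Pu}$, and that of $uP_{i\cdot}$ is also $D_{Pu}$ by the symmetry of $P$; the diagonal of $uu^\top$ is $D_{u^2}$. The remaining pieces, namely $Pu\,e_i^\top$, $e_i(Pu)^\top$, $E^{(i)}u\,e_i^\top$, $e_i(E^{(i)}u)^\top$ and $\|u\|_2^2e_ie_i^\top$, have diagonal supported at the single entry $(i,i)$. Their diagonal values are $(Pu)_i$ twice, $(E^{(i)}u)_i=0$ twice, and $\|u\|_2^2$.

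We then regroup the surviving terms by their right factors. Those with right factor $e_i^\top$ give the second coordinate of $g_i$. Those with right factor $u^\top$ combine $P_{\cdot i}u^\top$ and $uu^\top$ into $(P_{\cdot i}+u)u^\top$, giving the second coordinate of $h_i$. The term $uP_{i\cdot}$ gives the third summand, and $e_i(Pu+E^{(i)}u)^\top$ gives the fourth after multiplying by $\omega O_2$. The diagonal corrections become $-\Theta_i$. After these are removed, the $(i,i)$ corrections $-2(P_{i\cdot}u)e_i$ and the $\|u\|_2^2$ term must be absorbed into the $e_i^\top$ column. This is the step I expect to be the main obstacle, since it is only bookkeeping but every diagonal term must be accounted for exactly once. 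In particular, the $\|u\|_2^2e_ie_i^\top$ term from $uu^\top$ cancels against its own diagonal; this matches the cancellation of the quadratic term seen in \Cref{loo2:column}. Once the regrouping is complete, $\Lambda_i$ is a sum of four outer products, so $\rk(\Lambda_i)\leq4$.

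For the norm bound, $\Theta_i$ is $\omega O_2$ times a diagonal matrix, so \Cref{lem:norm-facts}(iv) applies. It gives $\|D_{Pu}\|_{2\to2}\leq\max_j P_{ji}|u_j|\leq\rho_n\|\bar P\|_\infty$ and $\|D_{u^2}\|_{2\to2}\leq1$, because $|u_j|\leq1$. Hence $\|\Theta_i\|_{2\to2}\leq\omega\|O_2\|_{2\to2}(1+2\rho_n\|\bar P\|_\infty)$. Next, $\omega\|O_2\|_{2\to2}=\tau_\omega\|O_2\|_{2\to2}/((1+\|\bar B\|_\infty)n\rho_n)\leq\Omega_\omega/(n\rho_n)$, and together with $\rho_n\leq1$ this gives the stated $C\Omega_\omega/(n\rho_n)$. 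Finally, $R^{(i)}_\omega=M_\omega+B_{\rm det}+S(E^{(i)})$ is a deterministic function of $E^{(i)}$, so the independence argument of \Cref{loo:structure} shows that $R^{(i)}_\omega$, and hence $\widehat U^{(i)}_\omega$, is $\mcF_i$-measurable and independent of $u$.
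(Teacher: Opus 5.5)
Your proposal is correct and follows the paper's proof essentially step for step. It uses the same decomposition $E-E^{(i)}=ue_i^\top+e_iu^\top$, expands the three products before hollowing, and does the same diagonal bookkeeping and regrouping by right factors; the paper writes the quadratic difference via $(Eu)e_i^\top$ and then substitutes $Eu=E^{(i)}u+\|u\|_2^2e_i$, which is your direct expansion. The norm bound and measurability arguments also match. One cosmetic slip: the $\|u\|_2^2e_ie_i^\top$ term comes from the cross term $e_iu^\top ue_i^\top$ of $\Delta^2$, not from $uu^\top$, although your conclusion that $\hol$ removes it is right.
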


\begin{proof}
Write $E-E^{(i)}=ue_i^\top+e_iu^\top$. The first block of
$S(E)-S(E^{(i)})$ is
\[
    O_1(E-E^{(i)})=(O_1u)e_i^\top+(O_1)_{\cdot i}u^\top.
\]
For the second block,
\[
\begin{aligned}
    P(E-E^{(i)})&=(Pu)e_i^\top+P_{\cdot i}u^\top,\\
    (E-E^{(i)})P&=uP_{i\cdot}+e_i(Pu)^\top,\\
    E^2-(E^{(i)})^2
    &=E(E-E^{(i)})+(E-E^{(i)})E^{(i)}\\
    &=(Eu)e_i^\top+uu^\top+e_i(E^{(i)}u)^\top.
\end{aligned}
\]
Here $Ee_i=u$ and $e_i^\top E^{(i)}=0$. The diagonal matrices removed from
these three expressions are
\[
\begin{aligned}
    \diag\{P(E-E^{(i)})\}&=(P_{i\cdot}u)e_ie_i^\top+D_{Pu},\\
    \diag\{(E-E^{(i)})P\}&=(P_{i\cdot}u)e_ie_i^\top+D_{Pu},\\
    \diag\{E^2-(E^{(i)})^2\}&=\|u\|_2^2e_ie_i^\top+D_{u^2}.
\end{aligned}
\]
Substitute $Eu=E^{(i)}u+\|u\|_2^2e_i$. Collecting the terms with
$e_i^\top$ gives $g_ie_i^\top$, and collecting those with $u^\top$ gives
$h_iu^\top$. In particular, the correction in $g_i$ simplifies to
\[
    -(2P_{i\cdot}u+\|u\|_2^2)e_i+\|u\|_2^2e_i
    =-2(P_{i\cdot}u)e_i.
\]
The remaining terms give \eqref{eq:loo2-rank4}. Finally, since $|u_j|\leq1$,
\[
\begin{aligned}
    \|\Theta_i\|_{2\to2}
    &\leq\omega\|O_2\|_{2\to2}\max_j|2P_{ji}u_j+u_j^2|\\
    &\leq\omega\|O_2\|_{2\to2}\max_j(2P_{ji}|u_j|+u_j^2)\\
    &\leq\omega\|O_2\|_{2\to2}(2\rho_n\|\bar P\|_\infty+1)
     \leq\frac{C\Omega_\omega}{n\rho_n},
\end{aligned}
\]
where the last inequality uses
\[
    \omega\|O_2\|_{2\to2}
    \leq\frac{\Omega_\omega}{(1+\|\bar B\|_\infty)n\rho_n}.
\]
The measurability and independence follow as in \Cref{loo:structure}.
\end{proof}

With \Cref{loo2:event,loo2:column,loo2:structure} in place, we adapt the argument
of \Cref{sec:procrustes}. The objects that replace their linear counterparts are
\[
    O \longmapsto \mcO^{(i)},\qquad
    OE_{\cdot i} \longmapsto (N_\omega)_{\cdot i},\qquad
    b_O \longmapsto b_\omega,\qquad
    \|O\|_{2\to2} \longmapsto \Omega_\omega,\qquad
    b_\infty\longmapsto b_\infty^\omega+b_{\infty,2}^\omega,\qquad
    g_1(O)\longmapsto g_\omega.
\]
Three features prevent a literal invocation of
\Cref{loo:stability-full,loo:concentration-full}: the effective design
$\mcO^{(i)}$ is random but $\mcF_i$-measurable, the coupling contains the
diagonal remainder $\Theta_i$ in addition to the rank-four term $\Lambda_i$, and
the residuals of $\Lambda_i$ involve three second-order row quantities.
\Cref{loo2:stability,loo2:leverage,loo2:concentration} treat these features.

Within this subsection write $d:=n\rho_n$, $\Omega:=\Omega_\omega$,
$\varepsilon:=\varepsilon_\omega$, $\tau:=\tau_\omega$,
$\sigma:=\sigma_{r_\omega}(M_\omega)$, $h:=\|V_\omega\|_{2\to\infty}$,
$\gamma_1:=\sqrt{\rho_n}(\sqrt{r_\omega}+\sqrt{\log n})$ and $u_i:=E_{\cdot i}$,
and let $\widehat U_{\omega,2}$ denote the last $L_2$ rows of $\widehat U_\omega$.
Besides $\Gamma_\omega$, the row-noise quantities are
\begin{equation}
    \label{eq:loo2-scalars}
    \Gamma_{V,\omega}:=\|E\widehat V_\omega\|_{2\to\infty}
    =\max_i\|\widehat V_\omega^\top u_i\|_2,
    \qquad
    \Upsilon_\omega:=\Gamma_\omega+b_\omega\Gamma_{V,\omega},
\end{equation}
and the three auxiliary quantities
\begin{equation}
    \label{eq:loo2-aux}
    \Gamma_{2,\omega}:=\|EO_2^\top\widehat U_{\omega,2}\|_{2\to\infty},
    \qquad
    \Gamma_{P,\omega}:=\|EP\widehat V_\omega\|_{2\to\infty},
    \qquad
    \Gamma_{E,\omega}:=\|E^2\widehat V_\omega\|_{2\to\infty}.
\end{equation}
Since row $i$ of $E$ is $u_i^\top$, these are the maxima over $i$ of
$\|\widehat U_{\omega,2}^\top O_2u_i\|_2$, $\|\widehat V_\omega^\top Pu_i\|_2$ and
$\|\widehat V_\omega^\top Eu_i\|_2$. The estimated leverage and incoherence are
\[
    \widehat b_\omega:=\max_{i',j}\|\widehat U_\omega^\top\mcO^{(i')}_{\cdot j}\|_2,
    \qquad
    \widehat b_{0,\omega}:=\|\bar{\mcO}^\top\widehat U_\omega\|_{2\to\infty},
    \qquad
    \widehat b_{2,\omega}:=\|O_2^\top\widehat U_{\omega,2}\|_{2\to\infty},
    \qquad
    \widehat h_\omega:=\|\widehat V_\omega\|_{2\to\infty},
\]
with leave-one-out versions
$b_\omega^{(i)}:=\|\mcO^{(i)\top}\widehat U_\omega^{(i)}\|_{2\to\infty}$,
$b_{2,\omega}^{(i)}:=\|O_2^\top\widehat U_{\omega,2}^{(i)}\|_{2\to\infty}$ and
$h_\omega^{(i)}:=\|\widehat V_\omega^{(i)}\|_{2\to\infty}$. The residuals of the
coupling of \Cref{loo2:structure} are
$\Phi_i:=\|(\Delta_i^\omega)^\top\widehat U_\omega\|_F
+\|\Delta_i^\omega\widehat V_\omega\|_F$ and $\Phi:=\max_i\Phi_i$, and the two
smallness parameters are
\[
    \lambda_\omega:=\frac{b_\omega\log n}{\sigma},\qquad
    \lambda_{\varsigma,\omega}:=\frac{\varsigma_\omega\Omega\log n}{\sigma}
    =\frac{g_\omega\log n}{b_\omega\sigma}.
\]
Throughout we use
\begin{equation}
    \label{eq:loo2-weights}
    \omega d\leq\tau,\qquad
    \omega d\|O_2\|_{2\to2}\leq\Omega,\qquad
    \tau b_{O_2}\leq b_\omega\leq C_b\Omega,\qquad C_b:=3+\|\bar P\|_\infty.
\end{equation}
The first two follow from the definitions of $\tau$ and $\Omega$, and
$\tau b_{O_2}\leq\Omega b_{O_2}/\|O_2\|_{2\to2}\leq b_\omega$. For the last, the four
pieces of \eqref{eq:bomega} satisfy $b_{O_1}\leq\|O_1\|_{2\to2}\leq\Omega$,
$\omega\|O_2P_{\cdot j}\|_2\leq\omega\|O_2\|_{2\to2}\rho_n\sqrt n\|\bar P\|_\infty
\leq\|\bar P\|_\infty\Omega$, $\omega b_{O_2}\sqrt{d\log n}\leq\omega d\|O_2\|_{2\to2}
\leq\Omega$ (as $\log n\leq d$) and $\Omega b_{O_2}/\|O_2\|_{2\to2}\leq\Omega$. In \Cref{loo2:stability,loo2:leverage,loo2:concentration}
and their proofs we drop the subscript $\omega$ from $\widehat U_\omega$,
$\widehat V_\omega$, $U_\omega$, $V_\omega$ and from the quantities above.

\begin{lemma}[Leave-one-out stability, stacked]
    \label{loo2:stability}
    On $\mcE\cap\mcB$, suppose $\varepsilon\leq1/(16C_{14})$. Then for every
    $i\in[n]$ there are orthogonal $Q_i,Q_i'\in\mathbb R^{r_\omega\times r_\omega}$ with
    \begin{equation}
        \label{eq:loo2-stability}
        \|\widehat U-\widehat U^{(i)}Q_i\|_F\vee\|\widehat V-\widehat V^{(i)}Q_i'\|_F
        \leq\frac{2\sqrt2}{\sigma}\Phi_i,
    \end{equation}
    \begin{equation}
        \label{eq:loo2-residual}
        \Phi_i\leq\Gamma+Cb_\omega\Gamma_V
        +C\sqrt d\,\big(\widehat b+\omega\Gamma_2+\tau\widehat b_2\big)
        +C\Omega\sqrt d\,\widehat h
        +\omega b_{O_2}(\Gamma_P+\Gamma_E)+C\Omega\sqrt{\rho_n}.
    \end{equation}
    Consequently, for every $i\in[n]$,
    \begin{equation}
        \label{eq:loo2-stability-rows}
        b^{(i)}\leq\widehat b+\frac{2\sqrt2\,C_{14}b_\omega}{\sigma}\Phi,
        \qquad
        b_2^{(i)}\leq\widehat b_2+\frac{2\sqrt2\,b_{O_2}}{\sigma}\Phi,
        \qquad
        h^{(i)}\leq\widehat h+\frac{2\sqrt2}{\sigma}\Phi.
    \end{equation}
    The constant $C$ depends only on $\|\bar P\|_\infty$ and $\alpha$.
\end{lemma}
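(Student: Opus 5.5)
The argument follows the proof of \Cref{loo:stability-full}, with the stacked coupling of \Cref{loo2:structure} in place of the rank-two coupling. Apply \Cref{loo:wedin-res} with $R_\omega^{(i)}$ in the role of $M$ and $\Delta_i^\omega=\Lambda_i-\Theta_i$ in the role of $N$, so that their sum is $R_\omega$.

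\emph{Gap.} Weyl's inequality gives $\sigma_{r_\omega}(R_\omega)\geq\sigma-\|N_\omega\|_{2\to2}$. Since $\rk M_\omega=r_\omega$, it also gives $\sigma_{r_\omega+1}(R_\omega^{(i)})\leq\|N_\omega^{(i)}\|_{2\to2}$. \Cref{loo2:event} bounds both noise norms by $2C_{14}\Omega\sqrt d=2C_{14}\varepsilon\sigma\leq\sigma/8$ when $\varepsilon\leq1/(16C_{14})$. Hence the gap is $\gamma\geq 3\sigma/4$, and \Cref{loo:wedin-res} yields \eqref{eq:loo2-stability} with constant $\sqrt2\cdot4/3\leq2\sqrt2$.

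\emph{Residuals.} This is the main work. Bound $\|(\Delta_i^\omega)^\top\widehat U\|_F+\|\Delta_i^\omega\widehat V\|_F$ term by term, using $\|xy^\top\|_F=\|x\|_2\|y\|_2$.

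\textbf{The $\Theta_i$ part.} It contributes at most $2\sqrt{r_\omega}\|\Theta_i\|_{2\to2}\leq C\sqrt{r_\omega}\Omega/d$. This is at most $C\Omega\sqrt{\rho_n}$ when $r_\omega\leq K\leq n$; if that fails, use $\|\Theta_i\widehat V\|_F\leq\|\Theta_i\|_F$ with the diagonal structure instead.

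\textbf{The term $g_ie_i^\top$.} Write $g_i=(N_\omega)_{\cdot i}-(\text{terms of }r_i)$ and use \Cref{loo2:column}. Then $\|\widehat U^\top g_i\|_2\leq\Gamma+C\Omega\sqrt{\rho_n}$. The transpose contribution $\|g_i\|_2\|\widehat V_{i\cdot}\|_2\leq C\Omega\sqrt d\,\widehat h$, since $\|g_i\|_2\lesssim\|\mcO^{(i)}\|_{2\to2}\|u\|_2+C\Omega\sqrt{\rho_n}$ on $\mcE$.

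\textbf{The term $h_iu^\top$.} Its right residual is $\|h_i\|_2\|\widehat V^\top u\|_2\leq Cb_\omega\Gamma_V$, using $\|h_i\|_2\leq\|\mcO^{(i)}_{\cdot i}\|_2+\omega b_{O_2}\cdots\leq Cb_\omega$ on $\mcB$. Its left residual is $\|u\|_2\|\widehat U^\top h_i\|_2$. Here $\widehat U^\top h_i$ splits into $\widehat U^\top\mcO^{(i)}_{\cdot i}$, which is at most $\widehat b$, and a correction $\omega\widehat U_2^\top O_2(u-E^{(i)}_{\cdot i})$. Since $E^{(i)}_{\cdot i}=0$, the correction is $\omega\widehat U_2^\top O_2u$, bounded by $\omega\Gamma_2$. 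With $\|u\|_2\leq C\sqrt d$ this gives $C\sqrt d(\widehat b+\omega\Gamma_2)$.

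\textbf{The two $O_2$ rank-one terms.} For $\omega\binom{0}{O_2u}P_{i\cdot}$, the left residual is $\omega\|\widehat U_2^\top O_2u\|_2\|P_{i\cdot}\|_2\leq\omega\Gamma_2\rho_n\sqrt n$, which is absorbed into $C\sqrt d\,\omega\Gamma_2$. The right residual is $\omega\|O_2u\|_2\|P_{i\cdot}\widehat V\|_2\leq\omega\|O_2\|_{2\to2}\sqrt d\cdot d\rho_n^{1/2}\cdots\lesssim\Omega\sqrt{\rho_n}$. For the last term, the left residual is $\omega\|\widehat U_2^\top(O_2)_{\cdot i}\|_2\|Pu+E^{(i)}u\|_2$. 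With $\|\widehat U_2^\top (O_2)_{\cdot i}\|_2\leq\widehat b_2$ and $\|(P+E^{(i)})u\|_2\leq Cd\sqrt d$, this gives $C\sqrt d\,\tau\widehat b_2$. The right residual is $\omega b_{O_2}\|\widehat V^\top(P+E^{(i)})u\|_2\leq\omega b_{O_2}(\Gamma_P+\Gamma_E+\|u\|_2^2\widehat h)$. Here we use $E^{(i)}u=Eu-\|u\|^2e_i$. The extra piece $\omega b_{O_2}d\,\widehat h\leq\Omega\widehat h$ is absorbed.

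Collecting these bounds gives \eqref{eq:loo2-residual}.

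\emph{Rows.} As in \Cref{loo:stability-full}, $b^{(i)}=\|\mcO^{(i)\top}\widehat U^{(i)}Q_i\|_{2\to\infty}$. Its rows are $\mcO^{(i)\top}_{\cdot j}\widehat U^{(i)}Q_i$, which differ from $\mcO^{(i)\top}_{\cdot j}\widehat U$ by at most $\|\mcO^{(i)}_{\cdot j}\|_2\|\widehat U-\widehat U^{(i)}Q_i\|_F\leq C_{14}b_\omega\cdot2\sqrt2\Phi/\sigma$, and $\|\mcO^{(i)\top}_{\cdot j}\widehat U\|_2\leq\widehat b$ by definition of $\widehat b$. The bounds for $b_2^{(i)}$ (column norm $b_{O_2}$) and $h^{(i)}$ follow identically.

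The main obstacle is the bookkeeping of the left residual of the $(O_2)_{\cdot i}$ term and the $\Theta_i$ term: each must land at scale $\Omega\sqrt d$ times a leverage quantity or $\Omega\sqrt{\rho_n}$, using \eqref{eq:loo2-weights}, rather than picking up an extra factor $\sqrt d$.
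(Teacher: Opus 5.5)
Your gap estimate, your residual bounds for the four rank-one pieces of $\Lambda_i$, and your row consequences agree with the paper's proof. Your bound $C\sqrt d\,\omega\Gamma_2$ for the left residual of $\omega\binom{0}{O_2u}P_{i\cdot}$ differs from the paper's direct $C\Omega\sqrt{\rho_n}$, but it is also fine.

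The gap is the $\Theta_i$ term. The estimate $\|\Theta_i^\top\widehat U\|_F+\|\Theta_i\widehat V\|_F\le2\sqrt{r_\omega}\|\Theta_i\|_{2\to2}\le C\sqrt{r_\omega}\,\Omega/d$ is true. Your claim that it is at most $C\Omega\sqrt{\rho_n}$ is not: it requires $\sqrt{r_\omega}\lesssim d\sqrt{\rho_n}=n\rho_n^{3/2}$. That fails in the sparse regime $n\rho_n\asymp\log n$ covered by the lemma, even when $r_\omega=1$, because then $d\sqrt{\rho_n}\asymp(\log n)^{3/2}/\sqrt n\to0$. The excess is not dominated by the other terms of \eqref{eq:loo2-residual} either. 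For instance, since $\widehat h\geq\sqrt{r_\omega/n}$, the term $C\Omega\sqrt d\,\widehat h$ can be as small as order $\Omega\sqrt{r_\omega\rho_n}$ for incoherent $\widehat V$, which is below $\sqrt{r_\omega}\,\Omega/d$ whenever $d\sqrt{\rho_n}<1$. Your fallback $\|\Theta_i\widehat V\|_F\le\|\Theta_i\|_F\le C\omega\|O_2\|_{2\to2}\sqrt d\le C\Omega/\sqrt d$ fails for the same reason, since it would need $\rho_n\gtrsim n^{-1/2}$. You also give no alternative for the left residual.

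The missing idea is to keep the row structure of a diagonal scaling, $\|\diag(x)X\|_F\le\|x\|_2\|X\|_{2\to\infty}$ from \Cref{lem:norm-facts}(iv), rather than passing to operator norms. Write $\Theta_i=\omega\binom{0}{O_2D_i}$ with $D_i:=2D_{Pu}+D_{u^2}$. Because $|u_j|\le1$, we have $|(D_i)_{jj}|\le(1+2\rho_n\|\bar P\|_\infty)|u_j|$, so $\|D_i\|_F\le C\|u\|_2\le C\sqrt d$ on $\mcE$. Then, using $\omega\le\omega d\le\tau$, the left residual is $\|\Theta_i^\top\widehat U\|_F=\omega\|D_iO_2^\top\widehat U_2\|_F\le\omega\|D_i\|_F\,\widehat b_2\le C\tau\sqrt d\,\widehat b_2$. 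For the right residual, $\|\Theta_i\widehat V\|_F\le\omega\|O_2\|_{2\to2}\|D_i\|_F\,\widehat h\le C\Omega\widehat h/\sqrt d\le C\Omega\sqrt d\,\widehat h$. Both terms already appear in \eqref{eq:loo2-residual}. This is exactly how the paper handles $\Theta_i$, and with it the rest of your argument goes through.
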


\begin{proof}
    \emph{Gap.} On $\mcE\cap\mcB$, \Cref{loo2:event} gives
    $\|N\|_{2\to2}\vee\|N^{(i)}\|_{2\to2}\leq2C_{14}\varepsilon\sigma\leq\sigma/8$,
    so Weyl's inequality gives
    $\sigma_{r_\omega}(R_\omega)-\sigma_{r_\omega+1}(R_\omega^{(i)})
    \geq\sigma-4C_{14}\varepsilon\sigma\geq\tfrac34\sigma$.
    \Cref{loo:wedin-res} with $R_\omega^{(i)}$ and $\Delta_i^\omega$ in the roles of
    $M$ and $N$, so that $M+N=R_\omega$, then bounds both Procrustes distances by
    $\sqrt2\,\Phi_i/(3\sigma/4)\leq2\sqrt2\,\Phi_i/\sigma$, which is
    \eqref{eq:loo2-stability}.

    \emph{Left residual.} Write $u=u_i$. Transposing \eqref{eq:loo2-rank4},
    \[
        \Lambda_i^\top\widehat U
        =e_i(\widehat U^\top g_i)^\top+u(\widehat U^\top h_i)^\top
        +\omega P_{i\cdot}^\top(\widehat U_2^\top O_2u)^\top
        +\omega(P+E^{(i)})u\,\big((O_2)_{\cdot i}^\top\widehat U_2\big),
    \]
    so $\|\Lambda_i^\top\widehat U\|_F$ is at most
    $\|\widehat U^\top g_i\|_2+\|u\|_2\|\widehat U^\top h_i\|_2
    +\omega\|P_{i\cdot}\|_2\|\widehat U_2^\top O_2u\|_2
    +\omega\|(P+E^{(i)})u\|_2\|\widehat U_2^\top(O_2)_{\cdot i}\|_2$.
    By \Cref{loo2:column} and the definition of $g_i$,
    $g_i=N_{\cdot i}-r_i-2\omega(P_{i\cdot}u)\binom{0}{(O_2)_{\cdot i}}$. On $\mcE$,
    $\|u\|_2\leq C_{10}\sqrt d$ and
    $|P_{i\cdot}u|\leq\|P_{i\cdot}\|_2\|u\|_2\leq C d\sqrt{\rho_n}$, so
    \eqref{eq:loo2-weights} and $\|r_i\|_2\leq C_7\sqrt{\rho_n}\Omega$ give
    $\|\widehat U^\top g_i\|_2\leq\Gamma+C\Omega\sqrt{\rho_n}$.
    Since $E^{(i)}_{\cdot i}=0$, the second block of $\mcO^{(i)}_{\cdot i}$ is
    $\omega O_2P_{\cdot i}$, so $h_i=\mcO^{(i)}_{\cdot i}+\omega\binom{0}{O_2u}$ and
    $\|\widehat U^\top h_i\|_2\leq\widehat b+\omega\Gamma_2$. The third term is at
    most $\omega\rho_n\sqrt n\|\bar P\|_\infty\|O_2\|_{2\to2}C_{10}\sqrt d
    \leq C\Omega\sqrt{\rho_n}$ by \eqref{eq:loo2-weights}. For the fourth,
    $\|(P+E^{(i)})u\|_2\leq(\|P\|_{2\to2}+\|E\|_{2\to2})\|u\|_2\leq Cd^{3/2}$ and
    $\|\widehat U_2^\top(O_2)_{\cdot i}\|_2\leq\widehat b_2$, so it is at most
    $C\omega d^{3/2}\widehat b_2\leq C\tau\sqrt d\,\widehat b_2$. Finally
    $\Theta_i^\top\widehat U=\omega D_iO_2^\top\widehat U_2$ with the diagonal
    matrix $D_i:=2D_{Pu}+D_{u^2}$. Each row of $D_iX$ is a multiple of the
    corresponding row of $X$, so
    $\|D_iO_2^\top\widehat U_2\|_F\leq\|D_i\|_F\,\widehat b_2$, and
    $|(D_i)_{jj}|\leq(1+2\rho_n\|\bar P\|_\infty)|u_j|$ gives
    $\|D_i\|_F\leq C\sqrt d$. Hence
    $\|\Theta_i^\top\widehat U\|_F\leq C\omega\sqrt d\,\widehat b_2
    \leq C\tau\sqrt d\,\widehat b_2$, as $\omega\leq\omega d\leq\tau$.

    \emph{Right residual.} Similarly,
    \[
        \Lambda_i\widehat V
        =g_i\widehat V_{i\cdot}+h_i(u^\top\widehat V)
        +\omega\binom{0}{O_2u}(P_{i\cdot}\widehat V)
        +\omega\binom{0}{(O_2)_{\cdot i}}\big(u^\top(P+E^{(i)})\widehat V\big).
    \]
    Here $\|g_i\|_2\leq\|\mcO^{(i)}u\|_2+2\omega|P_{i\cdot}u|\,b_{O_2}\leq C\Omega\sqrt d$,
    so the first term is at most $C\Omega\sqrt d\,\widehat h$. On $\mcB$,
    $\|h_i\|_2\leq C_{14}b_\omega+\omega C_Bb_{O_2}\sqrt{d\log n}\leq Cb_\omega$ by
    \eqref{eq:bomega}, so the second term is at most $Cb_\omega\Gamma_V$. The third
    is at most $\omega\|O_2\|_{2\to2}C_{10}\sqrt d\,\rho_n\sqrt n\|\bar P\|_\infty
    \leq C\Omega\sqrt{\rho_n}$. For the fourth, $u^\top P\widehat V$ is row $i$ of
    $EP\widehat V$, so $\|\widehat V^\top Pu\|_2\leq\Gamma_P$. Since
    $E-E^{(i)}=ue_i^\top+e_iu^\top$ and $u_i=0$, we have
    $u^\top E^{(i)}=u^\top E-\|u\|_2^2e_i^\top$, whence
    $\|\widehat V^\top E^{(i)}u\|_2\leq\|(E^2\widehat V)_{i\cdot}\|_2
    +\|u\|_2^2\widehat h\leq\Gamma_E+C_{10}^2d\,\widehat h$. The fourth term is
    therefore at most $\omega b_{O_2}(\Gamma_P+\Gamma_E)+C\omega db_{O_2}\widehat h
    \leq\omega b_{O_2}(\Gamma_P+\Gamma_E)+C\Omega\widehat h$ by
    \eqref{eq:loo2-weights}. Finally
    $\|\Theta_i\widehat V\|_F=\omega\|O_2D_i\widehat V\|_F
    \leq\omega\|O_2\|_{2\to2}\|D_i\|_F\,\widehat h\leq C\Omega\widehat h/\sqrt d$.
    Adding the two residual bounds and using $\Omega\widehat h\leq\Omega\sqrt d\,\widehat h$
    gives \eqref{eq:loo2-residual}.

    \emph{Rows.} As in the proof of \Cref{loo:stability-full}, for every $j$,
    $\|\widehat U^{(i)\top}\mcO^{(i)}_{\cdot j}\|_2
    \leq\|\widehat U^\top\mcO^{(i)}_{\cdot j}\|_2
    +\|\widehat U^{(i)}Q_i-\widehat U\|_{2\to2}\|\mcO^{(i)}_{\cdot j}\|_2$, and
    $\|\mcO^{(i)}_{\cdot j}\|_2\leq C_{14}b_\omega$ by
    \eqref{eq:loo2-event-consequences}; this gives the bound on $b^{(i)}$. The
    same argument with $(O_2)_{\cdot j}$ and the last $L_2$ rows gives the bound on
    $b_2^{(i)}$, and $h^{(i)}\leq\widehat h
    +\|\widehat V-\widehat V^{(i)}Q_i'\|_F$ gives the last. Take $\Phi_i\leq\Phi$.
\end{proof}

\begin{lemma}[Leverage, stacked]
    \label{loo2:leverage}
    On $\mcE\cap\mcB$, suppose $\varepsilon\leq1/(16C_{14})$. Then
    \begin{equation}
        \label{eq:loo2-leverage-split}
        \widehat b\leq\widehat b_0+\omega(\Gamma_2+\widehat b_2),
        \qquad
        \widehat h\leq\frac{4\Gamma}{3\sigma}+3h,
    \end{equation}
    and $\widehat B:=\widehat b_0+\tau\widehat b_2$ satisfies
    \begin{equation}
        \label{eq:loo2-leverage}
        \widehat B\leq2\big(b_\infty^\omega+b_{\infty,2}^\omega\big)
        +\frac{C}{\sigma}\min\Big\{b_\omega\varepsilon\sigma,\;
        g_\omega\Big(\Gamma_V+\widehat h+\frac{\Gamma_P+\Gamma_E}{d}\Big)
        +b_\omega\sqrt{\rho_n}\,\Omega\Big\},
    \end{equation}
    where $C$ depends only on $\|\bar P\|_\infty$ and $\alpha$.
\end{lemma}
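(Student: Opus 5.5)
Three bounds are claimed, and I would take them in turn: the split of $\widehat b$, the incoherence bound, and the leverage bound for $\widehat B$, each modelled on the corresponding first-order step.

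\emph{The split of $\widehat b$.} Fix $i'$ and $j$. The column $\mcO^{(i')}_{\cdot j}$ differs from $\bar{\mcO}_{\cdot j}$ only in its second block, by $\omega O_2E^{(i')}_{\cdot j}$. The first part contributes at most $\widehat b_0$. For the second, write $E^{(i')}_{\cdot j}=E_{\cdot j}-(\text{at most two single-entry corrections})$. Then $\|\widehat U_2^\top O_2E_{\cdot j}\|_2$ is row $j$ of $EO_2^\top\widehat U_2$, so it is at most $\Gamma_2$. Each removed entry is bounded by one in absolute value and multiplies a column $(O_2)_{\cdot k}$, contributing at most $\widehat b_2$; the two possible corrections only change the constant.

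\emph{Incoherence.} The bound $\widehat h\le 4\Gamma/(3\sigma)+3h$ is \Cref{lem:master} applied verbatim to $(M_\omega,N_\omega)$, exactly as for \eqref{eq:loo-incoherence}. This uses $\|N_\omega\|_{2\to2}\le 2C_{14}\varepsilon\sigma\le\sigma/8$ from \Cref{loo2:event}, together with the Wedin bound to control the $\sin^2\Theta$ term by $h$.

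\emph{Leverage of $\widehat B$, first part.} I would run \Cref{lem:master-left} twice, in the design coordinates $\bar{\mcO}$ and $\tau O_2$ acting on the last $L_2$ rows. This gives
\[
\widehat b_0+\tau\widehat b_2\le 2(b_\infty^\omega+b_{\infty,2}^\omega)+\tfrac{4}{3\sigma}\big(\|\bar{\mcO}^\top N\widehat V\|_{2\to\infty}+\tau\|O_2^\top N_2\widehat V\|_{2\to\infty}\big).
\]
Here I use $\sin\Theta\le1/4$ to absorb the aligned part, as in \eqref{eq:loo-leverage}.

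\emph{Leverage of $\widehat B$, first branch of the minimum.} Bound the cross term by $\|\bar{\mcO}^\top\|_{2\to\infty}\|N\|_{2\to2}\le Cb_\omega\Omega\sqrt d$, noting that the column norms of $\bar{\mcO}$ are at most $b_\omega$. This equals $Cb_\omega\varepsilon\sigma$.

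\emph{Leverage of $\widehat B$, Gram branch (main obstacle).} Expand $N=B_{\rm det}+S(E)$. The deterministic part is bounded by $\|\bar{\mcO}_{\cdot j}\|_2\|B_{\rm det}\|_{2\to2}\le Cb_\omega\sqrt{\rho_n}\Omega$. For $S(E)$, write
\[
\bar{\mcO}^\top S(E)=O_1^\top O_1E+\omega^2P O_2^\top O_2\Psi(E),
\]
with the analogous form for $\tau O_2^\top$. For the first term, row $j$ of $O_1^\top O_1E\widehat V$ has norm at most $g_1(O_1)\Gamma_V$ by the $\ell_1$-row argument of \Cref{lem:master-left}. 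For the second term, multiply $\Psi(E)\widehat V=\hol(PE)\widehat V+\hol(EP)\widehat V+\hol(E^2)\widehat V$. Its rows are bounded by
\[
\|(PE\widehat V)_{k\cdot}\|\le d\,\Gamma_V,\qquad \|(EP\widehat V)_{k\cdot}\|\le\Gamma_P,\qquad \|(E^2\widehat V)_{k\cdot}\|\le\Gamma_E,
\]
and the diagonal removals contribute at most $Cd\,\widehat h$, using row sums of $P$ and $\|u\|_2^2\le Cd$. Combining with $\omega^2 d^2\le\tau^2$, and with $\omega\|P\|_{\infty\to\infty}\le\tau$ for the extra $P$ on the left, gives the factor $\tau^2g_1(O_2)\big(\Gamma_V+\widehat h+(\Gamma_P+\Gamma_E)/d\big)$. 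Summing the two terms yields $g_\omega(\cdots)$.

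This bookkeeping for the second-order block is where I expect the difficulty: the powers of $\omega$, $d$ and $\tau$ must all align. If the left multiplication by $P$ fails to yield $\tau$ cleanly, I would instead bound $\omega P$ in the $\infty\to\infty$ norm by $C\tau/d\cdot d$, and absorb the resulting constant into $C$.
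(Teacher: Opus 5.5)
Your proposal is correct and follows the paper's proof essentially step for step. It uses the same column split for $\widehat b$, \Cref{lem:master} for $\widehat h$, and \Cref{lem:master-left} with $G=\bar{\mcO}$ and $G=\tau\binom{0}{O_2}$, followed by the same operator-norm first branch and the same Gram/$\ell_1$ bookkeeping for $\omega^2PO_2^\top O_2\Psi(E)$ via $\omega^2d^2\leq\tau^2$. One small tightening: for $j\neq i'$ there is exactly one correction, $E^{(i')}_{\cdot j}=E_{\cdot j}-E_{i'j}e_{i'}$, while for $j=i'$ one has $\mcO^{(i')}_{\cdot i'}=\bar{\mcO}_{\cdot i'}$, so the split holds with coefficient one on $\widehat b_2$ exactly as stated, not only up to a constant; also, in the first branch the $\tau O_2$ term needs $\tau b_{O_2}\leq b_\omega$.
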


\begin{proof}
    \emph{Splitting.} For $j\neq i'$, the second block of $\mcO^{(i')}_{\cdot j}$ is
    $\omega O_2(P_{\cdot j}+E^{(i')}_{\cdot j})$ with
    $E^{(i')}_{\cdot j}=u_j-E_{i'j}e_{i'}$, while $\mcO^{(i')}_{\cdot i'}=\bar{\mcO}_{\cdot i'}$.
    Hence, using $|E_{i'j}|\leq1$,
    $\|\widehat U^\top\mcO^{(i')}_{\cdot j}\|_2\leq\|\widehat U^\top\bar{\mcO}_{\cdot j}\|_2
    +\omega\|\widehat U_2^\top O_2u_j\|_2+\omega\|\widehat U_2^\top(O_2)_{\cdot i'}\|_2
    \leq\widehat b_0+\omega(\Gamma_2+\widehat b_2)$. The bound on $\widehat h$ is
    \eqref{eq:loo-incoherence}, whose derivation from \Cref{lem:master} uses only
    $\|N\|_{2\to2}\leq\sigma/8$ and $\|\sin\Theta(\widehat V,V)\|_{2\to2}\leq1/4$;
    both hold on $\mcE\cap\mcB$ by \Cref{loo2:event} and \Cref{lem:wedin}.

    \emph{Reduction.} The proof of \Cref{lem:master-left} uses $O$ only as a fixed
    left factor. For any fixed $G\in\mathbb R^{(L_1+L_2)\times n}$ it therefore
    gives, with the Procrustes alignment $Q_U$ of $\widehat U$ to $U$,
    \[
        \|G^\top(\widehat UQ_U-U)\|_{2\to\infty}
        \leq\frac{4}{3\sigma}\Big(\|G^\top N\widehat V\|_{2\to\infty}
        +\|G^\top U\|_{2\to\infty}\|N\|_{2\to2}\Big)
        +\|G^\top U\|_{2\to\infty}\|\sin\Theta(U,\widehat U)\|_{2\to2}^2.
    \]
    On $\mcE\cap\mcB$, $\|N\|_{2\to2}\leq\sigma/8$ and
    $\|\sin\Theta(U,\widehat U)\|_{2\to2}\leq1/4$, so
    $\|G^\top\widehat U\|_{2\to\infty}\leq2\|G^\top U\|_{2\to\infty}
    +\frac{4}{3\sigma}\|G^\top N\widehat V\|_{2\to\infty}$. Take $G=\bar{\mcO}$ and
    $G=\tau\binom{0}{O_2}$, for which $\|G^\top U\|_{2\to\infty}$ equals
    $b_\infty^\omega$ and $b_{\infty,2}^\omega$. Writing
    $N_2:=\omega O_2\Psi(E)+\omega O_2\{\hol(P^2)-(P^0)^2\}$ for the second block
    of $N$ and adding,
    \[
        \widehat B\leq2\big(b_\infty^\omega+b_{\infty,2}^\omega\big)
        +\frac{4}{3\sigma}\Big(\|\bar{\mcO}^\top N\widehat V\|_{2\to\infty}
        +\tau\|O_2^\top N_2\widehat V\|_{2\to\infty}\Big).
    \]

    \emph{First branch.} $\|G^\top N\widehat V\|_{2\to\infty}\leq
    \max_j\|G_{\cdot j}\|_2\,\|N\|_{2\to2}$. Since
    $\|\bar{\mcO}_{\cdot j}\|_2\leq b_{O_1}+\omega\|O_2P_{\cdot j}\|_2\leq b_\omega$ and
    $\tau b_{O_2}\leq b_\omega$, each of the two terms is at most
    $2C_{14}b_\omega\varepsilon\sigma$.

    \emph{Second branch.} Since $\bar{\mcO}^\top=[O_1^\top,\ \omega PO_2^\top]$,
    \[
        \bar{\mcO}^\top N=O_1^\top O_1E+\omega^2PO_2^\top O_2\Psi(E)
        +\bar{\mcO}^\top B_{\mathrm{det}}.
    \]
    Row $j$ of $O_1^\top O_1E\widehat V$ is $\sum_k(O_1^\top O_1)_{jk}(E\widehat V)_{k\cdot}$,
    of norm at most $g_1(O_1)\Gamma_V$. Row $j$ of $PO_2^\top O_2\Psi(E)\widehat V$ has
    norm at most $\|(PO_2^\top O_2)_{j\cdot}\|_1\,\|\Psi(E)\widehat V\|_{2\to\infty}$, and
    $\|(PO_2^\top O_2)_{j\cdot}\|_1=\|O_2^\top O_2P_{\cdot j}\|_1
    \leq g_1(O_2)\|P_{\cdot j}\|_1\leq g_1(O_2)\,d\|\bar P\|_\infty$. For the last
    factor, $\Psi(E)=PE+EP+E^2-\diag(PE+EP+E^2)$, the diagonal entries are at most
    $2|P_{j\cdot}u_j|+\|u_j\|_2^2\leq Cd$ on $\mcE$, and
    $\|P_{j\cdot}\|_1\leq d\|\bar P\|_\infty$, so
    \[
        \|\Psi(E)\widehat V\|_{2\to\infty}
        \leq d\|\bar P\|_\infty\Gamma_V+\Gamma_P+\Gamma_E+Cd\,\widehat h.
    \]
    Also $\|\bar{\mcO}^\top B_{\mathrm{det}}\widehat V\|_{2\to\infty}
    \leq\max_j\|\bar{\mcO}_{\cdot j}\|_2\|B_{\mathrm{det}}\|_{2\to2}
    \leq Cb_\omega\sqrt{\rho_n}\,\Omega$ by \Cref{loo2:event}. Using
    $\omega^2d^2\leq\tau^2$ and $\omega^2d=(\omega d)^2/d\leq\tau^2/d$,
    \[
        \|\bar{\mcO}^\top N\widehat V\|_{2\to\infty}
        \leq g_1(O_1)\Gamma_V
        +C\tau^2g_1(O_2)\Big(\Gamma_V+\widehat h+\frac{\Gamma_P+\Gamma_E}{d}\Big)
        +Cb_\omega\sqrt{\rho_n}\,\Omega.
    \]
    Similarly, rows of $\omega O_2^\top O_2\Psi(E)\widehat V$ have norm at most
    $\omega g_1(O_2)\|\Psi(E)\widehat V\|_{2\to\infty}$. Since
    $\hol(P^2)-(P^0)^2=-P^0D_0-D_0P^0+D_0^2-\diag(P^2)$ by the proof of
    \Cref{lem:fanout-pop}, where the rows of $P^0D_0$ and $D_0P^0$ have $\ell_1$ norm
    at most $d\rho_n\|\bar P\|_\infty^2$ and
    $\|D_0^2\|_{2\to2}\vee\|\diag(P^2)\|_{2\to2}\leq d\rho_n\|\bar P\|_\infty^2$, each
    row of $\hol(P^2)-(P^0)^2$ has $\ell_1$ norm at most $4d\rho_n\|\bar P\|_\infty^2$,
    so rows of $\omega O_2^\top O_2\{\hol(P^2)-(P^0)^2\}\widehat V$ have norm at most
    $C\omega g_1(O_2)d\rho_n\widehat h$. Using $\tau\omega d\leq\tau^2$ and
    $\tau\omega\leq\tau^2/d$,
    \[
        \tau\|O_2^\top N_2\widehat V\|_{2\to\infty}
        \leq C\tau^2g_1(O_2)\Big(\Gamma_V+\widehat h+\frac{\Gamma_P+\Gamma_E}{d}\Big).
    \]
    Adding the last two displays and using $g_1(O_1)+\tau^2g_1(O_2)=g_\omega$ gives
    the second branch of \eqref{eq:loo2-leverage}.
\end{proof}

\begin{lemma}[Row concentration, stacked]
    \label{loo2:concentration}
    There are constants $c_\star^\omega\in(0,1)$ and $C_8$, depending only on
    $\|\bar P\|_\infty$ and $\alpha$, such that the following holds. Suppose
    \begin{enumerate}
        \item[\textup{(H$_\omega$)}] $n\rho_n\geq C_0\log n$,
        $\varepsilon\leq c_\star^\omega$ and $\lambda_\omega\leq c_\star^\omega$.
    \end{enumerate}
    Then with probability at least $1-4n^{-\alpha}$,
    \begin{equation}
        \label{eq:loo2-gamma}
        \Upsilon=\Gamma+b_\omega\Gamma_V\leq C_8\Big[\Omega\gamma_1
        +\big(b_\infty^\omega+b_{\infty,2}^\omega+b_\omega h+\vartheta\big)\log n
        +\varepsilon^2\sigma\Big(h+\frac{b_\infty^\omega+b_{\infty,2}^\omega+\vartheta}{\Omega}\Big)\Big]
    \end{equation}
    with $\vartheta=b_\omega\varepsilon$. If in addition
    \begin{enumerate}
        \item[\textup{(H$'_\omega$)}] $\varsigma_\omega\varepsilon\leq c_\star^\omega$ and
        $\lambda_{\varsigma,\omega}\leq c_\star^\omega$,
    \end{enumerate}
    then \eqref{eq:loo2-gamma} also holds with $\vartheta=0$.
\end{lemma}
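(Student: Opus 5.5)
The proof follows the three-step structure of \Cref{loo:concentration-full}, with the substitutions recorded before \Cref{loo2:stability}. There is one extra ingredient. The self-bounding inequality now involves the auxiliary row quantities $\Gamma_2,\Gamma_P,\Gamma_E$ of \eqref{eq:loo2-aux}, so these must be controlled as well.

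\textbf{Step 1: independent terms.} Work on $\mcE\cap\mcB$ of \Cref{loo2:event}. Condition on $\mcF_i$ and use \Cref{loo2:column} to write $\widehat U^{(i)\top}(N)_{\cdot i}=\sum_j u_j\,\widehat U^{(i)\top}\mcO^{(i)}_{\cdot j}+\widehat U^{(i)\top}r_i$. Apply \Cref{lem:vector-bernstein}(ii) conditionally with the $\mcF_i$-measurable weights $W=\mcO^{(i)\top}\widehat U^{(i)}$. This gives Frobenius norm $\lesssim\sqrt{r_\omega}\,\Omega$, spectral norm $\lesssim\Omega$, and row norm $b^{(i)}$. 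The remainder contributes $\|r_i\|_2\leq C_7\sqrt{\rho_n}\Omega\leq C\Omega\gamma_1$, and the weak dependence of $r_i$ on $u$ is harmless because only its norm enters. Treat $\widehat V^{(i)\top}u$, $\widehat U_2^{(i)\top}O_2u$ (weights $O_2^\top\widehat U_2^{(i)}$, rows $b_2^{(i)}$), and $\widehat V^{(i)\top}Pu$ (weights $P\widehat V^{(i)}$, rows $\leq d\|\bar P\|_\infty h^{(i)}$, spectral norm $\lesssim d$) identically. For $\widehat V^{(i)\top}Eu$, write $Eu=E^{(i)}u+\|u\|_2^2e_i$. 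The first piece has $\mcF_i$-measurable weights $E^{(i)}\widehat V^{(i)}$, with rows bounded by $\Gamma_V$-type quantities plus leave-one-out corrections, and spectral norm $\lesssim\sqrt d$. The second piece is at most $Cd\,\widehat h$. A union bound over $i$ produces an event $\mcA$ with $\bbp{\mcA^c}\leq Cn^{-(\alpha+1)}$.

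\textbf{Step 2: coupling.} Split each row as in \eqref{eq:loo-split} using the alignments of \Cref{loo2:stability}. Since $\|N_{\cdot i}\|_2\lesssim\Omega\sqrt d$ and $\|u\|_2\lesssim\sqrt d$, the coupling terms cost $(\varepsilon+\lambda_\omega)\Phi$. Insert \eqref{eq:loo2-stability-rows} for $b^{(i)},b_2^{(i)},h^{(i)}$ and \eqref{eq:loo2-residual} for $\Phi$. The weighted combinations $\omega\Gamma_2$, $\omega b_{O_2}(\Gamma_P+\Gamma_E)$ and $\tau\widehat b_2$ then appear with the normalizations that \eqref{eq:loo2-weights} is designed for. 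The Step 1 bounds show that $\omega d\,\Gamma_2\lesssim\tau(\gamma_1\|O_2\|+\widehat b_2\log n)+$ coupling, and that $(\Gamma_P+\Gamma_E)/d\lesssim\gamma_1+\widehat h\log n+\sqrt d\,\widehat h+$ coupling. Feeding these back closes a single self-bounding inequality for $\Upsilon$ in terms of $\widehat B=\widehat b_0+\tau\widehat b_2$ and $\widehat h$. Its coefficient on $\Upsilon$ is of order $\varepsilon+\lambda_\omega$.

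\textbf{Step 3: absorbing.} Substitute \eqref{eq:loo2-leverage-split}--\eqref{eq:loo2-leverage}. The first branch, with $\widehat B\lesssim b^\omega_\infty+b^\omega_{\infty,2}+b_\omega\varepsilon$, gives $\vartheta=b_\omega\varepsilon$ under (H$_\omega$). The second branch, using $\Gamma_V\leq\Upsilon/b_\omega$ and $g_\omega/b_\omega=\varsigma_\omega\Omega$, puts a coefficient $\lambda_{\varsigma,\omega}+\varsigma_\omega\varepsilon$ on $\Upsilon$, which (H$'_\omega$) makes at most $\frac12$; this gives $\vartheta=0$. Deterministic terms such as $b_\omega\sqrt{\rho_n}\Omega/\sigma$ are absorbed into $\Omega\gamma_1$ using $\lambda_\omega\leq c_\star^\omega$, just as the $q_D$ terms were absorbed at first order. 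The final probability bound is $\bbp{(\mcE\cap\mcB\cap\mcA)^c}\leq4n^{-\alpha}$.

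The main obstacle is closing the loop through $\Gamma_E$. Its leave-one-out weights $E^{(i)}\widehat V^{(i)}$ have row norms that are themselves a $\Gamma_V$-type quantity for the leave-one-out matrix. The $\|u\|_2^2e_i$ term costs $d\,\widehat h$, which survives only because it is multiplied by $\omega b_{O_2}\leq b_\omega/d$. My first attempt is to bound these rows by $\Gamma_V+\sqrt d\,(2\sqrt2/\sigma)\Phi$ and to check that, after multiplication by $\omega b_{O_2}$, the resulting coefficient on $\Upsilon$ is $O(\varepsilon+\lambda_\omega)$ rather than $O(1)$.
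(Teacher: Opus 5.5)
Your plan follows the paper's proof. It uses the same five conditional Bernstein sums, with weights $\mcO^{(i)\top}\widehat U^{(i)}$, $\widehat V^{(i)}$, $O_2^\top\widehat U_2^{(i)}$, $P\widehat V^{(i)}$ and $E^{(i)}\widehat V^{(i)}$. It uses the same leave-one-out coupling through \Cref{loo2:stability} and the same two-branch absorption via \eqref{eq:loo2-leverage}, so the route is sound. Three points need adjustment.

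\textbf{The weight on $\Gamma_2$.} $\Gamma_2$ enters the argument only with weights $\omega\sqrt d$, in \eqref{eq:loo2-residual}, and $\omega\log n$, through $\widehat b\log n$ and \eqref{eq:loo2-leverage-split}. If you dominate these by $\omega d$, as your sketch does, the coupling piece of $\Gamma_2$ causes trouble. That piece is $b_{O_2}\sqrt{d\log n}\,\Phi/\sigma$, coming from $\|O_2u_i\|_2\lesssim b_{O_2}\sqrt{d\log n}$ on $\mcB$. Under the weight $\omega d$ it is only bounded by a coefficient of order $\varepsilon\sqrt{\log n}$ on $\Phi$, which (H$_\omega$) does not make small. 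The correct weights give coefficients of order $\varepsilon\sqrt{\log n/d}$ and $\lambda_\omega\sqrt{\log n/d}$.

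\textbf{The row bound for $E^{(i)}\widehat V^{(i)}$.} Row $j\neq i$ of $E^{(i)}\widehat V$ is $(E\widehat V)_{j\cdot}-E_{ji}\widehat V_{i\cdot}$. The correct bound is therefore $\Gamma_V+\widehat h+C\sqrt d\,\Phi/\sigma$, not $\Gamma_V+C\sqrt d\,\Phi/\sigma$. The extra $\widehat h\log n$ is harmless once multiplied by $\omega b_{O_2}\leq b_\omega/d$.

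\textbf{The obstacle you flag through $\Gamma_E$.} No small coefficient is needed at that stage. After multiplication by $\omega b_{O_2}$, the $\Gamma_E$ term contributes only an $O(1)$ multiple of $b_\omega\Gamma_V\leq\Upsilon$, and it does so only in the bound on $\Phi$. The paper keeps $\Phi$ and $\Upsilon$ as a two-inequality system. It first closes the $\Phi$ inequality to obtain \eqref{eq:loo2-system}, with an $O(1)$ coefficient on $\Upsilon$. $\Phi$ then re-enters the bound on $\Upsilon$ only with coefficient $C(\lambda_\omega+\varepsilon)$, and this product is what gets absorbed.
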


\begin{proof}
    The proof has four steps.
    \begin{enumerate}
        \item[]\textbf{Step 1.} Conditional Bernstein bounds for five independent
        terms, uniform in $i$, on an event $\mcA_\omega$ of probability at least
        $1-5n^{-(\alpha+1)}$.
        \item[]\textbf{Step 2.} Bounds on $\Gamma$, $\Gamma_V$ and the auxiliary
        quantities \eqref{eq:loo2-aux} in terms of the leverage and $\Phi$.
        \item[]\textbf{Step 3.} A self-bounding system for $\Upsilon$ and $\Phi$.
        \item[]\textbf{Step 4.} Absorption, with the first branch of
        \eqref{eq:loo2-leverage} under (H$_\omega$) (Step~4a) and the second under
        (H$'_\omega$) (Step~4b).
    \end{enumerate}
    Assume $c_\star^\omega\leq1/(16C_{14})$, so that
    \Cref{loo2:stability,loo2:leverage} apply on $\mcE\cap\mcB$; $c_\star^\omega$
    shrinks further in Steps~3--4. Throughout, $C$ depends only on
    $\|\bar P\|_\infty$ and $\alpha$.

    \textbf{Step 1: independent terms.} Fix $i$ and condition on $\mcF_i$. By
    \Cref{loo2:structure}, $\widehat U^{(i)}$, $\widehat V^{(i)}$ and $\mcO^{(i)}$ are
    $\mcF_i$-measurable, and $u_i$ is independent of $\mcF_i$, with independent
    centred entries bounded by one and of variance at most $\rho_n\|\bar P\|_\infty$.
    Each of
    \[
        \widehat U^{(i)\top}\mcO^{(i)}u_i,\qquad
        \widehat V^{(i)\top}u_i,\qquad
        \widehat U_2^{(i)\top}O_2u_i,\qquad
        (P\widehat V^{(i)})^\top u_i,\qquad
        (E^{(i)}\widehat V^{(i)})^\top u_i
    \]
    is therefore a conditional sum $\sum_j(u_i)_jw_j$ whose weights are the rows of
    an $\mcF_i$-measurable matrix, namely $W_1=\mcO^{(i)\top}\widehat U^{(i)}$,
    $W_2=\widehat V^{(i)}$, $W_3=O_2^\top\widehat U_2^{(i)}$, $W_4=P\widehat V^{(i)}$ and
    $W_5=E^{(i)}\widehat V^{(i)}$. Apply \Cref{lem:vector-bernstein}(ii)
    conditionally to each, with $\epsilon=n^{-(\alpha+2)}$, and remove the
    conditioning as in Step~1 of \Cref{loo:concentration-full}. A union bound over
    $i$ and the five sums gives an event $\mcA_\omega$ with
    $\bbp{\mcA_\omega^c}\leq5n^{-(\alpha+1)}$ on which, for every $i$ and $k$,
    \[
        \|W_k^\top u_i\|_2\leq C_1\Big\{\sqrt{\rho_n}\|W_k\|_F
        +\sqrt{\rho_n\log n}\,\|W_k\|_{2\to2}+\|W_k\|_{2\to\infty}\log n\Big\}.
    \]
    On $\mcE\cap\mcB$, \Cref{loo2:event} and the deterministic bounds
    $\|P\|_{2\to2}\leq d\|\bar P\|_\infty$ and
    $\max_j\|P_{j\cdot}\|_2\leq\rho_n\sqrt n\|\bar P\|_\infty$ give
    $\|W_1\|_F\leq C_{14}\sqrt{r_\omega}\Omega$ and $\|W_1\|_{2\to2}\leq C_{14}\Omega$;
    $\|W_2\|_F=\sqrt{r_\omega}$ and $\|W_2\|_{2\to2}=1$;
    $\|W_3\|_F\leq\sqrt{r_\omega}\|O_2\|_{2\to2}$ and $\|W_3\|_{2\to2}\leq\|O_2\|_{2\to2}$;
    $\|W_4\|_F\leq\sqrt{r_\omega}d\|\bar P\|_\infty$,
    $\|W_4\|_{2\to2}\leq d\|\bar P\|_\infty$ and
    $\|W_4\|_{2\to\infty}\leq\rho_n\sqrt n\|\bar P\|_\infty$; and
    $\|W_5\|_F\leq C_0\sqrt{r_\omega d}$ and $\|W_5\|_{2\to2}\leq C_0\sqrt d$. The
    remaining row norms are $b^{(i)}$, $h^{(i)}$ and $b_2^{(i)}$ for $W_1,W_2,W_3$,
    and, since row $j\neq i$ of $E^{(i)}\widehat V$ is
    $(E\widehat V)_{j\cdot}-E_{ji}\widehat V_{i\cdot}$,
    $\|W_5\|_{2\to\infty}\leq\|E^{(i)}\widehat V\|_{2\to\infty}
    +\|E^{(i)}\|_{2\to2}\|\widehat V^{(i)}Q_i'-\widehat V\|_F
    \leq\Gamma_V+\widehat h+2\sqrt2C_0\sqrt d\,\Phi/\sigma$. Since
    $N_{\cdot i}=\mcO^{(i)}u_i+r_i$ with $\|r_i\|_2\leq C_7\sqrt{\rho_n}\Omega$
    (\Cref{loo2:column}), and
    $\rho_n\sqrt n\log n=\sqrt{\rho_n}\sqrt d\log n\leq d\gamma_1$, on
    $\mcE\cap\mcB\cap\mcA_\omega$, for every $i$,
    \begin{equation}
        \label{eq:loo2-bernstein}
        \begin{gathered}
        \|\widehat U^{(i)\top}N_{\cdot i}\|_2\leq C\big(\Omega\gamma_1+b^{(i)}\log n\big),
        \qquad
        \|\widehat V^{(i)\top}u_i\|_2\leq C\big(\gamma_1+h^{(i)}\log n\big),\\
        \|\widehat U_2^{(i)\top}O_2u_i\|_2
        \leq C\big(\|O_2\|_{2\to2}\gamma_1+b_2^{(i)}\log n\big),
        \qquad
        \|\widehat V^{(i)\top}Pu_i\|_2\leq Cd\gamma_1,\\
        \|\widehat V^{(i)\top}E^{(i)}u_i\|_2
        \leq C\Big(\sqrt d\,\gamma_1+\big(\Gamma_V+\widehat h+\sqrt d\,\Phi/\sigma\big)\log n\Big).
        \end{gathered}
    \end{equation}

    \textbf{Step 2: coupling terms.} On $\mcE\cap\mcB\cap\mcA_\omega$, split each
    full-sample quantity as in \eqref{eq:loo-split}, aligned by the $Q_i$ or $Q_i'$ of
    \Cref{loo2:stability}, and bound the coupling by \eqref{eq:loo2-stability} and
    the bounds $\|N_{\cdot i}\|_2\leq C\varepsilon\sigma$,
    $\|u_i\|_2\leq C_{10}\sqrt d$, $\|O_2u_i\|_2\leq C_Bb_{O_2}\sqrt{d\log n}$,
    $\|P\|_{2\to2}\leq d\|\bar P\|_\infty$ and $\|E^{(i)}\|_{2\to2}\leq C_0\sqrt d$ of
    \Cref{loo2:event}. For $\Gamma_E$ use in addition
    $(E^2\widehat V)_{i\cdot}=u_i^\top E^{(i)}\widehat V+\|u_i\|_2^2\widehat V_{i\cdot}$,
    which follows from $u_i^\top E=u_i^\top E^{(i)}+\|u_i\|_2^2e_i^\top$. With
    \eqref{eq:loo2-bernstein} and \eqref{eq:loo2-stability-rows}, taking maxima over
    $i$ and using $\widehat h\log n\leq d\widehat h$,
    \begin{align*}
        \Gamma&\leq C\big(\Omega\gamma_1+\widehat b\log n\big)+C(\lambda_\omega+\varepsilon)\Phi,\\
        \Gamma_V&\leq C\big(\gamma_1+\widehat h\log n\big)+C(\log n+\sqrt d)\Phi/\sigma,\\
        \Gamma_2&\leq C\big(\|O_2\|_{2\to2}\gamma_1+\widehat b_2\log n\big)
            +Cb_{O_2}\big(\log n+\sqrt{d\log n}\big)\Phi/\sigma,\\
        \Gamma_P&\leq Cd\gamma_1+Cd^{3/2}\Phi/\sigma,\\
        \Gamma_E&\leq C\sqrt d\,\gamma_1+C\log n\,\Gamma_V+Cd\,\widehat h
            +C\big(d+\sqrt d\log n\big)\Phi/\sigma.
    \end{align*}
    We multiply these by the coefficients with which they enter
    \eqref{eq:loo2-residual} and \eqref{eq:loo2-leverage-split}. By
    \eqref{eq:loo2-weights}, $\omega\leq\omega d$, $\log n\leq\sqrt{d\log n}\leq d$
    and $b_\omega\sqrt d/\sigma=(b_\omega/\Omega)\varepsilon\leq C_b\varepsilon$,
    \begin{equation}
        \label{eq:loo2-aux-bounds}
        \begin{gathered}
        b_\omega\Gamma_V\leq C\Omega\gamma_1+Cb_\omega\widehat h\log n
            +C(\lambda_\omega+\varepsilon)\Phi,\\
        \omega\sqrt d\,\Gamma_2\leq C\Omega\gamma_1+C\tau\sqrt d\,\widehat b_2
            +C(\lambda_\omega+\varepsilon)\Phi,\qquad
        \omega(\log n)\Gamma_2\leq C\Omega\gamma_1+C\tau(\log n)\widehat b_2
            +C(\lambda_\omega+\varepsilon)\Phi,\\
        \omega b_{O_2}(\Gamma_P+\Gamma_E)\leq C\Omega\gamma_1+Cb_\omega\Gamma_V
            +C\Omega\widehat h+C(\lambda_\omega+\varepsilon)\Phi,\qquad
        \frac{\Gamma_P+\Gamma_E}{d}\leq C\big(\gamma_1+\Gamma_V+\widehat h\big)
            +C\sqrt d\,\Phi/\sigma.
        \end{gathered}
    \end{equation}
    For instance,
    $\omega\sqrt d\,b_{O_2}\sqrt{d\log n}/\sigma\leq b_\omega\sqrt{\log n}/\sigma
    =(b_\omega/\Omega)\varepsilon\sqrt{\log n/d}\leq C_b\varepsilon$ and
    $\omega b_{O_2}\log n\sqrt{d\log n}/\sigma
    \leq\lambda_\omega\sqrt{\log n/d}\leq\lambda_\omega$ bound the
    $\Phi$-coefficients of the two $\Gamma_2$ bounds, and
    $\omega\sqrt d\,\widehat b_2\log n\leq\tau\sqrt d\,\widehat b_2$ and
    $\omega\widehat b_2\log^2n\leq\tau\widehat b_2\log n$ bound their leverage
    terms; the other coefficients are bounded in the same way.

    \textbf{Step 3: the self-bounding system.} Insert
    \eqref{eq:loo2-aux-bounds} and the first bound of
    \eqref{eq:loo2-leverage-split} into \eqref{eq:loo2-residual} and into the
    bounds on $\Gamma$ and $b_\omega\Gamma_V$. With $\widehat B=\widehat b_0+\tau\widehat b_2$,
    $\omega\leq\tau$, $\Gamma+Cb_\omega\Gamma_V\leq C\Upsilon$ and
    $\sqrt{\rho_n}\leq\gamma_1$,
    \[
        \Phi\leq C\big[\Upsilon+\sqrt d\,\widehat B+\Omega\sqrt d\,\widehat h+\Omega\gamma_1\big]
            +C(\lambda_\omega+\varepsilon)\Phi,
        \qquad
        \Upsilon\leq C\big[\Omega\gamma_1+(\widehat B+b_\omega\widehat h)\log n\big]
            +C(\lambda_\omega+\varepsilon)\Phi.
    \]
    For $c_\star^\omega$ small the coefficient of $\Phi$ in the first inequality is at
    most $1/2$, and $\Phi$ is finite, so
    \begin{equation}
        \label{eq:loo2-system}
        \Phi\leq C\big[\Upsilon+\sqrt d\,\widehat B+\Omega\sqrt d\,\widehat h+\Omega\gamma_1\big].
    \end{equation}
    By the second bound of \eqref{eq:loo2-leverage-split} and $\Gamma\leq \Upsilon$,
    $\widehat h\leq4\Upsilon/(3\sigma)+3h$, so
    $\Omega\sqrt d\,\widehat h\leq\tfrac43\varepsilon \Upsilon+3\Omega\sqrt d\,h$ and
    $b_\omega\widehat h\log n\leq\tfrac43\lambda_\omega \Upsilon+3b_\omega h\log n$.

    \textbf{Step 4a: the branch $\vartheta=b_\omega\varepsilon$.} By the first branch of
    \eqref{eq:loo2-leverage}, $\widehat B\leq B_\star:=2(b_\infty^\omega+b_{\infty,2}^\omega)
    +Cb_\omega\varepsilon$. Inserting this and the bounds on $\widehat h$ into
    \eqref{eq:loo2-system} and the bound on $\Upsilon$,
    \[
        \Upsilon\leq C\big[\Omega\gamma_1+(B_\star+b_\omega h)\log n\big]+C\lambda_\omega \Upsilon
        +C(\lambda_\omega+\varepsilon)\big[(1+\varepsilon)\Upsilon+\sqrt d\,B_\star
        +\Omega\sqrt d\,h+\Omega\gamma_1\big].
    \]
    For $c_\star^\omega$ small the total coefficient of $\Upsilon$ is at most $1/2$; since
    $\Upsilon\leq\|N\|_{2\to2}+b_\omega\|E\|_{2\to2}<\infty$, subtract and double. Finally
    $\lambda_\omega\sqrt d=(b_\omega/\Omega)\varepsilon\log n\leq C_b\log n$,
    $\varepsilon\sqrt d=\varepsilon^2\sigma/\Omega$,
    $\lambda_\omega\Omega\sqrt d\,h=\varepsilon b_\omega h\log n\leq b_\omega h\log n$ and
    $\varepsilon\Omega\sqrt d\,h=\varepsilon^2\sigma h$, so
    \[
        \Upsilon\leq C\Big[\Omega\gamma_1+(B_\star+b_\omega h)\log n
        +\varepsilon^2\sigma\Big(h+\frac{B_\star}{\Omega}\Big)\Big],
    \]
    which is \eqref{eq:loo2-gamma} with $\vartheta=b_\omega\varepsilon$.

    \textbf{Step 4b: the branch $\vartheta=0$.} Under (H$'_\omega$), use the second
    branch of \eqref{eq:loo2-leverage} with the last bound of
    \eqref{eq:loo2-aux-bounds}, and write $g_\omega=\varsigma_\omega\Omega b_\omega$,
    so that $g_\omega\Gamma_V\leq\varsigma_\omega\Omega \Upsilon$:
    \[
        \widehat B\leq2\big(b_\infty^\omega+b_{\infty,2}^\omega\big)
        +\frac{C}{\sigma}\Big[\varsigma_\omega\Omega \Upsilon
        +\varsigma_\omega\Omega b_\omega\big(\gamma_1+\widehat h\big)
        +\varsigma_\omega\Omega b_\omega\sqrt d\,\Phi/\sigma
        +b_\omega\sqrt{\rho_n}\,\Omega\Big].
    \]
    Multiplying by $\sqrt d$ and by $\log n$, and using $\Omega\sqrt d/\sigma=\varepsilon$,
    $\varsigma_\omega\varepsilon\leq1$, $\lambda_{\varsigma,\omega}\leq1$ and
    $b_\omega\leq C_b\Omega$,
    \[
    \begin{gathered}
        \sqrt d\,\widehat B\leq2\sqrt d\big(b_\infty^\omega+b_{\infty,2}^\omega\big)
        +C\Upsilon+C\Omega\big(\gamma_1+\widehat h\big)+C\varepsilon\Phi,\\
        \widehat B\log n\leq2\big(b_\infty^\omega+b_{\infty,2}^\omega\big)\log n
        +C\lambda_{\varsigma,\omega}\Upsilon+C\Omega\gamma_1+Cb_\omega\widehat h
        +C\lambda_{\varsigma,\omega}\varepsilon\Phi.
    \end{gathered}
    \]
    Inserting the first into \eqref{eq:loo2-system} and absorbing $C\varepsilon\Phi$ gives
    $\Phi\leq C[\Upsilon+\sqrt d(b_\infty^\omega+b_{\infty,2}^\omega)+\Omega\sqrt d\,\widehat h
    +\Omega\gamma_1]$. Inserting this and the second into the bound on $\Upsilon$, together
    with the bounds on $\widehat h$ from Step~3,
    \[
    \begin{aligned}
        \Upsilon&\leq C\big[\Omega\gamma_1+\big(b_\infty^\omega+b_{\infty,2}^\omega+b_\omega h\big)\log n\big]
        +C(\lambda_\omega+\lambda_{\varsigma,\omega})\Upsilon\\
        &\qquad+C(\lambda_\omega+\varepsilon)\big[(1+\varepsilon)\Upsilon
        +\sqrt d\big(b_\infty^\omega+b_{\infty,2}^\omega\big)+\Omega\sqrt d\,h+\Omega\gamma_1\big].
    \end{aligned}
    \]
    For $c_\star^\omega$ small the coefficient of $\Upsilon$ is at most $1/2$, and the
    simplifications of Step~4a give \eqref{eq:loo2-gamma} with $\vartheta=0$.

    \textbf{Probability.} By \Cref{loo2:event} and Step~1,
    $\bbp{(\mcE\cap\mcB\cap\mcA_\omega)^c}\leq2n^{-\alpha}+6n^{-(\alpha+1)}
    \leq4n^{-\alpha}$ for $n\geq3$.
\end{proof}

\begin{theorem}[Sharp $\ell_{2\to\infty}$ bound, stacked]
    \label{loo2:twoinfty-sharp}
    Under hypothesis (H$_\omega$) of \Cref{loo2:concentration}, with probability at
    least $1 - 4n^{-\alpha}$ there is an orthogonal $Q$ with
    \begin{equation}
        \label{eq:loo2-twoinfty-sharp}
        \begin{aligned}
        \| \widehat V_\omega Q - V_\omega \|_{2\to\infty}
        &\leq C_9\,
        \frac{\sqrt{\rho_n}(\sqrt{r_\omega}+\sqrt{\log n})\,\Omega_\omega
        +\big(b_\infty^\omega+b_{\infty,2}^\omega
        +b_\omega\|V_\omega\|_{2\to\infty}+\vartheta_\omega\big)\log n}
        {\sigma_{r_\omega}(M_\omega)}\\
        &\qquad+C_9\left(\| V_\omega \|_{2\to\infty}\, \varepsilon_\omega
        +\varepsilon_\omega^2\,
        \frac{b_\infty^\omega+b_{\infty,2}^\omega+\vartheta_\omega}{\Omega_\omega}\right),
        \end{aligned}
    \end{equation}
    with $\vartheta_\omega=b_\omega\varepsilon_\omega$; under (H$'_\omega$) also with
    $\vartheta_\omega=0$. Here $C_9 = C_9(\| \bar P \|_\infty,\alpha)$.
\end{theorem}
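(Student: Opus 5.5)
The argument mirrors the proof of \Cref{loo:twoinfty-sharp}. We feed the stacked row-concentration bound of \Cref{loo2:concentration} into the Procrustean reduction \Cref{lem:master}, applied with $(M,N)=(M_\omega,N_\omega)$. We work on the event $\mcE\cap\mcB\cap\mcA_\omega$ of \Cref{loo2:concentration}, which has probability at least $1-4n^{-\alpha}$. On this event \Cref{loo2:event} gives $\|N_\omega\|_{2\to2}\leq2C_{14}\varepsilon_\omega\sigma$. For $c_\star^\omega\leq1/(16C_{14})$ this is at most $\sigma/8$, so the hypothesis $\|N\|_{2\to2}\leq\sigma/4$ of \Cref{lem:master} holds. \Cref{lem:wedin} then gives $\|\sin\Theta(\widehat V_\omega,V_\omega)\|_{2\to2}\leq4C_{14}\varepsilon_\omega$. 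We take $Q$ to be the Procrustes alignment of \Cref{lem:wedin}.

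\Cref{lem:master} uses only the SVD identities and Weyl's inequality, and never the structure of $N$. It therefore yields
\[
\|\widehat V_\omega Q-V_\omega\|_{2\to\infty}\leq\frac{4}{3\sigma}\big(\Gamma_\omega+h\|N_\omega\|_{2\to2}\big)+h\|\sin\Theta\|_{2\to2}^2
\leq\frac{4\Gamma_\omega}{3\sigma}+Ch\varepsilon_\omega,
\]
where $h=\|V_\omega\|_{2\to\infty}$ and we have used $\varepsilon_\omega^2\leq\varepsilon_\omega$. Next we bound $\Gamma_\omega\leq\Upsilon_\omega$ and insert \eqref{eq:loo2-gamma}, dividing each summand by $\sigma=\sigma_{r_\omega}(M_\omega)$:
\begin{itemize}
\item the term $\Omega\gamma_1$ becomes the first numerator term;
\item the $\log n$ term gives the leverage numerator with $b_\omega h$;
\item the term $\varepsilon^2\sigma h$ becomes $\varepsilon^2 h\leq\varepsilon h$;
\item the term $\varepsilon^2\sigma(b_\infty^\omega+b_{\infty,2}^\omega+\vartheta)/\Omega$ becomes the last displayed term.
\end{itemize}
The two choices of $\vartheta_\omega$ transfer directly from the two branches of \Cref{loo2:concentration}, under (H$_\omega$) and (H$'_\omega$) respectively.

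The only point needing care is that, unlike the first-order case, \eqref{eq:loo2-gamma} already absorbs the deterministic correction $B_{\rm det}$. The stacked lemma bounds $\|B_{\rm det}\|_{2\to2}\leq C\sqrt{\rho_n}\Omega$, and $\sqrt{\rho_n}\leq\gamma_1$, so no separate $q_D$ terms appear. I would still check that the Procrustes reduction is applied to $N_\omega=B_{\rm det}+S(E)$ as a whole, and not to the centred part alone, so that $\Gamma_\omega$ as defined matches the quantity controlled in \Cref{loo2:concentration}. I would also confirm that the $\sigma$ in the stacked lemma is $\sigma_{r_\omega}(M_\omega)$ for the same $M_\omega$ used here. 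With both checks done, collecting constants into $C_9=\tfrac43C_8+C$ finishes the argument.

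I expect no genuine obstacle, since all the heavy lifting sits in \Cref{loo2:concentration}. The step most likely to hide an error is matching the alignment $Q$ and the probability event: the bound \eqref{eq:loo2-gamma} must hold on the same event on which the Wedin and Weyl bounds for $N_\omega$ are valid. This is ensured because $\mcE\cap\mcB\cap\mcA_\omega\subseteq\mcE\cap\mcB$.
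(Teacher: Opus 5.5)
Your proposal is correct and matches the paper's proof: you work on the event of \Cref{loo2:concentration}, which is contained in $\mathcal{E}\cap\mathcal{B}$, and apply \Cref{lem:master} and \Cref{lem:wedin} to $(M_\omega,N_\omega)$ using $\|N_\omega\|_{2\to2}\leq 2C_{14}\varepsilon_\omega\sigma_{r_\omega}(M_\omega)\leq\sigma_{r_\omega}(M_\omega)/8$; you then insert $\Gamma_\omega\leq\Upsilon_\omega$ and \eqref{eq:loo2-gamma}, divide by $\sigma_{r_\omega}(M_\omega)$, and use $\varepsilon_\omega^2\|V_\omega\|_{2\to\infty}\leq\varepsilon_\omega\|V_\omega\|_{2\to\infty}$. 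Your extra check that $\Gamma_\omega$ is taken with the full $N_\omega=B_{\rm det}+S(E)$, so that no separate diagonal-correction terms appear, agrees with how \Cref{loo2:column} and \Cref{loo2:concentration} are set up.
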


\begin{proof}
    Work on the event of \Cref{loo2:concentration}, which is contained in $\mcE\cap\mcB$.
    \Cref{lem:master} applies to the pair $(M_\omega,N_\omega)$, since
    $\|N_\omega\|_{2\to2}\leq2C_{14}\varepsilon_\omega\sigma_{r_\omega}(M_\omega)
    \leq\sigma_{r_\omega}(M_\omega)/8$, and together with \Cref{lem:wedin} it gives
    $\|\widehat V_\omega Q-V_\omega\|_{2\to\infty}\leq4\Gamma_\omega/(3\sigma_{r_\omega}(M_\omega))
    +C\|V_\omega\|_{2\to\infty}\varepsilon_\omega$. Insert $\Gamma_\omega\leq \Upsilon_\omega$
    and \eqref{eq:loo2-gamma}, and divide each summand by
    $\sigma_{r_\omega}(M_\omega)$, using $\varepsilon_\omega^2\|V_\omega\|_{2\to\infty}
    \leq\varepsilon_\omega\|V_\omega\|_{2\to\infty}$.
\end{proof}

\begin{remark}[What changed relative to the linear case]
    \label{loo2:rem-changes}
    Compared with \Cref{sec:procrustes}, the main changes are the random
    coefficient matrix $\mcO^{(i)}$ in each noise column and the additional
    diagonal term $\Theta_i$ in the coupling. Conditional on $\mcF_i$,
    $\mcO^{(i)}$ is fixed, so \Cref{lem:vector-bernstein} applies to the linear term in
    \eqref{eq:loo2-column}. The decomposition in \Cref{loo2:structure}
    replaces the rank-two update by a rank-four term and $\Theta_i$, whose
    norm is bounded there. The resulting bounds replace
    $(\|O\|_{2\to2},b_O,b_\infty,g_1(O))$ by
    $(\Omega_\omega,b_\omega,b_\infty^\omega+b_{\infty,2}^\omega,g_\omega)$, and the
    second-order noise adds the auxiliary row quantities \eqref{eq:loo2-aux}, each
    controlled by the leave-one-out argument of \Cref{loo2:concentration}.
    The quantitative spread and near-orthogonality bounds in
    \Cref{cor:fanout-regimes} therefore give the same recovery rates as at first
    order.
\end{remark}

\subsection{Proof of the stacked master theorem and regime corollary}

\begin{proof}[Proof of \Cref{thm:fanout-main}]
    By \Cref{lem:fanout-embedding}, $M_\omega$ has rank $r_\omega$ with
    community-constant rows, incoherence $\| V_\omega \|_{2\to\infty} \leq
    n_{\min}^{-1/2}$, and centroid separation $\delta > 0$ under the stated
    distinct-columns condition. The stated hypotheses are (H$_\omega$) of
    \Cref{loo2:concentration} and, for $\vartheta_\omega=0$, (H$'_\omega$).
    \Cref{loo2:twoinfty-sharp}
    (run at the retained dimension $r_\omega$)
    therefore gives, with probability at least $1 - 4n^{-\alpha}$, the bound $\| \widehat V Q -
    V_\omega \|_{2\to\infty} \leq \eta_\infty^{(2)}$ of \eqref{eq:fanout-eta} with
    $C := C_9$, since $\|V_\omega\|_{2\to\infty}\leq h_\star$.
    On the same event, the total-noise bound and
    \Cref{lem:sin-theta} give $\|\widehat VQ-V_\omega\|_F\leq F_\omega$
    for the same orthogonal matrix $Q$. When
    $\eta_\infty^{(2)}<\delta/4$, \Cref{lem:single-linkage} proves
    exact recovery with single linkage. For approximate $k$-means,
    condition
    \eqref{eq:fanout-rounding} and $\eta_\infty^{(2)}<\delta/4$ are
    precisely the two hypotheses of \Cref{lem:kmeans}(ii), which gives
    exact recovery after the nearest-centroid step.
\end{proof}

\begin{proof}[Proof of \Cref{cor:fanout-regimes}]
    \emph{(i) Spread designs.} Since
    $b_\infty^\omega\leq\max_j\|\bar{\mcO}_{\cdot j}\|_2\leq b_\omega$, spreadness gives
    \[
        b_\infty^\omega+b_\omega\leq2C_2\Omega_\omega\sqrt{K/n}.
    \]
    The conditioning
    \eqref{eq:fanout-well-cond} gives $\sigma_{r_\omega}(M_\omega) \geq
    \sigma_{r_\omega}(H_\omega) \sqrt{n_{\min}} = \rho_n \sigma_{r_\omega}(\bar H)
    \sqrt{n_{\min}} \geq c_1\, \Omega_\omega\, \rho_n\, n_{\min}$, whence
    $\varepsilon_\omega \lesssim K/\sqrt{n\rho_n}$ for balanced communities, and
    $\delta \asymp \sqrt{K/n}$ by the separation condition and
    $\delta\leq2n_{\min}^{-1/2}$ (\Cref{lem:fanout-embedding}).
    Since $\|U_{\omega,2}^\top(O_2)_{\cdot j}\|_2\leq\|(O_2)_{\cdot j}\|_2$ gives
    $b_{\infty,2}^\omega\leq\tau_\omega b_{O_2}$, and $\tau_\omega b_{O_2}\leq b_\omega$ by
    \eqref{eq:loo2-weights}, $h_\star\leq1$ and $\vartheta_\omega\leq b_\omega$, the
    terms of \eqref{eq:fanout-eta} containing $b_\infty^\omega$, $b_{\infty,2}^\omega$,
    $b_\omega$ or $\vartheta_\omega$ are at most
    $3(b_\infty^\omega+b_\omega)\log n/\sigma_{r_\omega}(M_\omega)$ and
    $2\varepsilon_\omega^2(b_\infty^\omega+b_\omega)/\Omega_\omega$.
    Dividing \eqref{eq:fanout-eta} by $\delta$ therefore gives the balanced
    calculation of \eqref{eq:three-terms}:
    \[
        \frac K{\sqrt d}+\sqrt{\frac{K\log n}{d}}
        +\frac{K\log n}{d},\qquad
        \frac K{\sqrt d},\qquad \frac{K^2}{d},
        \quad d=n\rho_n.
    \]
    All terms are below the required fixed margin once
    $d\geq c(K^2+K\log n)$, which also ensures
    $\varepsilon_\omega\leq c_\star$ and, since $\delta\leq2$, the hypothesis
    $b_\omega\log n\leq c_\star\sigma_{r_\omega}(M_\omega)$ of \Cref{thm:fanout-main}.
    The stacked master theorem and \Cref{lem:single-linkage} give
    exact recovery with failure at most $4n^{-\alpha}$.
    For approximate $k$-means, additionally
    \[
        \frac{C_{\rm km}F_\omega^2}{n_{\min}\delta^2}
        \leq CK\varepsilon_\omega^2\leq\frac{CK^3}{d}.
    \]
    Thus $d\geq c(K^3+K\log n)$ ensures both the rowwise and
    Frobenius conditions for \Cref{lem:kmeans}(ii).
     For unbalanced communities, the displayed
    quantities are the same functions of $(\sigma_{r_\omega}(M_\omega), \delta, n_{\min})$
    as in \Cref{cor:sbm-regimes}, whose $\beta$-computation
    \eqref{eq:regime-signal}--\eqref{eq:regime-geometry} carries over unchanged.

    \emph{(ii) Near-orthogonal designs.} As in (i),
    $\sigma_{r_\omega}(M_\omega)\geq c_1\Omega_\omega\rho_nn_{\min}$ and
    $\varepsilon_\omega\lesssim K/\sqrt d$. The hypothesis $\varsigma_\omega\leq C_\varsigma$
    gives $\varsigma_\omega\varepsilon_\omega\lesssim K/\sqrt d$ and
    $\varsigma_\omega\Omega_\omega\log n/\sigma_{r_\omega}(M_\omega)
    \leq C_\varsigma\log n/(c_1\rho_nn_{\min})\lesssim K\log n/d$, and
    $b_\omega\leq C_b\Omega_\omega$ from \eqref{eq:loo2-weights} gives
    $b_\omega\log n/\sigma_{r_\omega}(M_\omega)\lesssim K\log n/d$. The hypotheses of
    \Cref{thm:fanout-main} with $\vartheta_\omega=0$ therefore hold once
    $d\geq c(K^2+K\log n)$. Divide \eqref{eq:fanout-eta} by $\delta\asymp\sqrt{K/n}$,
    using $h_\star=n_{\min}^{-1/2}\lesssim\sqrt{K/n}$,
    $b_\infty^\omega+b_{\infty,2}^\omega\leq C_2\Omega_\omega\sqrt{K/n}$ and
    $b_\omega\leq C_b\Omega_\omega$: the terms are bounded by the same quantities as in
    (i), and the conclusion follows as there, including the Frobenius condition for
    approximate $k$-means.

    For $O_1=O_2=I_n$ in the SBM, $g_\omega=2=\Omega_\omega$ and
    $b_\omega\geq\Omega_\omega b_{O_2}/\|O_2\|_{2\to2}=2$, so $\varsigma_\omega\leq1/2$.
    Here $M_\omega=\diag(Z,Z)\binom{B}{\omega B^{(2)}}Z^\top$, so
    $U_\omega=\diag(Z,Z)X$ for some $X\in\mathbb R^{2K\times r_\omega}$. Testing $U_\omega^\top$
    against the unit vector $n_a^{-1/2}\mathbf 1_{C_a}$ placed in either block gives
    $n_a^{1/2}\|X_{c\cdot}\|_2\leq1$ for the corresponding row $c$ of $X$, hence
    $\|U_\omega\|_{2\to\infty}\leq n_{\min}^{-1/2}$. Writing $U_{\omega,1}$ for the first
    $n$ rows of $U_\omega$, it follows that
    $b_\infty^\omega=\|U_{\omega,1}+\omega PU_{\omega,2}\|_{2\to\infty}
    \leq(1+\omega d\|\bar B\|_\infty)n_{\min}^{-1/2}\leq2n_{\min}^{-1/2}$ and
    $b_{\infty,2}^\omega\leq n_{\min}^{-1/2}$, so the leverage hypothesis holds with
    $C_2=3\sqrt\beta/2$. The conditioning \eqref{eq:fanout-well-cond} holds with
    $c_1=\sigma_{r_\omega}(\bar B)/2$, since $r_\omega=\rk(B)$,
    $\sigma_{r_\omega}(\bar H)\geq\sigma_{r_\omega}(Z\bar B)
    \geq\sqrt{n_{\min}}\,\sigma_{r_\omega}(\bar B)$ and $\Omega_\omega=2$. In the degree-corrected
    model the same argument with $\Theta Z$ and the unit vectors
    $\tilde n_a^{-1/2}\Theta\mathbf 1_{C_a}$ gives
    $\|U_\omega^\theta\|_{2\to\infty}\leq\theta_{\max}\tilde n_{\min}^{-1/2}
    \leq(\theta_{\max}/\theta_{\min})n_{\min}^{-1/2}$ and
    $b_\infty^\omega\leq(1+\theta_{\max}^2\|\bar B\|_\infty)\|U_\omega^\theta\|_{2\to\infty}$,
    so the leverage hypothesis holds with $C_2$ depending on $\bar\kappa$,
    $\|\bar B\|_\infty$ and $\beta$, and the conditioning holds with the same $c_1$,
    since $\sigma_{r_\omega}(\bar H^\theta)\geq\sigma_{r_\omega}(\Theta Z\bar B)
    \geq\sqrt{\tilde n_{\min}}\,\sigma_{r_\omega}(\bar B)$.
\end{proof}

\Cref{cor:fanout-regimes} extends to generic column designs as follows.

\begin{corollary}[Generic stacked designs]
    \label{cor:fanout-generic}
    Under the hypotheses of \Cref{cor:fanout-regimes},
    $b_\infty^\omega+b_\omega\leq6\Omega_\omega$ for all batteries (in the DCSBM once
    $\theta_{\max}^2\leq\sqrt n$), and either method recovers $z$ exactly, up to
    relabeling, with probability at least $1-4n^{-\alpha}$ when
    \begin{equation}
        \label{eq:fanout-generic-degree}
        n\rho_n\geq c\max\{K^{3/2}\sqrt n,\sqrt{Kn}\log n\}.
    \end{equation}
    If in addition $b_\infty^\omega+b_{\infty,2}^\omega\leq C_2\Omega_\omega\sqrt{K/n}$,
    single linkage suffices when
    $n\rho_n\geq c\max\{K^{5/3}n^{1/3},Kn^{1/3}(\log n)^{2/3}\}$, and approximate
    $k$-means when in addition $n\rho_n\geq cK^3$. The constant $c$ depends on the
    same quantities as in \Cref{cor:fanout-regimes}.
\end{corollary}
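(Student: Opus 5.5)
We plan to reduce \Cref{cor:fanout-generic} to the stacked master theorem \Cref{thm:fanout-main}, following the generic-column argument of \Cref{cor:sbm-regimes}(iii). The reduction substitutes the stacked scalars $(\Omega_\omega,b_\omega,b_\infty^\omega+b_{\infty,2}^\omega,\varepsilon_\omega)$ for $(\|O\|_{2\to2},b_O,b_\infty,\varepsilon_{\rm SNR})$.

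\emph{Step 1: crude column bound.} We first prove $b_\infty^\omega+b_\omega\leq6\Omega_\omega$ for every battery. We have $b_\infty^\omega\leq\max_j\|\bar{\mcO}_{\cdot j}\|_2\leq b_{O_1}+\omega\max_j\|O_2P_{\cdot j}\|_2$. We then bound the four pieces of \eqref{eq:bomega} as in the derivation of \eqref{eq:loo2-weights}, but more carefully:
\begin{itemize}
\item $b_{O_1}\leq\|O_1\|_{2\to2}$;
\item $\omega\|O_2P_{\cdot j}\|_2\leq\omega\|O_2\|_{2\to2}\rho_n\sqrt n\|\bar P\|_\infty\leq\tau_\omega\|O_2\|_{2\to2}$, since $\rho_n\sqrt n\|\bar P\|_\infty\leq(1+\|\bar B\|_\infty)n\rho_n$ once $\|\bar P\|_\infty\leq\sqrt n(1+\|\bar B\|_\infty)$ (this is where $\theta_{\max}^2\leq\sqrt n$ enters in the DCSBM);
\item $\omega b_{O_2}\sqrt{n\rho_n\log n}\leq\tau_\omega\|O_2\|_{2\to2}$, using $\log n\leq n\rho_n$;
\item $\Omega_\omega b_{O_2}/\|O_2\|_{2\to2}\leq\Omega_\omega$.
\end{itemize}
Summing gives at most $2\|O_1\|_{2\to2}+2\tau_\omega\|O_2\|_{2\to2}$ plus $2\Omega_\omega$, hence $6\Omega_\omega$ after counting $b_\infty^\omega$ once more.

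\emph{Step 2: first threshold.} We apply \Cref{thm:fanout-main} with $\vartheta_\omega=b_\omega\varepsilon_\omega\leq6\Omega_\omega\varepsilon_\omega$ and only the crude bound $b_\infty^\omega+b_{\infty,2}^\omega+b_\omega\lesssim\Omega_\omega$. The conditioning \eqref{eq:fanout-well-cond} gives $\sigma_{r_\omega}(M_\omega^\star)\geq c_1\Omega_\omega\rho_nn_{\min}$, hence $\varepsilon_\omega\lesssim K/\sqrt d$, and the separation gives $\delta\asymp\sqrt{K/n}$. Dividing \eqref{eq:fanout-eta} by $\delta$, the dominant terms are
\[
\frac{\Omega_\omega\log n}{\sigma\delta}\lesssim\frac{\sqrt{Kn}\,\log n}{d}
\qquad\text{and}\qquad
\frac{\varepsilon_\omega^2}{\delta}\lesssim\frac{K^{3/2}\sqrt n}{d}.
\]
The remaining terms are of the spread form $K/\sqrt d+\sqrt{K\log n/d}$, which is smaller. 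All terms fall below the fixed margin under \eqref{eq:fanout-generic-degree}, and so do the hypotheses $\varepsilon_\omega\leq c_\star$ and $b_\omega\log n\leq c_\star\sigma$.

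For $k$-means we need the Frobenius condition $K\varepsilon_\omega^2\lesssim K^3/d$ to be small. This holds because $K^{3/2}\sqrt n\geq K^3$ whenever $n\geq K^3$; the case $n<K^3$ is trivial, since then the threshold exceeds $n$.

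\emph{Step 3: second threshold.} Under the leverage hypothesis we use the $A_s$ branch of \Cref{cor:sbm-regimes}(iii) with $s=b_\omega/\Omega_\omega\leq6$. The $\vartheta$-free terms are handled as in \Cref{cor:fanout-regimes}(ii) and require $d\gtrsim K^2+K\log n$. For the terms containing $b_\omega h_\star$ we use $b_\omega h_\star\lesssim\Omega_\omega\sqrt{K/n}$. The two $\vartheta_\omega$ terms are bounded exactly as in the displayed computation there, giving $d^{3/2}\gtrsim K^{3/2}\sqrt n\log n$ and $d^{3/2}\gtrsim K^{5/2}\sqrt n$. These are equivalent to the stated $n^{1/3}$ thresholds, and those thresholds dominate $K^2+K\log n$ because $n\geq K$.

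The main obstacle is Step 1: keeping the second piece $\omega\max_j\|O_2P_{\cdot j}\|_2$ at the scale $\Omega_\omega$ in the DCSBM, where $\|\bar P\|_\infty$ carries the factor $\theta_{\max}^2$. This is precisely why the corollary carries the side condition $\theta_{\max}^2\leq\sqrt n$, and we would verify that constant bookkeeping first.
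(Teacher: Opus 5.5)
Your proposal is correct and follows the paper's proof essentially step for step: the same piece-by-piece bound $b_\infty^\omega+b_\omega\leq6\Omega_\omega$ (with $\|\bar P\|_\infty\leq(1+\|\bar B\|_\infty)\sqrt n$ being where $\theta_{\max}^2\leq\sqrt n$ enters), the $\vartheta_\omega=b_\omega\varepsilon_\omega$ branch with crude bounds producing the $\sqrt{Kn}\log n/(n\rho_n)$ and $K^{3/2}\sqrt n/(n\rho_n)$ terms together with the same feasibility argument ($n\geq K^3$, or else the hypothesis is vacuous) for the Frobenius condition, and then the two $\vartheta_\omega$-terms under the leverage hypothesis giving the $n^{1/3}$ thresholds, with the remaining requirements dominated via $K\leq n$. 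Your Step~1 arithmetic is loose (your own bounds actually give $b_\infty^\omega+b_\omega\leq4\Omega_\omega$), and in Step~3 you should say explicitly that the $k$-means clause uses the same Frobenius bound $K\varepsilon_\omega^2\lesssim K^3/(n\rho_n)$ and that $b_{\infty,2}^\omega\leq\tau_\omega b_{O_2}\leq b_\omega$; neither point affects correctness.
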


\begin{proof}[Proof of \Cref{cor:fanout-generic}]
    We work in the setting of the proof of \Cref{cor:fanout-regimes}.
    We first record $b_\infty^\omega + b_\omega \leq
    6 \Omega_\omega$ for any batteries and any $\omega$: $b_{O_1} \leq \| O_1
    \|_{2\to2} \leq \Omega_\omega$; $\omega \max_j \| O_2 P_{\cdot j} \|_2 \leq
    \omega \| O_2 \|_{2\to2}\, \rho_n \| \bar P \|_\infty \sqrt n =
    \big[ \omega (1 + \| \bar B \|_\infty) n \rho_n \| O_2 \|_{2\to2} \big] \cdot
    \frac{\| \bar P \|_\infty}{(1 + \| \bar B \|_\infty) \sqrt n} \leq
    \Omega_\omega$ (using $\| P_{\cdot j} \|_2 \leq \rho_n \| \bar P \|_\infty
    \sqrt n$ and $\|\bar P\|_\infty\leq(1+\|\bar B\|_\infty)\sqrt n$, which holds in the
    SBM and, as $\|\bar P\|_\infty\leq\theta_{\max}^2\|\bar B\|_\infty$, in the DCSBM once
    $\theta_{\max}^2\leq\sqrt n$); $\omega b_{O_2}
    \sqrt{n\rho_n \log n} \leq \omega \| O_2 \|_{2\to2} n \rho_n \leq
    \Omega_\omega$ (using $\log n \leq n\rho_n$); $\Omega_\omega b_{O_2}/\| O_2
    \|_{2\to2} \leq \Omega_\omega$; and $b_\infty^\omega \leq \max_j \|
    \bar{\mcO}_{\cdot j} \|_2 \leq b_{O_1} + \omega \max_j \| O_2 P_{\cdot j} \|_2
    \leq 2 \Omega_\omega$. The four pieces of $b_\omega$ are each at most
    $\Omega_\omega$, so $(b_\infty^\omega + b_\omega)/\Omega_\omega \leq 6$. With $\sigma_{r_\omega}(M_\omega)\geq c_1\Omega_\omega\rho_nn_{\min}$,
    $\varepsilon_\omega\lesssim K/\sqrt{n\rho_n}$ and
    $1/\delta\leq\sqrt{n_{\max}}/c_\delta\lesssim\sqrt{n/K}$, as in part (i) of that proof, the two terms
    of \eqref{eq:fanout-eta} containing $b_\infty^\omega$, $b_{\infty,2}^\omega$,
    $b_\omega$ or $\vartheta_\omega$ are, by the bounds recorded there, at most a
    constant times
    \[
        \frac{(b_\infty^\omega+b_\omega)\log n}{\sigma_{r_\omega}(M_\omega)\,\delta}
        \lesssim\frac{\Omega_\omega\log n}{\sigma_{r_\omega}(M_\omega)\,\delta}
        \lesssim\frac{\sqrt{Kn}\log n}{n\rho_n},
        \qquad
        \varepsilon_\omega^2\frac{b_\infty^\omega+b_\omega}{\Omega_\omega\delta}
        \lesssim\frac{\varepsilon_\omega^2}{\delta}
        \lesssim\frac{K^{3/2}\sqrt n}{n\rho_n}.
    \]
    The remaining terms are bounded as in part (i), and their requirements $K^2$
    and $K\log n$ are dominated because $K\leq n$; the hypothesis
    $b_\omega\log n\leq c_\star\sigma_{r_\omega}(M_\omega)$ holds because
    $b_\omega\log n/\sigma_{r_\omega}(M_\omega)\leq C_b\varepsilon_\omega\log n/\sqrt d
    \lesssim K\log n/d$. This gives the balanced
    threshold $c \max\{ K^{3/2}\sqrt n, \sqrt{Kn} \log n \}$.
    It also verifies the Frobenius condition: feasibility
    $n\rho_n\leq n$ and $n\rho_n\geq cK^{3/2}\sqrt n$ imply
    $n\geq c^2K^3$ and then $n\rho_n\geq c^2K^3$.
    Increasing $c$ makes the preceding ratio in the $k$-means condition smaller than one.

    If in addition $b_\infty^\omega+b_{\infty,2}^\omega\leq C_2\Omega_\omega\sqrt{K/n}$,
    keep $\vartheta_\omega=b_\omega\varepsilon_\omega\leq C_b\Omega_\omega\varepsilon_\omega$.
    The terms of \eqref{eq:fanout-eta} not containing $\vartheta_\omega$ are bounded
    as in part (ii) of that proof, and the two containing it satisfy
    \[
        \frac{\vartheta_\omega\log n}{\sigma_{r_\omega}(M_\omega)\,\delta}
        \lesssim\frac{\Omega_\omega\varepsilon_\omega\log n}{\sigma_{r_\omega}(M_\omega)\,\delta}
        \lesssim\frac{K^{3/2}\sqrt n\log n}{(n\rho_n)^{3/2}},
        \qquad
        \varepsilon_\omega^2\frac{\vartheta_\omega}{\Omega_\omega\delta}
        \lesssim\frac{\varepsilon_\omega^3}{\delta}
        \lesssim\frac{K^{5/2}\sqrt n}{(n\rho_n)^{3/2}}.
    \]
    These are below any fixed margin when $n\rho_n\gtrsim Kn^{1/3}(\log n)^{2/3}$ and
    $n\rho_n\gtrsim K^{5/3}n^{1/3}$ respectively, and the remaining requirements
    $K^2$ and $K\log n$ are dominated because $K\leq n$. Single linkage then
    recovers all community labels, by \Cref{lem:single-linkage}. For approximate
    $k$-means, the Frobenius condition requires $n\rho_n\geq cK^3$ as in part (i).
\end{proof}

\subsection{The degree-corrected stacked estimator}

\begin{proof}[Proof of \Cref{prop:fanout-dcsbm-main}]
    For the DCSBM, $P^0 = \Theta Z B Z^\top \Theta$, so $(P^0)^2
    = \Theta Z B (Z^\top \Theta^2 Z) B Z^\top \Theta = \Theta Z\, B^{(2)}_\theta\,
    (\Theta Z)^\top$. The proof of \Cref{lem:fanout-pop} uses only three facts
    about $P$: that $\bbe{A^{\langle 2 \rangle}} = \hol(P^2)$ (independence of
    distinct edges, model-agnostic), $\| P^0 \|_{2\to2} \leq n\rho_n \| \bar P
    \|_\infty$, and $\| \diag(P^0) \|_{2\to2} \leq \rho_n \| \bar P \|_\infty$;
    all hold here with $\| \bar P \|_\infty \leq \theta_{\max}^2 \| \bar B
    \|_\infty$. Hence $\bbe{O_2 A^{\langle 2 \rangle}} = O_2 \Theta Z
    B^{(2)}_\theta (\Theta Z)^\top + \Delta_2$ with $\| \Delta_2 \|_{2\to2} \leq 4
    \| O_2 \|_{2\to2} n \rho_n^2 \| \bar P \|_\infty^2$, and likewise the
    first-order block has mean $O_1 \Theta Z B (\Theta Z)^\top$ up to the diagonal
    correction of \Cref{lem:diag}. So $\bbe{R_\omega} = M^\theta_\omega +
    (\text{bias})$, with $M^\theta_\omega = H^\theta_\omega (\Theta Z)^\top$ of
    rank $r_\omega$, and the DCSBM computation in the proof of
    \Cref{lem:pop-embedding}(ii) applies to this factorization and gives
    population rows $\theta_i \tilde\mu_{z_i}$ whose norms are at least
    $\delta_{\mathrm{row}}$ and whose normalized values have separation
    $\delta_\theta$.

    \emph{(i) Master bound and exact recovery.} Every ingredient of the stacked
    chain (\Cref{loo2:event,loo2:column,loo2:structure,loo2:stability,%
    loo2:leverage,loo2:concentration,loo2:twoinfty-sharp}) uses only the hollow independent-edge structure of $E$
    and $\rk(M^\theta_\omega) = r_\omega$, so \Cref{loo2:twoinfty-sharp} holds for the
    DCSBM stack verbatim, with the incoherence $\| V^\theta_\omega \|_{2\to\infty}
    \leq \theta_{\max} \tilde n_{\min}^{-1/2}$; this is the stated
    $\eta_\infty^{(2),\theta}$.
The total-noise bound also gives the raw Frobenius bound $F_\omega$
    on this event for the same orthogonal matrix $Q$. By \Cref{lem:spherical},
    \[
        \|\widehat V^+Q-(V_\omega^\theta)^+\|_F
            \leq\frac{2F_\omega}{\delta_{\mathrm{row}}},\qquad
        \|\widehat V^+Q-(V_\omega^\theta)^+\|_{2\to\infty}
            \leq\frac{2\eta_\infty^{(2),\theta}}{\delta_{\mathrm{row}}}.
    \]
    The targets have separation $\delta_\theta$, and
    \eqref{eq:fanout-dcsbm-exact} gives
    $2\eta_\infty^{(2),\theta}/\delta_{\mathrm{row}}<\delta_\theta/4$, so single linkage recovers all
    community labels up to a permutation, by \Cref{lem:single-linkage}. For approximate $k$-means, additionally
    \eqref{eq:fanout-rounding} verifies \Cref{lem:kmeans}(ii).
    The rowwise condition also ensures that no empirical row is zero.

    \emph{(ii) Regimes.} Under bounded heterogeneity and balancedness, the
    separation conditions of \Cref{cor:fanout-regimes} give
    $\delta_{\mathrm{row}} \asymp \sqrt{K/n}$ and $\delta_\theta\geq c_\delta$, so
    \eqref{eq:fanout-dcsbm-exact} holds when
    $\eta_\infty^{(2),\theta}<c_\delta\delta_{\mathrm{row}}/8$, and
    $\sigma_{r_\omega}(M^\theta_\omega) \geq c_1 \Omega_\omega \rho_n \tilde n_{\min}
    \asymp c_1 \Omega_\omega \rho_n n / K$ as in the proof of
    \Cref{cor:dcsbm-regimes}.
    If the stacked design is spread, or under the
    near-orthogonal hypotheses of \Cref{cor:fanout-regimes}(ii),
    dividing the terms of $\eta_\infty^{(2),\theta}$ by
    $\delta_{\mathrm{row}}$ gives the same bounds as in the proof of
    \Cref{cor:fanout-regimes}(i)--(ii), up to $\bar\kappa$-dependent constants;
    all are small once
    $n\rho_n\geq c(K^2+K\log n)$. The hypotheses of \Cref{thm:fanout-main},
    including those for $\vartheta_\omega=0$ under the near-orthogonal hypotheses,
    follow from the computations in that proof with $\tilde n_{\min}$ in place of
    $n_{\min}$.
    Single linkage then recovers all community labels up to a permutation.
    For approximate $k$-means, the ratio in the Frobenius condition is at most
    $C_{\bar\kappa}K\varepsilon_\omega^2
    \leq C_{\bar\kappa}K^3/(n\rho_n)$, giving the stated
    $c(K^3+K\log n)$ condition. The threshold \eqref{eq:fanout-generic-degree}
    implies this additional requirement by the feasibility argument above.
    The generic-column clauses of \Cref{cor:fanout-generic} combine the bound
    $b_\infty^\omega + b_\omega \leq 6 \Omega_\omega$ of its proof, valid here once
    $\theta_{\max}^2\leq\sqrt n$, with the estimates displayed there, with
    $\delta_{\mathrm{row}}$ in place of $\delta$ and $\tilde n_{\min}$ in place of
    $n_{\min}$; under the second clause, approximate $k$-means requires
    $n\rho_n\geq cK^3$ as above.

    \emph{(iii) Weak recovery.} For approximate $k$-means, the noise bound \eqref{eq:fanout-total-noise}
    applies verbatim (its inputs use the model only through the maximal entry and
    maximal expected degree of $P$), so on its event, under the stated SNR
    condition, \Cref{lem:sin-theta} with $(M, \mcN)$ replaced by $(M^\theta_\omega,
    R_\omega - M^\theta_\omega)$, the spherical projection bound
    (\Cref{lem:spherical}) with population row norms at least
    $\delta_{\mathrm{row}}$, and \Cref{lem:kmeans}(i) with
    $\delta=\delta_\theta$ give, under the stated $k$-means condition,
    \[
        n_{\mathrm{err}}
        \leq\frac{C_{\rm km}}{\delta_\theta^2}
        \|\widehat V^+Q-(V_\omega^\theta)^+\|_F^2
        \leq\frac{4C_{\rm km}\varrho^2}{\delta_{\mathrm{row}}^2\delta_\theta^2}
        \leq C(\varepsilon_{\rm km})\,C_3^2\,
        \frac{r_\omega\,\Omega_\omega^2\, n\rho_n}
        {\sigma_{r_\omega}(M^\theta_\omega)^2\,\Delta_\omega^2},
    \]
    where $\varrho:=\|\widehat VQ-V_\omega^\theta\|_F$; this is
    \eqref{eq:fanout-general-rate} for the DCSBM. At $r_\omega=K$,
    $\Delta_\omega^2=\delta_{\mathrm{row}}^2/2\geq\theta_{\min}^2/(2\tilde n_{\max})$,
    and $\tilde n_{\max} \leq \theta_{\max}^2 n_{\max}$ yields
    \eqref{eq:fanout-dcsbm-rate}.
\end{proof}

\section{Proofs for Section~\ref{sec:identifiability} (identifiability)}
\label{app:ident-proofs}
This appendix contains all proofs for \Cref{sec:identifiability}.

\begin{proof}[Proof of \Cref{lem:column-constancy}]
    In the SBM, $P^0_{\cdot j} = Z B Z^\top e_j = Z B e_{z_j} = Z B_{\cdot z_j}$,
    since $Z^\top e_j = e_{z_j}$ is the $z_j$-th standard basis vector of
    $\mathbb{R}^K$. Applying $O$ gives $M_{\cdot j} = (OZ) B_{\cdot z_j}$. In the
    DCSBM, $P^0_{\cdot j} = \Theta Z B Z^\top \Theta e_j = \theta_j \Theta Z
    B_{\cdot z_j}$, and applying $O$ gives the stated form.
\end{proof}

\begin{proof}[Proof of \Cref{prop:pairwise}]
    By \Cref{lem:column-constancy}, $\mu_b - \mu_{b'} = (OZ)(B_{\cdot b} - B_{\cdot b'}) =
    O\bigl(Z(B_{\cdot b} - B_{\cdot b'})\bigr) = O w_{bb'}$. Thus $\mu_b = \mu_{b'}$
    if and only if $O w_{bb'} = 0$.
\end{proof}

\begin{proof}[Proof of \Cref{cor:partition}]
    By \Cref{lem:column-constancy} the columns of $M$ take exactly the values
    $\{\mu_b\}_{b=1}^K$, one per community, so the equality classes of the columns
    of $M$ are the communities if and only if the $\mu_b$ are pairwise distinct, that
    is, if and only if no pair collapses. By \Cref{prop:pairwise} this holds exactly
    when $w_{bb'} \notin \ker O$ for all $b \neq b'$. For degree queries,
    $\mu_b=\mathbf 1_n^\top ZB_{\cdot b}=\sum_an_aB_{ab}=D_b$.
\end{proof}

\begin{proof}[Proof of \Cref{thm:full-identifiability}]
    The three conditions in \Cref{thm:full-identifiability} are equivalent because $OZ$ has
    columns $O \mathbf{1}_{C_a}$ and $\mathcal{S} = \range(Z)$ with $Z$ of full column
    rank $K$
    since every community is nonempty;
    % (\Cref{ass:standing});
    hence $\rk(OZ) = \dim(O\mathcal{S})$, which equals
    $K$ exactly when $O$ is injective on $\mathcal{S}$.

    ($\Leftarrow$) Suppose $O$ is injective on $\mathcal{S}$. By
    \Cref{lem:column-constancy}, the columns of $M$ take the community-specific
    values $\mu_b=(OZ)B_{\cdot b}$. The columns of $B$ are pairwise distinct by the
    standing assumption of \Cref{subsec:models}, and $OZ$ is injective, so the
    $\mu_b$ are distinct. Their
    equality classes therefore recover the partition $z$ up to relabelling, and
    hence recover $Z$ and $\mathcal{S}$ under the same relabelling. Each column
    $P^0_{\cdot j}$ lies in $\mathcal{S}$ and is now the unique preimage of
    $M_{\cdot j}=OP^0_{\cdot j}$ under $O|_{\mathcal{S}}$. Thus $M$ determines
    $P^0$. Finally, $Z^\top P^0 Z=NBN$, so
    $B=N^{-1}(Z^\top P^0 Z)N^{-1}$ is recovered with the same labelling.

    ($\Rightarrow$) Suppose $\rk(OZ) < K$. Then $\ker O \cap \mathcal{S} \neq \{0\}$,
    so pick $0 \neq u = Z c \in \mathcal{S}$ with $O u = 0$, $c \in \mathbb{R}^K$.
    Set $D := u u^\top = Z (c c^\top) Z^\top$, a symmetric rank-one matrix with
    $\range(D) \subseteq \operatorname{span}(u) \subseteq \ker O$, so that $O D = 0$.
    For every sufficiently small nonzero $\varepsilon$, the matrix
    $B' := B + \varepsilon c c^\top$ still lies in $[0,1]^{K \times K}$ because $B$
    is interior. The two distinct block models $P^0=ZBZ^\top$ and
    $ZB'Z^\top=P^0+\varepsilon D$ then satisfy
    $OZB'Z^\top=OP^0=M$. Hence $M$ does not determine $B$, and taking
    $\varepsilon$ arbitrarily small proves the final claim of the theorem.
\end{proof}

\begin{proof}[Proof of \Cref{prop:dcsbm}]
    By \Cref{lem:column-constancy}, $M_{\cdot j} = \theta_j \nu_{z_j}$ with
    $\theta_j > 0$ unknown, so each column of $M$ reveals only the ray
    $\mathbb{R}_{\geq 0} \nu_{z_j}$. Two nodes in communities $b, b'$ yield the same
    ray if and only if $\nu_b$ and $\nu_{b'}$ are positively collinear, so the rays of
    the columns of $M$ determine the partition exactly when no pair collapses, giving (i);
    the degree case is \Cref{ex:dcsbm-degree}. For (ii) we recover the parameters in
    order. First, $\rk(O\Theta Z)=K$ makes $O\Theta Z$ injective.
    It therefore preserves the nonzero, pairwise distinct rays of the columns
    of $B$, so part (i) recovers $z$ from the rays of $M$. Given $z$, the column magnitudes $\| M_{\cdot j} \|_2 =
    \theta_j \| \nu_{z_j} \|_2$ determine each $\theta_j$ within its block up to the
    common scale $\| \nu_{z_j} \|_2$, which the normalization
    $\sum_{i \in C_a}\theta_i = n_a$ then fixes; thus $M$ determines $\Theta Z$. Finally,
    $\rk(O\Theta Z) = K$ makes the map $B_{\cdot b} \mapsto \nu_b = (O\Theta Z)
    B_{\cdot b}$ injective, so $B$ is recovered from the $K$ profiles
    $\nu_b=M_{\cdot j}/\theta_j$ for $j\in C_b$. Thus $M$ determines
    $P^0 = \Theta Z B Z^\top \Theta$, and $(\theta, B, z)$ up to relabelling;
    compare \cite{park_note_2024}.
\end{proof}

\clearpage
\section{Additional experiments}
\label{app:experiments-k}
\subsection{Varying the number of communities}

We varied community numbers and sizes in stochastic block models with
$B=\rho_n(\tfrac12\mathbf1\mathbf1^\top+I_K)$, keeping within-community
connection probabilities three times the between-community probabilities.
We used $n\in\{1200,2400,4800\}$, $K\in\{2,3,4,6,8\}$, and one community
approximately $\beta\in\{1,2,4\}$ times as large as each of the others,
rounding sizes to sum to $n$. We compared full adjacency with $K$ queries
on the same 20 graph draws per setting, using \Cref{sec:experiments}'s
$k$-means implementation at the population signal rank, which is $K$ here.

Using the true community assignments, we randomly assigned a fraction
$a=0.75$ of community $b$ to query group $G_b$ and divided the remaining
vertices as evenly as possible among the other $K-1$ groups.
For balanced communities, ignoring integer allocation errors,
\[
    OZ=\frac{n}{K}\left[aI_K+\frac{1-a}{K-1}
    (\mathbf1\mathbf1^\top-I_K)\right],\qquad
    \kappa_{\mathcal S}(O)=\left(a-\frac{1-a}{K-1}\right)^{-1}.
\]
We chose $\rho_n$ to give average expected degree
\[
    d=mK^2\log n,\qquad
    m\in\{0.25,0.375,0.5,0.75,1,1.5,2,3,4,6,8,12,16\}.
\]
We excluded settings requiring an edge probability greater than one.
For equal sizes, $d/\rho_n=n/2+n/K-3/2$.

We estimated the degree $d_*$ needed for $50\%$ exact-recovery frequency
by linear interpolation between the last sampled degree below this
frequency and the following degree. We report $d_*/\log n$, or a lower
bound when recovery remained below $50\%$ throughout the feasible grid.

In \Cref{fig:exp-k-scaling}, $d_*/\log n$ increases with $K$ for full
adjacency, but initially decreases for queries. This is consistent with
their improving conditioning: distributing the remaining vertices over
more groups increases the contrast $a-(1-a)/(K-1)$ and reduces the design
contribution to $\varepsilon_{\rm SNR}$ in \Cref{thm:sbm-main}.

Unequal community sizes increase $d_*/\log n$, particularly for queries
(\Cref{fig:exp-k-imbalance}). In the $k$-means condition
\eqref{eq:master-rounding}, smaller $n_{\min}$ likewise requires tighter
control of total embedding error relative to the population separation
$\delta$.

\begin{figure}[!htbp]
    \centering
    \includegraphics[width=\textwidth]{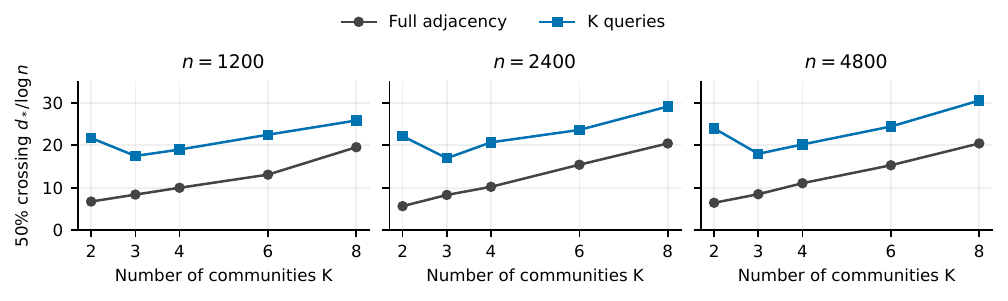}
    \caption{Estimated $d_*/\log n$ for equal-sized communities, using
    20 graph draws per sampled degree.}
    \label{fig:exp-k-scaling}
    \vspace{0.7em}
    \includegraphics[width=0.9\textwidth]{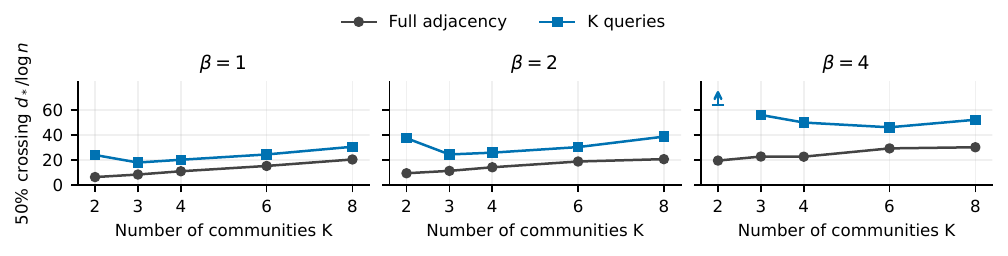}
    \caption{Estimated $d_*/\log n$ at $n=4800$ for size ratios $\beta$;
    the arrow marks the lower bound $d_*/\log n>64$.}
    \label{fig:exp-k-imbalance}
\end{figure}

\subsection{Effect of the clustering step}

To investigate the effect of the clustering step we
compare the use of k-means clustering with single linkage.
We consider the k-means implementation of
the main paper with Euclidean single linkage
cut into two clusters without refinement, with $K=2$
known in each instance. 
We generate data from $K=2$ equally sized communities with $n\in \{800,1600\}$
and expected average degree of the form
$$
d = c\log n, \ c\in \{4,8,12,16,20,24,32 \}.
$$ 
For the degree corrected SBM we have node degree parameters of $0.8$ and $1.2$, 
equally in each community. Throughout the signal is fixed at $s=0.6$.
We consider queries using degree only, the full adjacency matrix, 
and two queries where the alignment is $a=0.75$ for each query.
We show the mean misclassification rate for both clustering
approaches, across the queries considered 
previously, in Figure~\ref{fig:exp-k-linkage}
and the difference in error between these in
Figure~\ref{fig:exp-k-linkage-error}. In each case these are across 100 
draws for each setting. Single linkage requires higher expected 
degree to achieve comparable recovery to k-means.

%%% to be completed.

\begin{figure}[!htbp]
    \centering
    \includegraphics[width=0.8\textwidth]{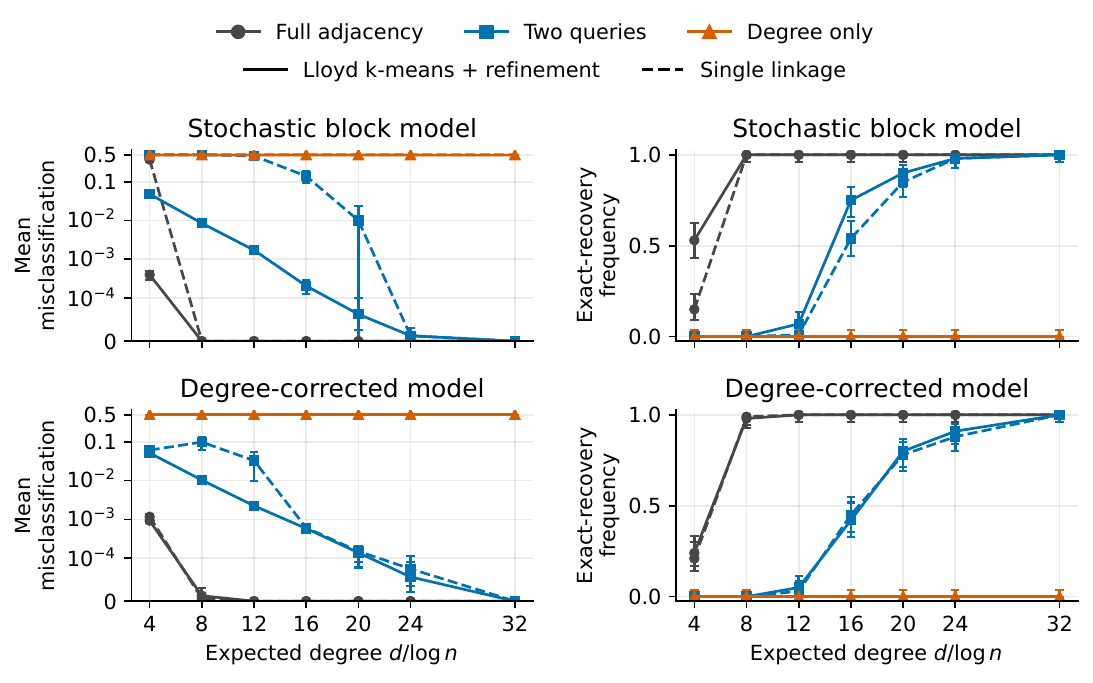}
    \caption{Misclassification rate and exact-recovery frequency for single linkage and Lloyd \(k\)-means under the SBM and DCSBM, with \(n=1{,}600\)}
    \label{fig:exp-k-linkage}
    \vspace{0.7em}
    \includegraphics[width=0.8\textwidth]{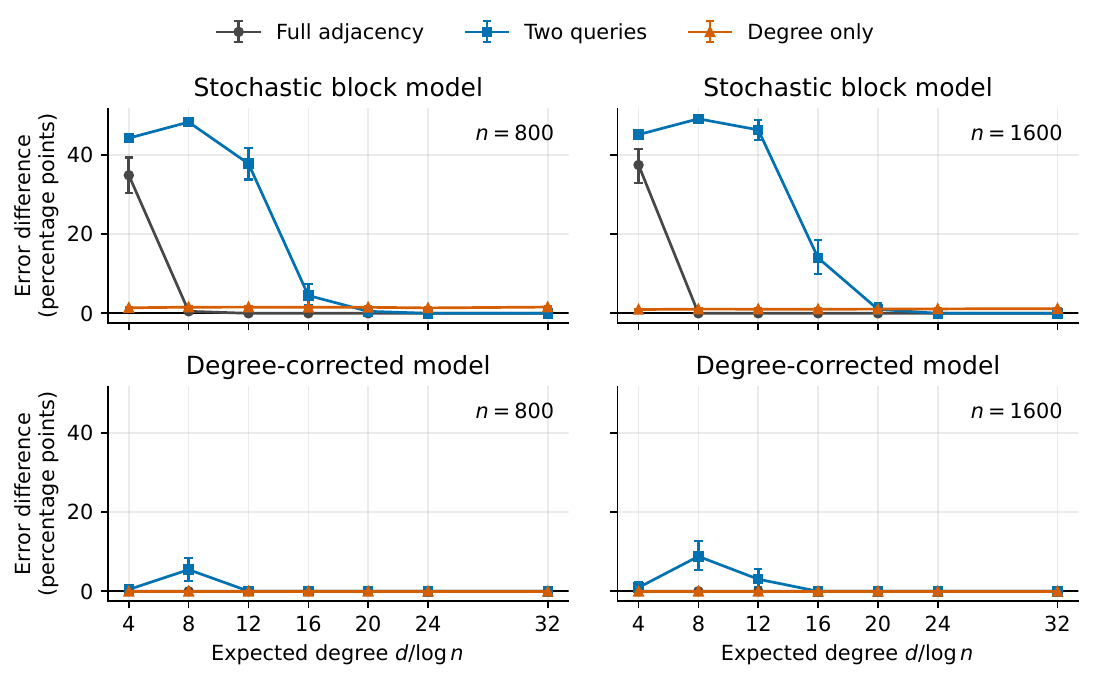}
    \caption{Paired misclassification differences (single linkage minus Lloyd \(k\)-means) for \(n=800\) and \(1{,}600\). Positive values favour \(k\)-means; error bars indicate 95\% paired bootstrap confidence intervals.}
    \label{fig:exp-k-linkage-error}
\end{figure}

\vfill

\end{document}